\newtheorem{theorem}{Theorem}[section]
\newtheorem{proposition}[theorem]{Proposition}
\newtheorem{lemma}[theorem]{Lemma}
\newtheorem{corollary}[theorem]{Corollary}
\newtheorem{definition}[theorem]{Definition}
\newtheorem{assumption}[theorem]{Assumption}
\newtheorem{remark}[theorem]{Remark}
\definecolor{azure}{rgb}{0.0, 0.5, 1.0}
\definecolor{brandeisblue}{rgb}{0.0, 0.44, 1.0}
\definecolor{darkpastelgreen}{rgb}{0.01, 0.75, 0.24}
\definecolor{darkpastelpurple}{rgb}{0.59, 0.44, 0.84}
\definecolor{darktangerine}{rgb}{1.0, 0.66, 0.07}
\definecolor{debianred}{rgb}{0.84, 0.04, 0.33}
\definecolor{hanpurple}{rgb}{0.32, 0.09, 0.98}
\definecolor{deepcerise}{rgb}{0.85, 0.2, 0.53}
\definecolor{emerald}{rgb}{0.31, 0.78, 0.47}
\definecolor{fuchsia}{rgb}{1.0, 0.0, 1.0}
\definecolor{flamingopink}{rgb}{0.99, 0.56, 0.67}
\definecolor{lightseagreen}{rgb}{0.13, 0.7, 0.67}
\definecolor{mayablue}{rgb}{0.45, 0.76, 0.98}
\newcommand{\PP}{\mathbb{P}}
\newcommand{\EE}{\mathbb{E}}
\newcommand{\RR}{\mathcal{R}}
\newcommand{\R}{\mathbb{R}}
\newcommand{\As}{\mathcal{A}}
\newcommand{\Ss}{\mathcal{S}}
\newcommand{\T}{\mathcal{T}}
\newcommand{\fsizee}{0.9}
\newcommand{\fsizeee}{0.54}
\DeclareMathOperator*{\argmin}{arg\,min}
\DeclareMathOperator*{\corr}{corr}
\DeclareMathOperator*{\cov}{cov}
\title{Towards Multi-Agent Reinforcement Learning driven Over-The-Counter Market Simulations}
\author{
Nelson Vadori \thanks{corresponding author: nelson.n.vadori@jpmorgan.com}\\
J.P. Morgan AI Research\\
\And
Leo Ardon \\
J.P. Morgan AI Research\\
\And
Sumitra Ganesh \\
J.P. Morgan AI Research\\
\And
Thomas Spooner \\
J.P. Morgan AI Research\\
\And
Selim Amrouni \\
J.P. Morgan AI Research\\
\And
Jared Vann \\
J.P. Morgan AI Research\\
\And
Mengda Xu \\
J.P. Morgan AI Research\\
\And
Zeyu Zheng \thanks{work completed in the context of an internship at J.P. Morgan AI Research.} \\
University of Michigan \\ and \\ J.P. Morgan AI Research\\
\And
Tucker Balch \\
J.P. Morgan AI Research\\
\And
Manuela Veloso \\
J.P. Morgan AI Research\\
}
\begin{document}

\maketitle

\begin{abstract}
We study a game between liquidity provider and liquidity taker agents interacting in an over-the-counter market, for which the typical example is foreign exchange. We show how a suitable design of parameterized families of reward functions coupled with shared policy learning constitutes an efficient solution to this problem. By playing against each other, our deep-reinforcement-learning-driven agents learn emergent behaviors relative to a wide spectrum of objectives encompassing profit-and-loss, optimal execution and market share. In particular, we find that liquidity providers naturally learn to balance hedging and skewing, where skewing refers to setting their buy and sell prices asymmetrically as a function of their inventory. We further introduce a novel RL-based calibration algorithm which we found performed well at imposing constraints on the game equilibrium. On the theoretical side, we are able to show convergence rates for our multi-agent policy gradient algorithm under a transitivity assumption, closely related to generalized ordinal potential games.
\end{abstract}

\tableofcontents

\vspace{10mm}

\section{Introduction}
\label{sec1}

\textbf{Market context}. We focus on a dealer market, also commonly known as over-the-counter (OTC), where a single security is being traded. Examples of such markets include foreign exchange, the largest financial market in the world. We can think of the security in question to be eurodollar. Unlike for typical stocks, trading activity between different market participants - or \textit{agents} -  does not occur on a single centralized exchange observable by everyone, rather the market is \textit{decentralized}: an agent willing to trade typically does so via a private connection with a dealer or via a liquidity aggregator, which provides prices from a set of dealers \cite{oomen}. The nature of these interactions naturally makes the market \textit{partially observable}, since a given agent does not know who other agents are connected to, neither prices which they transact at. Agents also have access to more conventional exchanges, also called ECNs (Electronic Communication Networks). ECNs are useful in they provide a reference price to market participants, and a platform to trade in case of immediate need, for example when hedging. For eurodollar, an example of such ECN is Electronic Brokerage Service (EBS), which tick size is 0.5 basis points at the time of the writing \footnote{tick sizes of 0.1 basis points are also seen in the eurodollar market. One basis point equals $10^{-4}$.}. We refer to \cite{bis1,bis2} for a review of the FX market. 

\textbf{Market agents}. Agents trading in such markets can mostly be split into two classes: Liquidity Providers (LPs), and Liquidity Takers (LTs)\footnote{an agent can also be a blend of the two, but we do not consider this case in the present paper, however the methods we employ allow for straightforward extension to this case.}. The former are essentially market makers, whose goal is to provide liquidity to the market by continuously streaming prices to LTs at which they are willing to buy and sell. The latter trade with LPs for various motives, be it for speculative reasons, or because they simply need a specific quantity of the security in question for exogenous purposes. LPs' problem is to optimally manage the net inventory they are left with after trading with LTs, which is subject to uncertain price fluctuations: this is either done by \textit{skewing}, i.e. optimally adjusting their buy and sell prices asymmetrically so as to reduce their inventory (\textit{internalization}), or by \textit{externalizing} their inventory by trading on the ECN market. While most of the literature - often to preserve analytical tractability - focuses on LPs maximizing risk-penalized profit-and-loss (PnL), in reality LPs also have other objectives, for example increasing their market share. The latter is the fraction of total LT flow a specific LP is able to get, and is not trivial to estimate due to the partial observability of the OTC market. Interestingly, the market share objective of LPs can be seen as the analogous of the LT's traded quantity target: schematically, one can think of LP and LT agents as aiming to maximize a trade-off between risk-penalized PnL on one side, and a purely quantity related target on the other, reflecting their willingness to trade independently of costs. In the present paper, we want to capture, for each agent class, a spectrum of these objectives reflecting real-life markets. A pictorial view of the market is presented in figure \ref{simpic}.

\begin{figure}[ht]
  \centering
  \centerline{\fcolorbox{black}{white}{\includegraphics[scale=0.33]{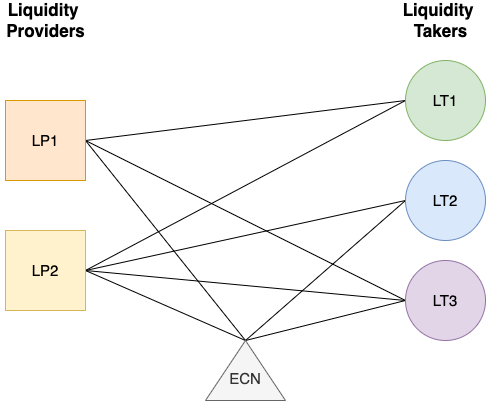}}}
  \caption{Overview of the simulator. Different colors represent different agent objectives, a mix between risk-penalized PnL, market share and traded quantity targets. Note the partial connectivity structure among agents.}
  \label{simpic}
\end{figure}

\textbf{Our goal and desiderata}. We aim at designing an OTC market simulation of the above setting - formally \textit{$n$-player partially observable general sum Markov game} - between LP agents, LT agents, and ECNs, where all agents learn optimal behaviors (also called \textit{policies}) related to their individual objectives by playing against each other. This is now possible using modern AI techniques in multi-agent reinforcement learning (MARL). Such simulation allows to study, from an empirical perspective, specific what-if scenarios and equilibrium properties, such as impact on a given LP's market share of increasing risk aversion of specific competitors, or of modifying agent connectivity. This is different from recent, interesting work studying a single LP's optimal pricing and hedging problem from a stochastic optimal control point of view \cite{Gueant2017-cw, gueant2021_1, gueant2021_2}. There, other agents are modeled as a single "rest of the world" entity, estimated statistically from historical data. By construction, this entity cannot adapt to changes in the LP's policy.

A central question related to multi-agent simulations is how to instantiate all of the agents' characteristics, including their individual utility functions. This is crucial for practical applications. Typically, one knows some information about the market. For example, we may know the flow response curve of a specific LP as in \cite{Gueant2017-cw}, i.e. how much quantity it gets as a function of its pricing. Or we may know the average market share of specific LPs. Our task then becomes to select all simulator hyperparameters, in our case agents' characteristics, so as to match these observations. We will devote a section on this topic, and introduce a novel RL-driven way to perform such calibration, which we found performed well compared to other approaches.

The $n$-player game we study is, in spirit, close to that of \cite{Bank2018-ip} who study equilibrium interactions between dealers (LPs), clients (LTs), and open market (ECN). We present below some desiderata that we have for our MARL-driven simulation:

\begin{itemize}
    \item \textbf{(Desiderata 1)} Due to the structure of the OTC market, the simulation should account for connectivity between pairs of agents, namely the simulation should take as an input a connectivity graph.
    
    \item \textbf{(Desiderata 2)} The simulation should allow agents to optimize for a spectrum of objectives reflecting real-life dealer markets.
    
    \item \textbf{(Desiderata 3)} Learning of optimal agent actions in (Desiderata 2) should be done from the definition of the game (rules) and by experimenting the game (playing).
    
    \item \textbf{(Desiderata 4)} The simulation should be flexible enough to reproduce observations that specific agents may have of the market, namely it should allow for calibration to these observations.
\end{itemize}

(Desiderata 1) is a structural requirement in line with the nature of the market. 

For (Desiderata 2), we will focus on the case where LPs' objectives cover the spectrum from maximizing PnL to market share, and LTs' objectives cover the spectrum from maximizing PnL to trading a specific quantity for exogenous motives, independent of cost. Both of these are trade-offs between a PnL component and a purely quantity related component, which makes our agents' formulation compact. Note that optimizing for a mix between a trading target and PnL is exactly optimal execution, where the goal is to trade a certain quantity $q$ while minimizing costs. As we will see later, our supertype-based design will allow us to learn efficiently optimal policies associated to these trade-offs. 

A reformulation of (Desiderata 3) is that agents' policies should emerge from agents playing the market game while optimizing for a mix of these objectives, as opposed to being handcrafted. In particular, this opens the door to discovering emergent behaviors from the (complex) game itself.

For (Desiderata 4), we want the simulation to capture certain known observations about the market, for example that the market share of a given agent is equal to some level, and the model should be flexible enough to capture these observations. Constraints such as those can be achieved by having agents of different nature, or types, and optimally balancing those types so as to match the desired targets on the emergent behavior of agents. We highlight that this is a non trivial requirement: (Desiderata 3) requires learning an equilibrium where agents' policies are optimal, hence, together with (Desiderata 4), require learning an equilibrium subject to specific constraints. As a simple example, increasing the market share of a specific LP could be achieved by increasing the risk aversion of its competitors.

\textbf{Our Contributions}. We formalize the game between LPs and LTs based on the concepts of agent type and supertype (section \ref{secpomg}) and show how to have populations of such agents efficiently learn optimal policies associated to a spectrum of objectives encompassing profit-and-loss, optimal execution and market share, using RL (section \ref{seclearn}). The ECN engine that we consider in this paper is a limit order book which evolves only as a consequence of agents' orders (market, limit, cancel). These agents are the LP and LT agents previously discussed, as well as an ECN agent which is in charge of sending orders to the ECN only. The construction of the order list of the ECN agent is based on a discrete-time Markovian model for the volume at the top levels of the book for which we establish the continuous-time limit. The quadratic variation of the latter limit process is a polynomial of order two, which we show generalizes the first order polynomial in \cite{Cont:2021}. We show how this model can be extended to the non-Markovian case using neural networks, where the evolution of the order book depends on its history (section \ref{sececn}). We study the game theoretic implications of LP agents using a shared policy suitably conditioned on their type and introduce the concept of shared equilibrium, a pure-strategy symmetric Nash equilibrium on the extended space of type-dependent stochastic policies (section \ref{secsharedeq}). We present convergence rates to such equilibria under a transitivity assumption, closely related to generalized ordinal potential games. We show how modern game theoretical tools can help analyze our complex market game (section \ref{secgamec}), namely differentiable games and their potential-Hamiltonian decomposition of \cite{pmlr-v97-balduzzi19a,dgame}. There, we introduce the concepts of Hamiltonian and potential weights of a differentiable game, and show on simple examples how these quantities give interesting insights into our market game. We introduce a novel RL-based calibration algorithm which performs well compared to a Bayesian optimization baseline (section \ref{seccalib}). There, LP and LT agents learn to reach an equilibrium jointly with a RL-based calibrator agent aiming at imposing constraints on that equilibrium. Finally, we conduct experiments to show i) the efficiency of our novel calibration algorithm, and ii) the emergent behaviors learnt by the LP agents as a function of their objectives (section \ref{secexp}). In particular, we find that LPs naturally learn to balance hedging and skewing, where the latter refers to setting their pricing asymmetrically on the bid and ask sides as a function of their inventory.

\textbf{Related work.} The market making problem has been extensively studied in finance and economics literature, largely as an optimal control problem. Classical models such as \cite{Garman1976-oy}, \cite{Amihud1980-qz}, \cite{Ho1981-yy}, and more recently \cite{Avellaneda2008-de, Gueant2013-gc, gueantberg}, focus on the role of inventory risk in determining the optimal pricing strategy for a market maker. Other models, such as \cite{Glosten1985-cx}, study the role of adverse selection risk arising from informed traders in the market. All these works model a single market maker and make assumptions about the distributions of order arrivals and price to derive analytical solutions for the market maker's pricing policy using stochastic optimal control techniques. Market making has also been studied in the agent-based modeling literature \cite{Darley2000-aw}, \cite{Das2005-ji},\cite{Das2008-mu}, \cite{Jumadinova2010-pu}, \cite{Wah2017-lb}. Most previous work focuses on market making in limit order book markets where all agents submit their orders to a central matching facility; there has been relatively little work focused on dealer markets where the agents interact directly with each other. Recently, \cite{Gueant2017-cw}, \cite{Bank2018-ip} and \cite{Ghoshal2016-kj} have extended the optimal control approach to dealer markets. In \cite{gueant2021_1,gueant2021_2}, liquidity providers' optimal hedging is considered, in addition to the optimal pricing problem. \cite{Chan2001-hp}, and more recently, \cite{Lim2018-yh} and \cite{Spooner2018-vj} have developed RL-based market making approaches for limit order book markets; however, they do not explicitly model the competing market makers or study different competitive scenarios. \cite{contwei} recently study competition and collusion among a set of market makers, and use reinforcement learning as a mean to solve for the equilibrium of the game. \cite{ijcai2020-633} study a discrete-time zero-sum game between a market maker and an adversary and show that adversarial reinforcement learning can help produce more robust policies. \cite{ganesh2019reinforcement,ardon} are some of the work closest to ours, where competition among a set of market makers and investors is studied by means of reinforcement learning. The work of \cite{jaim3} presents various algorithmic trading models using stochastic optimal control.

Regarding the literature on MARL, there are no algorithms with convergence guarantees to Nash equilibria in general sum $n$-player Markov games with continuous action and state spaces, as it is the case for our game among LP and LT agents. Our game is further partially observable, which makes matters even worse. Existing algorithms have guarantees in specific settings. For example, fictitious play \cite{fictitious} (potential games, zero-sum games); double oracle \cite{doubleo} (finite action spaces, but in the worst case, all actions of the game need to be enumerated); policy-sparse response oracles (PSRO) \cite{psro}, a modern variant of double oracle, has been proved to be efficient in practice but has no theoretical guarantees: the space of policies is infinite, hence you cannot simply enumerate it as you would do in double oracle. Most of the MARL literature has focused on cooperative games \cite{Gupta2017-it,nipsFoerster,ps4,maven}. Recently, an extension of normal form potential games to Markov games has been considered under the terminology "Markov potential game" \cite{markovpg}. Any Markov game whereby agents take actions and transition from states to states over multiple timesteps can be recast as as a one-shot game with utilities the agents' value functions $V_i(\pi_i, \pi_{-i})$. The latter game is a one-shot game over a larger pure strategy space, hence it is sometimes called a "meta-game" \cite{psro}. Markov potential games are related to potential games by the intuitive fact that the meta-game is assumed to be potential. \cite{markovpg} shows that independent policy gradient converges to Nash equilibria for this class of games. In contrast, our convergence proof will be done under a transitivity assumption similar to generalized ordinal potential games \cite{potential}, which are more general than (exact) potential games. Further, most MARL algorithms consider $n$ fully heterogeneous agents. In our case, we want to tie these agents together by the concept of type: two agents with similar risk aversion should behave similarly. The latter is not guaranteed using independent policy gradient, in addition of being sample inefficient. Rather, conditioning the policy on the agent type represents an efficient alternative. This is the concept of parameter sharing introduced in \cite{Gupta2017-it,nipsFoerster}, which falls under the centralized training with decentralized execution paradigm (CTDE). Although their work considers a cooperative setting where agents maximize a joint reward, parameter sharing is actually the only method out of their proposed three that doesn't require reward sharing, and we exploit this fact in our work. The approach we employ in this paper (based on a combination of policy sharing and definition of agent types) can be used with any on-policy RL algorithm, such as A2C \cite{a2c}, TRPO \cite{pmlr-v37-schulman15}, or PPO \cite{ppo}. It has been shown that PPO enjoys competitive performance when suitably tuned \cite{ppo2} and when using sufficient parallelism to generate on-policy experience. For this reason, we use PPO in this paper, with our experiments employing 60 to 90 CPUs in parallel to generate sufficient amount of data for training the shared policy. In contrast, there has been recent academic progress in MARL which has focused on off-policy learning, such as MADDPG \cite{maddpg} and value-decomposed Q-learning \cite{ps3,ps4}.

On the topic of calibration, \cite{zheng2020ai} uses a shared policy for worker agents earning individual rewards and paying tax. There, the RL-based tax planner shares some similarities with our RL calibrator, although our calibrator is responsible for optimally picking agent type distribution rather than public information observable by all agents, and further updates its policy on a slower timescale so as to allow equilibria to be reached by the shared policy. The idea of using RL to calibrate parameters of a system probably goes back to \cite{earl}, in the context of evolutionary algorithms. As mentioned in \cite{Avegliano2019-vx}, there is currently no consensus on how to calibrate parameters of agent-based models. Most methods studied so far build a surrogate of the multi-agent system \cite{Avegliano2019-vx,surlamp}. The term "surrogate" is very generic, and could be defined as a model that approximates the mapping between the input parameters and some output metric of the multi-agent system. \cite{surlamp} studies classifier surrogates, and in contrast to the latter and other work on calibration, our work is based on a dual-RL approach where our RL calibrator learns jointly with RL agents learning an equilibrium. In our experiments, we compare our approach to a Bayesian optimization baseline that builds such a surrogate. Inverse RL \cite{irl} could be used for calibration, but it aims at recovering unknown rewards from input expert policy: in this work we don't need the latter and assume that rewards are known for each agent type, and that the goal is to find the optimal agent type distribution.

\section{Supertype-based multi-agent simulation model}

\subsection{Agent types and supertypes}
\label{secpomg}

\textbf{Partially Observable Markov Game setting.} Throughout, we call \textit{agent class} $\kappa \in \{LP, LT\}$. The game that we consider is a game among $n_{LP}$ LPs and $n_{LT}$ LTs, of finite and deterministic time horizon $T$. We write $n_{tot}=n_{LP}+n_{LT}$ the total number of agents, and let $n_{ECN}$ be the number of ECNs. Connectivity among agents, and among agents and ECNs, is given by an input connectivity graph. Two elements of this graph can interact together if and only if they are connected, and are always disconnected if they belong to the same class. Real markets evolve in continuous time, but here, we will consider a discretized version of the game, where time $t$ is integer-valued and each time increment corresponds to a fixed value, for example one second. This will make our framework amenable to RL. We will use the related terminology throughout, where \textit{agents} observe private \textit{states}, take \textit{actions} and obtain per-timestep \textit{rewards} as a consequence. We call the simulation of a stochastic game path over $[0, T]$ an \textit{episode}, and the aggregated utility gained by each one of the agents over timesteps its \textit{cumulative reward}. All of our random variables are defined on the same probability space, and we adopt the notational convention used in the field of RL that $X \sim \mu$ indicates that the random variable $X$ has distribution $\mu$.

The schematic and informal game structure is as follows: at each time $t \in [0, T]$, LPs first stream bid and ask prices\footnote{these are price curves representing price as a function of quantity traded.} at which they are willing to trade with LTs they are connected to, and additionally decide a hedge fraction of their current inventory to trade on the ECN market. Second, LTs decide a trade size (possibly zero) and direction based on the observed prices. When that decision is made, the trade will \textit{always} occur at the best possible price among the LPs and ECN the LT is connected to. In this sense, this is a multi-stage, or stochastic Stackelberg game, since LTs play in response to LPs. The exact definition of players' actions and states in our OTC context will be given in section \ref{secrlagents}, for now we focus on defining the game as a general framework based on the concept of supertype.

We formalize the game as a $n$-player partially observable Markov game \cite{Hansen2004-un}. At each time $t$, each agent observes a fraction of the environment, proceeds to taking an action, and gets a reward which depends on all agents' actions and states. In our Stackelberg context, states of LTs include connected LPs' actions corresponding to pricing. Within each agent class, all agents share the same action and state spaces $\As^\kappa$ and $\Ss^\kappa$, typically subsets of $\R^d$. We make no specific assumption on the latter spaces unless specifically mentioned, and denote the joint action and state as $\bm{a^\kappa_t}:=(a^{(1,\kappa)}_t,...,a^{(n_\kappa,\kappa)}_t)$ and $\bm{s^\kappa_t}:=(s^{(1,\kappa)}_t,...,s^{(n_\kappa,\kappa)}_t)$. Partial observability is a consequence of our connectivity graph: an agent observes ECN information only if connected to it. Similarly, a LT observes a LP's pricing only if connected to it. Any agent-specific information such as its inventory or trade history remains private to the agent. The joint state $(\bm{s^{LP}_t},\bm{s^{LT}_t})$ constitutes all information available in the system at time $t$.

\textbf{Agent types and supertypes.} Before proceeding to the formal definition of agent types and supertypes, we give an informal description to build the reader's intuition, cf. also \cite{vadori:2020:calibration}. In addition to the agent class which specifies in particular the structure of state and action spaces, agents of a given class differ by the utility they are optimizing for, also called \textit{reward} in the context of RL. In our case and as usual in games, agents' rewards depend on players' actions and states, but also on additional agent-specific parameters capturing their characteristics and objectives, which we call \textit{agent type}. In our market framework, the type of a given agent includes its risk-aversion acting as a regularizer for PnL, its trade-off parameters between PnL, market share and trade quantity targets discussed in section \ref{sec1}, as well as its connectivity structure to other agents. This aspect will be further detailed in section \ref{secrlagents}. Agent types correspond to a parametrized family of reward functions capturing a diverse set of objectives. Thinking in terms of agent type within a given agent class will allow us to learn efficiently a spectrum of behaviors using a single neural network suitably conditioned on the type, by exploiting their generalization power. As a simple example, if two agents have risk aversion parameters differing by a small amount but are identical otherwise, it is not efficient to learn two distinct policies, rather it is better to condition the policy on the risk aversion so as to learn a family of behaviors at once. Note that this generalization trick has also been used in the context of mean-field games \cite{perrin} where they condition policies on the initial agent population so as to learn policies that can generalize well in that variable. \cite{leal} has used it to learn a family of optimal controls as a function of the trader's preferences.

It can be convenient to think of agents in terms of distributions of agents, and not at the individual level. This allows to scale the simulation in a tractable way with fewer parameters, while observing diversity in agents' characteristics. This leads to our concept of \textit{agent supertype}, which are simply parameters driving the distribution of agent type. Types will be sampled from supertypes at the beginning of each episode, and remain constant throughout a given episode. In this sense, supertypes can be seen as behavioral templates according to which agents can be cloned probabilistically. Note that introducing noise via the randomization of types can help convergence to equilibria, as noted in \cite{xu1} in a linear-quadratic framework. Typically, we create groups of agents who share the same supertype, so that the number of distinct supertypes is typically much less than the number of agents. In our setting, agent supertypes will include in particular probabilities of being connected to other agents, so that actual connectivities between agents of different supertypes can be sampled as Bernoulli random variables.

Formally, we assign to each agent $i$ a supertype $\Lambda^\kappa_i \in \Ss^{\Lambda^\kappa_i}$, with $\bm{\Lambda^\kappa}:=(\Lambda^\kappa_i)_{i \in [1,n_\kappa]}$. At the beginning of each episode, agent $i$ is assigned a type $\lambda^\kappa_i \in \Ss^{\lambda^\kappa}$ sampled probabilistically as a function of its supertype, namely $\lambda^\kappa_i \sim p_{\Lambda^\kappa_i}$ for some probability density function $p_{\Lambda^\kappa_i}$, and initial states $s^{(i,\kappa)}_0$ are sampled independently according to the initial distribution $\mu^0_{\lambda^\kappa_i}$. This is formally equivalent to extending agents' state space to $\Ss^\kappa \times \Ss^{\lambda^\kappa}$, with a transition kernel that keeps $\lambda^\kappa_i$ constant throughout an episode and equal to its randomly sampled value at $t=0$. Note that typically, $\Ss^{\Lambda^\kappa_i}$ and $\Ss^{\lambda^\kappa}$ are subsets of $\R^d$.

\textbf{Rewards, state transition kernel and type-symmetry assumption.} In the following we make symmetry assumptions whose only purpose is to guarantee that an agent's expected cumulative reward only depends on its supertype, given that all agents' policies are fixed.
Let $z^{(i,\kappa)}_t:=(s^{(i,\kappa)}_t,a^{(i,\kappa)}_t,\lambda^\kappa_i)$. At each time $t$, LP agent $i$ receives an individual reward $\RR^{LP}(z^{(i,LP)}_t,\bm{z^{(-i,LP)}_t}, \bm{z^{(LT)}_t})$, where the vector $\bm{z^{(-i,\kappa)}_t}:=(z^{(j,\kappa)}_t)_{j \neq i}$. Similarly, LT agent $i$ receives an individual reward $\RR^{LT}(z^{(i,LT)}_t,\bm{z^{(-i,LT)}_t}, \bm{z^{(LP)}_t})$. Denote $\mathcal{Y}^\kappa := \Ss^\kappa \times \As^\kappa \times \Ss^{\lambda^\kappa}$. The state transition kernel $\T: (\mathcal{Y}^{LP} )^{n_{LP}} \times (\mathcal{Y}^{LT} )^{n_{LT}} \times (\Ss^{LP})^{n_{LP}} \times (\Ss^{LT})^{n_{LT}} \to [0,1]$ is denoted $\T(\bm{z^{(LP)}_t}, \bm{z^{(LT)}_t}, \bm{s_t'^{LP}}, \bm{s_t'^{LT}})$, and represents the probability of reaching the joint state $(\bm{s_t'^{LP}}, \bm{s_t'^{LT}})$ conditionally on agents having the joint state-action-type structure $(\bm{z^{(LP)}_t}, \bm{z^{(LT)}_t})$. 

We now proceed to making assumptions on the rewards and state transition kernel that we call \textit{type-symmetry}, since they are similar to the anonymity/role-symmetry assumption in \cite{Li2020-fu}. Their only purpose is to guarantee that the expected reward of an agent in (\ref{vi2}) only depends on its supertype $\Lambda_i^\kappa$. In plain words, the latter guarantees that from the point of view of a given agent, all other agents are interchangeable, and that two agents with equal supertypes and policies have the same expected cumulative reward. 
\begin{assumption}
\label{asts}
\textbf{(Type symmetry)} For $\kappa \in \{LP, LT\}$, $\RR^{\kappa}$ is invariant w.r.t. permutations of both its second and third arguments, namely for any permutations $\rho$, $\mu$ we have $\RR^{\kappa}(\cdot,\bm{z_1}^\rho,\bm{z_2}^\mu)=\RR^{\kappa}(\cdot,\bm{z_1},\bm{z_2})$ for any $\bm{z_1}$, $\bm{z_2}$, and $\T(\bm{z_1}^\rho,\bm{z_2}^\mu,\bm{s_1}^\rho,\bm{s_2}^\mu) = \T(\bm{z_1},\bm{z_2},\bm{s_1},\bm{s_2})$ for any $\bm{z_1}$, $\bm{z_2}$, $\bm{s_1}$, $\bm{s_2}$, where superscripts denote the permuted vectors. 
\end{assumption}

\textbf{Time-horizon.} We have assumed that the time-horizon $T$ is deterministic and finite. In this context, it is well-known that the agents' policies in (\ref{pidef}) need to be time-dependent. We deal with this aspect the usual way by including time in the agents' states $s^{(i,\kappa)}_t$ \footnote{in the context of Markov semigroups, a time-dependent semigroup on a given space can be recast as a time-homogenous semigroup in the space-time domain.}. Our analysis can be extended straightforwardly to the case where $T$ is a random variable corresponding to the first time that the joint state $(\bm{s^{LP}_t},\bm{s^{LT}_t})$ hits some set (episodic case). The same goes for the case $T=+\infty$, but in that case we must have a discount factor $\zeta<1$ in (\ref{vi2}). The two latter cases do not require time to be included in the agents' states, since the problem becomes time-homogeneous.

\subsection{Efficient learning of a spectrum of agent behaviors via reinforcement learning}
\label{seclearn}

\subsubsection{Shared policy conditioned on agent type}
\label{secshared}

Multi-agent learning in partially observable settings is a challenging task. When all agents have the same action and state spaces, the work of \cite{nipsFoerster,Gupta2017-it} has shown that using a single shared policy $\pi$ across all agents represents an efficient training mechanism. This falls into the centralized training with decentralized execution paradigm (CTDE). In that case, $\pi$ is a neural network that takes as input the individual agent states $s^{(i)}_t$ and outputs (a probability over) individual agent actions $a^{(i)}_t$, hence the terminology \textit{decentralized execution}, since the network doesn't take as input the joint agent states. The network is trained by collecting all agent experiences simultaneously and treating them as distinct sequences of local states, actions and rewards experienced by the shared policy, hence the terminology \textit{centralized training}. In our case, we have two classes of homogenous agents, LPs and LTs, so we will use one shared policy per class.

We include the agent type $\lambda_i^\kappa$ in the local states and hence define the shared policy over the extended agent state space $\Ss^\kappa \times \Ss^{\lambda^\kappa}$. Denoting $\mathcal{X}^\kappa$ the space of functions $\Ss^\kappa \times \Ss^{\lambda^\kappa} \to \Delta(\As^\kappa)$, where $\Delta(\As^\kappa)$ is the space of probability distributions over actions, we then define:
\begin{align}
\label{pidef}
\begin{split}
    &\mathcal{X}^\kappa := \left[ \Ss^\kappa \times \Ss^{\lambda^\kappa} \to \Delta(\As^\kappa)\right], \\
    &\pi^\kappa(da|s,\lambda) := \PP\left[a^{(i,\kappa)}_t \in da | s^{(i,\kappa)}_t=s, \lambda_i^\kappa=\lambda\right], \hspace{5mm} \pi^\kappa \in \mathcal{X}^\kappa.
\end{split}
\end{align}

Due to the partial observability of the game (imperfect information), optimal agents' actions should depend on the whole history of their private states $(s^{(i)}_k)_{k \leq t}$ \cite{Lockhart19ED}. As often done so (\cite{Gupta2017-it}), we can use the hidden variable $h$ of a LSTM to encode the agent history of states and condition the policy on it $\pi(\cdot|s^{(i,\kappa)}_t,h^{(i,\kappa)}_{t-1},\lambda_i^\kappa)$: to ease notational burden we do not include it in the following, but this is without loss of generality since $h$ can always be encapsulated in the state $s^{(i,\kappa)}_t$.

Due to our type-symmetry assumption \ref{asts}, we see that the expected reward $V_{\Lambda_i^\kappa}$ of each agent $i$ only depends on its supertype $\Lambda_i^\kappa$ and the shared policies $\pi^{LP}$, $\pi^{LT}$\footnote{it also depends on other agents' supertypes $\bm{\Lambda_{-i}^\kappa}$ independent of their ordering, but since we work with a fixed supertype profile $\bm{\Lambda^\kappa}$ for now, $\bm{\Lambda_{-i}^\kappa}$ is fixed when $\Lambda_i^\kappa$ is.}.
\begin{align}
\label{vi2}
V_{\Lambda_i^\kappa}(\pi^{LP},\pi^{LT}):=\EE_{\substack{\lambda_i^\kappa \sim p_{\Lambda_i^\kappa}\\ a^{(i,\kappa)}_t \sim \pi^\kappa(\cdot|\cdot,\lambda_i^\kappa)}} \left[ \sum_{t=0}^T \zeta^t \RR^\kappa(z^{(i,\kappa)}_t,\bm{z^{(-i,\kappa)}_t}, \bm{z^{(-\kappa)}_t})\right], \hspace{1mm} \pi^\kappa \in \mathcal{X}^\kappa.
\end{align}

where $\zeta \in [0,1]$ is the discount factor, and the superscript $(-\kappa)$ denotes the agent class - LP or LT - which is not $\kappa$. 

Considering the expected agent rewards in (\ref{vi2}), we will aim to find Nash equilibria of the game. Our game is not analytically tractable, so we will focus on reaching empirical convergence of agents rewards, using RL. This is done in sections \ref{secgamec} and \ref{secexp}. In the case where LT's do not learn, i.e. their policy $\pi^{LT}$ is fixed\footnote{for example, they buy or sell with equal probability.}, we investigate in section \ref{secsharedeq} the nature of equilibria that can be reached by the LPs using a shared policy.

It is to be noted that we do not assume that agents know the supertype profile $\bm{\Lambda}:=(\bm{\Lambda^{LP}},\bm{\Lambda^{LT}})$, contrary to Bayesian games where the "type prior" is assumed to be of common knowledge. One could always define the type $\lambda^\kappa_i$ so that it contains the supertype $\Lambda^\kappa_i$ (or all supertypes) by construction, but this is a particular case and in general agents do not know $\Lambda^\kappa_i$, and therefore cannot condition their policy on it.

Since agents may have different states at a given point in time, sharing a network still allows different actions across agents. The rationality assumption underlying parameter sharing is the following: if two agents have equal types and equal sequences of historical states at a given point in time, then they should behave the same way, i.e. the distributions of their actions should be equal. Agents do not directly observe types nor supertypes of other agents, rather they observe these quantities indirectly via the result of their own actions $a^{(i,\kappa)}_t$, which is encoded in their (history of) states $s^{(i,\kappa)}_t$. For example, a LP agent quoting a given price will not receive the same trading activity from LTs depending on the risk aversion of other LPs. Although our focal LP cannot observe the risk aversion of his competitors, he observes the resulting trading activity and can tailor his behavior to the latter.

\subsubsection{Reinforcement learning design of OTC agents}
\label{secrlagents}

In the remainder of the paper, we assume for simplicity that there is one ECN, $n_{ECN}=1$, and we denote $P_t$ its mid price at the beginning of timestep $t$. Our framework supports $n_{ECN}>1$, in which case a reference price $P_t$ would typically be computed as a liquidity-weighted average mid-price over the $n_{ECN}$ mid-prices. The ECN, and in particular the construction of $P_t$, is discussed in section \ref{sececn}. 

Following our discussion in section \ref{sec1}, we represent the utilities of both LP and LT agents as a trade-off between a risk-penalized PnL term, and a purely quantity-related, PnL-independent term. In the case of the LP agent, the latter is market share. In the case of the LT agent, it is a pair of trade objectives on the bid and ask sides. This makes our agent formulation compact and unified as a hybrid PnL/volume objective. We summarize the RL formulation of our LP and LT agents in table \ref{tabagents} and proceed to the description of the details in the remainder of the section.

\textbf{Agent risk-penalized PnL.} Let $PnL_{t+1}$ be the cumulative PnL of a given agent at the end of time $t$ (i.e. beginning of time $t+1$), and  $\Delta PnL_{t+1} := PnL_{t+1} - PnL_{t}$ the incremental PnL. We then have the usual decomposition of PnL into inventory and spread terms, which follows by simple "accounting":
\begin{align}
\begin{split}
& PnL_{t} = PnL_{\text{inv},t} + PnL_{\text{spread},t} + \ell(q_t), \hspace{5mm} PnL_{\text{inv},t} := \sum_{i=1}^{t} q_{i}(P_{i}-P_{i-1}),\\ 
& PnL_{\text{spread},t} := \sum_{i=1}^{t} \sum_{b \in \mathcal{B}_i} q_i^{b}(P_{i-1} - P_{i-1}^{b}) + \sum_{a \in \mathcal{A}_i} q_i^{a}(P_{i-1}^{a}-P_{i-1}).
\end{split}
\end{align}

 We denote $\ell(q_t)$ the usual terminal inventory penalty capturing the fact that liquidating $q_t$ cannot usually be done exactly at price $P_t$ \cite{Gueant2017-cw}, where $q_t$ is the agent's net inventory at the beginning of time step $t$. In this work we take $\ell \equiv 0$ since we are not specifically interested in studying the impact of this penalty, but our framework allows for arbitrary $\ell$.

$PnL_{\text{inv},t}$ is the inventory PnL and captures the fluctuations of the reference mid-price $P_t$ over time.

$PnL_{\text{spread},t+1}$ is the cumulative spread PnL obtained at the end of time step $t$ as a consequence of the trades performed during time step $t$ and before. It captures the local profit of individual trades relative to the mid-price $P_t$. Let $\mathcal{A}_{t+1}$ be the set of trades performed by the agent on its ask side during time step $t$ at prices $\{P^a_{t}\}_{a \in \mathcal{A}_{t+1}}$ and absolute quantities $\{q^a_{t+1}\}_{a \in \mathcal{A}_{t+1}}$ (i.e. sold by the agent). Similarly, for trades on the agent's bid side, we use the notations $\mathcal{B}_t$, $q^b_t$, $P^b_t$. Note that with these notations, $\Delta q_t = \sum_{b \in \mathcal{B}_t} q_t^{b} - \sum_{a \in \mathcal{A}_t} q_t^{a}$. 

Trades occurring on the OTC market between LPs and LTs can be assimilated to market orders. Indeed, LPs stream prices at which they are willing to trade with LTs, which in turn decide if and how much they want to trade. In particular, LTs cannot send limit orders to LPs. Precisely, these prices take the form of a pair of price curves, bid and ask, representing price as a function of quantity traded. Equivalently, one can see this pair of curves as an order book. As specified below and to keep consistency with the LP-LT interaction, we consider that the orders performed by LP and LT agents at the ECN are market orders, but we discuss how to extend their action spaces to include the possibility of limit orders.

\textbf{Liquidity Provider Agent.} Let $m_{t+1}$ be the LP's market share during time step $t$, namely the fraction of the total quantity traded by all LTs that the LP was able to attract (buy and sell included), and $\widehat{m}_{t+1} = \frac{1}{t+1} \sum_{k=1}^{t+1} m_{k}$ its running average. Formally, for a LP $i$:
$$
m^i_t = \frac{\sum_{p \in\{bid,ask\}}\sum_{k=1}^{n_{LT}} q_{i,k,p,t}}{\sum_{i=1}^{n_{LP}}\sum_{p \in\{bid,ask\}}\sum_{k=1}^{n_{LT}} q_{i,k,p,t}},
$$ 
where $q_{i,k,p,t}$ is the total absolute quantity traded between LT $k$ and LP $i$ on side $p$, and $\sum_{i=1}^{n_{LP}} m^i_t=1$. Note that a given LP competes with all other LPs but also with ECNs. We define the term $\mathcal{M}_{t+1}$ as the distance to a target $m_{\text{target}}$:
\begin{equation*}\label{eq:LP Market Share}
    \mathcal{M}_{t+1}(m_{\text{target}}) := \left| \widehat{m}_{t+1} - m_{\text{target}} \right|.
\end{equation*}

In this work we take $m_{\text{target}}=1$, so that $\mathcal{M}_{t+1}$ will simply relate to maximizing market share (since it cannot be more than one). The parameters $\eta$, $\omega$ are trade-off parameters between PnL and market share: the former is a normalizer to make both quantities comparable, while the latter is a weight that gives more or less importance to the PnL.

In real markets, LPs aim at maximizing both PnL and market share. We reflect this in our reward formulation $\RR^{LP}_{t+1}$, which is the reward obtained by the LP agent at the end of time $t$, due to actions performed during time $t$\footnote{we remind that $\Delta X_{t+1} := X_{t+1} - X_{t}$.}:
\begin{align}
\begin{split}
\label{rewardlp}
    &\RR^{LP}_{t+1} := \omega \cdot \eta \cdot \Delta PnL_{t+1}^\gamma - (1-\omega) \cdot \Delta \mathcal{M}_{t+1}(m_{\text{target}}), \hspace{5mm} \eta>0, \hspace{2mm} \omega \in [0,1],\\
    &PnL_{t}^\gamma :=PnL_{t} - \gamma \sum_{k=1}^t\left| \Delta PnL_{\text{inv}, k}\right|.
\end{split}
\end{align}

The parameter $\gamma$ is the risk aversion which typically relates to a quadratic penalty on inventory \cite{Gueant2017-cw}. We choose a $L_1$ penalty on the inventory PnL variation since it makes the penalty homogeneous to PnL, and we chose to penalize inventory PnL rather than inventory since the penalty should be zero when the volatility of $P_t$ is zero. If $P_t$ were a Brownian motion with volatility $\sigma$, the expected value of our inventory PnL penalty would be $\EE \left| \Delta PnL_{\text{inv}, t}\right| = \sqrt{\frac{2}{\pi}}\sigma q_{t}$. Our penalty formulation allows to adapt to changes in the volatility rather than assuming it is constant.

Let $x^b_{t}(q)$, $x^a_{t}(q)$ be the quantity-normalized ECN spreads of trading $q$ on the bid and ask, corresponding to prices $P_t - x^b_{t}(q)$ and $P_t + x^a_{t}(q)$. Let $x_{t}(q) := \frac{1}{2}(x^a_{t}(q)+x^b_{t}(q))$ be the symmetrized spread, and $x_t:= 2 \lim_{q\to 0} x_{t}(q)$ the market spread, i.e. difference between best ask and best bid. The LP constructs its prices on both sides by tweaking the ECN reference price:
\begin{align}
\begin{split}
    P^a_{t}(q, \epsilon_{t,\text{spread}}, \epsilon_{t,\text{skew}}) = P_t + x_{t}(q) +  \frac{1}{2} \epsilon_{t,\text{spread}} x_t + \epsilon_{t,\text{skew}} x_t,\\
    P^b_{t}(q, \epsilon_{t,\text{spread}}, \epsilon_{t,\text{skew}}) = P_t - x_{t}(q) - \frac{1}{2} \epsilon_{t,\text{spread}} x_t + \epsilon_{t,\text{skew}} x_t.
\end{split}
\end{align}
The spread tweak $\epsilon_{t,\text{spread}} \geq -1$ controls the price difference $P^a_{t}-P^b_{t}$ and impacts both sides symmetrically. The skew tweak $\epsilon_{t,\text{skew}} \in \R$ shifts prices towards one side or another. Typically, the LP chooses the latter so as to attract flow to reduce its inventory: this is referred to as \textit{skewing} or \textit{internalization}, and we will see in section \ref{secemergent} that our agents are able to learn such behavior in an emergent manner. Note that the lower bound for $\epsilon_{t,\text{spread}}$ comes from the condition $\lim_{q \to 0} P^a_{t}(q, \epsilon_{t,\text{spread}}, \epsilon_{t,\text{skew}})-P^b_{t}(q, \epsilon_{t,\text{spread}}, \epsilon_{t,\text{skew}}) \geq 0$. We could also equivalently have defined, in the spirit of \cite{gueant2021_1}, $\epsilon_{t,\text{ask}} := \frac{1}{2} \epsilon_{t,\text{spread}} + \epsilon_{t,\text{skew}}$, $\epsilon_{t,\text{bid}} := \frac{1}{2} \epsilon_{t,\text{spread}} - \epsilon_{t,\text{skew}}$. It is a simple bijection between the two formulations, but the spread-skew formulation is more intuitive since it allows to decouple the willingness of the LP to stream competitive prices (spread) to his asymmetric treatment of the two sides (skew). Note that $\frac{1}{2}\epsilon_{t,\text{spread}} = \frac{1}{2} (\epsilon_{t,\text{ask}} + \epsilon_{t,\text{bid}})$ and $2\epsilon_{t,\text{skew}} = \epsilon_{t,\text{ask}} - \epsilon_{t,\text{bid}}$, which highlights that the former is an average spread between the two sides, and the latter a spread asymmetry.

We allow the LP to quote prices at a granularity equal to $\xi_{LP}=0.1$ bp, which is the typical granularity that we can observe in FX liquidity aggregators. In practice, $P_t^a$ and $P_t^b$ are projected on such grid.

In the RL terminology, LP's actions at time $t$ are the pricing parameters $\epsilon_{t,\text{spread}} \geq -1$, $\epsilon_{t,\text{skew}} \in \R$ and a hedge fraction $\epsilon_{t,\text{hedge}} \in [0,1]$, which results in a market order $\epsilon_{t,\text{hedge}} q_t$ at the ECN.

Due to the partial observability of the game, optimal agents' actions should depend on the whole history of the per-timestep states $s^{(i,LP)}_t$ \cite{Lockhart19ED}. These states contain the reference mid-price $P_t$, its own net inventory $q_t$, the fraction of time elapsed $\frac{t}{T}$, its market share $\widehat{m}_{t}$, liquidity available on the top $m$ levels of the ECN order book, and the cost of hedging a fraction $\epsilon_{t,\text{hedge}} q_t$ of the current inventory, for a vector of values of $\epsilon_{t,\text{hedge}} \in [0,1]$. Note that time is included in the states since we are working on a finite time-horizon, hence optimal actions should be time-dependent.

The parameters characterizing the agent type are: $\eta$, $\omega$, $\gamma$, $m_{\text{target}}$ in (\ref{rewardlp}), as well as the empirical fraction of LTs of each supertype that the agent is connected to. The latter is a vector of size the number of LT supertypes, containing, for each episode, the empirical fraction of LT agents that are connected to the LP.

In this work, we assume that orders $\epsilon_{t,\text{hedge}} q_t$ sent by the LP to the ECN are market orders. It is possible to extend this framework to include the possibility of limit orders, by extending the action space $\mathcal{A}^{LP}$ to include the distance (in ticks) from the top of book to place the order at. The same observation holds for LTs.

\textbf{Liquidity Taker Agent.} At each time $t$, the LT agent decides a trade size and direction based on the observed prices. When that decision is made, the trade will always occur, independently of the LT's policy, at the best possible price among connected LPs and ECN. In this work, we consider a simplification where each LT trades a unit quantity $q_{LT}$ (possibly different across LT agents) and only decides to buy or sell $q_{LT}$, or not to trade. In this sense its action space can be assimilated to $\{1,-1,0\}$. Note that in order to have a population of LTs trading different sizes, we can define different supertypes with specific distributions for $q_{LT}$. We define $\RR^{LT}_{t+1}$ the reward obtained by the LT agent at the end of time $t$, due to actions performed during time $t$:
\begin{equation}
\begin{split}
\label{ltreward}
   & \RR^{LT}_{t+1} := \omega \cdot \eta \cdot \Delta PnL_{t+1}^\gamma - (1-\omega) \cdot \Delta \mathcal{Q}_{t+1}(q^a, q^b), \hspace{5mm} \eta>0,\hspace{2mm} \omega \in [0,1],\\
   & \mathcal{Q}_{t+1}(q^a, q^b) := \frac{1}{2} \sum_{j \in \{a,b\}} \left|\widehat{q}^j_t  - q^{j} \right|, \hspace{5mm} q^a, q^b \in [0,1], \hspace{2mm} q^a+q^b \leq 1,\\
   & \widehat{q}^j_t:= \frac{1}{t+1}\sum_{k=0}^{t} 1_{\{a_t = j\}}.
   \end{split}
\end{equation}
Similar to the LP, $\RR^{LT}_{t+1}$ is a trade-off between PnL and a quantity-related term. Similar to the running average of the market share $\widehat{m}_t$, $\widehat{q}^b_t$ and $\widehat{q}^a_t$ represent the fraction of time the LT has bought and sold, which targets over the entire episode $[0,T]$ are $q^a$, $q^b$. The LT can decide not to trade, so we have $\widehat{q}^b_t+\widehat{q}^a_t \leq 1$. Since when the LT trades, it trades the fixed quantity $q_{LT}$, the total quantities traded on both sides over an episode of length $T$ are $\widehat{q}^a_T q_{LT} T$, $\widehat{q}^b_T q_{LT} T$, hence $\widehat{q}^b_t$, $\widehat{q}^a_t$ can alternatively be seen as trade objectives. When the weight $0<\omega<1$ and $\eta$ is high, we can interpret the utility (\ref{ltreward}) as a form of optimal execution, since we wish to reach trade objectives $q^a$, $q^b$ while maximizing PnL. We will call LTs characterized by $\omega=0$ and $\omega=1$ \textbf{flow LTs} and \textbf{PnL LTs}. 

We illustrate this reward formulation in figure \ref{fig:LT toy example}. We fix the shape of the ECN mid-price curve $P_t$ in blue, train a single shared policy network $\pi^{LT}$ by randomizing the PnL weight $\omega$ at the beginning of the episode, conditioning $\pi^{LT} \equiv \pi^{LT}(\cdot|\cdot, \omega)$ on $\omega$, and look at how the LT's trading behavior varies as a function of $\omega$. We see that as $\omega$ increases, the agent switches from achieving its $q^b=75\%$, $q^a=25\%$ bid-ask targets to maximizing PnL (buy low, sell high). When $\omega=0$, the agent exactly achieves its targets but does not necessarily trade at the most cost-efficient points in time. As $\omega$ increases, the agent tries to match its targets while trading in the best regions from a PnL point of view. As $\omega$ approaches one, the targets will stop playing any role in the agent's behavior.

In order to further show the relevance of our LT formulation, we show in figure \ref{PnlINVflow} the impact on the LP's flow response curve $\epsilon \to \mathcal{F}(\epsilon)$ of increasing the number of PnL driven LTs characterized by $\omega=1$. The definition of the curve $\epsilon \to \mathcal{F}(\epsilon)$ is given in section \ref{secemergent} and captures the trading volume received as a function of price. We see that such LTs introduce convexity in this curve, due to their eagerness to trade when the prices offered by the LP become particularly attractive. Such convexity is observed in HSBC EURUSD data in \cite{gueant2021_2}, which shows the ability of our LT formulation to generate interesting features of the market.

The LT's states $s^{(i,LT)}_t$ contain the reference mid-price $P_t$, the LP's net inventory $q_t$, the fraction of time elapsed $\frac{t}{T}$, its running trade fractions $\widehat{q}^a_t$, $\widehat{q}^b_t$, and the cost of buying and selling $q_{LT}$ from the set of LPs and ECNs it is connected to. 

The parameters characterizing the agent type are: $\eta$, $\omega$, $\gamma$, $q^a$, $q^b$, $q_{LT}$ in (\ref{ltreward}) and the fraction of LPs of each supertype that the agent is connected to.

\begin{figure}[ht]
    \centering
    \begin{subfigure}[b]{0.49\textwidth}
        \centering
        \includegraphics[width=\textwidth]{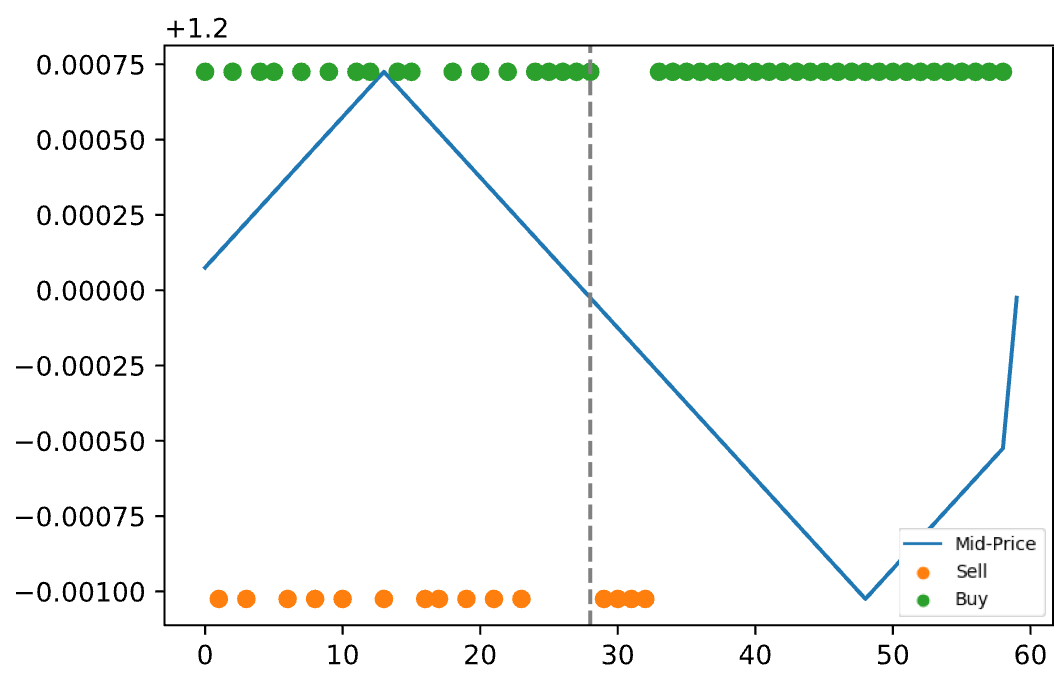}
        \caption{$\omega=0.00$}
        \label{fig:w=0.0}
    \end{subfigure}
    \hfill
    \begin{subfigure}[b]{0.49\textwidth}
        \centering
        \includegraphics[width=\textwidth]{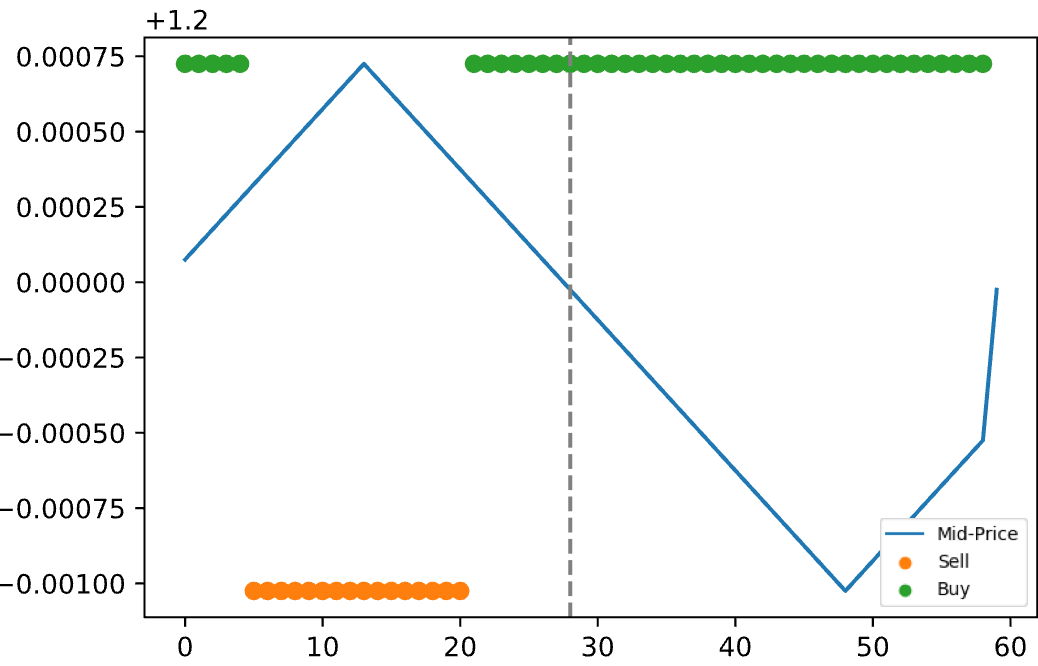}
        \caption{$\omega=0.25$}
        \label{fig:w=0.25}
    \end{subfigure}
    \hfill
    \begin{subfigure}[b]{0.49\textwidth}
        \centering
        \includegraphics[width=\textwidth]{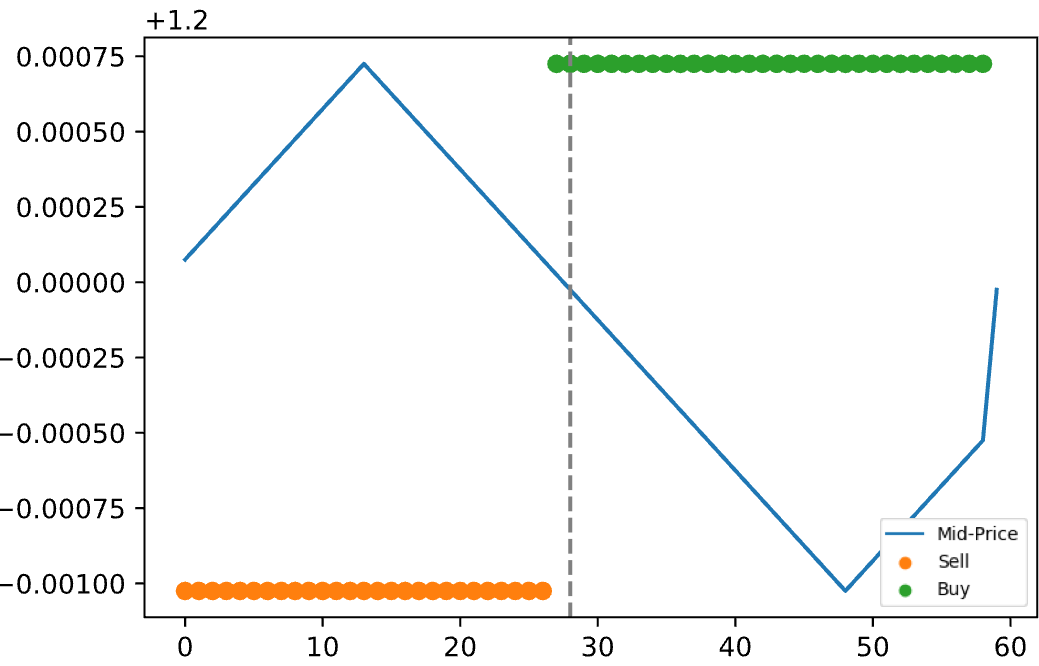}
        \caption{$\omega=0.75$}
        \label{fig:w=0.75}
    \end{subfigure}
    \hfill
    \begin{subfigure}[b]{0.49\textwidth}
        \centering
        \includegraphics[width=\textwidth]{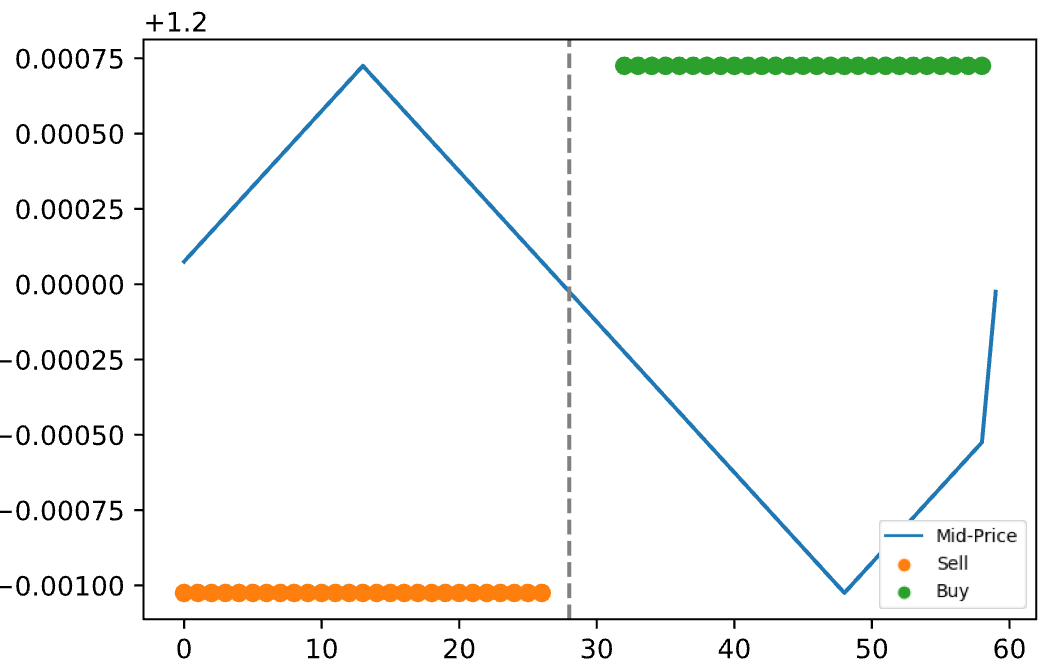}
        \caption{$\omega=1.0$}
        \label{fig:w=1.0}
    \end{subfigure}
    \caption{Spectrum of learnt actions by the shared LT agent policy $\pi^{LT}$, as a function of PnL weight $\omega$ (from left to right, from top to bottom: 0, 0.25, 0.75, 1). Mid-price as a function of time (blue), LT buy actions (green), sell actions (orange). Quantity targets $q^a=25\%$, $q^b=75\%$. The agent gradually shifts from exactly achieving its quantity targets to maximizing PnL (buy low, sell high).}
    \label{fig:LT toy example}
\end{figure}

\begin{figure}[ht]
  \centering
  \centerline{\includegraphics[width=\textwidth]{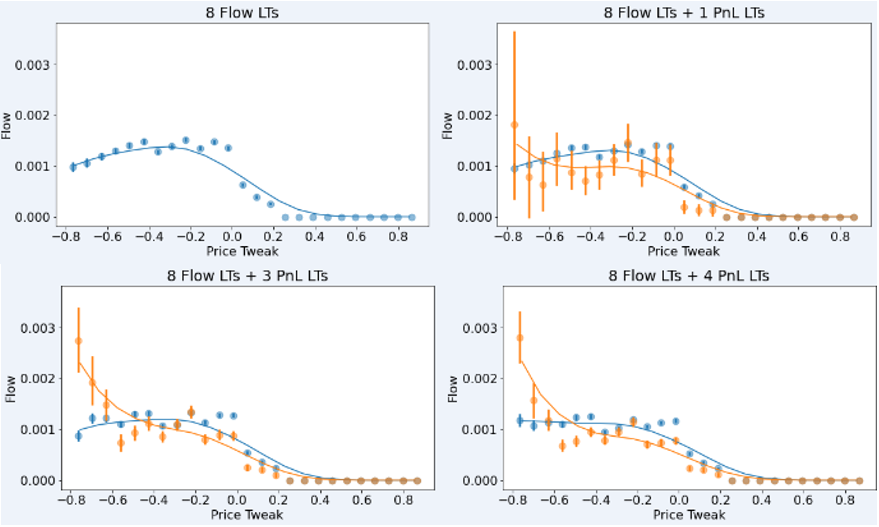}}
  \caption{Flow response curve $\epsilon \to \mathcal{F}(\epsilon)$. Impact on the flow received by the LP of increasing the number of LTs with PnL weight $\omega=1$ (from left to right, from top to bottom: 0, 1, 3, 4). Flow coming from flow LT's $\omega=0$ (blue) and PnL LT's $\omega=1$ (orange). The flow received from each agent class (flow, PnL) is normalized by the number of agents in that class, so that all curves are comparable and can be interpreted as the average flow received from a typical agent of that class.}
  \label{PnlINVflow}
\end{figure}

\begin{table}[!ht]
\caption{Summary of the RL formulation of market agents.}
\label{tabagents}
\begin{center}
\begin{scriptsize}
\begin{tabular}{ccc}
 \makecell{Agent class} & \makecell{Liquidity Provider} & \makecell{Liquidity Taker}  \\ 
\hline
States $s_t$ & \makecell{ reference price $P_t$, net inventory $q_t$, \\ time $\frac{t}{T}$, market share $\widehat{m}_{t}$, \\ ECN liquidity for top $m$ levels,\\ cost of hedging $\epsilon q_t$ for various $\epsilon$.}  &  \makecell{ reference price $P_t$, net inventory $q_t$, \\ time $\frac{t}{T}$, running quantity traded $\widehat{q}^a_t$, $\widehat{q}^b_t$, \\ cost of trading $q_{LT}$ from the best LP or ECN.} \\ \\

Type $\lambda$ & \makecell{$\gamma$, $\eta$, $\omega$, $m_{\text{target}}$, \\ fraction of LTs of each supertype it is connected to\\ (vector of size the number of LT supertypes).} & \makecell{$\gamma$, $\eta$, $\omega$, $q^a$, $q^b$, $q_{LT}$, \\ 
fraction of LPs of each supertype it is connected to\\ (vector of size the number of LP supertypes).} \\ \\

Actions $a_t$ &  \makecell{Pricing parameters $\epsilon_{\text{spread}} \geq -1$, $\epsilon_{\text{skew}} \in \R$, \\ and hedge fraction $\epsilon_{\text{hedge}} \in [0,1]$.} & \makecell{$\{1,-1,0\}$: buy or sell a unit \\ quantity $q_{LT}$, or no trade.}\\ \\

Rewards $\RR_{t+1}$ & \makecell{$\omega \cdot \eta \cdot \Delta PnL_{t+1}^\gamma - (1-\omega) \cdot \Delta \mathcal{M}_{t+1}(m_{\text{target}})$}  &  \makecell{$\omega \cdot \eta \cdot \Delta PnL_{t+1}^\gamma - (1-\omega) \cdot \Delta \mathcal{Q}_{t+1}(q^a, q^b)$}\\ 
 \hline
\end{tabular}\end{scriptsize}\end{center}
\end{table}

\subsection{ECN model}
\label{sececn}

\subsubsection{Vanilla model}
\label{sececnvanilla}

The ECN is not the main focus of our study, rather it is the OTC interactions between LPs and LTs. We however need an ECN engine so as to provide a reference price to market participants, and allow LPs to hedge a fraction of their inventory or LTs to trade if need be. This ECN plays the same role as the open market of \cite{Bank2018-ip}, or the inter-dealer segment of \cite{gueant2021_2}. We code an ECN engine that supports market, limit and cancel orders on the bid and ask sides, in particular it keeps track of individual orders and their originators, with a classical first-in-first-out mechanism.

Our two desiderata for the ECN is that i) LP and LT agents can impact the ECN limit order book when sending orders to it, and ii) in the absence of LP and LT agents orders, the ECN evolves realistically over the input RL simulation timestep $dt$, in particular its volume remains stable over time and does not explode nor vanish. The former is a structural requirement that follows from the ability of agents to send orders as explained in section \ref{secrlagents}. We solve the latter by equipping the ECN with a so-called ECN agent in charge of sending orders to it (and to it only) at every timestep. In this section we proceed to specifying the nature of this ECN agent, i.e. how it constructs a list of orders at each timestep of the simulation. We emphasize that although the ECN agent will use the model in (\ref{snapvarr})-(\ref{dlob}) to build a list of orders, the ECN dynamics are not given by these equations, rather the limit order book evolves solely as a consequence of orders sent by LPs, LTs, and the ECN agent. This can be seen as a hybrid impact model between RL-agents and the "rest of the world", where the LT trade sizes $q_{LT}$ and the ECN agent trade sizes drive the extent to which RL-based agents and the ECN agent impact the ECN.

We equip the ECN with a fixed price grid with a fixed granularity $\xi$, also called tick size. In the case of eurodollar on EBS, $\xi=0.5$ bp at the time of the writing. This price grid is upper and lower bounded by $P_{max}$, $P_{min}$ and we let $K_{max}:=\left \lfloor \xi^{-1}(P_{max}-P_{min}) \right \rfloor$ be the total number of ticks. We call \textit{limit order book snapshot} the associated pair of vectors of size $K_{max}$ representing the volumes available for each price, on each side. We denote $P_t$, $P^a_t$, $P^b_t$ the ECN mid-price, best ask and best bid prices, i.e. $2P_t = P^a_t+P^b_t$.

The training data $\mathcal{D}$ that we use to calibrate the ECN agent is level two limit order book data, namely a dataset containing the limit order book snapshot at different times. Such data doesn't contain volumes for all prices from $P_{min}$ to $P_{max}$, typically it contains volume and price data for the top $m$ non-empty levels on each side, we will typically use $m=5$. Our ECN agent model will therefore consider the associated volumes $(V_{i,t})_{i \in [1,2m]}$. In order to always have a snapshot for prices from $P_{min}$ to $P_{max}$, we extrapolate volumes for levels $k$ further than $m$ with exponential decay $e^{-\alpha k }$, where $\alpha>0$ is fitted to the top $m$ levels using our dataset $\mathcal{D}$. To further reduce the dimensionality of the model, we assume that the price difference between two consecutive non-empty levels is always one tick, except for the difference between the best ask and best bid - also called market spread - which can be greater, typically 1 to 3 ticks. This is typical of liquid markets. With these assumptions, in order to construct the order book at time $t+1$ from the order book at time $t$, it is enough to know the mid price change $\xi^{-1}(P_{t+1}-P_t)$, the new market spread $\xi^{-1}(P^a_{t+1}-P^b_{t+1})$, and the change in volume $(V_{i,t+1} - V_{i,t})_{i \in [1,2m]}$ for the top $m$ levels on both sides. The vanilla version of the model that we consider in this section will, in short, assume that the vectors of these $2m+2$ quantities at two distinct times are independent and identically distributed, although we account for correlation among the $2m+2$ random variables at a fixed time. Precisely, the vectors $\mathcal{S}_n$, $\mathcal{S}_{m}$ in (\ref{snapvarr}) are independent for $n \neq m$. We revisit this independence assumption in the neural extension of the vanilla model in section \ref{sececnneural}, where we will allow the order book variation to depend on its history.

The mechanism of the ECN agent can be broken down into the following steps. First, at the beginning of the simulation, it creates an initial limit order book snapshot. Then, at each timestep, i) it generates a limit order book snapshot variation, precisely two vectors of size $K_{max}$, one for each side, containing volume variations associated to the price grid; ii) it creates a sequence of orders associated to the latter snapshot variation, in the sense that sending those orders to the ECN would reproduce exactly that variation, \textit{in the absence of other orders}.

\textbf{Initial limit order book snapshot.} We fix the initial mid-price $P_0$ and model the initial limit order book snapshot as a vector $\mathcal{S}_0$ of size $2m+1$ consisting of log-quantities $(\ln V_{i,0})_{i \in [1,2m]}$ plus the market spread, in ticks:
$$
\mathcal{S}_0 := \left((\ln V_{i,0})_{i \in [1,2m]}, \xi^{-1}(P^a_0-P^b_0) \right).
$$
From our training dataset $\mathcal{D}$, we construct samples of $\mathcal{S}_0$ that we fit to a multivariate Gaussian mixture distribution with typically $5$ components, so as to capture higher moments and cross-moments between log-volumes and market spread. We then sample from this fitted distribution to generate the initial snapshot.

\textbf{Breaking down the limit order book snapshot variation into smaller orders of typical size.} For now, assume that we have a way to draw samples of order book snapshot variation over a timestep. Remember that such variation consist of precisely two vectors of size $K_{max}$, one for each side, containing volume variations associated to the price grid. Given such variation, we constitute a list of market, limit, cancel orders such that sending those orders to the ECN reproduces exactly that variation, in the absence of other orders. In this sense it is a purely mechanical process. We describe below such transformation for the ask side (the bid side being similar).

We first build a list of meta-orders which we then break up into smaller pieces of typical size. Let $(\Delta V_i)_{i=1..K_{max}}$ be the vector of signed volume variation on the ask size, where entries of the latter vector correspond to increasing prices. If the first non-zero entry of $\Delta V$ is negative, we create a meta market order, and continue incrementing the size of the order until we reach a nonnegative variation. The subsequent meta orders will be set as limit or cancellation depending if the variation is positive or negative. Otherwise, if the first non-zero entry is positive, we simply skip the market order step. Precisely, let $i_0 := \inf\{i: \Delta V_i \neq 0\}$. If $\Delta V_{i_0}>0$, then we create a list of size $K_{max}-i_0$ where the $k^{th}$ element is a meta limit or cancellation order of size $|\Delta V_{k}|$, depending on the sign of $\Delta V_{k}$. On the other hand if $\Delta V_{i_0}<0$, let $i_1 := \inf\{i>i_0: \Delta V_i \geq 0\}$, then we create a meta market order of size $\sum_{k=i_0}^{i_1-1} |\Delta V_{k}|$, and create meta limit and cancellation orders using $(\Delta V_{k})_{i \geq i_1}$ as specified previously. Then, from our training data $\mathcal{D}$, we compute the order book volume differences for the smallest time interval available, which gives us an empirical distribution of typical individual orders, and then break down each large meta order into a sequence of small orders of random size sampled from this empirical distribution.

\textbf{Limit order book snapshot variation.} We now proceed to specifying our model for generating order book snapshot variation over a timestep from time $n$ to time $n+1$, from which the ECN agent constructs its orders. Similarly to the initial snapshot, we consider the vector $\mathcal{S}_{n+1}$ of dimension $2m+2$, where the last 2 entries are the market spread at time $n+1$ (in ticks) and the mid price change (in ticks) between times $n$ and $n+1$. The first $2m$ entries consist of relative volume variations $\delta_{i,n}$ at book level $i$ related to the dynamics (\ref{dlob}). The knowledge of $\mathcal{S}_{n+1}$ is enough to construct the limit order book at time $n+1$ from the order book at time $n$.
\begin{align}
\label{snapvarr}
    \mathcal{S}_{n+1} := \left((\delta_{i,n})_{i \in [1,2m]}, \xi^{-1}(P^a_{n+1}-P^b_{n+1}), \xi^{-1}(P_{n+1}-P_{n}) \right), \hspace{3mm} n \geq 0.
\end{align}

From our training dataset $\mathcal{D}$, we construct samples of $\mathcal{S}_{n+1}$ that we fit to a multivariate Gaussian mixture distribution with typically $5$ components, so as to capture the correlation structure and higher order moments/cross-moments between volume variations, market spread and mid price change. We then sample from this fitted distribution independently at each timestep to generate the snapshot variation.

We consider the following discrete-time dynamics for the volume $V_{i,n}$ at time $n$ and level $i$:
\begin{align}
\label{dlob}
\begin{split}
&\Delta V_{i,n}:=V_{i,n+1}-V_{i,n} = -\delta^{-}_{i,n} V_{i,n} +\delta^{+}_{i,n},\\
&\delta^{+}_{i,n}:=\max(\delta_{i,n},0) \geq 0, \hspace{4mm} \delta^{-}_{i,n}:=\max(-\delta_{i,n},0) \in [0,1].
\end{split}
\end{align}

We treat positive and negative volume variations differently since when a volume variation is negative, it cannot be more than the current volume at that level, so it is coherent to express the variation as a multiple of the latter quantity. On the other hand, when it is positive, it is better to express it in absolute terms. If it were expressed multiplicatively, a close to empty book would remain empty, and a book with more liquidity would increase its liquidity exponentially: this hybrid formulation allows us to obtain a provably stable ECN evolution over arbitrary time horizons, i.e. $\lim_{n \to \infty} \EE [V_{i,n}] < + \infty$, as will be seen in theorem \ref{dam} and proposition \ref{corvar}. This would not be true without this hybrid "multiplicative-absolute" $\delta^+ - \delta^-$ formulation, for example in the purely multiplicative or absolute cases $\Delta V= (\delta^{+}-\delta^{-}) V$, $\Delta V= \delta^{+}-\delta^{-}$ (unless in the very specific case $\EE[\delta^{+}_{i,n}]=\EE[\delta^{-}_{i,n}]$, which is usually not true when calibrating to empirical data). We illustrate these dynamics in figure \ref{ecnexample}.

\begin{figure}[ht]
  \centering
    \begin{subfigure}[b]{0.32 \textwidth}
        \centering
        \includegraphics[width=\textwidth]{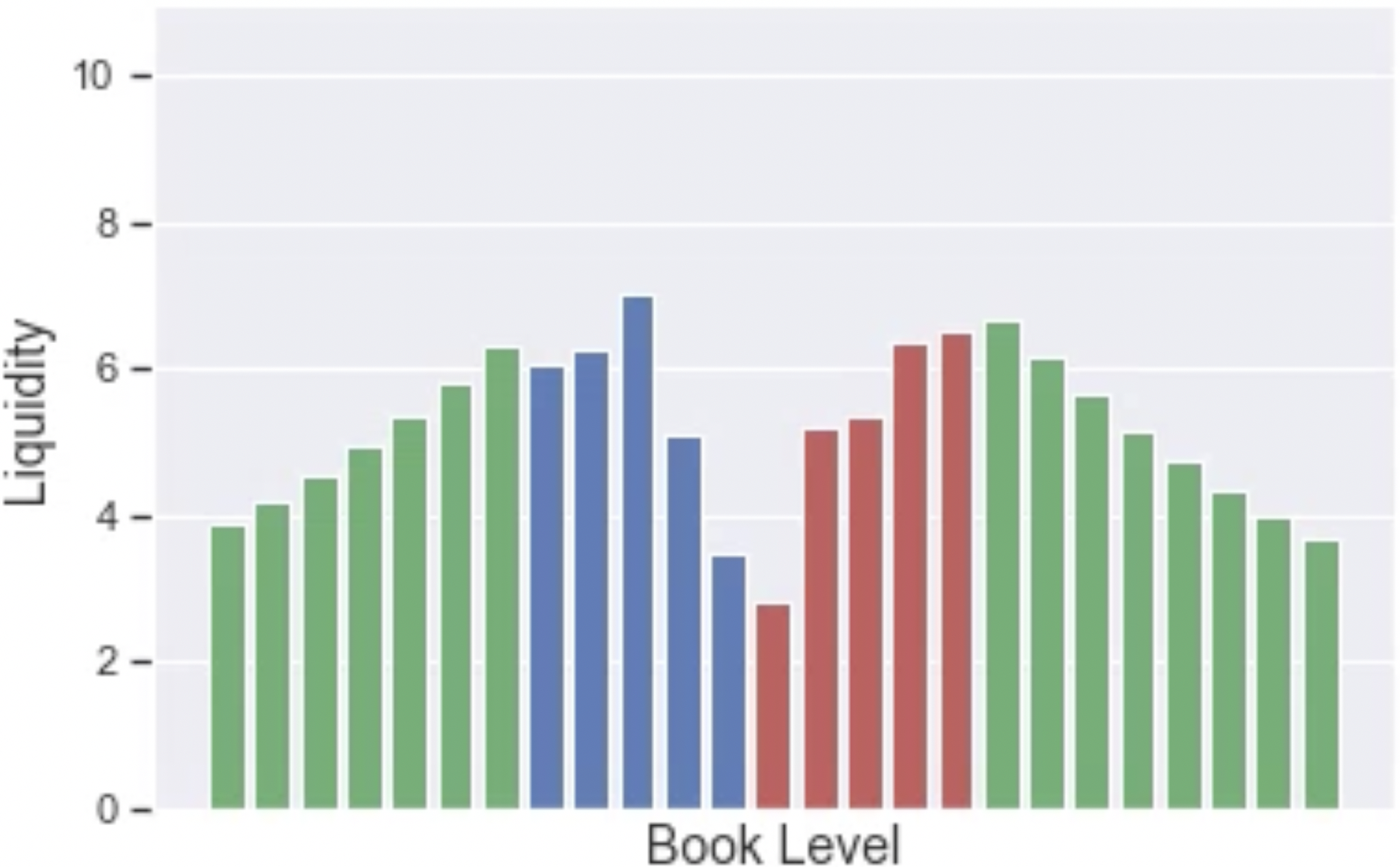}
    \end{subfigure}
    \hfill
    \begin{subfigure}[b]{0.32 \textwidth}
        \centering
        \includegraphics[width=\textwidth]{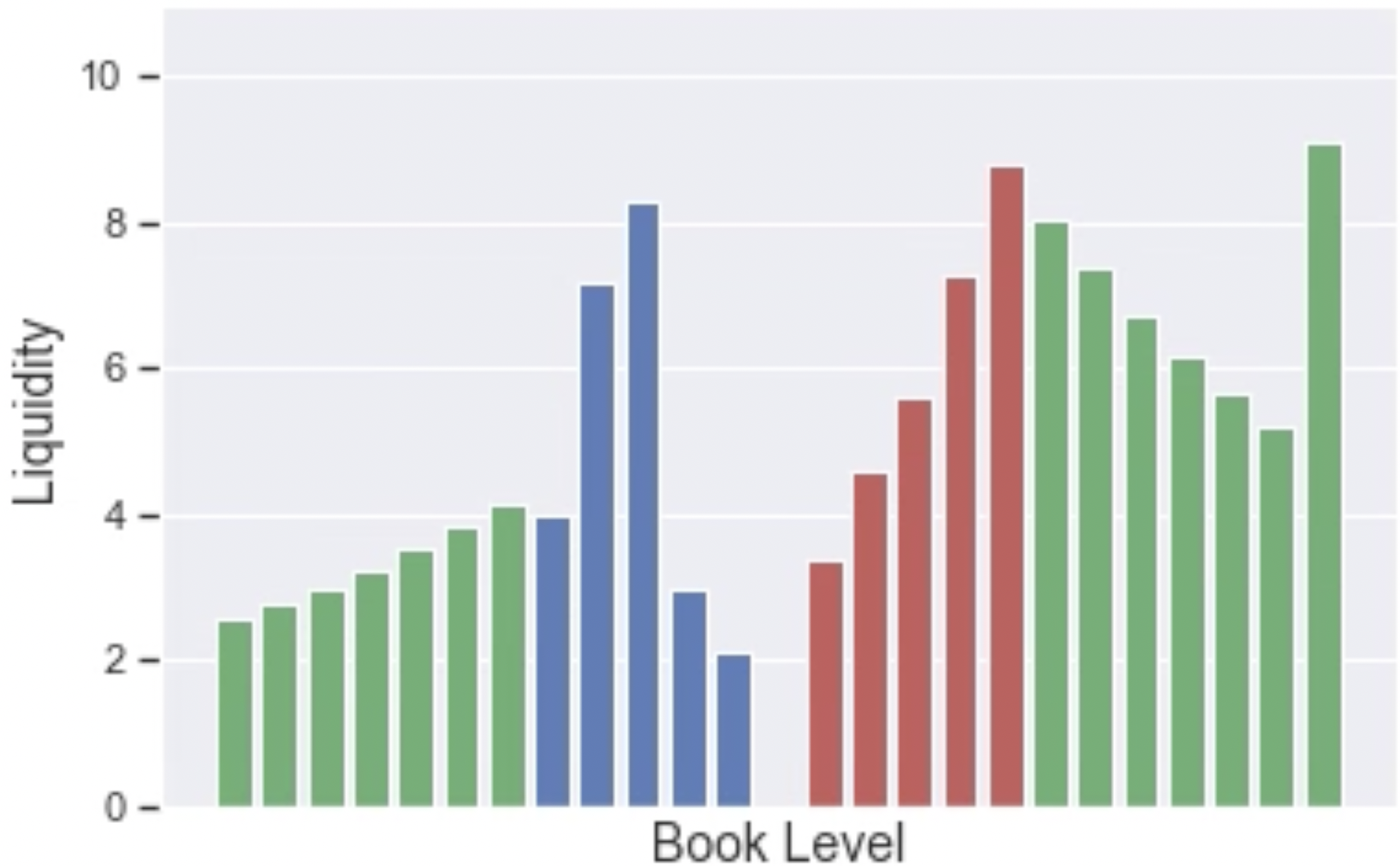}
    \end{subfigure}
    \begin{subfigure}[b]{0.32 \textwidth}
        \centering
        \includegraphics[width=\textwidth]{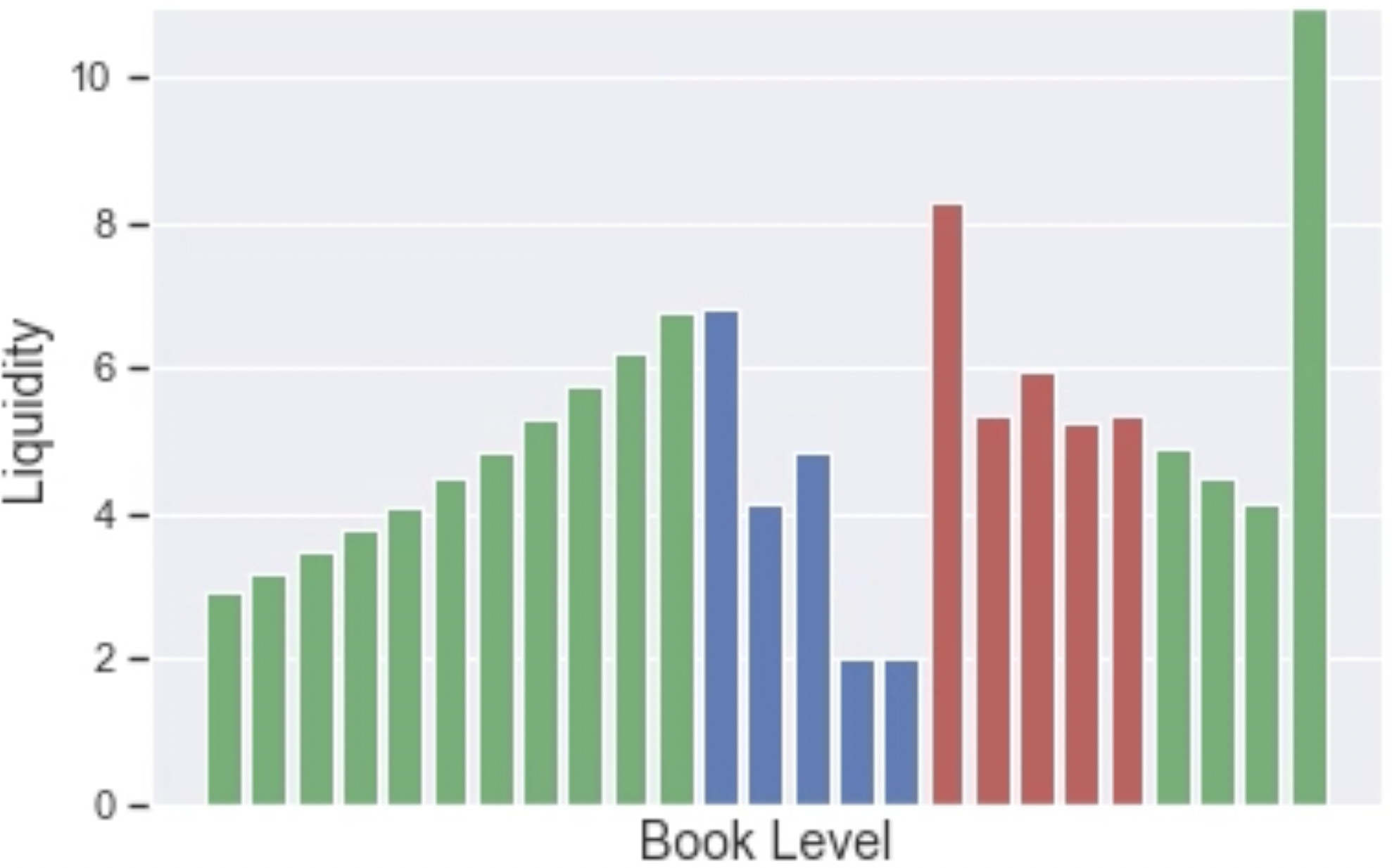}
    \end{subfigure}
    \caption{Example of ECN order book evolution associated to the dynamics (\ref{snapvarr})-(\ref{dlob}) over time an episode, in the absence of LP and LT orders (from left to right). Top 5 levels on the ask (red), bid (blue), levels in green are extrapolated. The middle plot displays a market spread of 2 ticks, contrary to 1 tick in the other cases. In accordance with the theory, running these dynamics over arbitrary time horizons keeps the order book stable, in the sense that volumes do not vanish nor explode.}
  \label{ecnexample}
\end{figure}

Our hybrid formulation is similar in spirit to that of \cite{Cont:2021}, which models the order book volume by means of a SDE. Their drift contains a term $-\alpha_x V_{x,t} + \beta_x$ with $\alpha, \beta \geq 0$ deterministic and $x$ the (continuous) book level. Their dynamics assume multiplicative volatility for the volume $\sigma^- V_{x,t}$, which they justify by \textit{"the multiplicative nature of the noise accounts for the high-frequency cancellations associated with HFT orders"}. Interestingly, we will see that the continuous-time diffusion approximation limit of our discrete-time dynamics (\ref{dlob}) generalizes their volatility term in that when the volatility $\sigma^{+}$ of positive volume variations $\delta^{+}$ is zero, we recover multiplicative book volume volatility $\sigma^- V_{i,t}$ generated by negative volume variations (remark \ref{contgen}). When $\sigma^{+}$ is not zero, the book volume variance will be shown to be a quadratic polynomial in $V_{i,t} - \mu_i^\infty$, where $\mu_i^\infty:=\lim_{t \to \infty} \EE [V_{i,t}]$ is the long-range mean. The deviation of the book volume from its long-range mean can be seen as a proxy for market activity, hence we can define regions for the book volume where its variance is higher or lower than the (constant) long-range variance. This self-exciting behavior is studied in proposition \ref{damr}. \cite{Cont:2021} further considers convection and diffusion terms in their SDE drift, which relate to correlation across their infinite set of book levels. In our case we instead model a finite number of book levels $i$ which are correlated through the joint distributions of $(\delta^{+}_{i}, \delta^{-}_{i})_{i \in [1,2m]}$. This can be seen as a discrete counterpart of the convection and diffusion terms.

In theorem \ref{dam} and corollary \ref{dam1}, we look at the continuous-time limit of our discrete-time dynamics (\ref{dlob}). As mentioned previously, the covariance structure $Q$ generalizes that of \cite{Cont:2021}, which corresponds to the specific case $\sigma^+=0$. In the general case, we see that $Q$ is modulated by the position of the volume $V_{i,t}$ relative to its long-range mean $\mu_i^\infty$. This aspect is further analyzed in proposition \ref{damr}. Interestingly, looking at the continuous-time limit allows us to quantify the impact of the model parameters $\sigma_\pm$, $\mu_\pm$, $\rho$ on the nature of the fluctuations of the order book volume, which is compactly captured by the polynomial $Q$.

\begin{theorem}
\label{dam}
Let $((\delta^{+}_{i,n},\delta^{-}_{i,n})_{i \in [1,2m]})_{n \geq 0}$  be a sequence of independent and identically distributed vectors of random variables taking value in $(\R \times [0,1])^{2m}$ where $\delta^{\pm}_{i,n}$ has mean $\mu_{i}^\pm$, standard deviation $\sigma_{i}^\pm$, and such that $\mu_{i}^->0$. Let
$\corr(\delta_{i,n}^-,\delta_{j,n}^+)=:\rho_{ij}$,
$\corr(\delta_{i,n}^+,\delta_{j,n}^+)=:\rho_{ij}^+$, $\corr(\delta_{i,n}^-,\delta_{j,n}^-)=:\rho_{ij}^-$. Assume that $V_i$ satisfies the discrete-time recursion:
$$
V_{i, n+1}=(1-\delta^-_{i,n})V_{i,n}+\delta^+_{i,n} \hspace{3mm} a.s., \hspace{3mm} V_{i,0} \in \R_+, \hspace{3mm} n \geq 0.
$$

Denote $V^\epsilon_{i, n+1}$ the process associated to the scaling $\delta^\pm_{i,n,\epsilon}:= \epsilon \mu_{i}^\pm + \sqrt{\epsilon}(\delta^\pm_{i,n} - \mu_{i}^\pm)$, and $V^\epsilon_{i, t} := V^\epsilon_{i, \lfloor \epsilon^{-1}t \rfloor}$. Under the assumption that $\delta^{+}_{i,n}$ are bounded in $L^{2+\eta}$ for some $\eta>0$, $V^\epsilon$ converges weakly in the Skorokhod topology to a multivariate Ornstein–Uhlenbeck process $V^*$ with quadratic covariance structure:
$$
dV^{*}_{i,t}=\mu_{i}^- \left(\mu_{i}^\infty - V^{*}_{i,t}\right)dt + \left[q(V^{*}_{t}-\mu^\infty)dB_t\right]_i,
$$
where $B$ is a standard vector Brownian motion and $q$ is a square-root of the matrix $Q$, $q(V^{*}_{t}-\mu^\infty)q(V^{*}_{t}-\mu^\infty)^T=Q(V^{*}_{t}-\mu^\infty)$, with $[Q(V^{*}_{t}-\mu^\infty)]_{ij} =: \widetilde{Q}_{ij}(V^{*}_{i,t}-\mu_{i}^\infty,V^{*}_{j,t}-\mu_{j}^\infty)$, and $\widetilde{Q}_{ij}$ the quadratic polynomial:
\begin{align*}
\begin{split}
\widetilde{Q}_{ij}(x,y):=&\sigma_{ij}^{\infty} + 
\rho_{ij}^- \sigma_{i}^- \sigma_{j}^- xy
 +  \sigma_{i}^- \left( \mu_{j}^\infty \rho_{ij}^- \sigma_{j}^- - \rho_{ij}\sigma_{j}^+ \right) x
 +  \sigma_{j}^- \left( \mu_{i}^\infty \rho_{ij}^- \sigma_{i}^- - \rho_{ji}\sigma_{i}^+ \right) y,
\end{split}
\end{align*}

where:
\begin{align*}
\begin{split}
&\sigma_{ij}^{\infty}:=\rho_{ij}^+ \sigma_{i}^+ \sigma_{j}^+ + \mu_{i}^\infty \mu_{j}^\infty \rho_{ij}^- \sigma_{i}^- \sigma_{j}^- -\mu_{i}^\infty \rho_{ij} \sigma_{j}^+ \sigma_{i}^- -\mu_{j}^\infty \rho_{ji} \sigma_{i}^+ \sigma_{j}^-, \hspace{4mm} \mu_{i}^\infty := \frac{\mu_{i}^+}{\mu_{i}^-}.
\end{split}
\end{align*}
\end{theorem}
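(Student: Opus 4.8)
\section*{Proof proposal}

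The plan is to recognize the statement as a diffusion (functional central limit) theorem for the rescaled Markov chain $V^\epsilon$, and to prove it through the generator / martingale-problem machinery of Stroock--Varadhan and Ethier--Kurtz. Concretely, for a family of Markov chains observed on the diffusive time scale $V^\epsilon_{i,t}=V^\epsilon_{i,\lfloor \epsilon^{-1}t\rfloor}$, weak convergence to a diffusion with generator $\mathcal{L}f(v)=\sum_i b_i(v)\partial_i f(v)+\tfrac12\sum_{ij}a_{ij}(v)\partial_{ij}f(v)$ follows once one checks, uniformly on compacts, (i) $\epsilon^{-1}\EE[\Delta V^\epsilon_n\mid V^\epsilon_n=v]\to b(v)$, (ii) $\epsilon^{-1}\EE[\Delta V^\epsilon_{i,n}\Delta V^\epsilon_{j,n}\mid V^\epsilon_n=v]\to a_{ij}(v)$, and (iii) a negligible-jump (Lindeberg) condition $\epsilon^{-1}\EE[|\Delta V^\epsilon_n|^{2+\eta}\mid V^\epsilon_n=v]\to0$, together with well-posedness of the limiting martingale problem for $(b,a)$.

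First I would compute the one-step conditional mean and covariance under the scaling $\delta^\pm_{i,n,\epsilon}=\epsilon\mu^\pm_i+\sqrt\epsilon(\delta^\pm_{i,n}-\mu^\pm_i)$. Writing $\Delta V^\epsilon_{i,n}=-\delta^-_{i,n,\epsilon}V^\epsilon_{i,n}+\delta^+_{i,n,\epsilon}$ and using $\mu^+_i=\mu^-_i\mu^\infty_i$, the $\epsilon$-order part of the conditional mean is exactly $\epsilon\,\mu^-_i(\mu^\infty_i-V^\epsilon_{i,n})$, which yields the mean-reverting drift $b_i(v)=\mu^-_i(\mu^\infty_i-v_i)$; the $\sqrt\epsilon$-order fluctuation is $\sqrt\epsilon\big[(\delta^+_{i,n}-\mu^+_i)-(\delta^-_{i,n}-\mu^-_i)v_i\big]$. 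Because the increments at step $n$ are independent of $V^\epsilon_n$, the current state factors out; forming the product of the level-$i$ and level-$j$ fluctuations, dividing by $\epsilon$, and expanding the four centred cross-moments in terms of $\rho_{ij},\rho^\pm_{ij},\sigma^\pm_i$ reproduces, after substituting $v_i=x+\mu^\infty_i$ and $v_j=y+\mu^\infty_j$, precisely $\widetilde Q_{ij}(x,y)$: the constant term is $\sigma^\infty_{ij}$, the $xy$ coefficient is $\rho^-_{ij}\sigma^-_i\sigma^-_j$, and the linear coefficients collapse to $\sigma^-_i(\mu^\infty_j\rho^-_{ij}\sigma^-_j-\rho_{ij}\sigma^+_j)$ and its symmetric counterpart. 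This bookkeeping is the heart of identifying the limit and is essentially mechanical. For (iii), note that the fluctuation part of each increment is $O(\sqrt\epsilon)$ while the drift part is $O(\epsilon)$, so $|\Delta V^\epsilon_n|^{2+\eta}=O(\epsilon^{1+\eta/2})$ in $L^1$ and $\epsilon^{-1}\EE[|\Delta V^\epsilon_n|^{2+\eta}]=O(\epsilon^{\eta/2})\to0$; this is exactly where boundedness of $\delta^+$ in $L^{2+\eta}$ (and $\delta^-\in[0,1]$) is used, and it forces the limit to have continuous paths. Positive semidefiniteness of $a(v)=Q(v-\mu^\infty)$ is automatic, being a pointwise limit of genuine conditional covariance matrices, so a measurable square root $q$ exists.

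The hard part will be turning these local computations into a global limit theorem, because the diffusion coefficient grows quadratically in $V$ while the drift is only linearly mean-reverting. I would first establish moment bounds uniform in $\epsilon$: iterating $V_{i,n+1}=(1-\delta^-_{i,n})V_{i,n}+\delta^+_{i,n}$ and using $\mu^-_i>0$ with $\delta^-\in[0,1]$ gives a contraction in expectation (the a priori first-moment stability $\lim_{n}\EE[V_{i,n}]<\infty$ noted above), which I would upgrade to a uniform-in-$\epsilon$ bound on $\sup_{t\le T}\EE[|V^\epsilon_t|^{2+\eta}]$ through a Lyapunov/Gronwall estimate. These bounds deliver the compact-containment condition and Aldous-type tightness, so that the uniform-on-compacts convergence of the generators suffices to characterize all limit points as solutions of the martingale problem for $(b,a)$.

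Finally I would invoke uniqueness of that martingale problem to identify the limit as the stated Ornstein--Uhlenbeck SDE with quadratic covariance: the coefficients are polynomial hence locally Lipschitz, and the mean-reverting drift together with the moment estimates provides a Lyapunov function ruling out explosion, so localization yields global well-posedness. The main delicacy to watch is the possible degeneracy of $q$ at points where $Q$ loses rank, which I would sidestep by working directly with the well-posed martingale problem for the continuous coefficient $a=Q$ rather than demanding a Lipschitz matrix square root.
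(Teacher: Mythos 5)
Your proposal is correct and follows essentially the same route as the paper: the paper likewise applies the Ethier--Kurtz Markov-chain-to-diffusion theorem (Ch.~7, Thm.~4.2), verifying by the same direct computation that the conditional drift and covariance converge uniformly on compacts to $b(V)=-\mu^-\odot V+\mu^+$ and $Q(V)$, and controlling the small-jump condition via Chebyshev's inequality together with the $L^{2+\eta}$ bound on $\delta^+$ --- exactly the role your Lindeberg-type moment condition plays. The only difference is that you additionally address tightness and well-posedness of the limiting martingale problem (a hypothesis of the Ethier--Kurtz theorem that the paper invokes without verification), so your version is, if anything, slightly more complete on that point.
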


The long-range moments in proposition \ref{corvar} follow easily by direct computation\footnote{Note that the term $\mu^-_i\mu^-_j$ disappears from the covariance denominator when passing to the continuous-time limit, since $n$ is of order $\epsilon^{-1}$, $\mu_i^\pm$ or order $\epsilon$ and $\sigma^\pm_i$ of order $\sqrt{\epsilon}$, hence $\sigma^-_i\sigma^-_j$, $\mu^-_i$, $\mu^-_j$ are all of order $\epsilon$, but $\mu^-_i\mu^-_j$ is of order $\epsilon^2$ and hence vanishes.}.
\begin{proposition}
\label{corvar}
We have for every $i$, $j \in [1,2m]$:
$$
\lim_{n \to \infty} \EE[V_{i,n}] = \lim_{t \to \infty} \EE[V^{*}_{i,t}] = \mu_{i}^\infty,
$$
$$
\lim_{n \to \infty} \cov[V_{i,n},V_{j,n}] = \frac{ \sigma_{ij}^{\infty}}{\mu^-_i+\mu^-_j-\mu^-_i\mu^-_j-\rho^-_{ij}\sigma_i^-\sigma_j^-},
$$
$$
\lim_{t \to \infty} \cov[V^{*}_{i,t} V^{*}_{j,t}] = \frac{ \sigma_{ij}^{\infty }}{\mu^-_i+\mu^-_j-\rho^-_{ij}\sigma_i^-\sigma_j^-}.
$$
\end{proposition}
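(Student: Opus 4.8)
The plan is to exploit the independence across $n$ of the increment vectors $(\delta^+_{i,n},\delta^-_{i,n})_i$, which turns both the mean and the second moment into closed affine recursions whose fixed points are the claimed limits. The key observation used repeatedly is that $V_{i,n}$ is a function of increments strictly before time $n$, hence independent of the time-$n$ increments (while $V_{i,n}$ and $V_{j,n}$ may be correlated through the within-step joint law, which is precisely what produces the correlation terms).

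First I would treat the means. Taking expectations in the recursion and factoring the time-$n$ increments out of $V_{i,n}$ gives $\EE[V_{i,n+1}]=(1-\mu^-_i)\EE[V_{i,n}]+\mu^+_i$. Since $\delta^-_{i,n}\in[0,1]$ and $\mu^-_i>0$ force $\mu^-_i\in(0,1]$, we have $|1-\mu^-_i|<1$, so this scalar affine map converges geometrically to its fixed point $\mu^+_i/\mu^-_i=\mu^\infty_i$. For the continuous-time process, taking expectations in the SDE of Theorem~\ref{dam} yields $\tfrac{d}{dt}\EE[V^*_{i,t}]=\mu^-_i(\mu^\infty_i-\EE[V^*_{i,t}])$, whose solution converges to $\mu^\infty_i$ as $t\to\infty$.

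Next I would derive the discrete covariance. Expanding the product $V_{i,n+1}V_{j,n+1}$ from the two recursions and taking expectations, and again factoring the time-$n$ increments out of $V_{i,n},V_{j,n}$, produces an affine recursion for $\EE[V_{i,n}V_{j,n}]$ with multiplicative coefficient $A:=\EE[(1-\delta^-_{i,n})(1-\delta^-_{j,n})]=(1-\mu^-_i)(1-\mu^-_j)+\rho^-_{ij}\sigma^-_i\sigma^-_j$ and a forcing term assembled from $\EE[(1-\delta^-_i)\delta^+_j]$, $\EE[\delta^+_i(1-\delta^-_j)]$ and $\EE[\delta^+_i\delta^+_j]$, each expanded into means, variances and correlations via $\EE[\delta^-_i\delta^+_j]=\mu^-_i\mu^+_j+\rho_{ij}\sigma^-_i\sigma^+_j$ and its analogues. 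Because $(1-\delta^-_i)(1-\delta^-_j)\in[0,1]$ almost surely while $\mu^-_i>0$ forces $A<1$, the second moment converges; subtracting $\mu^\infty_i\mu^\infty_j$ from its stationary value gives $\cov[V_i,V_j]=\frac{N}{1-A}$ with $1-A=\mu^-_i+\mu^-_j-\mu^-_i\mu^-_j-\rho^-_{ij}\sigma^-_i\sigma^-_j$, which is exactly the claimed denominator.

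The only genuine work is then verifying that the numerator $N$ collapses to $\sigma^\infty_{ij}$: substituting $\mu^+_i=\mu^-_i\mu^\infty_i$ throughout, every term carrying a factor $\mu^\infty_i\mu^\infty_j$ (the four $\mu^-_i\mu^-_j\mu^\infty_i\mu^\infty_j$ contributions together with the two matched $\mu^-_j\mu^\infty_i\mu^\infty_j$ and $\mu^-_i\mu^\infty_i\mu^\infty_j$ cancellations) vanishes, leaving precisely $\rho^+_{ij}\sigma^+_i\sigma^+_j+\mu^\infty_i\mu^\infty_j\rho^-_{ij}\sigma^-_i\sigma^-_j-\mu^\infty_i\rho_{ij}\sigma^+_j\sigma^-_i-\mu^\infty_j\rho_{ji}\sigma^+_i\sigma^-_j=\sigma^\infty_{ij}$. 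I expect this bookkeeping, not any analytic subtlety, to be the main obstacle. Finally, for the continuous-time covariance I would either push the discrete formula through the diffusive scaling $\delta^\pm_{i,n,\epsilon}=\epsilon\mu^\pm_i+\sqrt\epsilon(\delta^\pm_{i,n}-\mu^\pm_i)$, under which $\mu^\pm_i$, $(\sigma^\pm_i)^2$ and hence $\sigma^\infty_{ij}$ are $O(\epsilon)$ while $\mu^-_i\mu^-_j$ is $O(\epsilon^2)$, so the $-\mu^-_i\mu^-_j$ term drops in the $\epsilon\to0$ limit as in the footnote; or, more directly, apply It\^o to $X_iX_j$ for $X_t:=V^*_t-\mu^\infty$, where the martingale part has zero expectation and the linearity of $\widetilde Q_{ij}(x,y)$ in $x,y$ gives $\EE[\widetilde Q_{ij}(X_i,X_j)]=\sigma^\infty_{ij}+\rho^-_{ij}\sigma^-_i\sigma^-_j\,\EE[X_iX_j]$ at stationarity, so that setting $\tfrac{d}{dt}\EE[X_iX_j]=0$ yields $(\mu^-_i+\mu^-_j-\rho^-_{ij}\sigma^-_i\sigma^-_j)\,\cov[V^*_i,V^*_j]=\sigma^\infty_{ij}$, the third identity.
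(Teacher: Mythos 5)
Your proof is correct and, for the means and the discrete-time covariance, follows essentially the same route as the paper: both reduce the problem to scalar affine recursions $u_{n+1}=(1-\mu_i^-)u_n+\mu_i^+$ and $z_{n+1}=(1-a)z_n+b$ with $a=\mu_i^-+\mu_j^--\mu_i^-\mu_j^--\rho_{ij}^-\sigma_i^-\sigma_j^-$ and $b=\sigma_{ij}^\infty$, and read off the limits as the fixed points. The paper simply asserts the covariance recursion (after normalizing $V_{i,0}=\mu_i^\infty$ so the mean is stationary), whereas you derive it from the independence structure and verify the cancellation that collapses the forcing term to $\sigma_{ij}^\infty$; that bookkeeping is indeed the only real work, and your cancellation pattern is the right one. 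The genuine difference is your second suggested route for the third identity: the paper obtains the continuous-time covariance by pushing the discrete formula through the diffusive scaling (noting, as in its footnote, that $\mu_i^-\mu_j^-$ is $O(\epsilon^2)$ and hence drops from the denominator), while your It\^o argument works directly on the limit SDE of Theorem \ref{dam}: with $X:=V^*-\mu^\infty$, the drift contributes $-(\mu_i^-+\mu_j^-)\EE[X_iX_j]$, the terms of $\widetilde{Q}_{ij}$ that are linear in $x$ and $y$ vanish in expectation since $\EE[X_i]\to 0$, and the remaining $\sigma_{ij}^\infty+\rho_{ij}^-\sigma_i^-\sigma_j^-\EE[X_iX_j]$ gives $(\mu_i^-+\mu_j^--\rho_{ij}^-\sigma_i^-\sigma_j^-)\lim_t\cov[V^*_{i,t},V^*_{j,t}]=\sigma_{ij}^\infty$. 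The It\^o route is self-contained at the SDE level and makes transparent why no $\mu_i^-\mu_j^-$ term appears in continuous time (the drift of $X_iX_j$ contains no such product), at the modest cost of justifying that the martingale part has zero expectation; the paper's scaling route avoids touching the SDE but requires tracking orders in $\epsilon$ through the limit of Theorem \ref{dam}.
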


In particular when looking at a single level $i$ in theorem \ref{dam}, we get immediately the corresponding univariate version presented in corollary \ref{dam1}.

\begin{corollary}
\label{dam1}
Under the same notations as theorem \ref{dam}, the rescaled volume $V^\epsilon$ at a single book level converges weakly in the Skorokhod topology to a Ornstein–Uhlenbeck process $V^*$ with quadratic variance:
$$
dV^{*}_{t}=\mu^- \left(\mu^\infty - V^{*}_{t}\right)dt + \sqrt{Q(V^{*}_{t} - \mu^\infty)}dB_t,
$$
where $B$ is a standard Brownian motion and $Q$ the quadratic polynomial:
\begin{align*}
\begin{split}
Q(x):=&\sigma^{\infty 2} + 
\sigma^{-2} x^2
 +  2\sigma^- \left( \mu^\infty \sigma^- - \rho\sigma^+ \right) x,
\end{split}
\end{align*}

where:
\begin{align*}
\begin{split}
&\sigma^{\infty 2}:= \sigma^{+2} + \mu^\infty \sigma^- \left( \mu^{\infty } \sigma^{-} - 2\rho \sigma^+\right),\hspace{4mm}  \mu^\infty := \frac{\mu^+}{\mu^-}.
\end{split}
\end{align*}
\end{corollary}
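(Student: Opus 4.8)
The plan is to obtain Corollary \ref{dam1} purely as the one-dimensional specialization of Theorem \ref{dam}, so no new limit analysis is required: the weak convergence in the Skorokhod topology, together with the $L^{2+\eta}$ boundedness hypothesis and the scaling $\delta^\pm_{n,\epsilon}:=\epsilon\mu^\pm+\sqrt{\epsilon}(\delta^\pm_n-\mu^\pm)$, all restrict verbatim to a single book level (take the index set to be a singleton, $2m=1$). The only genuine task is to check that the scalar diffusion coefficient coincides with the stated $\sqrt{Q}$, which reduces to evaluating the multivariate covariance polynomial on the diagonal.

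First I would set $i=j$ in the polynomial $\widetilde{Q}_{ij}(x,y)$ from Theorem \ref{dam} and use that self-correlations are unity, $\rho_{ii}^+=\rho_{ii}^-=1$, writing $\rho:=\rho_{ii}$ for the single lag-zero cross-correlation between $\delta^+$ and $\delta^-$ at that level. Evaluating on the diagonal $y=x$, the two terms that were separately linear in $x$ and in $y$ now carry identical coefficients and add, which is precisely the source of the factor $2$ in $Q(x)=\sigma^{\infty 2}+\sigma^{-2}x^2+2\sigma^-(\mu^\infty\sigma^- - \rho\sigma^+)x$. The same substitution into $\sigma_{ii}^\infty$ collapses its two cross terms into $-2\mu^\infty\rho\sigma^+\sigma^-$, giving $\sigma^{\infty 2}=\sigma^{+2}+\mu^\infty\sigma^-(\mu^\infty\sigma^- - 2\rho\sigma^+)$ as claimed, and the drift $\mu^-(\mu^\infty-V^*_t)$ is the obvious scalar reduction of $\mu_i^-(\mu_i^\infty-V^*_{i,t})$.

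The one point deserving a line of its own is that the scalar square root $\sqrt{Q(\cdot)}$ is well-defined, i.e. $Q\geq 0$. I would complete the square to write $Q(V^*_t-\mu^\infty)=(\sigma^- V^*_t - \rho\sigma^+)^2+\sigma^{+2}(1-\rho^2)$, which is manifestly nonnegative since $|\rho|\leq 1$; this simultaneously confirms well-posedness of the limiting scalar SDE and exhibits the variance as a perfect square plus a nonnegative floor. There is no real obstacle here — the statement is a corollary — so the principal care is bookkeeping in the index collapse: remembering that the off-diagonal polynomial carries two distinct linear coefficients whose merge on the diagonal produces the factor $2$, and that $\rho_{ii}^\pm=1$ while $\rho_{ii}=\rho$.
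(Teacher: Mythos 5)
Your proposal is correct and takes the same route as the paper, which likewise obtains Corollary \ref{dam1} as the immediate single-level ($i=j$) specialization of Theorem \ref{dam}; your diagonal evaluation of $\widetilde{Q}_{ii}$ with $\rho_{ii}^{\pm}=1$, $\rho_{ii}=\rho$ correctly produces the factor $2$ in the linear term and the stated $\sigma^{\infty 2}$. Your added completion of the square, $Q(V^*_t-\mu^\infty)=(\sigma^- V^*_t-\rho\sigma^+)^2+\sigma^{+2}(1-\rho^2)\geq 0$, is a correct extra that the paper leaves implicit, and it also recovers Remark \ref{contgen} in the case $\sigma^+=0$.
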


\begin{remark}
\label{contgen}
If $\sigma^+=0$ in corollary \ref{dam1}, $\sqrt{Q(V^{*}_{t} - \mu^\infty)}=\sigma^-V^{*}_{t}$ as in \cite{Cont:2021} (section 1, equation 1.2).
\end{remark}

The equivalent vanilla Ornstein–Uhlenbeck process would have constant volatility $\sigma^{\infty}$. In our case, the quadratic variation $Q$ is a second order polynomial in $V^{*}_{t} - \mu^\infty$. The latter quantity can be seen as a proxy for market activity, since when the market is calm, the volume is close to its long-range mean, whereas when it is volatile, it is perturbed away from it. Depending on the parameters $\sigma_\pm$, $\mu_\pm$, $\rho$ of corollary \ref{dam1}, we can define in proposition \ref{damr} regimes where the variance is higher, or lower than the equivalent "flat" Ornstein–Uhlenbeck variance $\sigma^{\infty 2}$, which is also the long-range variance of $V^*$ by proposition \ref{corvar}.

\begin{proposition}
\label{damr}
Under the notations of corollary \ref{dam1}, let $\mathcal{V}(x):=Q(x)-Q(0)$ be the variance impact due to the polynomial nature of $Q$. We define the self-exciting and self-inhibiting regimes of $V^*$ as the regions $\mathcal{V}>0$, $\mathcal{V}<0$. If $\sigma^-=0$, $V^*$ is not self-exciting nor self-inhibiting since $\mathcal{V} \equiv 0$. If $\sigma^->0$, the self-exciting regimes of $V^*$ are $(-\infty,\gamma_*)$, $(\gamma^*,+\infty)$, and its self-inhibiting regime is $(\gamma_*,\gamma^*)$, with:
$$
\gamma^* = \max \left[\mu^\infty, 2\frac{\sigma^+}{\sigma^-} \rho - \mu^\infty \right], \hspace{5mm} \gamma_* = \min \left[\mu^\infty, 2\frac{\sigma_+}{\sigma_-} \rho - \mu^\infty \right].
$$
In particular, $V^*$ is never self-inhibiting if and only if $\sigma^+ \mu^-\rho = \sigma^-\mu^+$. The parameters $\mu^\pm$, $\sigma^\pm$ and $\rho$ impact the regime change only through the ratios $\frac{\sigma^+}{\sigma^-} \rho$ and $\frac{\mu^+}{\mu^-}$. Further, we have $\partial_\rho \mathcal{V}(x) >0$ if and only if $x<0$.
\end{proposition}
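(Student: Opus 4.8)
The plan is to reduce every assertion to an explicit sign analysis of the quadratic $\mathcal{V}$. First I would compute $\mathcal{V}$ directly from the expression for $Q$ in Corollary \ref{dam1}. Since $Q(0)=\sigma^{\infty 2}$, subtracting cancels the constant term and leaves
\begin{equation*}
\mathcal{V}(x) = \sigma^{-2}x^2 + 2\sigma^-\left(\mu^\infty\sigma^- - \rho\sigma^+\right)x = \sigma^- x\left[\sigma^- x + 2\left(\mu^\infty\sigma^- - \rho\sigma^+\right)\right].
\end{equation*}
This factored form already exhibits the two roots of $\mathcal{V}$ and isolates the dependence on $\sigma^-$, which is what drives the case split.

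Next I would dispatch the degenerate case: if $\sigma^-=0$ both the quadratic and linear coefficients vanish, so $\mathcal{V}\equiv 0$ and neither regime exists. If instead $\sigma^->0$, the factored form shows the roots in the variable $x$ are $x=0$ and $x = 2\rho\sigma^+/\sigma^- - 2\mu^\infty$. Recalling that the diffusion coefficient in Corollary \ref{dam1} is evaluated at $x = V^{*}-\mu^\infty$, I would translate these roots back into the variable $V^{*}$, obtaining $V^{*}=\mu^\infty$ and $V^{*} = 2(\sigma^+/\sigma^-)\rho - \mu^\infty$. Because the leading coefficient $\sigma^{-2}$ is strictly positive, the parabola opens upward, so $\mathcal{V}>0$ outside the two roots and $\mathcal{V}<0$ strictly between them. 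Writing $\gamma_*$ and $\gamma^*$ for the smaller and larger root — which is precisely the $\min$/$\max$ in the statement and automatically covers either ordering of $\mu^\infty$ and $2(\sigma^+/\sigma^-)\rho - \mu^\infty$ — yields the self-exciting region $(-\infty,\gamma_*)\cup(\gamma^*,+\infty)$ and the self-inhibiting region $(\gamma_*,\gamma^*)$.

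Finally I would settle the remaining claims by elementary manipulation. The self-inhibiting region is empty exactly when the two roots coincide, i.e. $\mu^\infty = 2(\sigma^+/\sigma^-)\rho - \mu^\infty$, which rearranges to $\mu^\infty = (\sigma^+/\sigma^-)\rho$; clearing denominators via $\mu^\infty=\mu^+/\mu^-$ gives $\sigma^-\mu^+ = \sigma^+\mu^-\rho$, the stated criterion. The assertion that only the ratios $(\sigma^+/\sigma^-)\rho$ and $\mu^+/\mu^-$ enter is then immediate from the closed forms of $\gamma_*,\gamma^*$. For the last statement I would differentiate $\mathcal{V}$ in $\rho$ at fixed $x$: only the linear term carries $\rho$, so $\partial_\rho\mathcal{V}(x) = -2\sigma^-\sigma^+ x$, which (under $\sigma^\pm>0$) is positive exactly when $x<0$. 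There is no substantive obstacle here since the result is pure algebra; the only point demanding care is the change of variable $x = V^{*}-\mu^\infty$ when passing from the roots of $\mathcal{V}$ to the intervals stated for $V^{*}$, together with the $\min$/$\max$ bookkeeping that avoids presupposing which root is larger.
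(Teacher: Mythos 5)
Your proposal is correct and follows essentially the same argument as the paper's proof: compute $\mathcal{V}$ by cancelling the constant term, dispatch the $\sigma^-=0$ case, locate the two roots, translate them into the $V^*$ variable, use the positive leading coefficient to read off the sign regions, equate the roots for the never-self-inhibiting criterion, and differentiate in $\rho$ for the last claim. The only (harmless) additions are the explicit factored form of $\mathcal{V}$ and the remark that $\sigma^\pm>0$ is needed for the strict sign of $\partial_\rho\mathcal{V}$, both of which are consistent with the paper.
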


\subsubsection{Neural extension of the vanilla model}
\label{sececnneural}

In this section we show how the vanilla model in section \ref{sececnvanilla} can be extended. The main drawback of this model is that the $p=2m+2=12$ dimensional snapshot variation vectors $\mathcal{S}_{t}$ in (\ref{snapvarr}) for different times $t$ are assumed to be independent, identically distributed, sampled from a multivariate Gaussian mixture with constant, non history-dependent parameters. These parameters consist of, for each component of the mixture, a nonnegative weight, as well as the mean (vector of size $p$), variance (vector of size $p$) and $p \times p$ correlation matrix of a multivariate normal distribution, associated to the vector of random variables $\mathcal{S}_{t}$ in (\ref{snapvarr}).

More realistically, we would like the parameters of this Gaussian mixture to depend on the book history, namely on the history of previous $\mathcal{S}_t$'s.

We achieve this using the architecture in figure \ref{neural_ecn_architecture} similar to \cite{worldmodels}, built with a long short-term memory network (LSTM) and a mixture density network (MDN). A LSTM layer first transforms a batch $(\mathcal{S}_j)_{j\in [t-k+1,t]} \in \R^{k \cdot p}$ of historical window size $k=20$ into a latent space of dimension $\mathcal{L}=32$. This encoding is obtained by taking the hidden state of the LSTM. Using the hidden state as a projection method for the history of a vector-valued stochastic process is a well-known method and is typically used in modern game theory to represent the history of actions and states experienced by an agent in imperfect-information games \cite{Gupta2017-it}. Then, the hidden state is concatenated with the most recent book snapshot variation $\mathcal{S}_t$ and fed into a fully connected neural network with one input layer (width 32), 2 hidden layers (width 64) and multiple output layers, one per parameter type of the Gaussian mixture with $n_{mix}$ components, cf. figure \ref{neural_ecn_architecture}. This consists of, for each component, its weight $\pi \in \R_+$, mean $\mu \in \R^p$, variance $\sigma^2 \in \R^p_+$ and $p \times p$ correlation matrix $\rho$, a total of $n_{mix} \left(1+2p+\frac{p(p-1)}{2}\right)$ parameters. This is because the correlation matrix is symmetric and contains ones on the diagonal. Note that we use suitable output activation functions (exponential, tanh, softmax) to ensure that variances are non-negative, correlations are in $[-1,1]$ and mixture weights are nonnegative and sum to one. The network is trained to maximize the log-likelihood between the training data and that of the Gaussian mixture distribution.

In our case, $p=12$, so the total number of Gaussian mixture parameters to learn is dominated by the number of correlation parameters $\mathcal{O}(n_{mix} p^2)$. In order to reduce the number of parameters to learn, we consider two cases. For the first case, \textit{fixed correlation}, a correlation matrix between the $p$ random variables $\mathcal{S}_{t}$ is precomputed from our training data and used by each mixture component. Consequently, our network solely outputs component weights, means, variances, a total of $n_{mix} \cdot (1+2p)$ parameters. The training loss is evaluated using the precomputed correlation matrix. For the second case, \textit{shared correlation}, the network learns one correlation matrix that is shared across all mixture components, i.e. each component uses the same learnt correlation matrix. Note that \textit{shared correlation} includes \textit{fixed correlation} as a particular case, however distinguishing these two cases is useful to quantify the benefit of learning the correlations vs. precomputing it using the training data.

We train the network using an Adam optimizer with learning rate $10^{-3}$ up to epoch 110, then $10^{-4}$, and a minibatch size of 100. We initialize the network weights using the Xavier initialization, also known as Glorot initialization, associated to the uniform distribution. We use the same training, validation and test data when assessing performance of the vanilla model and of its neural extension. The vanilla model fit is performed using the standard expectation-maximization (EM) algorithm, using \textit{sklearn}. We do not use dropout, but use early stopping as a way to prevent overfitting. That is, we stop training when the validation loss starts increasing. In total we trained on approximately 20,000 epochs.

We present in table \ref{tabnecn} the scores (log-likelihood) of all three models. We see that the neural architecture outperforms significantly the vanilla model, and that learning correlations improves performance. 

To conduct the experiment in table \ref{tabnecn}, we used Lobster INTC data from Jan 28th 2015 \footnote{www.lobsterdata.com/tradesquotesandprices}. We considered the last two hours before close as it is highly liquid, the timestep value $dt=1s$, and split the data into training, validation and test sets of sizes 70\%, 15\%, 15\%. The data that we feed as input to our network is standardized, namely we substract the mean and divide by the standard deviation.

\begin{figure}[H]
  \centering
  \includegraphics[scale=0.33]{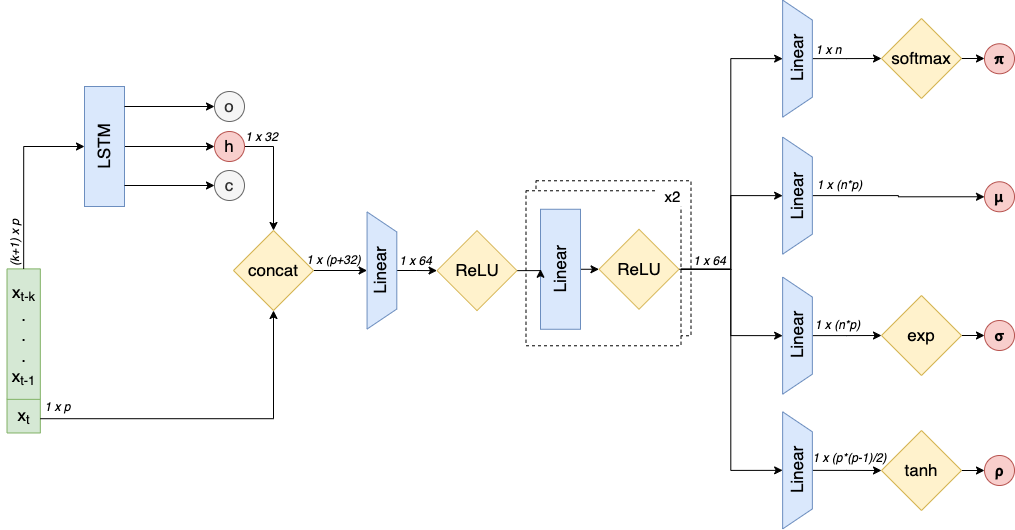}
  \caption{Architecture of the Neural ECN model. We output parameters of a multivariate Gaussian mixture distribution (component weights, means, variances and correlations) with $n_{mix}$ components fitted to the $p=2m+2=12$ dimensional vector $\mathcal{S}_t$ in (\ref{snapvarr}). An order book history $h_t:=(\mathcal{S}_j)_{j\in [t-k+1,t]}$ of size $k=20$ is fed as input to the network and first encoded into a latent space of size 32. We concatenate the latter with the most recent book information $\mathcal{S}_t$, and feed it to a fully connected neural network with 2 hidden layers of width 64.
  }
  \label{neural_ecn_architecture}
\end{figure}

\begin{table}[H]
\caption{Summary of ECN model scores (log-likelihood). Gaussian mixture with 5 components.}
\label{tabnecn}
\begin{center}
\begin{tabular}{c c c c}
\hline
Model & Train Score & Validation Score & Test Score \\ \hline
Vanilla                      & 8.89  & 6.23  & 8.94  \\
Neural w/ fixed correlation  & 18.32 & 15.79 & 15.64 \\
Neural w/ shared correlation & \textbf{23.40} & \textbf{21.90} & \textbf{21.63} \\ \hline
\end{tabular}
\end{center}
\end{table}

\section{Game theoretical analysis and convergence properties}
\label{secgt}

\subsection{Shared equilibria: convergence of shared-policy learning in the case of stationary LTs}
\label{secsharedeq}

In this section, we focus on LPs. That is, we assume that LTs do not learn, i.e. $\pi^{LT}$ is a fixed distribution and can therefore be considered as part of the environment, i.e. transition dynamics $\mathcal{T}$. We will therefore omit the dependency on $\pi^{LT}$ in our notations. Our goal is to understand what are the game theoretic implications of LP agents of different types learning under a shared policy. 

Intuitively, assume 2 players are asked to submit algorithms to play chess that will compete against each other. Starting with the white or dark pawns presents some similarities as it is chess in both cases, but also fundamental differences, hence the algorithms need to be good in all cases, whatever the type (white or dark) assigned by the random coin toss at the start of the game. The 2 players are playing a higher-level game on the space of algorithms that requires the submitted algorithms to be good in all situations. This also means we will consider games where there are "good" strategies, formalized by the concept of extended transitivity in assumption \ref{et} and proposition \ref{et2}.

 We will actually need the following definition, which is slightly more general than (\ref{vi2}) in that it allows LP agents $j \neq i$ to use a different policy $\pi_2 \in \mathcal{X}^{LP}$, where $i \neq j$, $\pi_1,\pi_2 \in \mathcal{X}^{LP}$:
\begin{align}
\label{vi}
\widetilde{V}_{\Lambda^{LP}_i}(\pi_1,\pi_2):=\EE_{\substack{\lambda^{LP}_i \sim p_{\Lambda^{LP}_i}, \hspace{1mm} a^{(i,LP)}_t \sim \pi_1(\cdot|\cdot,\lambda^{LP}_i) \\ \lambda^{LP}_j \sim p_{\Lambda^{LP}_j}, \hspace{1mm} a^{(j,LP)}_t \sim \pi_2(\cdot|\cdot,\lambda^{LP}_j)}} \left[ \sum_{t=0}^T \zeta^t \RR^{LP}(z^{(i,LP)}_t,\bm{z^{(-i,LP)}_t}, \bm{z^{(LT)}_t})\right].
\end{align}

$\widetilde{V}_{\Lambda^{LP}_i}(\pi_1,\pi_2)$ is to be interpreted as the expected reward of a LP agent of supertype $\Lambda^{LP}_i$ using $\pi_1$, while all other LP agents are using $\pi_2$. $\widetilde{V}_{\Lambda^{LP}_i}$ also depends on $\pi^{LT}$, but since we assume it to be fixed in this section, we do not write the dependence explicitly. This method of having an agent use $\pi_1$ and all others use $\pi_2$ is mentioned in \cite{sofa} under the name "symmetric opponents form approach" (SOFA) in the context of symmetric games. Our game as we formulated it so far is not symmetric since different supertypes get different rewards, however we will see below that we will introduce a symmetrization of the game via the function $\widehat{V}$.

\textbf{Shared policy gradient and the higher-level game $\widehat{V}$.} In the parameter sharing framework, $\pi \equiv \pi_{\theta}$ is a neural network with weights $\theta$, and the gradient $\nabla^{shared}_{\theta,B}$ according to which the shared policy $\pi_\theta$ is updated (where $B$ is the number of episodes sampled) is computed by collecting all agent experiences simultaneously and treating them as distinct sequences of local states, actions and rewards experienced by the shared policy \cite{Gupta2017-it}, yielding the following expression under vanilla policy gradient, similar to the single-agent case:
\begin{align}
\label{grad}
\begin{split}
& \nabla^{shared}_{\theta,B} = \frac{1}{n_{LP}} \sum_{i=1}^{n_{LP}} g_i^B, \\
& g_i^B:=\frac{1}{B}\sum_{b=1}^{B} \sum_{t=0}^{T} \nabla_{\theta} \ln \pi_{\theta} \left(a^{(i,LP)}_{t,b}|s^{(i,LP)}_{t,b},\lambda_{i,b}^{LP} \right) \sum_{t'=t}^{T} \zeta^{t'} \RR^{LP}(z^{(i,LP)}_{t',b},\bm{z^{(-i,LP)}_{t',b}}, \bm{z^{(LT)}_{t',b}}).
\end{split}
\end{align}

Note that one may use a critic network in equation (\ref{grad}) in place of the sampled rewards $\RR^{LP}$, but this is related to sample efficiency and doesn't change the methods and observations developed subsequently. By the strong law of large numbers, taking $B=+\infty$ in (\ref{grad}) simply amounts to replacing the average by an expectation as in (\ref{vi}) with $\pi_1=\pi_2=\pi_\theta$. Proposition \ref{pk} is a key observation of this paper and sheds light upon the mechanism underlying parameter sharing in  (\ref{grad}): in order to update the shared policy, we first set all agents to use the same policy $\pi_\theta$, then pick one agent at random and take a step towards improving its individual reward while keeping other agents on $\pi_\theta$, and then have all agents use the new policy: by $(\ref{grad2})$, this yields an unbiased estimate of the gradient $\nabla^{shared}_{\theta,\infty}$. Sampling many agents at random $\alpha \sim U[1,n_{LP}]$ in order to compute the expectation in (\ref{grad2}) will yield a less noisy gradient estimate but will not change its bias. In (\ref{grad2}), $\widehat{V}$ is to be interpreted as the utility received by a randomly chosen agent using $\pi_1$ while all other agents use $\pi_2$.
\begin{proposition}
\label{pk}
For a function $f(\theta_1,\theta_2)$, let $\nabla_{1} f(\theta_1,\theta_2)$ be the gradient with respect to the first argument, evaluated at $(\theta_1,\theta_2)$. We then have:
\begin{align}
\label{grad2}
    \nabla^{shared}_{\theta,\infty} = \nabla_{1} \widehat{V}(\pi_{\theta},\pi_{\theta}),\hspace{4mm} \widehat{V}(\pi_{1},\pi_{2}):=\EE_{\alpha \sim U[1,n_{LP}]} \left[ \widetilde{V}_{\Lambda^{LP}_\alpha}(\pi_{1},\pi_{2})\right],\hspace{2mm} \pi_1, \pi_2 \in \mathcal{X}^{LP},
\end{align}
where $\EE_{\alpha \sim U[1,n_{LP}]}$ indicates that the expectation is taken over $\alpha$ random integer in $[1,n_{LP}]$.
\end{proposition}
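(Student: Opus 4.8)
The plan is to combine the strong law of large numbers with the single-agent policy gradient (score-function) theorem, applied agent by agent. First I would let $B \to \infty$ in the definition of $\nabla^{shared}_{\theta,B}$ in (\ref{grad}). Since the $B$ episodes are i.i.d.\ (types are resampled $\lambda_{i,b} \sim p_{\Lambda^{LP}_i}$, initial states from $\mu^0$, and actions drawn from $\pi_\theta$ within each episode), the strong law of large numbers replaces the empirical average $\frac{1}{B}\sum_b$ by an expectation over a single episode, giving $\nabla^{shared}_{\theta,\infty} = \frac{1}{n_{LP}}\sum_{i=1}^{n_{LP}} g_i^\infty$, where $g_i^\infty$ is the expectation of the per-episode summand evaluated when all agents use $\pi_\theta$.

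The core step is to recognize each $g_i^\infty$ as a single-agent policy gradient. Fix the focal agent $i$, let all other LP agents use a frozen policy $\pi_2$, and let agent $i$ use $\pi_1 \equiv \pi_{\theta_1}$; this is exactly the setting of $\widetilde{V}_{\Lambda^{LP}_i}(\pi_1,\pi_2)$ in (\ref{vi}). I would write the trajectory density as a product over timesteps of the transition kernel $\T$, the fixed $\pi^{LT}$, the type and initial-state laws, and all agents' action probabilities, and observe that the only factors depending on $\theta_1$ are agent $i$'s action probabilities $\prod_t \pi_{\theta_1}(a^{(i,LP)}_t\mid s^{(i,LP)}_t, \lambda^{LP}_i)$. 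Differentiating under the expectation and applying the identity $\nabla_{\theta_1} p = p\,\nabla_{\theta_1}\ln p$ yields the score-function estimator with score $\sum_t \nabla_{\theta_1}\ln \pi_{\theta_1}(a^{(i,LP)}_t\mid\cdots)$. A standard causality argument (conditioning on the history up to $s^{(i,LP)}_t$, the expected score $\EE[\nabla_{\theta_1}\ln\pi_{\theta_1}(a^{(i,LP)}_t\mid\cdots)]$ vanishes, so cross-terms with rewards earned strictly before time $t$ drop out) collapses the full-return multiplier into the reward-to-go $\sum_{t'\geq t}\zeta^{t'}\RR^{LP}$. Evaluating at $\theta_1 = \theta$ and $\pi_2 = \pi_\theta$ gives precisely $\nabla_1 \widetilde{V}_{\Lambda^{LP}_i}(\pi_\theta,\pi_\theta) = g_i^\infty$.

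Finally I would assemble the pieces. By linearity of the gradient, $\frac{1}{n_{LP}}\sum_i \nabla_1 \widetilde{V}_{\Lambda^{LP}_i}(\pi_\theta,\pi_\theta) = \nabla_1 \left[\frac{1}{n_{LP}}\sum_i \widetilde{V}_{\Lambda^{LP}_i}(\pi_\theta,\pi_\theta)\right]$, and the bracketed average is exactly $\widehat{V}(\pi_\theta,\pi_\theta) = \EE_{\alpha\sim U[1,n_{LP}]}[\widetilde{V}_{\Lambda^{LP}_\alpha}(\pi_\theta,\pi_\theta)]$ by the definition in (\ref{grad2}). This yields the claim $\nabla^{shared}_{\theta,\infty} = \nabla_1 \widehat{V}(\pi_\theta,\pi_\theta)$.

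The main obstacle, and the point requiring the most care, is the bookkeeping of which factors of the trajectory law carry a $\theta_1$-dependence when computing the partial gradient $\nabla_1$. The essential observation — and what makes parameter sharing work — is that freezing the second argument $\pi_2$ kills the score contributions of all non-focal agents, so that only agent $i$'s own log-policy appears, matching the single-summand structure of $g_i^B$ in (\ref{grad}). Type-symmetry (Assumption \ref{asts}) is what guarantees $\widetilde{V}_{\Lambda^{LP}_i}$ depends only on the supertype, so that the average over $\alpha$ in $\widehat{V}$ is well-defined; interchanging gradient and expectation requires only mild regularity (e.g.\ dominated convergence under bounded rewards and a smooth parameterization), which I would invoke without dwelling on it.
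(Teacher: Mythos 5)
Your proposal is correct and follows essentially the same route as the paper: decompose $\nabla^{shared}_{\theta,\infty}$ as $\frac{1}{n_{LP}}\sum_i g_i^\infty$, identify each $g_i^\infty$ with $\nabla_{1}\widetilde{V}_{\Lambda^{LP}_i}(\pi_\theta,\pi_\theta)$, and conclude by linearity of the gradient and the definition of $\widehat{V}$. The only difference is that where the paper cites this identification as a known extension of the likelihood-ratio method to imperfect-information games (Lockhart et al.; Srinivasan et al., appendix D), you derive it directly via the score-function and causality argument, which is a legitimate filling-in of the cited step rather than a different approach.
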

\textit{Proof.} It is known (although in a slightly different form in \cite{Lockhart19ED} or \cite{Srinivasan18RPG} appendix D) that the term $g_i^\infty$ in (\ref{grad}) is nothing else than
$\nabla_{1} \widetilde{V}_{\Lambda^{LP}_i}(\pi_{\theta},\pi_{\theta})$, that is the sensitivity of the expected reward of an agent of supertype $\Lambda^{LP}_i$ to changing its policy while all other agents are kept on $\pi_\theta$, cf. (\ref{vi}). The latter can be seen as an extension of the  likelihood ratio method to imperfect information games, and allows us to write concisely, using (\ref{grad}):
\begin{align*}
\nabla^{shared}_{\theta,\infty} & =  \frac{1}{n_{LP}} \sum_{i=1}^{n_{LP}} \nabla_{1} \widetilde{V}_{\Lambda^{LP}_i}(\pi_{\theta},\pi_{\theta}) \\
& =  \nabla_{1} \frac{1}{n_{LP}} \sum_{i=1}^{n_{LP}}  \widetilde{V}_{\Lambda^{LP}_i}(\pi_{\theta},\pi_{\theta}) = \nabla_{1} \EE_{\alpha \sim U[1,n_{LP}]} \left[ \widetilde{V}_{\Lambda^{LP}_\alpha}(\pi_{\theta},\pi_{\theta})\right].
\end{align*}
\qed

\textbf{Shared Equilibria.} We remind that a 2-player game is said to be symmetric if the utility received by a player only depends on its own strategy and on its opponent's strategy, but not on the player's identity, and that a pure strategy Nash equilibrium $(\pi_1^*,\pi_2^*)$ is said to be symmetric if $\pi_1^*=\pi_2^*$ (see \cite{2sym}). Here, "pure strategy" is in the terminology of game theory, as opposed to "mixed strategy", which is a probability distribution over the set of pure strategies. We now introduce the terminology "payoff", relevant in the symmetric case.

\begin{definition}
\label{pay}
\textbf{(Payoff in symmetric games)} Consider a 2-player symmetric game where $u_i(\pi_1,\pi_2)$ represents the utility of player $i$ when each player $j$ plays $\pi_j$. Due to symmetry, we have $u_1(\pi_1,\pi_2) = u_2(\pi_2,\pi_1)$, and therefore we call payoff $u(\pi,\eta):=u_1(\pi,\eta)$ the utility received by a player playing $\pi$ while the other player plays $\eta$. 
\end{definition}

Equation (\ref{grad2}) suggests that the shared policy is a Nash equilibrium of the 2-player symmetric game with payoff $\widehat{V}$, where by our definition \ref{pay} of the term "payoff", the first player receives $\widehat{V}(\pi_1,\pi_2)$ while the other receives $\widehat{V}(\pi_2,\pi_1)$. This is because $\nabla_{1}\widehat{V}(\pi_\theta,\pi_\theta)$ in (\ref{grad2}) corresponds to trying to improve the utility of the first player while keeping the second player fixed, starting from the symmetric point $(\pi_\theta,\pi_\theta)$. If no such improvement is possible, we are facing by definition a symmetric Nash equilibrium, since due to symmetry of the game, no improvement is possible either for the second player starting from the same point $(\pi_\theta,\pi_\theta)$. Since the 2 players are not part of the $n_{LP}$ agents, the game with payoff $\widehat{V}$ can be seen as an abstract game where each pure strategy is a policy $\pi \in \mathcal{X}^{LP}$ defined in (\ref{pidef}). This type of game has been introduced in \cite{pmlr-v97-balduzzi19a} as a \textit{Functional Form Game} (FFG), since pure strategies of these games are stochastic policies themselves (but of the lower-level game among the $n_{LP}$ agents). This motivates the following definition.

\begin{definition}
\label{se}
\textbf{(Shared Equilibrium)} A shared (resp. $\epsilon-$shared) equilibrium $\pi^*$ associated to the supertype profile $\bm{\Lambda}:=(\bm{\Lambda^{LP}},\bm{\Lambda^{LT}})$ is defined as a pure strategy symmetric Nash (resp. $\epsilon-$Nash) equilibrium $(\pi^*,\pi^*)$ of the 2-player symmetric game with pure strategy set $\mathcal{X}^{LP}$ and payoff $\widehat{V}$ in (\ref{grad2}), cf. definition \ref{pay}.
\end{definition}

Note that the previously described mechanism occurring in parameter sharing is exactly what is defined as \textit{self-play} in \cite{pmlr-v97-balduzzi19a} (algorithm 2), but for the game of definition \ref{se} with payoff $\widehat{V}$. That is, we repeat the following steps for training iterations $n$: first set all agents on $\pi_{\theta_n}$, then pick one agent at random and improve its utility according to the gradient update (\ref{grad2}), thus finding a new policy $\pi_{\theta_{n+1}}$, then set all agents on $\pi_{\theta_{n+1}}$.

Consider for a moment only the case where types are defined to be agents' indexes: $\Lambda^{LP}_i:=\lambda^{LP}_i:=i$: this is a particular case of our formalism. If each type $i$ has its own set of parameters $\theta_i$ (in which case $\theta$ is a concatenation of the disjoint $\theta_i$), the gradient update (\ref{grad}) reduces to independent policy gradient, where each player updates its own policy independently of the others. Define the "underlying game" among $n_{LP}$ LP agents to be the game where each agent $i$ chooses a policy $\pi_i$, and receives a utility equal to its value function $V_i(\pi_i, \pi_{-i})$, defined as the expected value of its cumulative reward (cf. (\ref{vi2})). This definition is classical, cf. \cite{markovpg} for the study of Markov Potential games, and amounts to recasting a Markov game whereby agents take actions and transition from states to states over multiple timesteps as a one-shot game with utilities $V_i$ on a larger (pure) strategy space, namely the space of stochastic policies. This trick was also used in PSRO \cite{psro}, where they call the larger one-shot game a "meta-game". In this case, Nash equilibria of the underlying game among the $n_{LP}$ agents exactly coincide with shared equilibria. Indeed, assume that $(\pi^*_i)_{i \in [1,n_{LP}]}$ is any Nash equilibrium of the underlying game, i.e. $V_i(\eta, \pi^*_{-i}) \leq V_i(\pi^*_i, \pi^*_{-i})$ $\forall i$, $\forall \eta$. We can simply define $\pi(\cdot|\cdot, i):=\pi^*_i$ to get a shared equilibrium, where $\pi$ is the shared policy being used by all agents (but seen through a different prism $\pi(\cdot|\cdot, i)$ for each agent $i$). Conversely, every shared equilibrium is a Nash equilibrium of the underlying game. Indeed, assume $\pi$ is not a shared equilibrium, i.e. there exists a shared policy $\widetilde{\pi}$ such that $\widehat{V}(\widetilde{\pi},\pi)>\widehat{V}(\pi,\pi)$, i.e. $\sum_{i=1}^{n_{LP}} \widetilde{V}_{i}(\widetilde{\pi},\pi) - \widetilde{V}_{i}(\pi,\pi)>0$. If a sum is positive, at least one term of this sum is positive, but this is not possible by definition of the Nash, since by definition $\widetilde{V}_{i}(\widetilde{\pi},\pi)$ is the utility of agent $i$ when he plays $\widetilde{\pi}(\cdot|\cdot, i)$ while other agents $j$ play $\pi(\cdot|\cdot, j)=\pi^*_j$. Conversely, assume that $\pi^*_i:=\pi(\cdot|\cdot, i)$ is not a Nash of the underlying game, i.e. there exists a player $j$ and a strategy $\eta$ such that $V_j(\eta, \pi^*_{-j}) > V_j(\pi^*_j, \pi^*_{-j})$. Define the shared policy $\widetilde{\pi}$ as $\widetilde{\pi}(\cdot|\cdot, i):=\pi(\cdot|\cdot, i)$ for $i \neq j$, and $\widetilde{\pi}(\cdot|\cdot, j):=\eta$. Then, $\widehat{V}(\widetilde{\pi},\pi)-\widehat{V}(\pi,\pi)= \frac{1}{n_{LP}} \sum_{i=1}^{n_{LP}} \widetilde{V}_{i}(\widetilde{\pi},\pi) - \widetilde{V}_{i}(\pi,\pi) = \frac{1}{n_{LP}} (\widetilde{V}_{j}(\widetilde{\pi},\pi) - \widetilde{V}_{j}(\pi,\pi))>0$, a contradiction since $\pi$ is a shared equilibrium.

We now discuss the case where $\Lambda^{LP}_i$ and $\lambda^{LP}_i$ are general quantities rather than the agents' indexes $i$. In this case, the reasoning is the same except that shared equilibria will now essentially coincide with symmetric Bayesian Nash equilibria of the underlying game. One-shot Bayesian games are a natural generalization of normal form games in which at every instance of the game, each player is given a type $\lambda^{LP}_i$ randomly sampled from an exogenous distribution of types, which we call supertype. A policy for agent $i$ in a Bayesian game is a mapping from types to state-action policies of the original game, $\pi_i \equiv \lambda \to \pi_i(\cdot|\cdot, \lambda)$, i.e. agents need to specify their behavior for every type they can possibly be given. Therefore, Bayesian Nash equilibria are Nash equilibria on a larger pure strategy space where policies have been augmented with the agent type. We refer to \cite{wellman1} for the definition of Bayesian games and Bayesian Nash equilibria. One difference between Bayesian games and our game is that we do not assume that agents know the supertype profile $\bm{\Lambda^{LP}}$, called the "type prior" in Bayesian games and assumed to be of common knowledge. Each agent only knows its private type sampled at the beginning of each episode. In our case, all agents use the same shared policy $\pi$, but seen through a different prism $\pi(\cdot|s^{(i,\kappa)}_t, \lambda^\kappa_i)$ depending on the type. This means that the Bayesian Nash equilibria we are looking for are symmetric. As mentioned in section \ref{secshared}, our rationality assumption underlying parameter sharing is the following: if two agents have equal types and equal sequences of historical states at a given point in time, then they should behave the same way. From the definition of $\widehat{V}$, one could see our game as a symmetric Bayesian game where agents do not know the supertypes, and where the type distribution is $p(d\lambda):=\frac{1}{n_{LP}} \sum_{i=1}^{n_{LP}} p_{\Lambda^{LP}_i}(d\lambda)$. In this sense $\widehat{V}$ constitutes a symmetrization of our original game among supertypes $\Lambda^{LP}_i$. Together with the extended transitivity assumption \ref{et} which is similar to generalized ordinal potential games, our overall setting can be seen as a symmetric Bayesian Markov generalized ordinal potential game. 

Why is the abstract game $\widehat{V}$ useful? The abstract game characterizes the learning mechanism according to which an equilibrium is reached, rather than the equilibrium itself. Such an abstract game is used as a mean to find equilibria of the underlying game among the $n_{LP}$ agents. In our case, the learning mechanism is that at each step, a randomly selected agent tries to find a profitable deviation. This is by definition of $\widehat{V}$. Such sequences of improvements by individual players are known as "improvement paths" since the seminal work on potential games \cite{potential}. One of our contributions is to give a rigorous, game-theoretical explanation to the policy sharing algorithms that have been used in the literature \cite{Gupta2017-it} via (variants of) equation (\ref{grad}), but without theoretical grounding. Importantly, such characterization allows to relate policy sharing to potential games, giving us insights on conditions required for the learning mechanism to converge, namely our "extended transitivity" assumption \ref{et}. This learning mechanism, from a practical point of view, further allows us to use all agents' experiences to interpolate policies $\pi(\cdot|\cdot, \lambda)$ in the type variable $\lambda$, using the generalization power of neural nets. For example, if in a specific instance of the game, the sampled risk aversions $\gamma$ for 2 LP agents are 0.5 and 1, the related experience will be used at the next stage of the game for an agent which risk aversion is 0.6.

The natural question is now \textit{under which conditions do Shared equilibria exist, and can the self-play mechanism in (\ref{grad2}) lead to such equilibria?} We know  \cite{pmlr-v97-balduzzi19a} that self-play is related to transitivity in games, so to answer this question, we introduce a new concept of transitivity that we call \textit{extended transitivity} as it constitutes a generalization to 2-player symmetric general sum games of the concept of transitivity for the zero-sum case in \cite{pmlr-v97-balduzzi19a}. There, such a transitive game has payoff $u(x,y):=t(x)-t(y)$. One can observe that this game satisfies extended transitivity in assumption \ref{et} with $\delta_\epsilon:=\epsilon$ and $\Phi(x):=t(x)$. Note also that their monotonic games for which $u(x,y):=\sigma(t(x)-t(y))$ (where $\sigma$ is increasing) satisfy extended transitivity as well with $\delta_\epsilon:=\sigma^{(-1)}(\epsilon+\sigma(0))$ and $\Phi(x):=t(x)$.

\begin{assumption}
\label{et}
\textbf{(extended transitivity)} A 2-player symmetric game with pure strategy set $S$ and payoff $u$ is said to be extended transitive if there exists a bounded function $\Phi: S \to \R$ such that:
$$
\forall \epsilon > 0, \exists \delta_\epsilon>0: \forall x,y \in S: \text{ if } u(y,x)-u(x,x)>\epsilon, \text{ then } \Phi(y)-\Phi(x)> \delta_\epsilon.
$$
\end{assumption}

The intuition behind assumption \ref{et} is that $\Phi$ can be seen as the game "skill" that is being learnt whenever a player finds a profitable deviation from playing against itself. It will be required in theorem \ref{n11} to prove the existence of shared equilibria. Actually, it will be proved that such equilibria are reached by following self-play previously discussed, thus showing that policy updates based on (\ref{grad2}) with per-update improvements of at least $\epsilon$ achieve $\epsilon$-shared equilibria within a finite number of steps. In order to do so, we need definition \ref{defsp} of a \textit{self-play sequence}, which is nothing else than a rigorous reformulation of the mechanism occurring in self-play \cite{pmlr-v97-balduzzi19a} (algorithm 2). For $\epsilon$-shared equilibria, assumption \ref{et} is sufficient, but for shared equilibria, we need the continuity result in lemma \ref{lvf2}.

\begin{definition}
\label{defsp}
A $\bm{(f,\epsilon)}$-\textbf{self-play sequence} $(x_n,y_n)_{0\leq n \leq 2N}$ of size $0\leq 2N\leq +\infty$ generated by $(z_n)_{n \geq 0}$ is a sequence such that for every $n$, $x_{2n}=y_{2n}=z_n$, $(x_{2n+1},y_{2n+1})=(z_{n+1},z_n)$ and $f(x_{2n+1},y_{2n+1})>f(x_{2n},y_{2n})+\epsilon$.
\end{definition}

\begin{lemma}
\label{lvf2}
Assume that the rewards $\RR^{LP}$ are bounded, and that $\Ss^{LP}$, $\As^{LP}$ and $\Ss^{\lambda^{LP}}$ are finite. Then $\widetilde{V}_{\Lambda^{LP}_i}$ is continuous on $\mathcal{X}^{LP} \times \mathcal{X}^{LP}$ for all $i$, where $\mathcal{X}^{LP}$ is equipped with the total variation metric.
\end{lemma}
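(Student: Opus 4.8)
The plan is to exploit the finiteness of $\Ss^{LP}$, $\As^{LP}$, $\Ss^{\lambda^{LP}}$ together with the finiteness of the horizon $T$ to reduce the statement to the continuity of a polynomial. First I would identify $\mathcal{X}^{LP}$ with a compact subset of a finite-dimensional Euclidean space: a policy $\pi \in \mathcal{X}^{LP}$ is determined by the finitely many numbers $\pi(a|s,\lambda)$ for $(s,\lambda,a) \in \Ss^{LP}\times\Ss^{\lambda^{LP}}\times\As^{LP}$, so $\mathcal{X}^{LP}$ is a product of $|\Ss^{LP}|\cdot|\Ss^{\lambda^{LP}}|$ probability simplices sitting inside $\R^{N}$ with $N := |\Ss^{LP}|\cdot|\Ss^{\lambda^{LP}}|\cdot|\As^{LP}|$. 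On a finite sample space the total variation distance between two distributions equals $\tfrac12$ their $\ell^1$ distance, hence the total variation metric on $\mathcal{X}^{LP}$, being the supremum over the finitely many pairs $(s,\lambda)$ of the total variation distances of the conditional laws, is equivalent to the Euclidean metric on the entries. It therefore suffices to prove that $\widetilde{V}_{\Lambda^{LP}_i}$ is continuous as a function of the $2N$ real entries of $(\pi_1,\pi_2)$.

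The core step is to show that $\widetilde{V}_{\Lambda^{LP}_i}(\pi_1,\pi_2)$ is in fact a polynomial in these entries, which I would establish by backward induction over timesteps. Fixing the type profile $\bm{\lambda}$ sampled at $t=0$, define the tail value $W_t(\bm{s}_t,\bm{\lambda};\pi_1,\pi_2)$ as the conditional expectation of $\sum_{u=t}^T \zeta^{u-t}\RR^{LP}$ given the joint state $\bm{s}_t$ and $\bm{\lambda}$, with $W_{T+1}\equiv 0$. The one-step recursion expresses $W_t$ as a sum over the finitely many LP actions, weighted by $\pi_1(a^{(i)}|s^{(i)},\lambda_i)\prod_{j\neq i}\pi_2(a^{(j)}|s^{(j)},\lambda_j)$, of the immediate reward plus $\zeta$ times the expectation of $W_{t+1}$ against the transition kernel $\T$ and the fixed LT policy $\pi^{LT}$. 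Since $\pi^{LT}$, $\T$, and the integrations over the (possibly infinite) LT states, LT types and next states carry no dependence on $\pi_1$ or $\pi_2$, each such step is multilinear in the entries of $(\pi_1,\pi_2)$ with coefficients that are bounded measurable functions of $(\bm{s}_t,\bm{\lambda})$; by induction every $W_t$ is a polynomial in $(\pi_1,\pi_2)$ with uniformly bounded coefficients. Taking the outer expectation $\widetilde{V}_{\Lambda^{LP}_i}(\pi_1,\pi_2)=\EE_{\bm{\lambda}\sim p,\,\bm{s}_0\sim\mu^0}[W_0(\bm{s}_0,\bm{\lambda};\pi_1,\pi_2)]$ integrates a polynomial whose coefficients are bounded, which yields a polynomial in $(\pi_1,\pi_2)$ whose coefficients are the integrals of the previous ones. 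A polynomial is continuous, so $\widetilde{V}_{\Lambda^{LP}_i}$ is continuous in the entries, and by the metric equivalence above, continuous on $\mathcal{X}^{LP}\times\mathcal{X}^{LP}$ for the total variation metric.

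The only genuine obstacle is the bookkeeping of the environment randomness: the LT types and states, and the state transitions, may range over infinite spaces, so one must be careful that these are integrated out into fixed, bounded coefficients that carry no dependence on $\pi_1$ or $\pi_2$. The finiteness of the LP spaces is precisely what makes the per-timestep sum over LP actions finite and turns the policy dependence into a clean multilinear (hence polynomial) form, while the boundedness of $\RR^{LP}$ guarantees that the outer expectation of these polynomials is finite and preserves polynomiality. Everything else, namely the equivalence of the total variation and Euclidean metrics and the continuity of polynomials, is routine.
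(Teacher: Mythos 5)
Your proof is correct, but it takes a genuinely different route from the paper's. The paper does not use finiteness in any essential way: it writes the Bellman fixed-point equation for the conditional value $V_{\Lambda_i^{LP}}(\pi_1,\pi_2,\bm{s},\bm{\lambda})$, splits it into an immediate-reward term and a continuation term, bounds the change of each under a policy swap by the total variation distance (each of the $n$ agents' policy factors contributing a term of order $2\RR_{max}\,\rho_{TV}$), and rearranges the resulting self-referential inequality to obtain the explicit Lipschitz estimate $|\widetilde{V}_{\Lambda_i^{LP}}(\pi_1,\pi_2)-\widetilde{V}_{\Lambda_i^{LP}}(\pi_3,\pi_4)|\le 2n(1-\zeta)^{-2}\RR_{max}\,\rho_{TV}((\pi_1,\pi_2),(\pi_3,\pi_4))$, working in the infinite-horizon discounted setting $\zeta<1$ and noting that the finite-horizon case is handled the same way. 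You instead lean on finiteness and the finite horizon: backward induction exhibits $\widetilde{V}_{\Lambda_i^{LP}}$ as a polynomial in the finitely many policy entries, with the LT and environment randomness integrated into bounded coefficients, and you conclude by equivalence of the total variation and Euclidean metrics in finite dimensions. Each approach buys something: yours is elementary, yields $C^\infty$ regularity on the finite-horizon game, and works verbatim at $\zeta=1$; the paper's contraction argument needs no finiteness of $\Ss^{LP}$, $\As^{LP}$, $\Ss^{\lambda^{LP}}$ (only bounded rewards), produces a quantitative Lipschitz constant independent of the sizes of those spaces, and covers $T=+\infty$ directly --- a case your polynomial argument cannot reach as written, though it would follow by observing that the infinite-horizon value is a uniform limit of your finite-horizon polynomials, the tail beyond time $t$ being bounded by $\zeta^t\RR_{max}/(1-\zeta)$. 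Two minor points in your write-up: the per-step dependence is polynomial rather than strictly multilinear, since distinct agents $j\neq i$ may occupy the same $(s,\lambda)$ and choose the same action, so a single entry of $\pi_2$ can appear raised to a power; and since your domain is compact, polynomiality already implies Lipschitz continuity, so the conclusions of the two proofs are closer than they first appear.
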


\begin{theorem}
\label{n11}
Let $\bm{\Lambda^{LP}}$ be a supertype profile. Assume that the symmetric 2-player game with pure strategy set $\mathcal{X}^{LP}$ and payoff $\widehat{V}$ is extended transitive. Then, there exists an $\epsilon-$shared equilibrium for every $\epsilon>0$, which further can be reached within a finite number of steps following a $(\widehat{V},\epsilon)$-self-play sequence. Further, if $\Ss^{LP}$, $\As^{LP}$ and $\Ss^{\lambda^{LP}}$ are finite and the rewards $\RR^{LP}$ are bounded, then there exists a shared equilibrium.
\end{theorem}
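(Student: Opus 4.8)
The plan is to prove the two assertions separately: first establish existence of $\epsilon$-shared equilibria reachable by self-play using the boundedness of the extended-transitivity potential $\Phi$, and then pass to the limit $\epsilon \to 0$ via a compactness-and-continuity argument to obtain an exact shared equilibrium.

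For the first assertion, I would show that any $(\widehat{V},\epsilon)$-self-play sequence must terminate, and that its terminal point is an $\epsilon$-shared equilibrium. Starting from an arbitrary $z_0 \in \mathcal{X}^{LP}$, either $(z_0,z_0)$ is already an $\epsilon$-shared equilibrium -- which by Definition \ref{pay} and the symmetry of the game with payoff $\widehat{V}$ means $\widehat{V}(\eta,z_0) \leq \widehat{V}(z_0,z_0)+\epsilon$ for all $\eta$ -- and we are done, or there exists $z_1$ with $\widehat{V}(z_1,z_0) > \widehat{V}(z_0,z_0)+\epsilon$, which is exactly one step of a self-play sequence in the sense of Definition \ref{defsp}. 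Extended transitivity (Assumption \ref{et}) applied with $y=z_1$, $x=z_0$ then yields $\Phi(z_1)-\Phi(z_0) > \delta_\epsilon$. Iterating, each improvement step raises $\Phi$ by at least the fixed amount $\delta_\epsilon$, so after $N$ steps $\Phi(z_N)-\Phi(z_0) > N\delta_\epsilon$. Since $\Phi$ is bounded, this forces $N \leq (\sup \Phi - \inf \Phi)/\delta_\epsilon$, so the sequence cannot continue indefinitely; when it stops, no $\epsilon$-profitable deviation exists and the terminal point is an $\epsilon$-shared equilibrium reached in finitely many steps.

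For the second assertion, I would combine the first with a standard limiting argument. Pick $\epsilon_k \downarrow 0$ and let $\pi_k$ be a corresponding $\epsilon_k$-shared equilibrium from the first part. Under the finiteness hypotheses on $\Ss^{LP}$, $\As^{LP}$ and $\Ss^{\lambda^{LP}}$, the space $\mathcal{X}^{LP}$ is a finite product of probability simplices $\Delta(\As^{LP})$, hence compact in the total-variation metric, so I can extract a subsequence with $\pi_k \to \pi^*$. The payoff $\widehat{V}$ is continuous on $\mathcal{X}^{LP}\times\mathcal{X}^{LP}$ because, by Proposition \ref{pk}, it is the average over $\alpha$ of the maps $\widetilde{V}_{\Lambda^{LP}_\alpha}$, each continuous by Lemma \ref{lvf2} (whose hypotheses of bounded $\RR^{LP}$ and finite spaces hold here). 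To conclude I argue by contradiction: if some fixed $\eta$ satisfied $\widehat{V}(\eta,\pi^*)-\widehat{V}(\pi^*,\pi^*)=:\Delta>0$, then continuity of $\widehat{V}$ in both arguments gives $\widehat{V}(\eta,\pi_k)-\widehat{V}(\pi_k,\pi_k)\to\Delta$, so for $k$ large this difference exceeds $\epsilon_k$, contradicting that $\pi_k$ is an $\epsilon_k$-shared equilibrium. Hence no profitable deviation from $\pi^*$ exists and $\pi^*$ is a shared equilibrium.

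The conceptual heart is the first part: the role of extended transitivity is precisely to convert each $\epsilon$-sized payoff improvement along a self-play path into a uniformly bounded-below increment of the scalar potential $\Phi$, after which boundedness of $\Phi$ forces termination -- this is the analogue, for functional-form symmetric general-sum games, of the finite-improvement-path property of (generalized ordinal) potential games. The main point requiring care in the second part is matching compactness and continuity to the total-variation topology used in Lemma \ref{lvf2}; since all underlying spaces are finite, this topology is equivalent to the Euclidean one on the product of simplices, so both compactness of $\mathcal{X}^{LP}$ and continuity of $\widehat{V}$ are unproblematic, and the only delicate step is keeping the deviation $\eta$ fixed while passing to the limit in the second argument of $\widehat{V}$.
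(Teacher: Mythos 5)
Your proposal is correct and follows essentially the same route as the paper: the paper's own proof also converts each $\epsilon$-sized profitable deviation into a $\delta_\epsilon$-increment of $\Phi$ to force termination of self-play sequences (its Lemma \ref{f2sis} and Theorem \ref{n1}, where a maximal-length sequence is taken rather than your greedy construction), and then obtains an exact shared equilibrium by exactly your compactness-plus-continuity limit along $\epsilon_k$-equilibria, invoking Heine--Borel on $\mathcal{X}^{LP}$ and Lemma \ref{lvf2} (its Theorem \ref{n2}). The only differences are cosmetic: the paper factors the argument through abstract intermediate results and passes to the limit directly in the inequality $f(x_{m_n},x_{m_n}) \geq \sup_y f(y,x_{m_n})-\epsilon_{m_n}$, whereas you argue by contradiction with a fixed deviation $\eta$; these are equivalent.
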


We should comment on the relationship between our extended transitivity and potential games \cite{potential}. Extended transitivity may seem a bit abstract, but it is well-rooted in the game theory literature. In simple terms, it corresponds to games of skill where there are some universally good strategies \cite{spin}. For example in chess or tennis, some players are universally "good". This is because there is a skill underlying the game that players need to master. Mathematically, that skill is the function $\Phi$ in our extended transitivity assumption. This is in contrast to cyclic games like rock-paper-scissors where there is no dominating strategy. Precisely, our extended transitivity is very closely related to \textit{generalized ordinal potential games} in the seminal work of \cite{potential}. A 2 player symmetric game $u$ as in assumption \ref{et} is said to be generalized ordinal potential with potential function $\mathcal{P}$ if \cite{potential}: 
$$
\textbf{(GOP)} \hspace{3mm} \forall x,y,z \in S: \text{ if } u(y,z)-u(x,z)>0, \text{ then } \mathcal{P}(y,z)-\mathcal{P}(x,z)> 0.
$$

The first comment is that our $\epsilon-\delta_\epsilon$ requirement is a "uniform" version of the $>0$ requirement in (GOP), like continuity vs. uniform continuity. We need it for technical reasons in Lemma \ref{f2sis} in appendix. In the discussion below, we omit this technical aspect. The second comment is that extended transitivity only assumes deviations from symmetric points $(x,x)$, contrary to all points $(x,z)$ in (GOP). There are two ways to connect extended transitivity and (GOP). The first way is to assume that $\mathcal{P}(y,z)-\mathcal{P}(x,z)$ in (GOP) does not depend on $z$, which occurs for example if $\mathcal{P}$ is a separable function, i.e. $\mathcal{P}(x,y)=p_1(x)+p_2(y)$. Then (GOP) implies extended transitivity by taking $z=x$. The second way is to assume that: 
\begin{align}
\label{a1gop}
\hspace{3mm}\forall x,y \in S: \text{ if } u(y,x)-u(x,x)>0, \text{ then } u(y,y)-u(x,y)>0.
\end{align}

(\ref{a1gop}) is an intuitive assumption, always true in the zero-sum symmetric case where $u$ is an antisymmetric function. If $u(y,x)-u(x,x)>0$, it means that the strategy $y$ is "good" when the other player plays $x$. If $y$ is good in some universal way, then it is also good when the other player plays $y$, i.e.  $u(y,y)-u(x,y)>0$. If $\mathcal{P}$ is a symmetric function \footnote{If the 2 player symmetric game $u$ is exact potential - which is stronger than (GOP) - then it is well-known that $\mathcal{P}$ is symmetric, namely $\mathcal{P}(x,y)=\mathcal{P}(y,x)$.}, we have that (GOP) together with (\ref{a1gop}) imply extended transitivity with $\Phi(x):= \mathcal{P}(x,x)$. Indeed, assume that $u(y,x)-u(x,x)>0$. Then by (GOP), $\mathcal{P}(y,x)-\mathcal{P}(x,x)> 0$. By (\ref{a1gop}), $u(y,y)-u(x,y)>0$, which by (GOP) yields $\mathcal{P}(y,y)-\mathcal{P}(x,y)> 0$. This implies, using the symmetry of $\mathcal{P}$, that $\Phi(y)-\Phi(x)=\mathcal{P}(y,y)-\mathcal{P}(x,x)=\mathcal{P}(y,y)-\mathcal{P}(x,y) + \mathcal{P}(y,x) -\mathcal{P}(x,x) >0$, which proves extended transitivity. However, for extended transitivity to be true, we only need the weaker $\mathcal{P}(y,y) > \mathcal{P}(x,x)$.

It is possible to check empirically that extended transitivity holds for the game $\widehat{V}$: for this, we need to check that our learning mechanism based on a random player finding a profitable deviation at each training iteration makes the cumulative reward of the shared policy approximately monotonically increasing during training. If extended transitivity were false, we would observe a cyclic behavior of the shared policy reward, as as in the case of rock-paper-scissors for example. We observe that LPs' shared policy reward indeed has the correct behavior in figures \ref{f1} and \ref{figgn1}.

We now proceed to proving that the gradient update (\ref{grad2}) converges to $\epsilon-$shared equilibria. In order to prove theorem \ref{gradet}, we reformulate extended transitivity in proposition \ref{et2}.

\begin{proposition}
\label{et2}
A 2-player symmetric game with pure strategy set $S$ and payoff $u$ is extended transitive if and only if there exists a bounded function $\Phi: S \to \R$ and a continuous and strictly increasing function $\varphi: [0,+\infty) \to [0,+\infty)$ with $\varphi(0)=0$ such that:
$$
\Phi(y) - \Phi(x) \geq \varphi(u(y,x)-u(x,x)) \hspace{4mm} \forall (x,y) \in S_0,
$$
where $S_0 := \{(x,y) \in S: u(y,x)-u(x,x)>0\}$. In particular, by considering the convex Hull of $\varphi$, it can be chosen convex. Consequently, if $x \in S$ and $Y$ is a random variable such that $(x,Y)$ takes value in $S_0$ almost surely, then Jensen's inequality yields:
$$
\EE[\Phi(Y)] - \Phi(x) \geq \varphi(\EE[u(Y,x)]-u(x,x)).
$$
\end{proposition}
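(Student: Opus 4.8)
\emph{Overall approach.} The plan is to prove the two implications of the equivalence separately; the genuine content lies in extracting the modulus $\varphi$ from Assumption \ref{et}, while the reverse implication and the Jensen consequence are short. For the easy direction ($\Leftarrow$), suppose $\Phi$ and $\varphi$ as described exist. Given $\epsilon>0$ I would set $\delta_\epsilon:=\varphi(\epsilon)$, which is positive since $\varphi$ is strictly increasing with $\varphi(0)=0$. If $u(y,x)-u(x,x)>\epsilon$ then $(x,y)\in S_0$, so the displayed inequality applies and, because $\varphi$ is increasing, $\Phi(y)-\Phi(x)\ge \varphi(u(y,x)-u(x,x))>\varphi(\epsilon)=\delta_\epsilon$. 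That is precisely extended transitivity.

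\emph{The forward direction.} Assume extended transitivity with bounded $\Phi$. I would introduce, for $t>0$, the monotone profile
\[
G(t):=\inf\{\Phi(y)-\Phi(x):(x,y)\in S,\ u(y,x)-u(x,x)\ge t\},
\]
with the convention $G(t)=+\infty$ when the set is empty. Two facts are immediate: $G$ is nondecreasing (its defining set shrinks as $t$ grows), and $G(t)>0$ for every $t>0$. The positivity is exactly where Assumption \ref{et} is used: taking $\epsilon=t/2$ gives $\delta_{t/2}>0$ with $\Phi(y)-\Phi(x)>\delta_{t/2}$ for every pair of gap $\ge t>t/2$, hence $G(t)\ge\delta_{t/2}>0$. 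To convert the merely monotone (and possibly discontinuous, flat, or infinite) $G$ into an admissible $\varphi$, I would regularize: set $m(t):=\min(G(t),t)$ for $t>0$ and $m(0):=0$, which is nondecreasing, strictly positive on $(0,\infty)$, finite (the $\min$ with $t$ also tames $G=+\infty$), dominated by $t$ so that $m(0^+)=0$, and dominated by $G$; then define
\[
\varphi(t):=\frac1t\int_0^t m(s)\,ds,\qquad \varphi(0):=0.
\]
Since $m$ is nondecreasing, $\varphi(t)\le m(t)\le G(t)$, and $\varphi$ is continuous on $[0,\infty)$ with $\varphi(0)=0$ because $0\le\varphi(t)\le m(t)\le t\to 0$. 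Strict monotonicity follows from $m(0^+)=0<m(t)$, which forces the average to lie strictly below $m(t)$; splitting $\int_0^{t_2}=\int_0^{t_1}+\int_{t_1}^{t_2}$ and using $\int_{t_1}^{t_2}m\ge(t_2-t_1)m(t_1)>(t_2-t_1)\varphi(t_1)$ yields $\varphi(t_2)>\varphi(t_1)$ for $t_1<t_2$. Finally, for $(x,y)\in S_0$ with $t:=u(y,x)-u(x,x)>0$, that pair belongs to the set defining $G(t)$, so $\Phi(y)-\Phi(x)\ge G(t)\ge\varphi(t)$, which is the claimed bound.

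\emph{Convexity and Jensen.} I would then replace $\varphi$ by its lower convex envelope $\check\varphi$ (greatest convex minorant). Since the constant $0$ is a convex minorant of the nonnegative $\varphi$, one gets $\check\varphi\ge 0$; since $\check\varphi\le\varphi$ the pointwise bound $\Phi(y)-\Phi(x)\ge\check\varphi(u(y,x)-u(x,x))$ survives, and $\check\varphi(0)=0$; convexity together with $\check\varphi\ge0$ and $\check\varphi(0)=0$ forces $\check\varphi$ nondecreasing. The Jensen statement is then mechanical: the a.s.\ inequality $\Phi(Y)-\Phi(x)\ge\check\varphi(u(Y,x)-u(x,x))$, taking expectations, and Jensen for the convex $\check\varphi$ give $\EE[\Phi(Y)]-\Phi(x)\ge\EE[\check\varphi(u(Y,x)-u(x,x))]\ge\check\varphi(\EE[u(Y,x)]-u(x,x))$.

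\emph{Main obstacle.} I expect the delicate point to be the regularization step, namely manufacturing a single $\varphi$ that is simultaneously continuous, strictly increasing, dominated by $G$, and vanishing at the origin out of the raw monotone profile $G$, and then reconciling strict monotonicity with convexity. Because $\Phi$ is bounded, $G$ is bounded, and a strictly increasing convex minorant cannot exist on an unbounded gap-range; I would therefore restrict attention to the bounded range of attainable gaps $u(y,x)-u(x,x)$, where the lower convex envelope agrees with $\varphi$ at the endpoints and inherits the needed monotonicity, and note that every use of $\varphi$ (both the pointwise inequality and the Jensen bound) only ever evaluates it on that range.
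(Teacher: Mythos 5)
Your proposal is correct, and it reaches the equivalence by a genuinely different construction than the paper's. For the forward direction the paper works with separating levels $\delta_\epsilon := \sup A_\epsilon$, where $A_\epsilon := \{\delta : \Phi(y)-\Phi(x) > \delta \ \forall (x,y)\in S_\epsilon\}$ and $S_\epsilon := \{(x,y): u(y,x)-u(x,x)>\epsilon\}$, evaluated along an enumeration of the nonnegative rationals; it removes duplicate values to obtain a strictly increasing subsequence, joins the surviving points by a piecewise-linear function $\widetilde\varphi$, and finally sets $\varphi(z):=\widetilde\varphi(z/2)$ --- the factor $\tfrac12$ playing exactly the same role as your choice $\epsilon = t/2$ in passing from the strict inequality of Assumption \ref{et} to the non-strict inequality at an attainable gap. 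You replace this combinatorial interpolation by the monotone profile $G(t)$ (an infimum of payoff-potential increments rather than a supremum of separating levels), the truncation $m=\min(G(t),t)$, and the averaging $\varphi(t)=t^{-1}\int_0^t m(s)\,ds$. This is arguably tighter: the truncation absorbs in one stroke the case $S_\epsilon=\emptyset$ (where the paper's $\sup A_\epsilon=+\infty$, a case its proof silently ignores when asserting finiteness), and the integral average delivers continuity and strict monotonicity without any enumeration or density-of-$\mathbb{Q}$ argument. Where you go strictly beyond the paper is the convexity/Jensen part: the paper's proof stops at the equivalence and never justifies the sentence ``by considering the convex Hull of $\varphi$, it can be chosen convex.'' Your observation that the lower convex envelope can degenerate on $[0,+\infty)$ (the envelope of $\sqrt{t}$ is identically zero, killing strict monotonicity) identifies the real danger, and your fix --- convexify only on the bounded range of attainable gaps, where the envelope agrees with $\varphi$ at the endpoints, is positive for $t>0$ by a compactness argument, hence strictly increasing by convexity through the origin, and then note that both the pointwise and the Jensen inequalities only evaluate $\varphi$ on that range --- is the right one. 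The only caveats are minor: your claim ``because $\Phi$ is bounded, $G$ is bounded'' is not literally true under your own convention ($G\equiv+\infty$ beyond the attainable range), though your truncation already handles this; and the restriction step requires the attainable gaps $u(y,x)-u(x,x)$ to form a bounded set, which does not follow from boundedness of $\Phi$ alone but does hold in every application in the paper (bounded rewards, hence bounded $\widehat{V}$), so the proposition's convexity clause should be read with that implicit hypothesis.
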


We are now ready to prove that the gradient update (\ref{grad2}) converges to $\epsilon-$shared equilibria under direct policy parametrization, i.e. when each action-state-type tuple has its own parameter. Under such parametrization it is needed to project back onto the probability simplex after each gradient step. Our proof will follow the line of reasoning in \cite{convpg}, theorem 4.1, in the single-agent case. It uses a "gradient domination" argument where the optimality gap at stage $n$ $\widehat{V}(\pi_\theta,\pi_{\theta_n}) - \widehat{V}(\pi_{\theta_n},\pi_{\theta_n})$ is dominated by the gradient. This approach was also used in \cite{markovpg} in the context of Markov (exact) potential games. Proposition \ref{et2} will turn out to be key to adapt the proof to the multi-agent case, and our proof shows that we do not need any smoothness on the extended transitivity map $\Phi$, rather we use the properties of $\varphi$ (increasing and invertible) to upper bound positive agent deviations $\widehat{V}(y,x)-\widehat{V}(x,x)$ by $\Phi$ and to move between the payoff space and the potential space. In particular, this answers Remark 5 in \cite{markovpg}, where it is noticed that obtaining convergence rates in the case of ordinal potential games is not immediate due to the lack of smoothness of the potential function \footnote{we refer to the discussion below theorem \ref{n11} for the close link between generalized ordinal potential games and extended transitivity.}. The results below use the notion of $\beta$-smoothness, which we recall means that the gradient is $\beta$-Lipschitz.

Remember that the initial state distribution of an agent of supertype $\Lambda^{LP}_i$ is $\int \mu^0_{\lambda}(\cdot) p_{\Lambda^{LP}_i}(d\lambda)$. In the following, we will use the notation $\widehat{V}_{\mu^0}$ to make the dependence explicit on $\mu^0$. We introduce the following visiting distribution:
$$
d^\pi_{\mu^0,i}(ds):= \frac{1}{T_\zeta} \int \sum_{t=0}^T \zeta^t \PP_\pi[s^{(i,LP)}_t \in ds | s^{(i,LP)}_0=s_0] \mu^0_{\lambda}(d s_0) p_{\Lambda^{LP}_i}(d\lambda),
$$
where the normalization factor $T_{\zeta}:=T+1$ if $\zeta=1$, and $T_{\zeta}:= \frac{1-\zeta^{T+1}}{1-\zeta}$ otherwise. Define now $\mathcal{N}^\epsilon_{\mu^0}$ the set of $\epsilon$-shared equilibria associated to $\mu^0$ (which is not empty by theorem \ref{n11}), and $BR_{\mu^0,i}(\pi)$ the set of policies $\pi'$ such that $\widetilde{V}_{\Lambda^{LP}_i,\mu^0}(\pi',\pi)-\widetilde{V}_{\Lambda^{LP}_i,\mu^0}(\pi,\pi)$ is maximal. By definition of the equilibrium, the latter gap is bounded by $\epsilon$ whenever $\pi \in \mathcal{N}^\epsilon_{\mu^0}$. Let $\widetilde{\mu}^0:= \lambda \to \widetilde{\mu}^0_\lambda(\cdot)$ any initial state distribution. In the case where $\Ss^{LP}$ is finite, define the distribution mismatch coefficient:
\begin{align}
\label{dmis}
\mathcal{D}_\epsilon(\mu^0, \widetilde{\mu}^0) := \max_{i \in [1,n_{LP}]} \sup_{\pi \in \mathcal{N}^\epsilon_{\mu^0}} \sup_{\pi' \in BR_{\mu^0,i}(\pi)} \max_{s \in \Ss^{LP}} \frac{d^{\pi'}_{\mu^0,i}(s)}{d^{\pi}_{\widetilde{\mu}^0,i}(s)}.
\end{align}

Note that:
$$
\mathcal{D}_\epsilon(\mu^0, \widetilde{\mu}^0) \leq \frac{T_{\zeta}}{\min_{i \in [1,n_{LP}]} \min_{s \in \Ss^{LP}}  \int \widetilde{\mu}^0_{\lambda}(s) p_{\Lambda^{LP}_i}(d\lambda)}.
$$

This is similar to the mismatch coefficient in \cite{convpg}, but adapted to our multi-agent case. The rationale behind using $\mu^0$ and $\widetilde{\mu}^0$ is that although we may be interested in computing equilibria associated to $\mu^0$, it can be useful to run gradient ascent on the value function associated to $\widetilde{\mu}^0$. One is always free to take $\widetilde{\mu}^0 = \mu^0$. The typical case is when $\mu^0$ is the indicator function of a particular state, and $\widetilde{\mu}^0$ is uniformly spread across states.

\begin{theorem}
\label{gradet}
Let $\bm{\Lambda^{LP}}$ be a supertype profile, $\mu^0$, $\widetilde{\mu}^0$ be arbitrary initial state distributions and $\epsilon>0$. Assume that $\Ss^{LP}$, $\As^{LP}$ and $\Ss^{\lambda^{LP}}$ are finite, that the rewards $\RR^{LP}$ are bounded by $\RR_{max}$, and that the symmetric 2-player game with pure strategy set $\mathcal{X}^{LP}$ and payoff $\widehat{V}_{\widetilde{\mu}^0}$ is extended transitive. Consider the infinite time-horizon setting $T=+\infty$ with discount factor $\zeta<1$, as well as the direct policy parametrization $\pi_{\theta}(a|s,\lambda) := \theta^{s,\lambda}_{a}$, where $(\theta^{s,\lambda}_{a})_{a \in \As^{LP}}$ is in the probability simplex for every $s \in  \Ss^{LP}$, $\lambda \in \Ss^{\lambda^{LP}}$. Consider the so-called Projected Gradient Ascent (PGA) update, whereby at each iteration $n$:
$$
\theta_{n+1} = P_{\mathcal{X}^{LP}} \left[\theta_n + \alpha \nabla_{1} \widehat{V}_{\widetilde{\mu}^0}(\pi_{\theta_n},\pi_{\theta_n}) \right],
$$
where $0 < \alpha < \frac{2}{\beta}$ is the learning rate, $\beta := \frac{2 \zeta \RR_{max} |\As^{LP}|}{(1-\zeta)^3}$, and $P_{\mathcal{X}^{LP}}$ denotes the euclidean projection onto $\mathcal{X}^{LP}$. Assume $\mathcal{D}_\epsilon(\mu^0, \widetilde{\mu}^0)<+\infty$ in (\ref{dmis}), and set the number of iterations $N$ to be:
$$
N \geq \frac{diam(\Phi)}{\varphi \left(\frac{\epsilon^2 }{4 \mathcal{D}_\epsilon(\mu^0, \widetilde{\mu}^0)^2 |\Ss^{LP}||\Ss^{\lambda^{LP}}|} \cdot \frac{\alpha (2 -\alpha \beta)}{2(1+2\alpha \beta)^2} \right)},
$$
where $\varphi$ and $\Phi$ are defined in assumption \ref{et} and proposition \ref{et2}, and $diam(\Phi):=\sup \{|\Phi(x)-\Phi(y)|: x,y \in \mathcal{X}^{LP}\}$. Then, there exists a $n \in [0,N]$ such that $\pi_{\theta_n}$ is an $\epsilon-$shared equilibrium associated to $\mu^0$.
\end{theorem}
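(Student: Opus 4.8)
The plan is to follow the gradient-domination scheme of \cite{convpg}, exploiting the key observation that for a \emph{fixed} second argument the shared payoff reduces to a single-agent object, so that the only genuinely multi-agent ingredient is extended transitivity, which here plays the role that monotonicity of the value function plays in the single-agent analysis. Concretely, fix an iterate $\theta_n$ and set $f_n(\theta):=\widehat{V}_{\widetilde{\mu}^0}(\pi_\theta,\pi_{\theta_n})=\frac{1}{n_{LP}}\sum_{i} \widetilde{V}_{\Lambda^{LP}_i,\widetilde{\mu}^0}(\pi_\theta,\pi_{\theta_n})$. For each $i$, once the opponents are frozen on $\pi_{\theta_n}$, the map $\pi \mapsto \widetilde{V}_{\Lambda^{LP}_i,\widetilde{\mu}^0}(\pi,\pi_{\theta_n})$ is the value function of a single-agent MDP on the extended state-type space $\Ss^{LP}\times\Ss^{\lambda^{LP}}$, with the opponents' fixed behaviour absorbed into the transition kernel $\T$. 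Since rewards are bounded by $\RR_{max}$, the smoothness bound of \cite{convpg} applies to each summand and hence to the average $f_n$, giving $\beta$-smoothness with $\beta=\frac{2\zeta\RR_{max}|\As^{LP}|}{(1-\zeta)^3}$. Note $\nabla_{1}\widehat{V}_{\widetilde{\mu}^0}(\pi_{\theta_n},\pi_{\theta_n})=\nabla f_n(\theta_n)$, so the PGA update is exactly one projected-gradient step on $f_n$.

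First I would record the standard projected-ascent progress inequality: writing $G_n:=\alpha^{-1}(\theta_{n+1}-\theta_n)$ for the gradient mapping, $\beta$-smoothness of $f_n$ together with the projection inequality $\langle\nabla f_n(\theta_n),G_n\rangle\geq\|G_n\|^2$ gives
$$\widehat{V}_{\widetilde{\mu}^0}(\pi_{\theta_{n+1}},\pi_{\theta_n})-\widehat{V}_{\widetilde{\mu}^0}(\pi_{\theta_n},\pi_{\theta_n})=f_n(\theta_{n+1})-f_n(\theta_n)\geq\tfrac{\alpha(2-\alpha\beta)}{2}\|G_n\|^2,$$
which is nonnegative for $0<\alpha<2/\beta$. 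The second, and main, step is a multi-agent gradient-domination lemma: applying the single-agent performance-difference argument of \cite{convpg} to each summand, the best-response gap $\widehat{V}_{\mu^0}(\pi',\pi_{\theta_n})-\widehat{V}_{\mu^0}(\pi_{\theta_n},\pi_{\theta_n})$ is controlled by the Frank--Wolfe gap $\max_{\bar\pi}\langle\nabla f_n(\theta_n),\bar\pi-\pi_{\theta_n}\rangle$ after a change of measure from $\widetilde{\mu}^0$ to $\mu^0$, whose worst-case cost is exactly the mismatch coefficient $\mathcal{D}_\epsilon(\mu^0,\widetilde{\mu}^0)$ of (\ref{dmis}) (finite since each visiting distribution dominates its initial-time mass, giving the stated uniform bound). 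Relating this gap to $\|G_n\|$ via $\beta$-smoothness produces the factor $(1+2\alpha\beta)$, and passing from the $\ell_2$ norm of $G_n$ to the per-coordinate linear term produces the dimensional factor $\sqrt{|\Ss^{LP}||\Ss^{\lambda^{LP}}|}$. Taking the contrapositive: if $\pi_{\theta_n}$ is \emph{not} an $\epsilon$-shared equilibrium for $\mu^0$, i.e. the gap exceeds $\epsilon$, then $\|G_n\|^2\geq\frac{\epsilon^2}{4\,\mathcal{D}_\epsilon(\mu^0,\widetilde{\mu}^0)^2|\Ss^{LP}||\Ss^{\lambda^{LP}}|(1+2\alpha\beta)^2}$.

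Combining the two steps, at every non-equilibrium iterate the per-step increase of the self-play payoff satisfies $\widehat{V}_{\widetilde{\mu}^0}(\pi_{\theta_{n+1}},\pi_{\theta_n})-\widehat{V}_{\widetilde{\mu}^0}(\pi_{\theta_n},\pi_{\theta_n})>\epsilon'$, where $\epsilon'=\frac{\epsilon^2}{4\mathcal{D}_\epsilon^2|\Ss^{LP}||\Ss^{\lambda^{LP}}|}\cdot\frac{\alpha(2-\alpha\beta)}{2(1+2\alpha\beta)^2}$ is precisely the argument of $\varphi$ in the statement. Here extended transitivity enters: since $(\pi_{\theta_n},\pi_{\theta_{n+1}})\in S_0$, Proposition \ref{et2} converts this payoff increase into a strict increase of the bounded potential, $\Phi(\pi_{\theta_{n+1}})-\Phi(\pi_{\theta_n})\geq\varphi(\epsilon')>0$. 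Finally I would conclude by contradiction: were none of $\pi_{\theta_0},\dots,\pi_{\theta_N}$ an $\epsilon$-shared equilibrium, each of the $N$ steps would strictly increase $\Phi$, so $\Phi(\pi_{\theta_N})-\Phi(\pi_{\theta_0})>N\varphi(\epsilon')\geq diam(\Phi)$ as soon as $N\geq diam(\Phi)/\varphi(\epsilon')$, contradicting $\Phi(\pi_{\theta_N})-\Phi(\pi_{\theta_0})\leq diam(\Phi)$; hence some $\pi_{\theta_n}$ with $n\in[0,N]$ is an $\epsilon$-shared equilibrium associated to $\mu^0$.

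I expect the main obstacle to be the gradient-domination step, specifically pinning down the exact constants while keeping the analysis honest in the multi-agent setting: one must verify that freezing the opponents genuinely yields a per-agent single-agent MDP value function (this is where type-symmetry, Assumption \ref{asts}, and the direct tabular parametrization on $\Ss^{LP}\times\Ss^{\lambda^{LP}}$ are needed so that the state-type visiting distributions $d^\pi_{\mu^0,i}$ are well defined), and that the change of measure from the sampling distribution $\widetilde{\mu}^0$ to the target $\mu^0$ is uniformly controlled by $\mathcal{D}_\epsilon$ across \emph{all} iterates rather than only at equilibria. The smoothness bound, the projected-ascent progress inequality and the telescoping of $\Phi$ are comparatively routine once the reduction to \cite{convpg} and Proposition \ref{et2} is in place.
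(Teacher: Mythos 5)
Your proposal assembles exactly the same ingredients as the paper's proof: the per-agent $\beta$-smoothness bound of \cite{convpg} transferred to $\widehat{V}$ by averaging over supertypes, the projected-ascent progress inequality $\widehat{V}_{\widetilde{\mu}^0}(\pi_{\theta_{n+1}},\pi_{\theta_n})-\widehat{V}_{\widetilde{\mu}^0}(\pi_{\theta_n},\pi_{\theta_n})\geq\left(\frac{1}{\alpha}-\frac{\beta}{2}\right)\|\theta_{n+1}-\theta_n\|_2^2$, the conversion of payoff improvements into improvements of the bounded potential $\Phi$ via proposition \ref{et2}, and the gradient-domination step of \cite{convpg} combined with the mismatch coefficient and the simplex diameter $2\sqrt{|\Ss^{LP}||\Ss^{\lambda^{LP}}|}$; it also lands on the identical constant $\epsilon'$. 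The only structural difference is the final bookkeeping: the paper telescopes $\Phi$, selects the iterate $n_*$ with the smallest PGA step, deduces from $diam(\Phi)/N$ that its gradient mapping is small, and then certifies that this single iterate is an $\epsilon$-shared equilibrium; you instead run the contrapositive, arguing that every non-equilibrium iterate forces $\Phi$ to increase by more than $\varphi(\epsilon')$, and count steps.

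This reorganization is not entirely cosmetic, and it touches precisely the obstacle you flag at the end. Your contrapositive step, ``not an $\epsilon$-shared equilibrium $\Rightarrow\|G_n\|^2\geq\frac{\epsilon^2}{4\mathcal{D}_\epsilon(\mu^0,\widetilde{\mu}^0)^2|\Ss^{LP}||\Ss^{\lambda^{LP}}|(1+2\alpha\beta)^2}$'', requires the domination inequality with constant $\mathcal{D}_\epsilon(\mu^0,\widetilde{\mu}^0)$ to hold at every iterate that is \emph{not} an $\epsilon$-equilibrium; but definition (\ref{dmis}) takes the supremum only over $\pi\in\mathcal{N}^\epsilon_{\mu^0}$, so at exactly the iterates your argument needs it, the definition gives no control on the ratio $d^{\pi'}_{\mu^0,i}/d^{\pi_{\theta_n}}_{\widetilde{\mu}^0,i}$. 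The paper's ordering keeps this reliance minimal — domination is invoked only at the single candidate $n_*$, justified (admittedly in a forward-referencing, somewhat self-consistent way) by the claim that $n_*$ is about to be shown to be an equilibrium — whereas in your version the reliance at non-equilibrium points is essential rather than incidental. The clean repair, valid for both write-ups, is to enlarge the supremum in (\ref{dmis}) to all policies (or all near-stationary ones); this remains finite whenever $\widetilde{\mu}^0$ has full support, by the uniform bound $\mathcal{D}\leq T_\zeta/\min_{i,s}\int\widetilde{\mu}^0_{\lambda}(s)p_{\Lambda^{LP}_i}(d\lambda)$ recorded right after (\ref{dmis}), and it leaves your counting argument and the stated $N$ intact. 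Alternatively, revert to the paper's ordering: locate the minimal-step iterate first, and only then invoke domination there.
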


The proof of theorem \ref{gradet} also gives this corollary:
\begin{corollary}
\label{gradetc}
Under the setting of theorem \ref{gradet}, let $(\alpha_n)$ a sequence of learning rates such that $0 < \alpha_n < \frac{2}{\beta}$ and $\sum \alpha_n = +\infty$. Then for any $\epsilon$, there exists a $N_\epsilon$ such that for all $n \geq N_\epsilon$, $\pi_{\theta_n}$ is an $\epsilon$-shared equilibrium. In particular, every converging subsequence $(\theta_{m_n})$ converges to a shared equilibrium. If there exists a unique shared equilibrium, $(\theta_n)$ converges to that equilibrium.
\end{corollary}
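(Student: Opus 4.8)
The corollary is a convergence refinement of the finite-time Theorem \ref{gradet}, so the plan is to recycle the three per-iteration ingredients of its proof — the $\beta$-smoothness ascent bound, the gradient-domination bound, and the passage to the skill function $\Phi$ via Proposition \ref{et2} — but to replace the counting argument (valid only for a constant rate) by a monotone-convergence argument that tolerates the varying rates $(\alpha_n)$. Concretely, for each $n$ the ascent estimate from the proof of Theorem \ref{gradet}, now evaluated at the iteration-dependent rate, reads
\[
\widehat{V}_{\widetilde{\mu}^0}(\pi_{\theta_{n+1}},\pi_{\theta_n})-\widehat{V}_{\widetilde{\mu}^0}(\pi_{\theta_n},\pi_{\theta_n})\ \ge\ c(\alpha_n)\,\|G_n\|^2\ \ge\ 0,\qquad c(\alpha):=\tfrac{\alpha(2-\alpha\beta)}{2(1+2\alpha\beta)^2}>0,
\]
where $G_n$ is the projected-gradient mapping at $\theta_n$ and $c(\alpha_n)>0$ because $0<\alpha_n<2/\beta$. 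Feeding this nonnegative improvement into Proposition \ref{et2} gives $\Phi(\pi_{\theta_{n+1}})-\Phi(\pi_{\theta_n})\ge\varphi\bigl(\mathrm{imp}_n\bigr)\ge0$, where $\mathrm{imp}_n:=\widehat{V}_{\widetilde{\mu}^0}(\pi_{\theta_{n+1}},\pi_{\theta_n})-\widehat{V}_{\widetilde{\mu}^0}(\pi_{\theta_n},\pi_{\theta_n})$. Hence $(\Phi(\pi_{\theta_n}))_n$ is nondecreasing, and since $\Phi$ is bounded (Assumption \ref{et}) it converges, so its increments vanish; as $\varphi$ is continuous, strictly increasing with $\varphi(0)=0$, this forces $\mathrm{imp}_n\to0$.

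\textbf{Step 2 (reduction to $\|G_n\|\to0$ and the equilibrium conclusions).} I would next show that everything follows once $\|G_n\|\to0$ is established. The gradient-domination bound behind Theorem \ref{gradet} states that if $\pi_{\theta_n}$ is \emph{not} an $\epsilon$-shared equilibrium then $\|G_n\|^2\ge\tau_\epsilon$ for a fixed $\tau_\epsilon>0$ depending only on $\epsilon$, $\mathcal{D}_\epsilon(\mu^0,\widetilde{\mu}^0)$, $|\Ss^{LP}|$, $|\Ss^{\lambda^{LP}}|$ and $\beta$; contrapositively, once $\|G_n\|^2<\tau_\epsilon$ the iterate is an $\epsilon$-shared equilibrium, which yields the threshold $N_\epsilon$. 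Finiteness of $\Ss^{LP},\As^{LP},\Ss^{\lambda^{LP}}$ makes $\mathcal{X}^{LP}$ a compact product of simplices, so $(\theta_n)$ has limit points; by Lemma \ref{lvf2} the map $\pi\mapsto\sup_{\pi'}\widehat{V}(\pi',\pi)-\widehat{V}(\pi,\pi)$ is continuous, and $\|G_n\|\to0$ forces this best-response gap to vanish at every limit point, i.e. each limit point is a shared equilibrium (Definition \ref{se}). Continuity of the gap and compactness then upgrade this to $\mathrm{gap}(\theta_n)\to0$: otherwise a subsequence with gap $\ge\epsilon$ would admit a limit point of positive gap. This simultaneously gives that every convergent subsequence tends to a shared equilibrium and that for each $\epsilon$ all large $n$ are $\epsilon$-shared equilibria; if the shared equilibrium is unique, all limit points coincide and the whole sequence $(\theta_n)$ converges to it.

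\textbf{Step 3 (establishing $\|G_n\|\to0$, and the main obstacle).} When the rates stay bounded away from $0$ and $2/\beta$, $c(\alpha_n)$ is bounded below, so $\mathrm{imp}_n\to0$ together with $\mathrm{imp}_n\ge c(\alpha_n)\|G_n\|^2$ gives $\|G_n\|\to0$ immediately, and Step 2 applies verbatim. The genuine difficulty is the diminishing-rate regime $\alpha_n\to0$ (or $\alpha_n\to2/\beta$), where $c(\alpha_n)\to0$ and the vanishing of $\mathrm{imp}_n$ no longer controls $\|G_n\|$ term by term; this is exactly why the hypothesis $\sum_n\alpha_n=+\infty$ is imposed. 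Here I would argue by contradiction using the classical non-summable-step-size ``excursion'' argument: if a limit point $\theta^\star$ had gap $\delta>0$, continuity of the gradient mapping keeps $\|G_n\|\gtrsim\delta$ on a neighborhood visited infinitely often, each step moves $\theta$ by only $\|\theta_{n+1}-\theta_n\|=\alpha_n\|G_n\|$, so crossing the neighborhood costs $\sum\alpha_n$ bounded below by a constant per excursion, and $\sum_n\alpha_n=+\infty$ permits infinitely many excursions, each producing a definite increase of $\widehat{V}$ and hence, through Proposition \ref{et2}, of the bounded $\Phi$ — a contradiction. The main obstacle is transferring the per-excursion gain in $\widehat{V}$ to a per-excursion gain in $\Phi$: since $\varphi$ is only convex with $\varphi(0)=0$ and need not have $\varphi'(0^+)>0$, a gain spread over many tiny steps can be heavily discounted by $\varphi$, so the excursion bookkeeping must be done at the level of the aggregated improvement over an excursion rather than step by step. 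The $\beta$-smoothness of $\widehat{V}$ (guaranteed by finite spaces, bounded rewards and $\zeta<1$), which makes $\|G_n\|$ vary slowly between consecutive iterates, is the tool I expect to need to make this argument rigorous.
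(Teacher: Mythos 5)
Your Steps 1 and 2 are essentially the paper's own argument: the $\beta$-smoothness ascent inequality fed into Proposition \ref{et2} makes $(\Phi(\theta_n))_n$ nondecreasing and bounded, the gradient-domination bound (whose constant is uniform in $n$, since $1+2\alpha_n\beta\le 5$ whenever $\alpha_n<2/\beta$) converts smallness of the projected gradient $G_n:=\alpha_n^{-1}(\theta_{n+1}-\theta_n)$ into the $\epsilon$-equilibrium property, and compactness of $\mathcal{X}^{LP}$ together with continuity of $\widehat{V}$ (Lemma \ref{lvf2}) handles limit points and the uniqueness statement. The divergence — and the gap — is in Step 3, i.e. in how $\|G_n\|_2\to 0$ is extracted when $\alpha_n$ is not bounded away from $0$. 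The paper does not argue by excursions: it sums inequality (\ref{p1}) over $n$, so that the nonnegative series $\sum_n \varphi\bigl(\bigl(1-\tfrac{\beta\alpha_n}{2}\bigr)\alpha_n\|G_n\|_2^2\bigr)$ is bounded by $\mathrm{diam}(\Phi)<+\infty$ and hence convergent; its terms therefore vanish, continuity and invertibility of $\varphi$ at $0$ give $\bigl(1-\tfrac{\beta\alpha_n}{2}\bigr)\alpha_n\|G_n\|_2^2\to 0$, and the hypothesis $\sum_n\alpha_n=+\infty$ is then invoked to conclude $\|G_n\|_2\to0$, after which the finish is exactly your Step 2.

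Your Step 3, by contrast, is not a proof in the regime $\alpha_n\to 0$, and you say so yourself. The obstacle you name is real, but your proposed repair cannot work as stated: there is no single monotone functional whose increments the quantities $\mathrm{imp}_n$ are — each $\mathrm{imp}_n$ is a deviation gain at the moving symmetric base point $(\theta_n,\theta_n)$, so "the aggregated improvement of $\widehat{V}$ over an excursion" is not a well-defined object (gains at different base points do not telescope), and your claim that each excursion "produces a definite increase of $\widehat{V}$" is ill-typed. The only device the extended-transitivity framework offers for accumulating such gains is Proposition \ref{et2}, which acts one step at a time and necessarily passes through $\varphi$; hence the discounting by a possibly superlinearly-vanishing $\varphi$ (e.g. $\varphi(t)=t^2$) re-enters however you group the steps, and $\sum_n\alpha_n=+\infty$ does not force $\sum_n\varphi(c\,\alpha_n)=+\infty$. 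In fairness, you have put your finger on the one point where the paper itself is terse: the final inference from "$\bigl(1-\tfrac{\beta\alpha_n}{2}\bigr)\alpha_n\|G_n\|_2^2\to0$ and $\sum_n\alpha_n=+\infty$" to "$\|G_n\|_2\to 0$" is asserted in a single line, and making it airtight requires something beyond the displayed constraints (for instance $\liminf_n\alpha_n>0$, or a linear lower bound $\varphi(t)\ge ct$ near $0$, under either of which both your argument and the paper's close immediately). But as a reconstruction of the corollary's proof, your proposal is missing the summation-of-(\ref{p1}) step the paper relies on, and the excursion argument you substitute for it is left open precisely at the decisive point.
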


\begin{remark}
Theorem \ref{gradet} still holds in the case of $\zeta \in [0,1]$ with finite time-horizon $T$, or in the episodic case where the episode ends upon the first hitting time of a certain goal state. In the first case, the normalization factor $(1-\zeta)^{-1}=\sum_{t=0}^\infty \zeta^t$ becomes $T_{\zeta}:=T+1$ if $\zeta=1$, and $T_{\zeta}:= \frac{1-\zeta^{T+1}}{1-\zeta}$ otherwise. In the second case it is upper bounded by some finite number $T_{\zeta}$, provided we assume that under all policies, the probability of reaching one of the goal states is strictly positive (this is assumption 2.1 in \cite{Tamar2012-ki}). In all cases we have $\beta = 2 \RR_{max} |\As^{LP}| T_{\zeta}^3$.
\end{remark}

A natural question is whether theorem \ref{gradet} holds in the presence of noisy observations of the true gradient, i.e. we do not observe $\nabla_{1} \widehat{V}$ directly, but a noisy, unbiased version of it. In this case $(\theta_n)$ becomes a sequence of random variables, where each coordinate is impacted by some noise at each iteration $n$. The projection step makes it non trivial to adapt our proof to this case. The technical difficulty lies in the fact that the coordinates $\theta_{n,a}^{s,\lambda}$ can come arbitrarily close to the boundary of the simplex. Given a $\theta$ in the simplex, and a small stochastic increment $\delta \theta$, the quantity $P_{\mathcal{X}^{LP}}[\theta + \delta \theta]$ is in general a non-trivial object which coordinates are not independent. However, for "small" $\delta \theta$, we can show that $P_{\mathcal{X}^{LP}}[\theta + \delta \theta] = \theta + \delta \theta - K^{-1} \sum_{k=1}^K \delta \theta_k$. The latter is very convenient to manipulate, especially because we can deal explicitly with the stochasticity of the coordinates. Unfortunately, the technical difficulty lies in the fact that this formula holds provided we are sufficiently far away from the boundary, which we haven't managed to control uniformly as iterations progress. If we knew that for every iteration $n$, $\theta_n$ would be at least at a small but uniform distance $\nu$ from the boundary, we would be done. One solution would be to consider a mixture between $\theta_n$, and the uniform distribution (with weight $\nu$), but this brings in other technical difficulties.

For this reason, we turn to the softmax parametrization, where $\pi_{\theta}(a|s,\lambda) := \frac{\exp(\theta^{s,\lambda}_{a})}{\sum_a \exp(\theta^{s,\lambda}_{a})}$, and $(\theta^{s,\lambda}_{a})_{a \in \As^{LP}}$ is an arbitrary real-valued vector. In this setting, projection is not needed as $\pi_{\theta}$ lies by construction in the simplex. Note that showing convergence rates for this parametrization also suffers from the same above-mentioned technical difficulty of getting close to the boundary, as pointed out in \cite{convpg}, section 5.1, in the single-agent case. For this reason, the latter work considers the so-called log-barrier regularizer which precisely prevents the policy from getting too close to the boundary, and simply adds a term $\nu \ln \pi$ to the objective, cf. (\ref{logpo}). We are able to extend their approach to the multiagent setting, making use of our key proposition \ref{et2} as we did in the proof of theorem \ref{gradet}. Regarding the assumption on the structure of the noise, we will consider a "multiplicative" approach where the noise impacting each coordinate is proportional to that coordinate. This approach has the advantage of being invariant with respect to the magnitude of the coordinates, i.e. multiplying a coordinate by any positive number doesn't change the noise. Although this approach has been considered in \cite{msgd}, the most common approach is to consider additive noise. 

To the best of our knowledge, in addition to the use of proposition \ref{et2}, the novelty in our results of theorems \ref{gradet2}, \ref{gradet3} lies in lemma \ref{lemsto}. The idea is essentially to use the central limit theorem and see that when the number of coordinates $K:=|\Ss^{LP}||\Ss^{\lambda^{LP}}||\As^{LP}|$ is large (which is the case in practice), we can make use of the Berry-Esseen theorem to get a convergence result with high probability (and not just in expectation). In the non-noisy case of theorem \ref{gradet}, the $\beta$-smoothness of $\widehat{V}$ allows us to get a guaranteed improvement in $\widehat{V}$, that we can control via the extended transitivity map $\Phi$, cf. proposition \ref{et2}:
$$
\Phi(\theta_{n+1})-\Phi(\theta_{n}) \geq \varphi( \widehat{V}(\theta_{n+1},\theta_n) - \widehat{V}(\theta_n,\theta_n)).
$$

For us to be able to use the latter, we need the $\widehat{V}$ increment $\widehat{V}(\theta_{n+1},\theta_n) - \widehat{V}(\theta_n,\theta_n)$ to be non-negative on some set of (hopefully high) probability, and not only in expectation. This is because on paths where this increment is negative, we have no way to relate it to $\Phi$ due to the definition of extended transitivity. At this point we can make a technical comment that we leave for future work: for the last inequality of proposition \ref{et2} to happen, we need $(x,Y)$ to take value in $S_0$ a.s., which means that we cannot use this inequality if we have $u(Y,x)-u(x,x) \leq 0$ on some set, even if $\EE[u(Y,x)]-u(x,x)>0$. If one wanted to use the latter, one would need to strengthen extended transitivity in assumption \ref{et}, by requiring the statement to be true not only for every $y \in S$, but for any $y$ random variable taking value in $S$, i.e. $y \in \Delta(S)$.

For now, we turn to lemma \ref{lemsto}, the core of our next results as it shows how to deal with the noise.

\begin{lemma}
\label{lemsto}
Let $f$ be a $\beta$-smooth function $\R^d \to \R$ \footnote{that is, $\nabla f$ is $\beta$-Lipschitz.}. Assume that $x \in \R^d$, $\nabla f(x) \neq 0$ and let:
$$
y = x + \alpha \nabla f(x) + \alpha \nabla f(x) \odot \phi,
$$

where $\odot$ is the elementwise product, $\phi$ is a vector of length $d$ of zero-mean independent random variables. If $\phi$ is upper bounded by $\phi_{max}$, and lower bounded by $\phi_{min}>-1$ almost surely, then for any $\eta \in [0,1]$:
$$
f(y)-f(x) \geq (1+\phi_{min})\eta ||\nabla f(x)||_2^2 \hspace{1mm} \mbox{ a.s. } \hspace{3mm} \mbox{ for } \hspace{3mm} \alpha = \frac{2 (1-\eta)}{\beta(1+\phi_{max})}.
$$

Alternatively, if $\phi$ has finite (absolute) moments up to order 6, with $\EE[|\phi_{i}|^k] = \sigma_k$ for $k \leq 6$ and all coordinates $i \in [1,d]$, then for any $\eta,\eta'\in [0,1]$, we have for $\alpha = \frac{2(1-\eta)}{\beta(1+\sigma_2)}$:
$$
\PP[f(y)-f(x) \geq \eta'||\nabla f(x)||_2^2] \geq 1 - \mathcal{N}\left(\frac{\eta'-\eta}{\sqrt{\kappa_{4}}} \cdot \widehat{\mathcal{E}}_d \sqrt{d}\right) - \frac{0.6 \kappa_{6}}{\kappa_{4}^{3/2}}\mathcal{E}_d,
$$

where $\mathcal{N}$ is the standard normal c.d.f. and:
\begin{align*}
&\kappa_{4} := (1 - \alpha \beta)^2\sigma_2 + \frac{\alpha^2 \beta^2}{4} (\sigma_{4}-\sigma^{2}_2), \hspace{13mm} \kappa_{6} \leq \sum_{k=0}^6 |c_k| \sigma_{k},\\
& \mathcal{E}_d := \frac{\max_{i\in[1,d]} [\nabla f(x)]_i^2  }{\sqrt{\sum_{i=1}^{d} [\nabla f(x)]_i^4}}, \hspace{15mm} \widehat{\mathcal{E}}_d:= \frac{ \sum_{i=1}^{d}[\nabla f(x)]_i^2}{\sqrt{d \cdot \sum_{i=1}^{d} [\nabla f(x)]_i^4}}.
\end{align*}

$\mathcal{E}_d$ and $\widehat{\mathcal{E}}_d$ are respectively the maximum and average dispersion of the vector $\nabla f(x)$, and both lie in the interval $[d^{-1/2},1]$ due to the $L_1$-$L_2$ norm inequality. When all entries of $\nabla f(x)$ are the same, $\mathcal{E}_d=d^{-1/2}$ and $\widehat{\mathcal{E}}_d=1$. When all entries but one are zero, $\mathcal{E}_d = 1$ and $\widehat{\mathcal{E}}_d=d^{-1/2}$. The constants $c_{k}$ are given as follows, with $y=\frac{\alpha \beta}{2}$:
\begin{align*}
&c_0= y^3 \sigma_2^3, \hspace{8mm} c_1= 3y^2(1-2y) \sigma_2^2, \hspace{8mm} c_2:= -3 \sigma_2^2 y^3+3\sigma_2y(1-2y)^2,\\
& c_3=(1-2y)^3-6\sigma_2y^2(1-2y), \hspace{8mm} c_4=3 \sigma_2y^3-3y(1-2y)^2,\\
& c_5=3y^2(1-2y), \hspace{8mm} c_6= - y^3.
\end{align*}
\end{lemma}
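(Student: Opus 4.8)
The plan is to start from the standard descent inequality for a $\beta$-smooth function and reduce both claims to the analysis of a single sum of independent one-dimensional random variables. Since $\nabla f$ is $\beta$-Lipschitz, we have $f(y)-f(x)\geq \langle \nabla f(x),\, y-x\rangle - \frac{\beta}{2}\|y-x\|_2^2$. Writing $g:=\nabla f(x)$, $w_i:=g_i^2$, $S:=\sum_i w_i=\|\nabla f(x)\|_2^2$ and $y:=\frac{\alpha\beta}{2}$, and substituting the step $y-x=\alpha\, g\odot(1+\phi)$ (componentwise, so $(y-x)_i=\alpha g_i(1+\phi_i)$), the inner product becomes $\alpha\sum_i w_i(1+\phi_i)$ and the squared norm $\alpha^2\sum_i w_i(1+\phi_i)^2$, so that
\[
f(y)-f(x)\ \geq\ L:=\sum_i L_i,\qquad L_i:=w_i\,h(\phi_i),\quad h(\phi):=\alpha(1+\phi)\bigl(1-y(1+\phi)\bigr).
\]
Because the $\phi_i$ are independent and each $L_i$ depends only on $\phi_i$, the quantity $L$ is a sum of independent random variables, and everything reduces to controlling $L$.

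For the almost-sure bound I would exploit pointwise positivity of each factor of $h$. With $\alpha=\frac{2(1-\eta)}{\beta(1+\phi_{max})}$ one has $y(1+\phi_{max})=1-\eta$, so for every $\phi\in[\phi_{min},\phi_{max}]$ we get $1-y(1+\phi)\geq 1-y(1+\phi_{max})=\eta\geq 0$ while $1+\phi\geq 1+\phi_{min}>0$. Hence $h(\phi)\geq \alpha\eta(1+\phi_{min})$ on the support of $\phi$, and summing against the nonnegative weights $w_i$ gives $L\geq \alpha\eta(1+\phi_{min})\,S$ almost surely (the step-size factor $\alpha$, carried through from the update, multiplies the claimed level). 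This is the deterministic part.

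The high-probability bound is the technical core. I would compute the first three moments of $L$ exactly. Since $\EE[\phi_i]=0$ and $\EE[\phi_i^2]=\sigma_2$, the choice $\alpha=\frac{2(1-\eta)}{\beta(1+\sigma_2)}$ gives $\EE[h(\phi_i)]=\alpha\eta$, so $\EE[L]=\alpha\eta S$; a second-moment computation (where the odd cross-moment $\EE[\phi^3]$, which would otherwise appear, vanishes for symmetric noise) yields $\var L_i=\alpha^2 w_i^2\,\kappa_4$ with $\kappa_4=(1-2y)^2\sigma_2+y^2(\sigma_4-\sigma_2^2)$, hence $\var L=\alpha^2\kappa_4\sum_i w_i^2$. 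For the third absolute moment the key observation is that the centered variable $X:=\tfrac{1}{\alpha}h(\phi)-\EE[\tfrac{1}{\alpha}h(\phi)]=(1-2y)\phi-y(\phi^2-\sigma_2)$ has a cube $X^3$ that is an \emph{exact} degree-six polynomial $\sum_{k=0}^6 c_k\phi^k$ with precisely the coefficients $c_k$ listed in the statement; the triangle inequality then gives $\EE|X|^3=\EE|X^3|\leq\sum_{k=0}^6|c_k|\,\EE|\phi|^k=\kappa_6$, so $\EE|L_i-\EE L_i|^3\leq \alpha^3 w_i^3\kappa_6$.

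Finally I would invoke the Berry--Esseen theorem for independent, non-identically distributed summands applied to $Z:=(L-\EE L)/\sqrt{\var L}$, whose uniform distance to $\mathcal{N}$ is at most $C_0\,\sum_i\EE|L_i-\EE L_i|^3/(\var L)^{3/2}$ with $C_0\leq 0.56<0.6$. All $\alpha$'s cancel in this ratio, leaving $\frac{\kappa_6}{\kappa_4^{3/2}}\cdot\frac{\sum_i w_i^3}{(\sum_i w_i^2)^{3/2}}$, which is bounded by $\frac{\kappa_6}{\kappa_4^{3/2}}\mathcal{E}_d$ through $w_i^3\leq(\max_i w_i)\,w_i^2$. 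Centering and normalising the target event $\{L\geq \alpha\eta' S\}$, the cut-off becomes $z_0=\frac{\alpha\eta' S-\EE L}{\sqrt{\var L}}=\frac{\eta'-\eta}{\sqrt{\kappa_4}}\cdot\frac{S}{\sqrt{\sum_i w_i^2}}=\frac{\eta'-\eta}{\sqrt{\kappa_4}}\,\widehat{\mathcal{E}}_d\sqrt{d}$, so $\PP[f(y)-f(x)\geq \alpha\eta' S]\geq\PP[Z\geq z_0]\geq 1-\mathcal{N}(z_0)-C_0\frac{\kappa_6}{\kappa_4^{3/2}}\mathcal{E}_d$, using $f(y)-f(x)\geq L$. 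The ranges $\mathcal{E}_d,\widehat{\mathcal{E}}_d\in[d^{-1/2},1]$ follow from the $L_1$--$L_2$ norm inequalities applied to the nonnegative vector $(w_i)$. The hard part will be this second part: recognising that $X^3$ is an exact polynomial so that $\EE|X|^3$ admits the clean bound $\kappa_6$, and organising the Berry--Esseen step so that the step-size dependence cancels entirely and the dispersion ratio collapses exactly to $\mathcal{E}_d$ (for the error term) and $\widehat{\mathcal{E}}_d$ (for the threshold).
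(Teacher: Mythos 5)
Your proposal is correct and follows essentially the same route as the paper's own proof: the same $\beta$-smoothness inequality decomposed into a sum of independent per-coordinate variables $w_i h(\phi_i)$ (your $L_i$ is exactly $\alpha$ times the paper's $Y_i$), the same step-size choices making each factor of $h$ nonnegative in the bounded case and $\EE[h(\phi_i)]=\alpha\eta$ in the moment case, the same identification of the centered variable $(1-2y)\phi - y(\phi^2-\sigma_2)$ whose cube gives the coefficients $c_k$, and the same Berry--Esseen conclusion yielding the threshold $\frac{\eta'-\eta}{\sqrt{\kappa_4}}\widehat{\mathcal{E}}_d\sqrt{d}$ and error term $\frac{0.6\,\kappa_6}{\kappa_4^{3/2}}\mathcal{E}_d$. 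Two of your side remarks are in fact sharper than the paper: the factor $\alpha$ you carry on the right-hand side of the improvement bound is what the paper's proof actually delivers (inequality (\ref{eql1}) is normalized by $1/\alpha$, so the printed statement of the lemma omits this factor), and the stated $\kappa_4$ equals the variance only when $\EE[\phi_i^3]=0$ --- the paper's variance computation silently drops the cross term $-2y(1-2y)\EE[\phi_i^3]$, which your symmetric-noise caveat makes explicit.
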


The following results will use the softmax parametrization with log-barrier regularization as previously discussed. We define the logarithm of a policy $\pi$ as follows, where $K:=|\Ss^{LP}||\Ss^{\lambda^{LP}}||\As^{LP}|$:
\begin{align}
\label{logpo}
\ln \pi := \frac{1}{K} \sum_{s,a,\lambda} \ln \pi(a|s,\lambda). 
\end{align}
In theorem \ref{gradet2}, we consider the case of uniformly bounded noise, where we are able to get a convergence result with probability one. The proof is almost the same as that of theorem \ref{gradet}, except that we use updated smoothness constants linked to the change of parametrization, where the constants are taken from \cite{convpg} (single-agent case). We also use lemma \ref{lemsto} to deal with the noise. In theorem \ref{gradet3}, we consider the case of bounded moments, and we get a result in probability. Note that we consider an algorithm where at each iteration $n$, if the noisy gradient yields an improvement $\theta_{n} \to \theta_{n+1}$, we keep the new parameter $\theta_{n+1}$, if not, we discard it and try again. The high-level idea is that we want to almost never discard $\theta_{n+1}$. By lemma \ref{lemsto}, if the coordinates of the gradient are sufficiently homogeneous, we get a convergence result with high probability, i.e. we almost never discard $\theta_{n+1}$. Indeed, if the coordinates of the gradient are the same, the probability to get an improvement large enough is at least:
$$
1 - \mathcal{N}\left(\frac{\eta'-\eta}{\sqrt{\kappa_{4}}}\sqrt{K}\right) - \frac{0.6 \kappa_{6}}{\kappa_{4}^{3/2} \sqrt{K}},
$$
which is almost one if the number of coordinates $K$ is large, since we are free to choose $\eta'<\eta$.

\begin{theorem}
\label{gradet2}
\textbf{(Uniformly bounded noise)}. Let $\bm{\Lambda^{LP}}$ be a supertype profile, $\mu^0$, $\widetilde{\mu}^0$ be arbitrary initial state distributions, $\epsilon>0$ and $\eta\in (0,1)$. Assume that $\Ss^{LP}$, $\As^{LP}$ and $\Ss^{\lambda^{LP}}$ are finite, that the rewards $\RR^{LP}$ are bounded by $\RR_{max}$ and that $(\phi_n)$ is a i.i.d. sequence of vectors of length $K:=|\Ss^{LP}||\Ss^{\lambda^{LP}}||\As^{LP}|$ of mean-zero random variables that satisfies the conditions of lemma \ref{lemsto} (uniformly bounded case). Consider the infinite time-horizon setting $T=+\infty$ with discount factor $\zeta<1$, as well as the softmax policy parametrization $\pi_{\theta}(a|s,\lambda) := \frac{\exp(\theta^{s,\lambda}_{a})}{\sum_a \exp(\theta^{s,\lambda}_{a})}$, where $(\theta^{s,\lambda}_{a})_{a \in \As^{LP}}$ is an arbitrary real-valued vector. Consider the following update rule with log-barrier regularization, whereby at each iteration $n$:
\begin{align*}
& \theta_{n+1} = \theta_n + \alpha \nabla_{1} L_{\widetilde{\mu}^0}(\pi_{\theta_n},\pi_{\theta_n}) \odot (\bm{1} + \phi_n),\\ 
&L_{\widetilde{\mu}^0}(\pi_1,\pi_2):=\widehat{V}_{\widetilde{\mu}^0}(\pi_1,\pi_2) + \nu (\ln \pi_1 - \ln \pi_2),
\end{align*}

where $\alpha = \frac{2 (1-\eta)}{\beta_\nu(1+\phi_{max})}$ is the learning rate, $\beta_\nu := \frac{8 \RR_{max}}{(1-\zeta)^3} + \frac{2 \nu \RR_{max}}{|\Ss^{LP}||\Ss^{\lambda^{LP}}|}$, and the regularizer $\nu \leq \frac{(1-\zeta) \epsilon}{2\mathcal{D}_\epsilon(\mu^0, \widetilde{\mu}^0)}$. Assume $\mathcal{D}_\epsilon(\mu^0, \widetilde{\mu}^0)<+\infty$ in (\ref{dmis}), and that the symmetric 2-player game with pure strategy set $\mathcal{X}^{LP}$ and payoff $L_{\widetilde{\mu}^0}$ is extended transitive associated to mappings $\varphi_\nu$ and $\Phi_\nu$ in assumption \ref{et} and proposition \ref{et2}. Define:
$$
N := \frac{diam(\Phi_\nu)}{\varphi_\nu \left(\frac{(1-\zeta)^2 \epsilon^2 \eta (1+\phi_{min})}{16 K^2 \mathcal{D}_\epsilon(\mu^0, \widetilde{\mu}^0)^2} \right)},
$$
where $diam(\Phi_\nu):=\sup \{|\Phi_\nu(x)-\Phi_\nu(y)|: x,y \in \mathcal{X}^{LP}\}$. Then, with probability one, there exists a $n \in [0,N]$ such that $\pi_{\theta_n}$ is an $\epsilon-$shared equilibrium associated to $\mu^0$. 
\end{theorem}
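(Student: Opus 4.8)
The plan is to mirror the structure of the non-noisy proof of Theorem~\ref{gradet}, replacing the deterministic descent guarantee by the almost-sure improvement from the uniformly bounded case of Lemma~\ref{lemsto}, and then closing the argument with a gradient-domination bound composed with extended transitivity. The key reduction is that the self-play update acts on the first argument of $L_{\widetilde{\mu}^0}$ with the second argument frozen: setting $f(\theta):=L_{\widetilde{\mu}^0}(\pi_\theta,\pi_{\theta_n})$, one has $\nabla f(\theta_n)=\nabla_{1} L_{\widetilde{\mu}^0}(\pi_{\theta_n},\pi_{\theta_n})$, and the update $\theta_{n+1}=\theta_n+\alpha\nabla f(\theta_n)\odot(\bm{1}+\phi_n)$ is \emph{exactly} the multiplicative-noise iteration of Lemma~\ref{lemsto}, with the unconstrained softmax parametrization removing any need for projection.

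First I would establish that $f$ is $\beta_\nu$-smooth. The value part $\widehat{V}_{\widetilde{\mu}^0}(\cdot,\pi_{\theta_n})$ inherits the $\frac{8\RR_{max}}{(1-\zeta)^3}$ smoothness constant from the single-agent softmax bound of \cite{convpg} (the frozen opponent policy $\pi_{\theta_n}$ is absorbed into the transition kernel, so $\widehat{V}(\cdot,\pi_{\theta_n})$ is an $\alpha\sim U[1,n_{LP}]$ average of standard discounted value functions), while the log-barrier term $\nu\ln\pi_1$ contributes the remaining $\frac{2\nu\RR_{max}}{|\Ss^{LP}||\Ss^{\lambda^{LP}}|}$, summing to $\beta_\nu$. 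With this constant and $\alpha=\frac{2(1-\eta)}{\beta_\nu(1+\phi_{max})}$, the uniformly bounded case of Lemma~\ref{lemsto} gives, almost surely,
$$
L_{\widetilde{\mu}^0}(\pi_{\theta_{n+1}},\pi_{\theta_n})-L_{\widetilde{\mu}^0}(\pi_{\theta_n},\pi_{\theta_n})\ \geq\ (1+\phi_{min})\,\eta\,\|\nabla_{1} L_{\widetilde{\mu}^0}(\pi_{\theta_n},\pi_{\theta_n})\|_2^2 .
$$
Because this increment is nonnegative whenever the gradient is nonzero, the pair $(\theta_n,\theta_{n+1})$ lies in $S_0$, so extended transitivity of the payoff $L_{\widetilde{\mu}^0}$ and the monotonicity of $\varphi_\nu$ in Proposition~\ref{et2} yield, almost surely,
$$
\Phi_\nu(\theta_{n+1})-\Phi_\nu(\theta_n)\ \geq\ \varphi_\nu\!\left((1+\phi_{min})\,\eta\,\|\nabla_{1} L_{\widetilde{\mu}^0}(\pi_{\theta_n},\pi_{\theta_n})\|_2^2\right).
$$
Crucially, this guaranteed ascent is pathwise rather than only in expectation, which is precisely what delivers the probability-one conclusion.

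The crux, and the step I expect to be the main obstacle, is the gradient-domination bound: showing that whenever $\pi_{\theta_n}$ is \emph{not} an $\epsilon$-shared equilibrium associated to $\mu^0$, one has $\|\nabla_{1} L_{\widetilde{\mu}^0}(\pi_{\theta_n},\pi_{\theta_n})\|_2^2\geq \frac{(1-\zeta)^2\epsilon^2}{16K^2\mathcal{D}_\epsilon(\mu^0,\widetilde{\mu}^0)^2}$. This is the multi-agent, symmetric-game adaptation of the log-barrier gradient domination of \cite{convpg}: I would expand $\widehat{V}_{\widetilde{\mu}^0}(\cdot,\pi_{\theta_n})$ as the $\alpha\sim U[1,n_{LP}]$ average of the best-response value functions $\widetilde{V}_{\Lambda^{LP}_\alpha}(\cdot,\pi_{\theta_n})$, apply the single-agent performance-difference and gradient-domination inequalities to each summand, and transfer the distribution-mismatch coefficient $\mathcal{D}_\epsilon$ through the change of measure from $\widetilde{\mu}^0$ to $\mu^0$ via the visiting distributions in (\ref{dmis}). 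The regularizer bound $\nu\leq\frac{(1-\zeta)\epsilon}{2\mathcal{D}_\epsilon}$ is what guarantees the log-barrier bias perturbs the equilibrium characterization by at most $\epsilon/2$, so that a point where the $L$-gradient is small is genuinely an $\epsilon$-equilibrium of the \emph{unregularized} $\widehat{V}$. Tracking all of these constants so that they assemble into exactly the displayed lower bound is the delicate bookkeeping.

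Finally I would conclude by a counting argument. Suppose no iterate among $\pi_{\theta_0},\dots,\pi_{\theta_N}$ is an $\epsilon$-shared equilibrium. Then at every step the gradient is bounded below, the pair lies strictly in $S_0$, and the two displayed inequalities force, almost surely,
$$
\Phi_\nu(\theta_{n+1})-\Phi_\nu(\theta_n)\ \geq\ \varphi_\nu\!\left(\frac{(1-\zeta)^2\epsilon^2\eta(1+\phi_{min})}{16K^2\mathcal{D}_\epsilon(\mu^0,\widetilde{\mu}^0)^2}\right)\ >\ 0 .
$$
Summing over $n=0,\dots,N-1$ makes $\Phi_\nu(\theta_N)-\Phi_\nu(\theta_0)$ exceed $N$ times this strictly positive constant, which by the definition of $N$ exceeds $diam(\Phi_\nu)$, contradicting the boundedness of $\Phi_\nu$. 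Hence almost surely some $n\in[0,N]$ yields an $\epsilon$-shared equilibrium associated to $\mu^0$.
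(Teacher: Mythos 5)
Your proposal is correct and follows essentially the same route as the paper's proof: establish $\beta_\nu$-smoothness of $L_{\widetilde{\mu}^0}$ via the single-agent softmax/log-barrier constants of \cite{convpg}, invoke the uniformly bounded case of lemma \ref{lemsto} for pathwise (almost sure) ascent, convert that ascent into growth of $\Phi_\nu$ through extended transitivity and proposition \ref{et2}, use the log-barrier gradient-domination result of \cite{convpg} together with the mismatch coefficient $\mathcal{D}_\epsilon$ and the constraint on $\nu$ to turn a small gradient into an $\epsilon$-equilibrium, and close by bounding total ascent with $diam(\Phi_\nu)$. The only difference is presentational — you argue by contradiction over all $N$ iterates, while the paper selects the iterate $n_*$ with the smallest update and shows it is the equilibrium — which is logically equivalent.
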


\begin{theorem}
\label{gradet3}
\textbf{(Noise with bounded moments)}. Let $\bm{\Lambda^{LP}}$ be a supertype profile, $\mu^0$, $\widetilde{\mu}^0$ be arbitrary initial state distributions, $\epsilon>0$ and $\eta' < \eta\in (0,1)$. Assume that $\Ss^{LP}$, $\As^{LP}$ and $\Ss^{\lambda^{LP}}$ are finite, that the rewards $\RR^{LP}$ are bounded by $\RR_{max}$ and that $(\phi_n)$ is a i.i.d. sequence of vectors of length $K:=|\Ss^{LP}||\Ss^{\lambda^{LP}}||\As^{LP}|$ of independent mean-zero random variables that satisfies the conditions of lemma \ref{lemsto} (bounded moments case). Consider the infinite time-horizon setting $T=+\infty$ with discount factor $\zeta<1$, as well as the softmax policy parametrization $\pi_{\theta}(a|s,\lambda) := \frac{\exp(\theta^{s,\lambda}_{a})}{\sum_a \exp(\theta^{s,\lambda}_{a})}$, where $(\theta^{s,\lambda}_{a})_{a \in \As^{LP}}$ is an arbitrary real-valued vector. Consider the following update rule with log-barrier regularization, whereby at each iteration $n$:
\begin{align*}
& \theta_{n+1} = \theta_n + \alpha \nabla_{1} L_{\widetilde{\mu}^0}(\pi_{\theta_n},\pi_{\theta_n}) \odot (\bm{1} + \phi_n) \hspace{5mm} \mbox{ on the event } \Omega_n,\\ 
& \theta_{n+1} = \theta_n \hspace{5mm} \mbox{ otherwise.}\\
&L_{\widetilde{\mu}^0}(\pi_1,\pi_2):=\widehat{V}_{\widetilde{\mu}^0}(\pi_1,\pi_2) + \nu (\ln \pi_1 - \ln \pi_2),\\
& \Omega_n := \{L_{\widetilde{\mu}^0}(\pi_{\theta_{n+1}},\pi_{\theta_n}) - L_{\widetilde{\mu}^0}(\pi_{\theta_{n}},\pi_{\theta_n})\geq \eta'||\nabla_{1} L_{\widetilde{\mu}^0}(\pi_{\theta_n},\pi_{\theta_n})||_2^2\},
\end{align*}

where $\alpha = \frac{2(1-\eta)}{\beta_\nu(1+\sigma_2)}$ is the learning rate, $\beta_\nu := \frac{8 \RR_{max}}{(1-\zeta)^3} + \frac{2 \nu \RR_{max}}{|\Ss^{LP}||\Ss^{\lambda^{LP}}|}$, and the regularizer $\nu \leq \frac{(1-\zeta) \epsilon}{2\mathcal{D}_\epsilon(\mu^0, \widetilde{\mu}^0)}$. Assume $\mathcal{D}_\epsilon(\mu^0, \widetilde{\mu}^0)<+\infty$ in (\ref{dmis}), and that the symmetric 2-player game with pure strategy set $\mathcal{X}^{LP}$ and payoff $L_{\widetilde{\mu}^0}$ is extended transitive associated to mappings $\varphi_\nu$ and $\Phi_\nu$ in assumption \ref{et} and proposition \ref{et2}. Define:
$$
N := \frac{diam(\Phi_\nu)}{\varphi_\nu \left(\frac{(1-\zeta)^2 \epsilon^2 \eta'}{16 K^2 \mathcal{D}_\epsilon(\mu^0, \widetilde{\mu}^0)^2} \right)},
$$
where $diam(\Phi_\nu):=\sup \{|\Phi_\nu(x)-\Phi_\nu(y)|: x,y \in \mathcal{X}^{LP}\}$. Define $\mathcal{E}_{K} := \sup_{n \in [0,N_{tot}]} \mathcal{E}_{n,K}$, and $\widehat{\mathcal{E}}_{K} := \inf_{n \in [0,N_{tot}]} \widehat{\mathcal{E}}_{n,K}$, where the latter coefficient as well as the moments $\kappa_{j}$ are defined in lemma \ref{lemsto} and are associated to the gradient $\nabla_{1} L_{\widetilde{\mu}^0}(\pi_{\theta_n},\pi_{\theta_n})$. Assume that $p_{\eta,\eta',K}>0$, where:
$$
p_{\eta,\eta',K} := 1 - \mathcal{N}\left(\frac{\eta'-\eta}{\sqrt{\kappa_{4}}} \cdot \widehat{\mathcal{E}}_{K} \sqrt{K}\right) - \frac{0.6 \kappa_{6}}{\kappa_{4}^{3/2}}\mathcal{E}_{K}.
$$
For a maximum iteration budget of $N_{tot}$, the probability $p_\epsilon$ that there exists a $n \in [0,N_{tot}]$ such that $\pi_{\theta_n}$ is an $\epsilon-$shared equilibrium associated to $\mu^0$ is at least Binomial with parameters $N_{tot}$ and $p_{\eta,\eta',K}$:
$$
p_\epsilon \geq \sum_{i=N}^{N_{tot}} \binom{N_{tot}}{i} p_{\eta,\eta',K}^{i} (1-p_{\eta,\eta',K})^{N_{tot}-i}.
$$
If the gradient has high entropy, this probability is very high if $K$ is large. In the extreme case where its coordinates are the same, we have $\mathcal{E}_{K}=K^{-1/2}$, $\widehat{\mathcal{E}}_{K}=1$ and therefore:
$$
p_{\eta,\eta',K} = 1 - \mathcal{N}\left(\frac{\eta'-\eta}{\sqrt{\kappa_{4}}} \sqrt{K}\right) - \frac{0.6 \kappa_{6}}{\kappa_{4}^{3/2} \sqrt{K}},
$$
and in particular:
$$
\lim_{K \to +\infty} p_{\eta,\eta',K} = 1.
$$
On the other end of the spectrum, if all gradient coordinates but one are zero, then $\mathcal{E}_{K}=1$, $\widehat{\mathcal{E}}_{K}=K^{-1/2}$ and therefore:
$$
p_{\eta,\eta',K} = 1 - \mathcal{N}\left(\frac{\eta'-\eta}{\sqrt{\kappa_{4}}}\right) - \frac{0.6 \kappa_{6}}{\kappa_{4}^{3/2}}.
$$
\end{theorem}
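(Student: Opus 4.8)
The plan is to combine three ingredients that are already in place: the per-step probabilistic improvement guarantee of Lemma \ref{lemsto} (bounded-moments case), the gradient-domination/extended-transitivity machinery that drives the deterministic Theorems \ref{gradet}--\ref{gradet2}, and a Binomial counting argument tailored to the accept--reject update. The guiding idea is that a \emph{successful} iteration (one on which $\Omega_n$ holds and the proposal is kept) forces a quantified increase of the transitivity potential $\Phi_\nu$, whereas a rejected iteration sets $\theta_{n+1}=\theta_n$ and leaves $\Phi_\nu$ unchanged; since $\Phi_\nu$ is bounded, at most $N$ successes can occur before some $\pi_{\theta_n}$ must be an $\epsilon$-shared equilibrium. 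Because each iteration succeeds with conditional probability at least $p_{\eta,\eta',K}$, the total number of successes stochastically dominates a Binomial$(N_{tot},p_{\eta,\eta',K})$ variable, which produces the claimed tail bound.

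First I would fix an iteration $n$ and condition on the history $\mathcal{F}_n$. Holding the second argument fixed, the map $f:=L_{\widetilde{\mu}^0}(\pi_{\cdot},\pi_{\theta_n})$ has gradient $\nabla f(\theta_n)=\nabla_1 L_{\widetilde{\mu}^0}(\pi_{\theta_n},\pi_{\theta_n})$, and the proposed update $\theta_n+\alpha\nabla f(\theta_n)\odot(\bm{1}+\phi_n)$ is exactly the form treated in Lemma \ref{lemsto} with $d=K$, the smoothness constant $\beta_\nu$ being the softmax-plus-log-barrier constant imported from \cite{convpg} as in Theorem \ref{gradet2}. Applying the bounded-moments part of Lemma \ref{lemsto} with the stated $\alpha$ yields $\PP[\Omega_n\mid\mathcal{F}_n]\ge p_{\eta,\eta',K}$, where the $n$-dependent dispersion coefficients $\mathcal{E}_{n,K},\widehat{\mathcal{E}}_{n,K}$ are replaced by the uniform bounds $\mathcal{E}_K=\sup_n\mathcal{E}_{n,K}$ and $\widehat{\mathcal{E}}_K=\inf_n\widehat{\mathcal{E}}_{n,K}$. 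The choice of $\sup$ versus $\inf$ is forced by monotonicity of the right-hand side of Lemma \ref{lemsto}: since $\eta'<\eta$, the $\mathcal{N}$-term decreases in $\widehat{\mathcal{E}}$ while the correction term increases in $\mathcal{E}$, so these uniform bounds make $p_{\eta,\eta',K}$ a valid conditional lower bound regardless of the realized $\theta_n$ (the hypothesis $\nabla f(\theta_n)\neq 0$ needed by the lemma holds whenever $\pi_{\theta_n}$ is not an equilibrium, by the gradient lower bound below).

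Next I would quantify the potential gain on a success. On $\Omega_n$ the kept iterate satisfies $L_{\widetilde{\mu}^0}(\pi_{\theta_{n+1}},\pi_{\theta_n})-L_{\widetilde{\mu}^0}(\pi_{\theta_n},\pi_{\theta_n})\ge \eta'\|\nabla_1 L_{\widetilde{\mu}^0}(\pi_{\theta_n},\pi_{\theta_n})\|_2^2$, so extended transitivity through Proposition \ref{et2} gives $\Phi_\nu(\theta_{n+1})-\Phi_\nu(\theta_n)\ge \varphi_\nu(\eta'\|\nabla_1 L_{\widetilde{\mu}^0}(\pi_{\theta_n},\pi_{\theta_n})\|_2^2)$. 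Here I reuse verbatim, from Theorems \ref{gradet}--\ref{gradet2}, the gradient-domination bound: whenever $\pi_{\theta_n}$ is not an $\epsilon$-shared equilibrium associated to $\mu^0$, the regularized gradient obeys $\|\nabla_1 L_{\widetilde{\mu}^0}(\pi_{\theta_n},\pi_{\theta_n})\|_2^2\ge \frac{(1-\zeta)^2\epsilon^2}{16K^2\mathcal{D}_\epsilon(\mu^0,\widetilde{\mu}^0)^2}$, the step in which the mismatch coefficient $\mathcal{D}_\epsilon$ and the regularizer bound $\nu\le\frac{(1-\zeta)\epsilon}{2\mathcal{D}_\epsilon(\mu^0,\widetilde{\mu}^0)}$ convert the $\epsilon$-suboptimality gap into a gradient lower bound. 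Monotonicity of $\varphi_\nu$ then gives a per-success increase of at least $\varphi_\nu(\frac{(1-\zeta)^2\epsilon^2\eta'}{16K^2\mathcal{D}_\epsilon(\mu^0,\widetilde{\mu}^0)^2})$, which is exactly the denominator of $N$. Since rejected steps keep $\Phi_\nu$ fixed and $\Phi_\nu$ has range $diam(\Phi_\nu)$, accumulating $N$ successes without ever landing on an equilibrium would push $\Phi_\nu$ beyond its diameter, a contradiction; hence the occurrence of at least $N$ successes among iterations $0,\dots,N_{tot}$ implies that some $\pi_{\theta_n}$, $n\in[0,N_{tot}]$, is an $\epsilon$-shared equilibrium.

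Finally I would convert the conditional per-step bound into the Binomial tail. Writing $S_n:=\bm{1}_{\Omega_n}$, the second step gives $\PP[S_n=1\mid\mathcal{F}_n]\ge p_{\eta,\eta',K}$ for every $n$, and a standard coupling then makes $\sum_{n=0}^{N_{tot}}S_n$ stochastically dominate a Binomial$(N_{tot},p_{\eta,\eta',K})$ variable, so $\PP[\sum_n S_n\ge N]\ge\sum_{i=N}^{N_{tot}}\binom{N_{tot}}{i}p_{\eta,\eta',K}^{\,i}(1-p_{\eta,\eta',K})^{N_{tot}-i}$; combining with the counting implication yields the stated lower bound on $p_\epsilon$. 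The special cases are then immediate substitutions into $p_{\eta,\eta',K}$ from the defining norm identities of $\mathcal{E}_K,\widehat{\mathcal{E}}_K$ in Lemma \ref{lemsto} ($\mathcal{E}_K=K^{-1/2},\widehat{\mathcal{E}}_K=1$ for a homogeneous gradient, and $\mathcal{E}_K=1,\widehat{\mathcal{E}}_K=K^{-1/2}$ for a single nonzero coordinate), while $\lim_{K\to\infty}p_{\eta,\eta',K}=1$ follows because $\mathcal{N}(\cdot)\to 0$ at $-\infty$. The main obstacle is the adaptivity: since the dispersion coefficients, and hence the exact success probability, depend on the random iterate $\theta_n$, the delicate point is to secure a deterministic conditional lower bound $p_{\eta,\eta',K}$ valid for every $n$ through the uniform $\mathcal{E}_K,\widehat{\mathcal{E}}_K$, so that the stochastic-domination argument remains legitimate despite the accept--reject feedback loop.
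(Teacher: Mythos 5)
Your proposal is correct and takes essentially the same route as the paper's proof: the per-step success probability from Lemma \ref{lemsto} with the uniform dispersion bounds $\mathcal{E}_K$, $\widehat{\mathcal{E}}_K$, the $\beta_\nu$-smoothness constant imported from \cite{convpg}, the extended-transitivity potential argument via Proposition \ref{et2}, the log-barrier gradient-domination result of \cite{convpg} (Theorem 5.2) to convert a small regularized gradient into an $\epsilon$-shared-equilibrium guarantee, and the Binomial tail count of successful iterations. The only presentational difference is that you argue by contraposition (every success at a non-equilibrium iterate forces a quantified increase of $\Phi_\nu$, so too many successes contradict boundedness), whereas the paper selects the success $n_*$ with the smallest increment and shows its gradient is small; the two are logically equivalent, and your proof even inherits the same implicit identification $\nu = \frac{(1-\zeta)\epsilon}{2\mathcal{D}_\epsilon(\mu^0,\widetilde{\mu}^0)}$ that the paper uses to match the stated $N$.
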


Note that in theorem \ref{gradet3}, we have considered a worst-case probability via the definition of $\widehat{\mathcal{E}}_{K}$, $\mathcal{E}_{K}$, but we can write down the same result where the probability at each iteration is time-dependent and based on the time-dependent quantities $\widehat{\mathcal{E}}_{n,K}$, $\mathcal{E}_{n,K}$.

Corollary \ref{gradetc} stills holds for theorem \ref{gradet2}. For theorem \ref{gradet3}, the following statement holds on a set which probability can be computed explicitly: "for any $\epsilon$, there exists a $N_\epsilon$ such that for all $n \geq N_\epsilon$, $\theta_n$ is an $\epsilon$-shared equilibrium". As long as $p_{\eta,\eta',K}>0$, we can let $N_{tot}$ go to infinity and get that every converging subsequence $(\theta_{m_n})$ converges to a shared equilibrium. If there exists a unique shared equilibrium, $(\theta_n)$ converges to that equilibrium due to all converging subsequences having the same limit point.

\subsection{Empirical analysis of game components}
\label{secgamec}

In this section we show how modern game theoretical tools, and precisely differentiable games and their potential-Hamiltonian decomposition introduced in \cite{dmg}, can help us analyze the nature of our financial game between LPs and LTs. Our contribution is definition \ref{defcomp}, where we introduce weights that quantify to which extent the game is potential or Hamiltonian. We notice that it is important, in order to analyze the interactions between market players, to remove the players' self-interactions from the potential term of the game Jacobian when comparing it to its Hamiltonian counterpart, so as to obtain a fair comparison between the two game components that only involves interactions across players $i \neq j$.

A differentiable game is a game where players' pure strategies are vectors of real numbers. For example, these vectors can be weights of a neural network, and hence games where each player controls its own neural network are examples of differentiable games. In this section, we write $\theta_i \in \R^d$ the weights of player $i$, and $\pi_{\theta_i}$ its neural net-driven policy. \cite{dmg} introduce two quantities of interest. One is the game gradient $\mathcal{G}$ quantifying the sensitivity of each player's expected utility $V_i$ w.r.t. its own strategy $\theta_i$:
\begin{align}
    \mathcal{G}(\bm{\theta}):=\left( \nabla_{\theta_i} V_i\right)_{i \in [1,n]}.
\end{align}

The second is the game Jacobian defined as the gradient of $\mathcal{G}$, namely the square (block) matrix $\mathcal{J}(\bm{\theta})$\footnote{we denote $\mathcal{J}(\bm{\theta})$ by $\mathcal{J}$ when there is no ambiguity.} of size $\sum_{i=1}^n |\theta_i|$ such that its block $(i,j)$ is the matrix:
\begin{align}
\label{gjac}
    [\mathcal{J}(\bm{\theta})]_{ij}:= \nabla_{\theta_j} \mathcal{G}_i= \nabla^2_{\theta_i,\theta_j} V_i.
\end{align}

What distinguishes a game from classical function optimization is that in general, there exists no function which admits $\mathcal{G}$ as its gradient, unless the game is potential. In the seminal paper on potential games \cite{potential}, potential games are defined as games where the change in the utility of a player $i$ due to a change in its own strategy is equal to that of some function called the potential. It is proved that if utilities $V_i$ are smooth and $\theta_i \in \R$, a game is potential if and only if:
\begin{align}
\label{pot}
\frac{\partial^2 V_i}{\partial \theta_i \partial \theta_j}(\bm{\theta}) = \frac{\partial^2 V_j}{\partial \theta_i \partial \theta_j}(\bm{\theta}) \hspace{5mm} \forall i,j.  
\end{align}

Potential games constitute an important class of games, and in short, correspond to games where there is an underlying "skill" that players can learn (the potential). Note that a game can be potential and zero-sum, for example the 2 player game where utilities of players playing respectively $x$ and $y$ are $f(x)-f(y)$ and $f(y)-f(x)$ for some function $f$. In that case, a potential function is $f$, which both players are trying to maximize, and hence $f$ can be thought of as the game skill that both players are trying to master.

In the case $d=1$ and with the definition of $\mathcal{J}$ in (\ref{gjac}), an equivalent formulation of (\ref{pot}) is that $\mathcal{J}$ is a symmetric matrix. An extension to the case $\theta_i \in \R^d$ with $d>1$ was considered in \cite{dmg}: defining $\mathcal{J}$ as in (\ref{gjac}), we get that the game is potential if and only if $\mathcal{J}$ is symmetric. Since any matrix $M$ can be decomposed uniquely into symmetric and antisymmetric parts, $M=\mathcal{S_*}+\mathcal{A_*}$ with $\mathcal{S_*}=\frac{1}{2}(M+M^T)$, $\mathcal{A_*}=\frac{1}{2}(M-M^T)$, we can do so for $\mathcal{J}$, which yields the decomposition of any differentiable game into potential and Hamiltonian components. Hamiltonian games are in spirit, the same as harmonic games in \cite{dgame}. The game is potential if and only if $\mathcal{A_*}=0$, and is Hamiltonian if and only if $\mathcal{S_*}=0$. This is discussed in \cite{dmg}.

\begin{definition}
\cite{dmg} Let $\mathcal{J}:=\mathcal{S_*}+\mathcal{A_*}$ the unique decomposition of $\mathcal{J}$ into symmetric and antisymmetric matrices. The game is Hamiltonian if $\mathcal{S_*}=0$, and potential if $\mathcal{A_*}=0$.
\end{definition}

\begin{definition}
\label{dra}
Let $\mathcal{J}:=\mathcal{D}+\mathcal{S}+\mathcal{A}$ the unique decomposition of $\mathcal{J}$ into a diagonal matrix, a symmetric matrix with zero diagonal, and an antisymmetric matrix. We say that the game is weak Hamiltonian if $\mathcal{S}=0$.
\end{definition}

By our previous discussion and definition \ref{dra}, we have $\mathcal{S_*}=\mathcal{S}+\mathcal{D}$ and $\mathcal{A_*}=\mathcal{A}$. Therefore, the game is potential if and only if $\mathcal{A_*}=0$, if and only if $\mathcal{A}=0$. Hamiltonian and weak Hamiltonian games differ in that the former require both $\mathcal{D}$ and $\mathcal{S}$ to be zero, whereas the latter only require $\mathcal{S}$ to be zero. Since we are interested in quantifying interactions among players, we exclude its diagonal $\mathcal{D}$ from the symmetric component of the game Jacobian, which quantifies the self-interaction of a player with itself $\nabla^2_{\theta_i} V_i$. Doing so yields that both matrices $\mathcal{S}$ and $\mathcal{A}$ in definition \ref{dra} only include interactions across players $i \neq j$ as both have zero diagonal elements. This makes the comparison between the potential and weak Hamiltonian components fair. Not removing the players' self-interactions from our weights in definition \ref{defcomp} would make the symmetric component dominate artificially its antisymmetric counterpart, which is also what we see in practice. 

In our case, $\theta_i$ are the weights of a neural network, and therefore it is impossible to store these matrices since they are too large. Instead, we can compute the "Jacobian-gradient-product" $\mathcal{J}^T\mathcal{G}$, simply obtained by taking the gradient of $\frac{1}{2} ||\mathcal{G}||^2$ in modern Automatic Adjoint Differentiation (AAD) frameworks such as tensorflow. From there, we can compute its symmetric and antisymmetric parts $\mathcal{S}\mathcal{G}$ and $\mathcal{A}^T\mathcal{G}$ (cf. \cite{dmg}, appendix A). The latter constitute the interaction of the matrices $\mathcal{A}$ and $\mathcal{S}$ with the gradient $\mathcal{G}$, and allow us to approximate their norms by the norms of $\mathcal{A}^T\mathcal{G}$ and  $\mathcal{S}\mathcal{G}$, which we will use as a practical replacement for the norms of $\mathcal{A}$ and $\mathcal{S}$. The quantity $\mathcal{J}^T\mathcal{G}$ is the main building block of consensus optimization \cite{co}, whereas \cite{dmg} argues that using the antisymmetric part only $\mathcal{A}^T\mathcal{G}$ enables convergence to stable fixed points (Symplectic Gradient Adjustment).

This framework gives us an empirical way to quantify the nature the interactions between LP and LT agents, that we introduce in definition \ref{defcomp}. The weights $\omega^A$, $\omega^S$ are computed empirically during training, and tell us to which extent the interactions between two players are potential, or (weak) Hamiltonian. 

\begin{definition}
\label{defcomp}
Let $\mathcal{P}$ be a subset of players in $[1,n]$ associated to the strategy vector $\bm{\theta}$, and denote $\widetilde{\mathcal{J}}$ the restriction of the game Jacobian $\mathcal{J}$ to $\mathcal{P}$. Let $\widetilde{\mathcal{J}}:=\widetilde{\mathcal{D}}+\widetilde{\mathcal{S}}+\widetilde{\mathcal{A}}$ be the (unique) decomposition of $\widetilde{\mathcal{J}}$ into diagonal, symmetric (with zero diagonal) and antisymmetric matrices. We define the Hamiltonian and potential weights of the game as:
\begin{align}
\begin{split}
    \omega^A(\bm{\theta}):= \frac{||\widetilde{\mathcal{A}}^T\mathcal{G}||}{||\widetilde{\mathcal{A}}^T\mathcal{G}||+||\widetilde{\mathcal{S}}\mathcal{G}||}, \hspace{4mm}
    \omega^S(\bm{\theta}):= 1-\omega^A(\bm{\theta}).
\end{split}
\end{align}
\end{definition}

Note that by \cite{ramponi,shenHessian}, the game Jacobian $\mathcal{J}$ is computed, in the case of RL, for $i \neq j$:
$$
[\mathcal{J}(\bm{\theta})]_{ij} = \frac{1}{B}\sum_{b=1}^{B} \sum_{t=0}^{T} \zeta^t \RR^{i}_{t,b} \sum_{t'=0}^{t} \nabla_{\theta_i} \ln \pi_{\theta_i} \left(a^{(i)}_{t',b}|s^{(i)}_{t',b}\right) \nabla_{\theta_j} \ln \pi_{\theta_j} \left(a^{(j)}_{t',b}|s^{(j)}_{t',b}\right)^T.
$$
where $B$ is the number of episodes sampled and:
$$
[\mathcal{J}(\bm{\theta})]_{ii} = \frac{1}{B}\sum_{b=1}^{B} \nabla_{\theta_i} g^i_b \nabla_{\theta_i} \ln \pi_{\theta_i} \left(a^{(i)}_{t',b}|s^{(i)}_{t',b}\right) + \nabla^2_{\theta_i} g^i_b,
$$
$$
g^i_b:=\frac{1}{B}\sum_{b=1}^{B} \sum_{t=0}^{T} \zeta^t \RR^{i}_{t,b} \sum_{t'=0}^{t} \nabla_{\theta_i} \ln \pi_{\theta_i} \left(a^{(i)}_{t',b}|s^{(i)}_{t',b}\right).
$$
In figure \ref{figzeyu1} we illustrate the Hamiltonian weight of definition \ref{defcomp} during training, in two situations. 

In the first case (left), we consider a game between a PnL-driven LP ($\omega=1$) and a PnL-driven LT ($\omega=1$), and no ECN. We see that the game is quite balanced with $\omega^A \approx 0.6$. Remember that in a potential game, there exists a quantity (the potential) that both players are trying to maximize, and in particular, by the example we discussed earlier, a game can be both zero-sum and potential. In this case, the potential component $\omega^S \approx 0.4$ comes from the fact that there is a common skill that both the LP and LT are trying to learn, namely how to earn PnL. The Hamiltonian component comes from the fact that learning this skill does not happen independently from the other player, i.e. there is competition, or coupling, between both players, where the PnL is a function of both players' strategies. We then introduce a second LP, identical to the first. We see the interesting observation that the Hamiltonian weight between the LP and the LT decreases to $0.2$, while that between the two LPs reaches $0.6$. The interpretation is that the competition that was initially occurring between the LP and the LT in the 1v1 case, switches to taking place between the two LPs in the second phase, thus benefiting the LT. In this case, the link between the two LPs becomes slightly more adversarial, which makes the link between the LP and the LT almost cooperative. In simple terms, competition between service providers benefits the customer.  

In the second situation (right), we compare the 1v1 game between a PnL-driven LP and a PnL-driven LT discussed above with a 1v1 game between a PnL-driven LP and a flow-driven LT ($\omega=0$), with even quantity targets $q^a=q^b=0.5$. We see that for the latter case, there is almost no potential component, $\omega^A \approx 1$. This is expected since the objective of the flow LT is completely unrelated to that of the LP, which is PnL. Hence, it is expected that there is no common skill that both players are trying to learn.

\begin{figure}[ht]
  \centering
  \begin{subfigure}[b]{0.49 \textwidth}
        \centering
        \includegraphics[width=\textwidth]{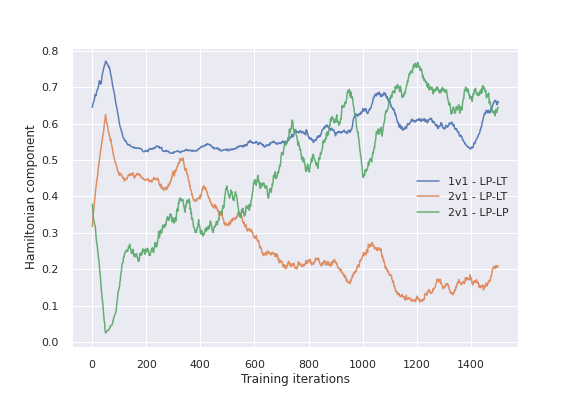}
    \end{subfigure}
    \hfill
    \begin{subfigure}[b]{0.49 \textwidth}
        \centering
        \includegraphics[width=\textwidth]{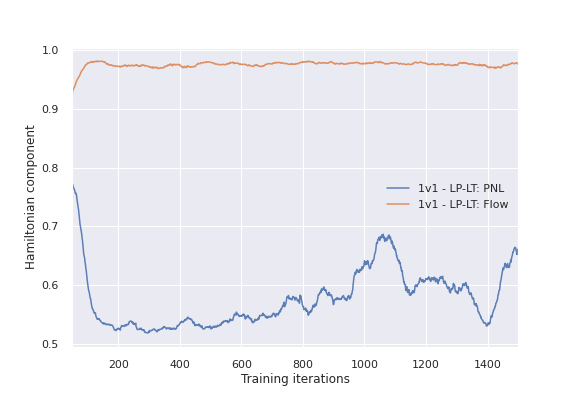}
    \end{subfigure}
  \caption{(Left) Hamiltonian component $\omega^A$ for a game between 1 LP and 1 LT (1v1), and between 2 LPs and 1 LT (2v1). The link LP-LT is fairly balanced in the 1v1 case. $\omega^S \approx 0.4$ since there is a skill that both players are trying to learn (earning PnL), $\omega^A \approx 0.6$ comes from the competition between players. This link vanishes in the 2v1 case where the competition occurs between the two LPs instead: competition between service providers benefits the customer. (Right) Hamiltonian component $\omega^A$ for a game between 1 LP and 1 PnL LT ($\omega=1$), and between 1 LP and 1 Flow LT ($\omega=0$). In the latter case, $\omega^S \approx 0$ since the objective of the LT is of different nature than that of the LP.}
  \label{figzeyu1}
\end{figure}

\section{Calibration of agent supertypes to equilibrium targets}
\label{seccalib}

In this section, we present a novel RL-based equilibrium calibration algorithm which we found performed well compared to a Bayesian optimization baseline (section \ref{secexpcalib1}).

Calibration refers to acting on the supertype profile $\bm{\Lambda}:=(\bm{\Lambda^{LP}}, \bm{\Lambda^{LT}})$ so as to match externally specified targets on the shared equilibrium. In a game that satisfies the conditions of theorem \ref{n11} (see also theorems \ref{gradet}, \ref{gradet2}, \ref{gradet3}), agents will reach a shared equilibrium associated to $\bm{\Lambda}$. For the simulation to accurately model specific real world observations, we would like the emergent behavior of agents in that equilibrium to satisfy certain constraints. For example, these constraints can be on LPs' market share or individual trade distribution. As mentioned in the recent work \cite{Avegliano2019-vx}, there is currently no consensus on how to calibrate parameters of agent-based models. Most methods studied so far build a surrogate of the simulator \cite{Avegliano2019-vx,surlamp}. The difficulty is that for every choice of $\bm{\Lambda}$, one should in principle train agents until equilibrium is reached and record the associated calibration loss, and repeat this process until the loss is small enough, which is prohibitively expensive. The baseline we consider in our experiments follows this philosophy by periodically trying new $\bm{\Lambda}$ obtained via Bayesian optimization (BO). One issue is that BO can potentially perform large moves in the supertype space, which could prevent the shared policy to correctly learn an equilibrium since it would not be given sufficient time to adapt. Another issue is that BO becomes slower over time due to the necessity to perform matrix inversion of a matrix which size grows over time.

Our solution is therefore to smoothly vary $\bm{\Lambda}$ during training: we introduce a RL calibrator agent whose goal is to optimally pick $\bm{\Lambda}$ and who learns jointly with RL agents learning a shared equilibrium, but under a slower timescale. The two-timescale stochastic approximation framework is widely used in RL \cite{2ts,aapkt} and is well-suited to our problem as it allows the RL calibrator's policy to be updated more slowly than the agents' shared policy, yet simultaneously, thus giving enough time to agents to approximately reach an equilibrium. This RL-based formulation allows us to further exploit smoothness properties of specific RL algorithms such as PPO \cite{ppo}, where a KL penalty controls the policy update. Since the calibrator's policy is stochastic, this is a distributional approach (in that at every training iteration, you have a distribution of supertype profiles rather than a fixed one) which will contribute to further smooth the objective function in (\ref{vcalib}), a technique which is also used in the context of evolution strategies \cite{es}. Note that our formulation of section \ref{secpomg} is general enough to accommodate the case where $\bm{\Lambda}$ is a distribution $f(\bm{\Lambda})$ over supertype profiles: indeed, define new supertypes $\widetilde{\Lambda_i}:=f_i$, where $f_1$ is the marginal distribution of $\Lambda_1$, and for $i \geq 2$, $f_i$ is the distribution of $\Lambda_i$ conditional on $(\Lambda_k)_{k \leq i-1}$ (induced by $f$). This means that instead of a fixed $\bm{\Lambda}$, one can choose a distribution $f(\bm{\Lambda})$ by simply defining supertypes appropriately, and in that case it is important to see that the shared policy at equilibrium will depend on the distribution $f$ rather than on a fixed supertype profile. 

\textbf{RL formulation of the calibrator.} The RL calibrator's \textit{state} is the current supertype profile across agents $\bm{\Lambda}$, and its \textit{action} is a vector of increments $\bm{\delta \Lambda}$ to apply to the supertypes, resulting in new supertypes $\bm{\Lambda}+\bm{\delta \Lambda}$, where we assume that $\Lambda_i$ takes value in some subset $\mathcal{S}^{\Lambda_i}$ of $\mathbb{R}^d$. We denote $\pi^{\Lambda}(\bm{\delta \Lambda}| \bm{\Lambda})$ the calibrator policy, i.e. the distribution of $\bm{\delta \Lambda}$ conditional on $\bm{\Lambda}$. The agent types will then be sampled from the new supertype $\bm{\Lambda}+\bm{\delta \Lambda}$, and a reward $r^{cal}$ will be observed. The latter is a random variable that depends on the state-action-type agent trajectories, namely the random variable $\bm{z_{t}}:=(\bm{z^{LP}_{t}},\bm{z^{LT}_{t}})$ defined in section \ref{secpomg}. The calibrator reward $r^{cal}$ is composed of $K$ externally specified targets $f^{(k)}_{*} \in \mathbb{R}$ for functions of the form $f^{(k)}_{cal}((\bm{z_t})_{t \geq 0})$, and is defined as a weighted sum of the reciprocal of loss functions $\ell_k$\footnote{$\ell_k^{-1}$ denotes the reciprocal of $\ell_k$.}:
\begin{align}
\label{rewcalib}
r^{cal} = \sum_{k=1}^K w_k \ell_k^{-1}(f^{(k)}_{*}-f^{(k)}_{cal}((\bm{z_{t}})_{t \geq 0})).
\end{align}
This approach is in line with the literature on "learning to learn" \cite{l2l,lto}, since the goal of the RL calibrator is to learn optimal directions $\bm{\delta \Lambda}$ to take in the supertype space, given a current location $\bm{\Lambda}$. The result is algorithm \ref{calsheq}, where at stage $n=1$, the supertype profile $\bm{\Lambda_1}$ is sampled across episodes $b$ as $\bm{\Lambda^b_1} = \bm{\Lambda_{0}}+\bm{\delta \Lambda^b}$, with $\bm{\delta \Lambda^b} \sim \pi_1^{\Lambda}(\cdot | \bm{\Lambda_{0}})$ and where we denote $\widetilde{\pi}_1^{\Lambda}:=\bm{\Lambda_{0}} + \pi_1^{\Lambda}(\cdot | \bm{\Lambda_{0}})$ the resulting distribution of $\bm{\Lambda_1}$. Then, we run multi-agent episodes $b$ according to (\ref{vi2}), each one of them with its supertype profile $\bm{\Lambda^b_1}$, and record the rewards $r^{cal}_b$, thus corresponding to the calibrator state $\bm{\Lambda_{0}}$, and actions $\bm{\delta \Lambda^b}$. The process is repeated, yielding for each episode $b$ at stage $n \geq 2$, $\bm{\Lambda^b_n} \sim \bm{\Lambda^b_{n-1}}+\pi_n^{\Lambda}(\cdot | \bm{\Lambda^b_{n-1}})$, resulting in a distribution $\widetilde{\pi}_n^{\Lambda}$ for $\bm{\Lambda_n}$, empirically observed through the sampled $\{\bm{\Lambda^b_n}\}_{b=1..B}$. As a result, at stage $n$, our aim is to find a calibrator policy $\pi^{\Lambda}_n$ that maximizes the objective in (\ref{vcalib}).
\begin{align}
\label{vcalib}
V^{calib}(\pi_n,\pi_n^{\Lambda}):=\EE_{\substack{\bm{\Lambda} \sim \widetilde{\pi}_{n-1}^{\Lambda},\hspace{0.5mm} \bm{\Lambda'} \sim \pi_n^{\Lambda}(\cdot | \bm{\Lambda})+\bm{\Lambda},\hspace{0.5mm}\lambda_i \sim p_{\Lambda'_i}, \hspace{0.5mm} a^{(i)}_t \sim \pi_n(\cdot|\cdot,\lambda_i)}} \left[ r^{cal}\right].
\end{align}

Our RL formulation would also allow to consider the sum of calibrator rewards obtained over multiple training iterations $n$, which can sometimes help stabilize learning as observed in \cite{lto}. 

\begin{algorithm}[ht]
\caption{\textbf{(CALSHEQ) Calibration of Shared Equilibria}}\label{calsheq}
\textbf{Input:} {learning rates $(\alpha_n^{cal})$, $(\alpha_n^{shared})$ satisfying assumption \ref{alr}, initial calibrator and shared policies $\pi_0^{\Lambda}$, $\pi_0$, initial supertype profile $\bm{\Lambda^b_{0}} = \bm{\Lambda_{0}}$ across episodes $b \in [1,B]$.}
\begin{algorithmic}[1]
\While{$\pi_n^{\Lambda}$, $\pi_n$ not converged}
    \For{each episode $b \in [1,B]$}
        \State{Sample supertype increment $\bm{\delta \Lambda^b} \sim \pi_n^{\Lambda}(\cdot | \bm{\Lambda^b_{n-1}})$ and set $\bm{ \Lambda^b_n}:=\bm{\Lambda^b_{n-1}}+\bm{\delta \Lambda^b}$}
        \State{Sample multi-agent episode with supertype profile $\bm{\Lambda^b_n}$ and shared policy $\pi_n$, with $\lambda_i \sim p_{\Lambda^b_{n,i}}$, $a^{(i)}_t \sim \pi_n(\cdot|\cdot,\lambda_i)$, $i \in [1,n^{LP}]$ cf. (\ref{vi2})}
    \EndFor
     \State{update $\pi_n$ with learning rate $\alpha_n^{shared}$ based on gradient (\ref{ut1}).}
    \State{update $\pi_n^{\Lambda}$ with learning rate $\alpha_n^{cal}$ based on gradient (\ref{ut2}) associated to (\ref{vcalib}).}
\EndWhile
\end{algorithmic}
\end{algorithm}

\textbf{Two-timescale update rule.} Let $\theta_n$ and $\theta^\Lambda_n$ be the (neural net) parameters of the shared policy and calibrator policy at stage $n$, so that $\pi_n$ and $\pi_n^{\Lambda}$ are shorthand for, respectively, $\pi_{\theta_n}$ and $\pi_{\theta^\Lambda_n}^{\Lambda}$. Let $\alpha_n^{shared}$ and $\alpha_n^{cal}$ be the learning rates at stage $n$, according to which $\theta_n$ and $\theta^\Lambda_n$ will be updated. The idea of two-timescale stochastic approximation is that from the point of view of the shared policy, the distribution of supertypes being chosen by the calibrator should be seen as "quasi-static", which, informally, will give enough time to the shared policy to approximately reach an equilibrium. This is reflected in the condition $\alpha_n^{cal} \stackrel{n \to + \infty}{=} o(\alpha_n^{shared})$ in assumption \ref{alr}, standard under the two-timescale framework \cite{borkar}. $\pi^{\Lambda}$ is then updated based on the objective (\ref{vcalib}) using a classical RL gradient update. This process ensures that $\pi^{\Lambda}$ is updated smoothly during training and learns optimal directions to take in the supertype space, benefiting from the multiple locations $\bm{\Lambda^b_n}$ experienced across episodes and over training iterations. Our framework shares some similarities with the work on learning to optimize in swarms of particles \cite{ltos}, since at each stage $n$, we have a distribution of supertype profiles empirically observed through the $B$ episodes, where each $\bm{\Lambda^b_n}$ can be seen as a particle.

The idea behind assumption \ref{alr2} is to have the distribution of $\bm{\Lambda_n}$ only depend on the past through $\theta^{\Lambda}_n$. We need it to express our parameter updates in the standard two-timescale framework (\ref{ut1})-(\ref{ut2}). It can always be achieved by taking $\pi_{\theta}^{\Lambda}$ of the form $\pi_{\theta}^{\Lambda}(\cdot | \bm{\Lambda}):= \widehat{\pi}_{\theta}^{\Lambda}(\cdot) - \bm{\Lambda}$ for some $\widehat{\pi}_{\theta}^{\Lambda}$. More generally, the recursion $\bm{\Lambda_k} \sim \bm{\Lambda_{k-1}}+\pi_{\theta^{\Lambda}_k}^{\Lambda}(\cdot | \bm{\Lambda_{k-1}})$ generates a Markov chain on $\R^d$, and the assumption is that $\widetilde{\pi}_{n-1}^{\Lambda}$ is a stationary distribution of the transition kernel $P_n(x,A):=x+\pi_{\theta^{\Lambda}_n}^{\Lambda}(A-x |x)$, where $A$ is a Borel subset of $\R^d$. Informally, it means that $\widetilde{\pi}_{n}^{\Lambda}$ is mostly driven by $\pi_{n}^{\Lambda}$ as learning progresses. By assumption \ref{alr2}, we can use the following notation for the gradient $\nabla^{shared}_{\theta,B}$ in (\ref{grad}):
\begin{align}
\label{ut0}
\begin{split}
& x_{n} := \nabla^{shared}_{\theta_n,B} + \nu  \nabla \ln \pi_{\theta_n}, \hspace{6mm}
\widehat{x}(\theta_n, \theta^\Lambda_n) := \EE_{\pi_{\theta_n},\widehat{\pi}_{\theta^{\Lambda}_n}^{\Lambda}} [x_{n}], \\
& X_{n+1} := \alpha_n^{shared} (x_n - \widehat{x}(\theta_n, \theta^\Lambda_n)),
\end{split}
\end{align}
where we have allowed log-barrier regularization with parameter $\nu\geq 0$ as in theorem \ref{gradet2}, so that the latter is applicable. We refer to (\ref{logpo}) for the definition of the logarithm of a policy, which is simply the average of the log-policy over all states, types and actions. Note that $x_n$ is an average of $B$ terms, all of which have the same expectation $\widehat{x}(\theta_n, \theta^\Lambda_n)$ (hence $\widehat{x}$ doesn't depend on $B$). The update of $\theta_n$ reads:
\begin{align}
\label{ut1}
\begin{split}
& \theta_{n+1} = \theta_n +  \alpha_n^{shared} \widehat{x}(\theta_n, \theta^\Lambda_n) +  X_{n+1}.
\end{split}
\end{align}

$X$ is a martingale difference sequence that represents the noise in our observation of the true gradient, arising due to finite batch size $B<+\infty$. Precisely, $\EE[X_{n+1}|\mathcal{F}_n]=0$, where $\mathcal{F}_n$ is the sigma-algebra generated by the discrete-time processes $\theta$ and $\theta^\Lambda$ up to time $n$. Similarly, the update of $\theta^\Lambda_n$ reads:
\begin{align}
\label{ut2}
\begin{split}
& \theta^\Lambda_{n+1} = \theta^\Lambda_n +  \alpha_n^{cal} \widehat{y}(\theta_n, \theta^\Lambda_n) +  Y_{n+1},\\
&y_n:=\frac{1}{B}\sum_{b=1}^{B} \nabla_{\theta^\Lambda} \ln \pi_{n}^{\Lambda}(\bm{\delta \Lambda^b} | \bm{\Lambda^b_n}) \ r^{cal}_{b},\\
&\widehat{y}(\theta_n, \theta^\Lambda_n) := \EE_{\pi_{\theta_n},\widehat{\pi}_{\theta^{\Lambda}_n}^{\Lambda}} [y_{n}], \hspace{4mm} Y_{n+1} := \alpha_n^{cal} (y_n - \widehat{y}(\theta_n, \theta^\Lambda_n)),
\end{split}
\end{align}
where $y_n$ is the gradient of $V^{calib}$ with respect to the calibrator policy parameter $\theta^\Lambda_n$ in (\ref{vcalib}). Precisely:
\begin{align}
\label{pgr}
\widehat{y}(\theta_n, \theta^\Lambda_n) = \nabla_{2} V^{calib}(\pi_{\theta_n}, \pi^\Lambda_{\theta^{\Lambda}_n}),
\end{align}

where we remind our notation that for a function of multiple arguments, $\nabla_{k}$ is the gradient with respect to the $k^{th}$ argument. 

\begin{assumption}
\label{alr}
\textbf{(Learning rate)}. The learning rate $(\alpha_n^{cal})$ has infinite $\ell^1$ norm, and $\lim_{n \to +\infty} \frac{\alpha_n^{cal}}{\alpha_n^{shared}}=0$. Further, $\alpha_n^{shared}< \frac{2}{\beta_{\nu}(1+\phi_{max})}$, where the latter are given in theorem \ref{gradet2}.
\end{assumption}

\begin{assumption}
\label{alr2}
\textbf{(Markovian property of the update rule)}. Let $\bm{\Lambda_0}$ and $\theta^{\Lambda}_0$ be fixed, and the sequence $(\theta^{\Lambda}_n)_{n>0}$ generated as in (\ref{ut2}). There exists a family of distributions $\widehat{\pi}_{\theta}^{\Lambda}$ parametrized by $\theta$, such that for any $n \geq 0$, $\widetilde{\pi}_{n}^{\Lambda} = \widehat{\pi}_{\theta^{\Lambda}_n}^{\Lambda}$, where we recall that $\widetilde{\pi}_{n}^{\Lambda}$ is the distribution of $\bm{\Lambda_n}$ generated through the recursion $\bm{\Lambda_k} \sim \bm{\Lambda_{k-1}}+\pi_{\theta^{\Lambda}_k}^{\Lambda}(\cdot | \bm{\Lambda_{k-1}})$.
\end{assumption}

\begin{assumption}
\label{alr6} \textbf{(Extended transitivity of the calibration game)}. The 2-player symmetric game with payoff $u(\theta^\Lambda_1,\theta^\Lambda_2):=V^{calib}(f_{eq}(\theta^\Lambda_2), \theta^\Lambda_1)$ is extended transitive. By theorem \ref{n11}, it admits at least one symmetric equilibrium. We assume that either this equilibrium is unique, or there is a countable number of equilibria with the additional assumption that the extended transitivity mapping $\Phi_{cal}$ is continuous.
\end{assumption}

\begin{assumption}
\label{alr3} \textbf{(Regularity of shared equilibria)}. Let $\widehat{\pi}_{\theta^\Lambda}^{\Lambda}$ be defined in assumption \ref{alr2}. For every $\theta^\Lambda$, the game $\widehat{V}$ with supertype sampled from $\widehat{\pi}_{\theta^\Lambda}^{\Lambda}$ admits at most one shared equilibrium $f_{eq}(\theta^{\Lambda})$, and $f_{eq}$ is Lipschitz.
\end{assumption}

\begin{assumption}
\label{alr4} \textbf{(Uniformly bounded noise)}. The noise processes $X$ and $Y$ in (\ref{ut0}), (\ref{ut2}) are multiplicative and uniformly bounded, namely:
\begin{align*}
X_{n+1} = \alpha_n^{shared} \widehat{x}(\theta_n, \theta^\Lambda_n) \odot \phi^{shared}_n, \hspace{4mm} Y_{n+1} = \alpha_n^{cal} \widehat{y}(\theta_n, \theta^\Lambda_n) \odot \phi^{cal}_n,
\end{align*}
where $\phi^{shared}_n$, $\phi^{cal}_n$ are vectors of i.i.d. mean-zero random variables upper bounded by $\phi_{max}$, and lower bounded by $\phi_{min}>-1$ almost surely.
\end{assumption}

\begin{assumption}
\label{alr5} \textbf{(Regularity of the calibrator policy)}. The mapping $\nabla \ln \pi^\Lambda_{\theta^{\Lambda}}$ is Lipschitz, and the calibrator reward $r^{cal}$ in (\ref{vcalib}) is uniformly bounded.
\end{assumption}

\begin{theorem}
\label{alrt}
Let assumptions \ref{alr}, \ref{alr2}, \ref{alr6}, \ref{alr3}, \ref{alr4}, \ref{alr5} hold true, as well as the conditions of theorem \ref{gradet2}. Then, the iterates (\ref{ut1})-(\ref{ut2}) generated by algorithm \ref{calsheq} converge almost surely to the point $(f_{eq}(\theta^\Lambda_*),\theta^\Lambda_*)$, where $\theta^\Lambda_*$ is a critical point of the mapping $\theta^{\Lambda} \to V^{calib}(\pi_{f_{eq}(\theta^\Lambda_*)}, \pi^\Lambda_{\theta^{\Lambda}})$ in (\ref{vcalib}), i.e. $ \nabla_{2} V^{calib}(\pi_{f_{eq}(\theta^\Lambda_*)}, \pi^\Lambda_{\theta^{\Lambda}_*})=0$.
\end{theorem}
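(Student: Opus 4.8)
The plan is to cast the coupled updates (\ref{ut1})--(\ref{ut2}) as a two-timescale stochastic approximation scheme in the sense of \cite{borkar}, and to carry out the analysis in two stages, handling the fast shared-policy iterate first and the slow calibrator iterate second. By assumptions \ref{alr} and \ref{alr2} the iterate $\theta_n$ evolves on the fast timescale $\alpha_n^{shared}$ with drift $\widehat{x}(\theta_n,\theta^\Lambda_n)$, while $\theta^\Lambda_n$ evolves on the slow timescale $\alpha_n^{cal}=o(\alpha_n^{shared})$ with drift $\widehat{y}(\theta_n,\theta^\Lambda_n)$; the increments $X_{n+1}$, $Y_{n+1}$ of assumption \ref{alr4} are, by construction in (\ref{ut0}) and (\ref{ut2}), martingale differences with respect to $\mathcal{F}_n$ and, being multiplicative and uniformly bounded, have bounded conditional second moments. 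First I would record the standard ingredients: the almost sure boundedness of $(\theta^\Lambda_n)$, from the uniform boundedness of $r^{cal}$ and the Lipschitz regularity of $\nabla\ln\pi^\Lambda_{\theta^\Lambda}$ in assumption \ref{alr5}, and the boundedness of $(\theta_n)$ under the softmax-with-log-barrier parametrization of theorem \ref{gradet2}, which keeps the logits away from the simplex boundary.

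For the fast timescale, freeze the slow variable and study the ODE $\dot\theta=\widehat{x}(\theta,\theta^\Lambda)$. By (\ref{ut0}) its drift is precisely the self-play gradient $\nabla_{1} L_{\widetilde{\mu}^0}(\pi_\theta,\pi_\theta)$ of the regularized shared-equilibrium game of theorem \ref{gradet2}, which is extended transitive. Invoking proposition \ref{et2} and the convergence machinery of theorem \ref{gradet2} together with corollary \ref{gradetc}, the shared policy is driven monotonically, through the Lyapunov-type map $\Phi_\nu$, toward the set of shared equilibria; by assumption \ref{alr3} this set is the single point $f_{eq}(\theta^\Lambda)$, which is moreover a Lipschitz function of $\theta^\Lambda$. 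Because the slow variable moves at rate $o(\alpha_n^{shared})$, a quasi-static argument then yields the tracking conclusion $\|\theta_n-f_{eq}(\theta^\Lambda_n)\|\to 0$ almost surely.

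For the slow timescale, substitute this tracking identity into (\ref{ut2}). The key algebraic observation is that the effective slow drift equals the self-play gradient of the calibration game of assumption \ref{alr6}: using (\ref{pgr}), $\widehat{y}(f_{eq}(\theta^\Lambda),\theta^\Lambda)=\nabla_{2} V^{calib}(\pi_{f_{eq}(\theta^\Lambda)},\pi^\Lambda_{\theta^\Lambda})=\nabla_{1} u(\theta^\Lambda,\theta^\Lambda)$ for the symmetric payoff $u(\theta^\Lambda_1,\theta^\Lambda_2):=V^{calib}(f_{eq}(\theta^\Lambda_2),\theta^\Lambda_1)$. Thus $\theta^\Lambda_n$ follows, up to the vanishing tracking error and the slow-timescale martingale noise, the self-play dynamics of the extended-transitive game of assumption \ref{alr6}. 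Applying the same argument as in theorem \ref{gradet2} and corollary \ref{gradetc} with the Lyapunov map $\Phi_{cal}$ and its associated increasing function $\varphi_{cal}$ from proposition \ref{et2}, the iterate $\theta^\Lambda_n$ converges to a symmetric equilibrium, i.e. a critical point $\theta^\Lambda_*$ with $\nabla_{2} V^{calib}(\pi_{f_{eq}(\theta^\Lambda_*)},\pi^\Lambda_{\theta^\Lambda_*})=0$; the uniqueness, or the countability together with continuity of $\Phi_{cal}$, in assumption \ref{alr6} pins the limit to a single point. Combining with the fast tracking gives convergence of $(\theta_n,\theta^\Lambda_n)$ to $(f_{eq}(\theta^\Lambda_*),\theta^\Lambda_*)$.

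The main obstacle is that the classical two-timescale theorem presumes smooth Lyapunov functions certifying global asymptotic stability of both ODEs, whereas here the only available certificates are the extended-transitivity maps $\Phi_\nu$ and $\Phi_{cal}$, which are merely bounded and need not be smooth. The way around this, exactly as in theorems \ref{gradet} and \ref{gradet2}, is to use proposition \ref{et2} to translate each positive payoff improvement into a strictly positive increment of the relevant $\Phi$ via the increasing invertible function $\varphi$, thereby substituting a monotone-improvement argument for the missing smooth Lyapunov analysis on each timescale. A secondary difficulty is decoupling the two scales rigorously: one must show the slow drift is asymptotically insensitive to the fast tracking error, which follows from the Lipschitz continuity of $f_{eq}$ (assumption \ref{alr3}) and the separation $\alpha_n^{cal}/\alpha_n^{shared}\to 0$, and that the multiplicative noise of assumption \ref{alr4} is controlled on both timescales, which is handled by lemma \ref{lemsto} as in theorem \ref{gradet2}.
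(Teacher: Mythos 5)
Your proposal is correct and follows essentially the same route as the paper's proof: casting (\ref{ut1})--(\ref{ut2}) as two-timescale stochastic approximation in the sense of \cite{borkar} and \cite{leslie}, using theorem \ref{gradet2} with assumptions \ref{alr3} and \ref{alr4} to identify $f_{eq}(\theta^\Lambda)$ as the unique globally asymptotically stable equilibrium of the fast ODE, and then handling the slow ODE by recognizing its drift as the self-play gradient of the calibration game of assumption \ref{alr6}, whose extended-transitivity map (made into a strict Lyapunov function via proposition \ref{et2} and the $\beta$-smoothness from assumption \ref{alr5}) yields convergence to a critical point by \cite{leslie}, proposition 4 and theorem 5. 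The only difference is presentational: you make explicit the identification $\widehat{y}(f_{eq}(\theta^\Lambda),\theta^\Lambda)=\nabla_{1}u(\theta^\Lambda,\theta^\Lambda)$ and the tracking step, which the paper leaves implicit in its discussion of the assumptions.
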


\textbf{Discussion of the assumptions used in theorem \ref{alrt}.} We already discussed how assumption \ref{alr2} allows us to express the update rules of $\theta$ and $\theta^\Lambda$ in the standard two-timescale setting. The idea of two-timescale analysis in \cite{borkar} is to conclude on the convergence of the iterates (\ref{ut1})-(\ref{ut2}) by analyzing the ordinary differential equations (ODE)$\frac{d\theta}{dt} = \widehat{x}(\theta, \theta^\Lambda)$ (for fixed $\theta^\Lambda$), and $\frac{d\theta^\Lambda}{dt} = \widehat{y}(f_{eq}(\theta^\Lambda), \theta^\Lambda)$, where $f_{eq}(\theta^\Lambda)$ is the unique global asymptotically stable equilibrium of the former ODE ($f_{eq}$ is also assumed to be Lipschitz). The "time" variable is related to the accumulation of the learning rates, namely the first moment at which either the cumulative sum of $\alpha_n^{cal}$ or $\alpha_n^{shared}$ hits some threshold. If $\alpha_n^{cal}$ is small compared to $\alpha_n^{shared}$, time flows more slowly for the calibrator, which can be seen as static compared to the shared policy. Concretely, for a fixed time $t$, much fewer iterations $n$ are needed for $\sum_{k=0}^n \alpha_k^{shared}$ than for $\sum_{k=0}^n \alpha_k^{cal}$ to reach $t$. 

The result in \cite{borkar} assumes that both ODEs have a unique global asymptotically stable equilibrium, and that both mapping $\widehat{x}$, $\widehat{y}$ are Lipschitz. The Lipschitz requirement is a classical one when analyzing ODEs. The existence of global asymptotically stable equilibria is more severe. In short, it takes for granted that the ODE converge "somewhere stable", already a non-trivial result as seen in section \ref{secsharedeq}. In fact, Borkar's work mentions that the ODE for $\theta^\Lambda$ is allowed to have a global asymptotically stable attractor, in which case $\theta^\Lambda_n$ would converge to one of the elements of this set (see also \cite{leslie}, proposition 4 and theorem 5; \cite{benaim}, corollaries 5.4 and 6.6). In our case, we already proved in theorem \ref{gradet2} and corollary \ref{gradetc} that the update for $\theta_n$ converges to a (shared) equilibrium, provided the learning rate is small enough and the noise is multiplicative and bounded. We will therefore make the assumption \ref{alr3} that this equilibrium is unique, and that we are in the setting of theorem \ref{gradet2} (assumption \ref{alr4}). Note that an "equilibria" for the ODE governing $\theta$ is a point where $\widehat{x}(\theta, \theta^\Lambda) = 0$, i.e. the gradient with respect to $\theta$ is zero by definition of $\widehat{x}$. In principle, this is different from shared equilibria discussed in section \ref{secsharedeq}. However the proof of our theorems \ref{gradet}-\ref{gradet2} shows that all points where the gradient is zero are shared equilibria, due to the specific nature of $\widehat{V}$ which is $\beta$-smooth (i.e. $\beta$-Lipschitz gradient). This is obtained via a "gradient domination" argument as in the single-agent case \cite{convpg}. 

Regarding the learning rates and our assumption \ref{alr}, the core assumption in \cite{borkar} is $\lim_{n \to +\infty} \frac{\alpha_n^{cal}}{\alpha_n^{shared}}=0$, which relates to one timescale being slower than another as previously discussed. It is also assumed that they satisfy the usual Robbins-Monro conditions, namely infinite $\ell_1$ norm and finite $\ell_2$ norm \footnote{we remind that $||\alpha||_{\ell_p}:= \left(\sum_{n=0}^\infty |\alpha_n|^p\right)^{\frac{1}{p}}$.}. The latter is only assumed to show that the noise terms $X$ and $Y$ can be asymptotically neglected. In our case, because we assume that the noise is uniformly bounded and multiplicative, we get convergence without assuming $\ell_2$ integrability, and theorem \ref{gradet2} holds as long as we have infinite $\ell_1$ norm and $\alpha_n^{shared}$ uniformly bounded by a small enough constant, which in particular allows $\alpha_n^{shared}$ to be constant. The requirement $\alpha_n^{cal} \stackrel{n \to + \infty}{=} o(\alpha_n^{shared})$ implies that $\alpha_n^{cal} \to 0$.

The remaining element to discuss is global asymptotic stability for the ODE governing $\theta^\Lambda$, $\frac{d\theta^\Lambda}{dt} = \widehat{y}(f_{eq}(\theta^\Lambda), \theta^\Lambda)$, and its discrete-time counterpart:
\begin{align}
\label{conv2t}
\widetilde{\theta}^\Lambda_{n+1} = \widetilde{\theta}^\Lambda_n +  \alpha_n^{cal} \widehat{y}(f_{eq}(\widetilde{\theta}^\Lambda_n), \widetilde{\theta}^\Lambda_n).
\end{align}

If the latter were a classical gradient update, we would be done, since under some regularity conditions, the latter would converge to a critical point of the update rule, i.e. a point $\theta^\Lambda_*$ such that  $\widehat{y}(f_{eq}(\theta^\Lambda_*), \theta^\Lambda_*)=0$. However this is not quite the case, since there are two variables $f_{eq}(\widetilde{\theta}^\Lambda_n)$ and $\widetilde{\theta}^\Lambda_n$, but we are only taking the partial gradient with respect to the second one, cf. (\ref{pgr}). In plain words, given a current supertype $\widetilde{\theta}^\Lambda_n$, and a current shared policy $f_{eq}(\widetilde{\theta}^\Lambda_n)$, taking a gradient step (\ref{conv2t}) ensures - under a smoothness condition - that we can find a new supertype $\widetilde{\theta}^\Lambda_{n+1}$ that increases the calibrator reward $V^{calib}$ provided that the shared policy remains $f_{eq}(\widetilde{\theta}^\Lambda_n)$. But the shared policy also jumps to $f_{eq}(\widetilde{\theta}^\Lambda_{n+1})$. Therefore, informally, we need to ensure that the gain we obtain when we move $\widetilde{\theta}^\Lambda_{n} \to \widetilde{\theta}^\Lambda_{n+1}$ is not offset by the loss we suffer when moving $f_{eq}(\widetilde{\theta}^\Lambda_{n}) \to f_{eq}(\widetilde{\theta}^\Lambda_{n+1})$. This is exactly the reasoning behind extended transitivity assumption \ref{et}. So, to remediate this issue, we will assume that the game with payoff $u(\theta^\Lambda_1,\theta^\Lambda_2):=V^{calib}(f_{eq}(\theta^\Lambda_2), \theta^\Lambda_1)$ is extended transitive (assumption \ref{alr6}). This can be seen as a more "interpretable" requirement than global asymptotic stability of the ODE, but both are related to convergence of the ODE. The line of argument to prove convergence will then be that of theorem \ref{gradet2}. By theorem \ref{n11}, this "game" has a symmetric equilibrium. By corollary \ref{gradetc}, if this equilibrium is unique, it is globally attracting. By \cite{leslie}, proposition 4 and theorem 5, $\theta^\Lambda_n$ converges to that equilibrium. In the case where there are a countable number of equilibria, \cite{leslie}, proposition 4 and theorem 5 give convergence of $\theta^\Lambda_n$ to one of these equilibria provided that there exists a strict Lyapunov function. The countable set of fixed points was also made in \cite{Tamar2012-ki,Tamar2013VarianceAA} who also employ a two-timescale approach. A Lyapunov function is a continuous function $v$ such that $v(g_t(z))$ is strictly increasing in $t$ whenever $z$ is such that $\widehat{y}(f_{eq}(z),z) \neq 0$, and $g_t(z)$ is the solution of the ODE $\frac{d\theta^\Lambda}{dt} = \widehat{y}(f_{eq}(\theta^\Lambda), \theta^\Lambda)$ started at $z$. But by extended transitivity assumption and the proof of theorem \ref{gradet2}, the extended transitivity mapping $\Phi_\nu$ is such a Lyapunov function (note that this is why we required continuity of $\Phi_\nu$). Indeed, we get:
$$
\Phi_\nu(\theta^\Lambda_{n+1})-\Phi_\nu(\theta^\Lambda_{n}) \geq \varphi_\nu \left(\left(\frac{1}{\alpha_n^{cal}} - \frac{\beta_\nu}{2}\right)||\theta^\Lambda_{n+1}-\theta^\Lambda_{n}||_2^2 \right),
$$
which is positive whenever $\widehat{y}(f_{eq}(\theta^\Lambda_n),\theta^\Lambda_n) \neq 0$. This requires $\beta$-smoothness of $V^{calib}$ as in the proof of theorem \ref{gradet2}. Assumption \ref{alr5} ensures that the mapping $\theta^{\Lambda} \to V^{calib}(\pi_{\theta}, \pi^\Lambda_{\theta^{\Lambda}})$ is $\beta$-smooth.

\section{Experiments}
\label{secexp}

Both shared LP, shared LT and calibrator policies were trained jointly using Proximal Policy Optimization \cite{ppo}, an extension of TRPO \cite{pmlr-v37-schulman15}, with a KL penalty to control the smoothness of policy updates \cite{ppo}. We used configuration parameters in line with \cite{ppo}, that is a clip parameter of 0.3, an adaptive KL penalty with a KL target of $0.01$ (so as to smoothly vary the supertype profile during calibration) and a learning rate of $10^{-4}$. We found that entropy regularization was not specifically helpful in our case. Episodes were taken of length $T=100$ time steps, using $B=60$ parallel runs in between policy updates. As a result, each policy update was performed with a batch size of $n_{agents} \cdot T \cdot B$ for the shared policy (where $n_{agents}$ is the number of agents per policy), and $B$ for the calibrator's policy, together with 30 iterations of stochastic gradient descent. We used for each policy a fully connected neural net with 2 hidden layers, 256 nodes per layer, and $\tanh$ activation. Since our action space is continuous, the outputs of the neural net are the mean and stDev of a normal distribution, which is then used to sample actions probabilistically (the covariance matrix across actions is chosen to be diagonal). We take the discount factor $\zeta=1$.

For all experiments, we check empirically that we reach convergence for $\pi^{LT}$ and $\pi^{LP}$. We investigate the fit to the calibration targets as well as emergent behavior learnt by our agents. Experiments were conducted in the RLlib multi-agent framework \cite{rllib}, ran on AWS using a EC2 C5 24xlarge instance with 96 CPUs, resulting in a training time of approximately half a day per experiment. As discussed previously, we will adopt the terminology \textit{flow LTs} and \textit{PnL LTs} for LTs having a PnL weight of $\omega=0$ and $\omega=1$, respectively.

\subsection{Calibration}
\label{secexpcalib1}

\textbf{Experimental setup.} We consider calibration in the case of stationary LTs: LTs are assumed at every point in time $t$ to either want to buy or sell with equal probability a fixed quantity $q_{LT}$. This corresponds to the case $\omega=0$, $q^a=q^b=0.5$ in section \ref{secrlagents}, here we do not train $\pi^{LT}$ to achieve such objective but directly sample from this simple distribution. We split $500$ LTs into 10 supertypes, supertype $i \in [1,10]$ being associated to $q_{LT} = i$.

LPs are assumed to have PnL weight $\omega=1$. We consider 2 distinct supertypes for 5 to 10 LPs in total depending on the experiment, where the first LP is assigned supertype 1 and the $n^{LP}-1$ others are assigned supertype 2, cf. table \ref{tab1}. Supertypes 1 and 2 are respectively vectors of size 12 and 11, resulting in 23 parameters to calibrate in total. For each supertype we have i) 10 probabilities to be connected to LT supertypes, ii) the LP risk aversion, which is 1 parameter for supertype 1, and the mean/variance of a (clipped) normal random variable for supertype 2. The corresponding ranges for the the calibrator's state $\bm{\Lambda}$ and action $\bm{\delta \Lambda}$ are reported in table \ref{tab3}. In contrast, experimental results in \cite{surlamp} only calibrate 8 or 12 parameters, although not in a RL context. In a given episode, a LP may be connected to some LTs in a given supertype, and disconnected to others: these connectivities are sampled randomly at the beginning of the episode as Bernoulli random variables with the same probability.

The calibration targets we consider are related to i) LP market share, ii) the distribution of per-timestep individual trade quantities that a given LP receives from various LTs. The constraint is on 9 percentiles of the distribution, for each supertype. 

\begin{table}[!ht]
\caption{RL calibrator state and action spaces.}
\label{tab3}
\begin{center}
\begin{scriptsize}
\begin{tabular}{ ccc}

\makecell{Supertype parameter flavor $j$} & \makecell{state $\Lambda_i(j)$ range} & \makecell{action $\delta \Lambda_i(j)$ range}  \\ 
\hline
LT supertype connectivity probability&$[0,1]$&$[-1,1]$ \\  
LP risk aversion Gaussian mean &$[0,5]$&$[-5,5]$ \\ 
LP risk aversion Gaussian stDev &$[0,2]$&$[-2,2]$ \\ 
\hline
\end{tabular}\end{scriptsize}\end{center}
\end{table}

We give in table \ref{tab2} a breakdown of the calibrator reward sub-objectives corresponding to equation (\ref{rewcalib}). For each experiment, all sub-objectives are required to be achieved simultaneously. Precisely, the reward formulation associated to table \ref{tab2} are given below, where we denote $m_{super_1}=m_{super_1}((\bm{z_t})_{t \geq 0})$ the average market share of supertype 1 observed throughout an episode, $m_{total}$ the sum of all LPs market share \footnote{since there is an ECN, this sum is less than or equal to one.}, $\widehat{v}_{super_j}(p)$ the observed $(10p)^{th}\%$ percentile of supertype $j$'s per timestep individual trade quantity distribution for $p \in [1,9]$ (empirical percentile).

In experiment 1: 
\begin{align*}
& r^{cal} = (1+r^{(1)}+0.2r^{(2)})^{-1},\\
& r^{(1)}=\frac{1}{2}(\max(0.15 - m_{super_1},0) + \max(0.8 - m_{total},0)),\\
& r^{(2)}=\frac{1}{9}\sum_{p=1}^9|v_{super_1}(p) - \widehat{v}_{super_1}(p)|, \hspace{3mm}  v_{super_1} = [8,8,8,9,9,9,10,10,10].
\end{align*}
In experiment 2/3: 
\begin{align*}
& r^{cal} = (1+r^{(1)}+0.2r^{(2)}+0.2r^{(3)})^{-1},\\
& r^{(1)}=\frac{1}{2}(\max(0.15 - m_{super_1},0) + \max(0.8 - m_{total},0)),\\
& r^{(j+1)}= \frac{1}{9}\sum_{p=1}^9|v_{super_j}(p) - \widehat{v}_{super_j}(p)|, \hspace{3mm} j \in \{1,2\},\\
& v_{super_1} = [8,8,8,9,9,9,10,10,10], \hspace{3mm}  v_{super_2} = [2,3,3,4,5,5,6,6,7].
\end{align*}
In experiment 4: 
\begin{align*}
& r^{cal} = (1+r^{(1)}+r^{(2)})^{-1}, \hspace{3mm} r^{(1)}=|0.25 - m_{super_1}|, \hspace{3mm} r^{(2)}=\max(0.8 - m_{total},0).
\end{align*}

\begin{table}[!ht]
\caption{Summary of experiment configuration.}
\label{tab1}
\begin{center}
\begin{scriptsize}
\begin{tabular}{cccccc}
Experiment \# & \makecell{\# LP \\ Agents} & \makecell{Budget \# Training  \\Steps ($10^6$)} & \makecell{\# distinct\\ LP Supertypes} & \makecell{\# LP Supertype parameters \\ to be calibrated} &\makecell{Total \# \\Calibration Targets} \\ 
\hline
$1$&$5$&$40$&$2$&$20$&11 \\  
$2$&$5$&$40$&$2$&$20$&20  \\  
$3$&$10$&$20$&$2$&$20$&20  \\  
$4$&$10$&$20$&$2$&$23$&2 \\  
\hline
\end{tabular}\end{scriptsize}\end{center}
\end{table}

\begin{table}[!ht]
\caption{Calibration target breakdown}
\label{tab2}
\begin{center}
\begin{scriptsize}
\begin{tabular}{cccc}
Experiment \# &\makecell{\# Calibration\\ Targets} & \makecell{Calibration Target Type}\\ 
\hline
1&9&\makecell{trade quantity distribution supertype 1 \\ 
percentiles $10\%-90\%$ target $8$, $8$, $8$, $9$, $9$, $9$, $10$, $10$, $10$} \\   
\hline
&2&\makecell{market share supertype 1 $\geq 15\%$ + total market share $\geq 80\%$} \\   
\hline
 2&18&\makecell{trade quantity distribution Supertypes 1+2 \\
 supertype 1 - percentiles $10\%-90\%$ target $8$, $8$, $8$, $9$, $9$, $9$, $10$, $10$, $10$\\
 supertype 2 - percentiles $10\%-90\%$ target $2$, $3$, $3$, $4$, $5$, $5$, $6$, $6$, $7$} \\
 \hline
&2&\makecell{market share supertype 1 $\geq 15\%$ + total market share $\geq 80\%$} \\   
\hline
 3&18&\makecell{trade quantity distribution Supertypes 1+2\\
 supertype 1 - percentiles $10\%-90\%$ target $8$, $8$, $8$, $9$, $9$, $9$, $10$, $10$, $10$\\
 supertype 2 - percentiles $10\%-90\%$ target $2$, $3$, $3$, $4$, $5$, $5$, $6$, $6$, $7$} \\
\hline
&2&\makecell{market share supertype 1 $\geq 15\%$ + total market share $\geq 80\%$} \\   
\hline
4&1&\makecell{market share supertype 1 $=25\%$} \\
\hline
&1&\makecell{total market share $\geq 80\%$} \\ 
\hline
\end{tabular}\end{scriptsize}\end{center}
\end{table}

\textbf{Baseline.} The baseline we consider is Bayesian optimization (BO), a method that has been used for hyperparameter optimization. The latter can be considered as similar to this calibration task, and BO will allow us to periodically record the calibration loss related to a certain choice of supertype $\bm{\Lambda}$, and suggest an optimal point to try next, via building a Gaussian Process-based surrogate of the simulator. 

Every $M$ policy training iterations, we record the calibrator's reward, and use Bayesian optimization to suggest the next best $\bm{\Lambda}$ to try. We empirically noticed that if $M$ was taken too low ($M \sim 10$), the shared policy couldn't adapt as the supertype profile changes were too frequent (and potentially too drastic), thus leading to degenerate behaviors (e.g. LPs not trading at all). We tested values of $M=10$, $M=50$, $M=100$, $M=200$, and opted for $M=100$ as we found it was a good trade-off between doing sufficiently frequent supertype profile updates and at the same time giving enough time to the shared policy to adapt. We chose an acquisition function of upper confidence bound (UCB) type \cite{ucb}. Given the nature of our problem where agents on the shared policy need to be given sufficient time to adapt to a new supertype profile $\bm{\Lambda}$, we opted for a relatively low UCB exploration parameter of $\kappa=0.5$, which we empirically found yielded a good trade-off between exploration and exploitation (taking high exploration coefficient can yield drastic changes in the supertype profile space, which can prevent agents to learn correctly an equilibrium). In figure \ref{f0} we perform an ablation study focused on experiment 1: we look at the impact of the choice of $M$ in the EI (expected improvement) and UCB (exploration parameter of $\kappa=1.5$) cases and find that different choices of $M$ and of the acquisition function yield similar performance. We also look at the case "CALSHEQ\_no\_state" where the calibrator policy directly samples supertype values (rather than increments) without any state information (i.e. the calibrator policy's action is conditioned on a constant), and find that it translates into a significant decrease in performance. We further note that decreasing $M$ has a cost, especially when $\bm{\Lambda}$ is high dimensional, since the BO step will become more and more expensive over time due to the necessity to invert a covariance matrix that gets larger over time. For example, in the case of experiment 1, we observed with $M=1$ that the training hadn't reached the 20M timestep budget after 2 days (for a calibrator reward in line with other values of $M$). The covariance function of the Gaussian process was set to a Matern kernel with $\nu=2.5$.

\textbf{Performance metrics.} We evaluate our findings according to the following three criteria 1) calibrator reward in (\ref{vcalib}), quantifying the accuracy of the equilibrium fit to the target(s), where one equals perfect fit, 2) convergence of LP agents' rewards to an equilibrium and 3) smoothness of the supertype profile $\bm{\Lambda}$ as a function of training iterations, ensuring that equilibria is given sufficient time to be reached, cf. discussion in section \ref{seccalib}.

\textbf{Results.} In figure \ref{f1} we display calibrator and agents' reward evolution during training. It is seen that \textit{CALSHEQ} outperforms BO in that i) the calibrator's rewards converge more smoothly and achieve on average better results in less time, ii) in experiment 4, supertype 1's reward in the BO case converges to a negative value, which should not happen as LPs always have the possibility to earn zero income by making their prices not competitive. The reason for it is that as mentioned in section \ref{seccalib}, BO can potentially perform large moves in the supertype space when searching for a solution, and consequently agents may not be given sufficient time to adapt to new supertype profiles $\bm{\Lambda}$. This fact is further seen in figures \ref{f2}-\ref{f5} where we show a sample of supertype parameters during training. It is seen that \textit{CALSHEQ} smoothly varies these parameters, giving enough time to agents on the shared policy to adapt, and preventing rewards to diverge as previously discussed. 

The RL calibrator's total reward in (\ref{vcalib}) is computed as weighted sum of various sub-objectives. In figures \ref{f7} and \ref{f8}-\ref{f10}, we zoom on the individual components that constitute the overall reward, together with the associated target values. It is seen that \textit{CALSHEQ} converges more smoothly and more accurately than BO to the target values.

\begin{figure}[hb]
  \centering
  \centerline{\includegraphics[width=\columnwidth]{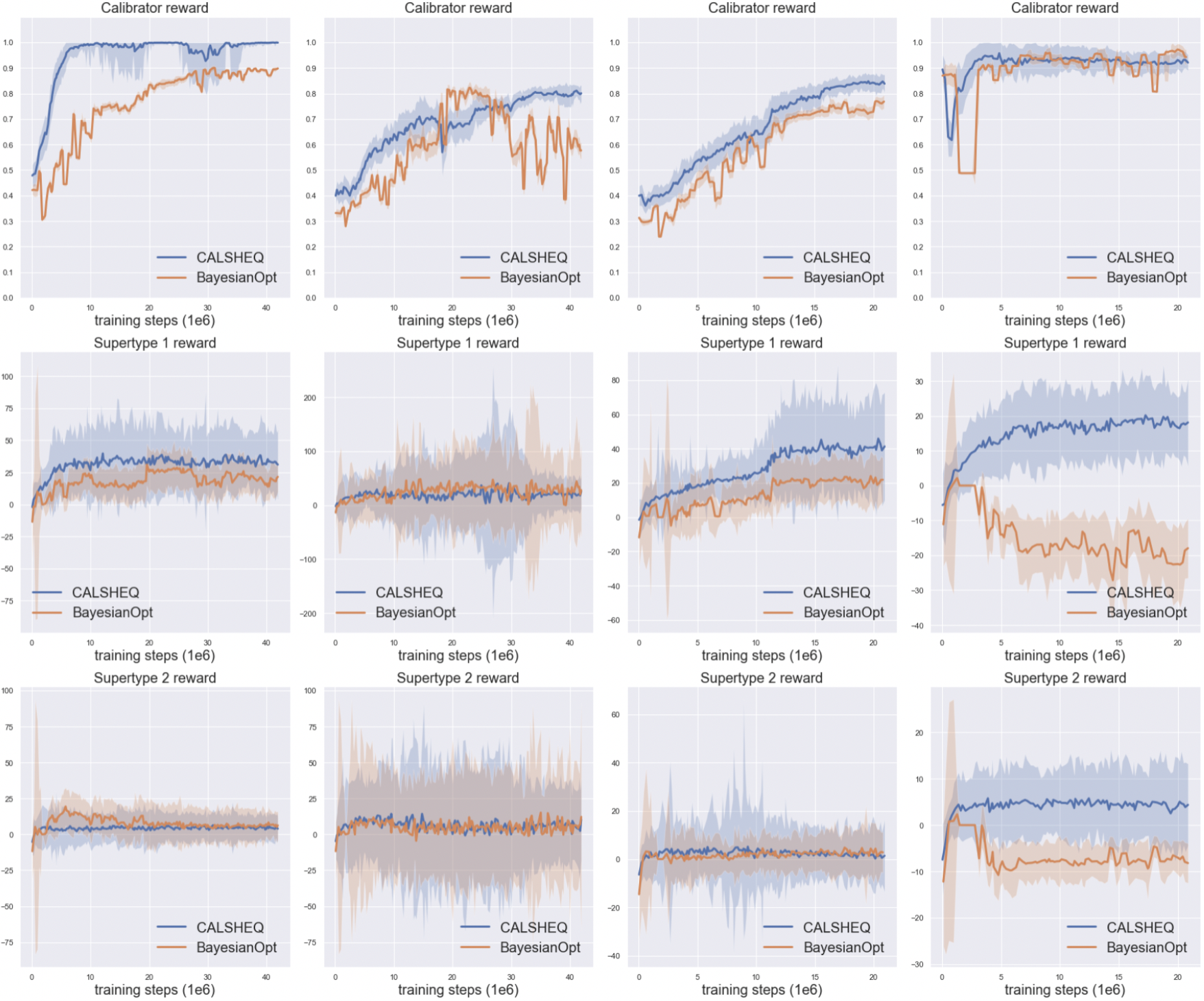}}
  \caption{Cumulative rewards during training for the calibration experiments, averaged over episodes $B$ - Calibrator \textbf{(Top row)}, LP Supertypes 1/2 \textbf{(Mid/Bottom rows)} - experiments 1-2-3-4 (across columns). \textit{CALSHEQ} (ours) and baseline (Bayesian optimization). Shaded area represents $\pm 1$ stDev.}
  \label{f1}
\end{figure} 
\clearpage

\begin{figure}[ht]
  \centering
  \centerline{\includegraphics[width=\columnwidth]{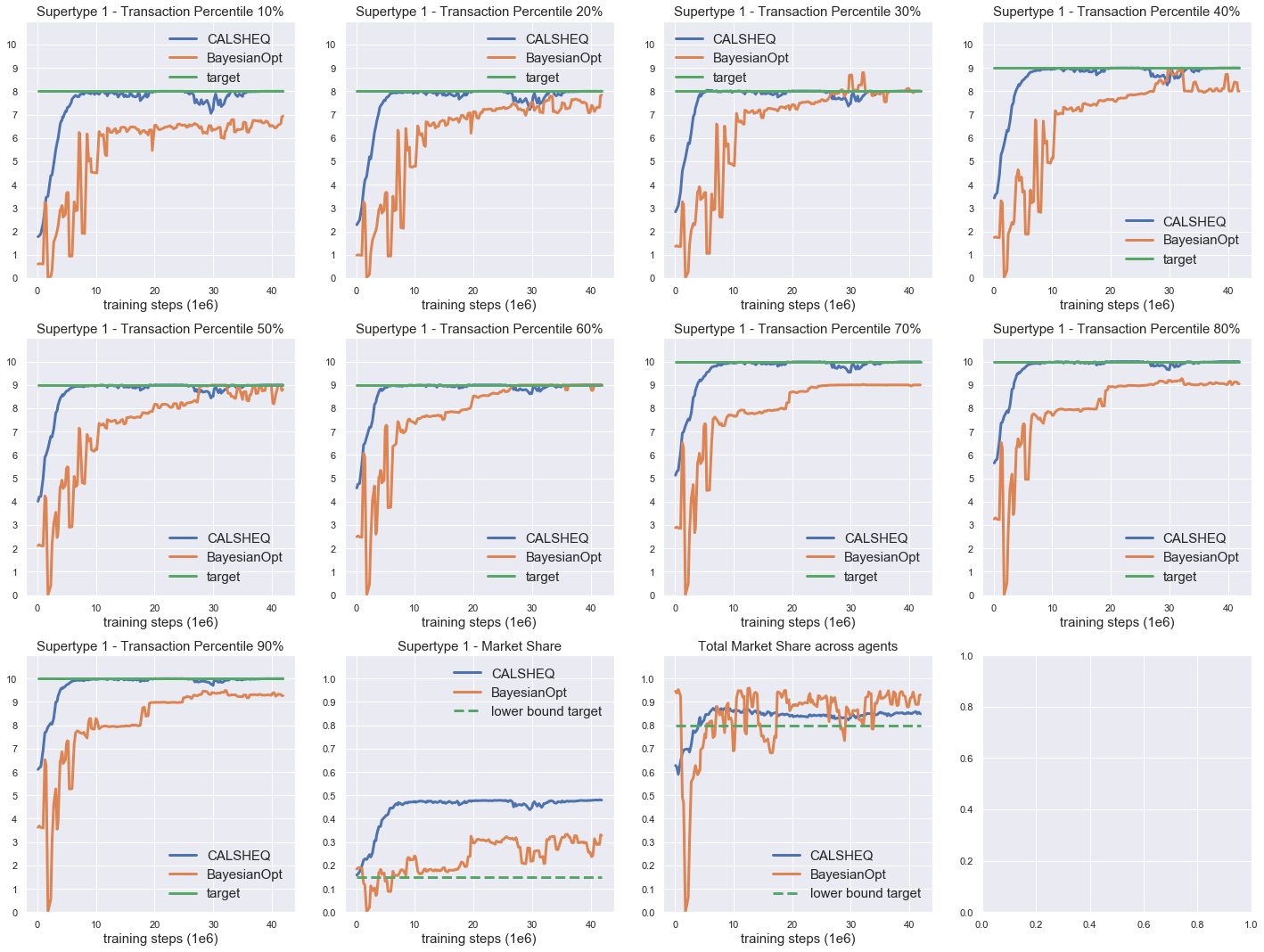}}
  \caption{Calibration experiment 1 - Calibration target fit for trade quantity distribution percentile and market share during training, averaged over episodes $B$. Dashed line target indicates that the constraint was set to be greater than target (not equal to it). \textit{CALSHEQ} (ours) and baseline (Bayesian optimization).}
  \label{f7}
\end{figure}

\subsection{Emergent agent behavior}
\label{secemergent}

In this section, we train shared policies $\pi^{LT}$ and $\pi^{LP}$, and investigate several emergent behaviors learnt by our agents. 

We consider a base agent configuration for which we vary the attributes of the first LP and study how its policy changes as a consequence. We consider 2 LP supertypes with risk aversion $\gamma = 0.5$, and respectively 1 and 2 agents; 2 LT supertypes consisting of respectively 12 flow LTs ($\omega=0$) and 2 PnL LTs ($\omega=1$) all connected to the ECN. The network of agents and ECN is set to be fully connected.

We train both shared policies $\pi^{LT}$ and $\pi^{LP}$ by randomizing the following attributes of the first LP: connectivity to flow and PnL LTs, PnL weight, risk aversion, and proceed to analyzing its learnt behavior as a function of these attributes. Precisely, we define the supertype of the first LP to sample uniformly the latter quantities over $[0,1]$ for connectivities and PnL weight, and $[0,2]$ for risk aversion. We display in figures \ref{figgn1}-\ref{figgn2} the rewards during training. Once the two policies have been trained, we compute metrics of interest from the point of view of the first LP by running a fixed number of episodes (100) per chosen set of agent types and aggregating the corresponding data. For example, if we look at the LP's inventory as a function of its risk aversion, we run 100 episodes per chosen value of risk aversion, and compute the average inventory over those episodes. 

At a given time $t$, let $\mathcal{O}^{ask}_t$, $\mathcal{O}^{bid}_t$ be the sets of trades performed by a given LP with the various LTs on both sides. For a given side $\alpha \in \{\text{bid},\text{ask}\}$, we define $\mathcal{F}_t(\alpha)$ the \textit{flow} of a LP as the sum of absolute trade quantities, and $\epsilon_{t,\alpha}$ its related price:
\begin{align}
\label{flowrprice}
    \mathcal{F}_t(\alpha):= \sum_{q : q \in \mathcal{O}^{\alpha}_t} |q|, \hspace{4mm} \epsilon_{t,\alpha} := \left\{
	\begin{array}{ll}
		\frac{1}{2} \epsilon_{t,\text{spread}} + \epsilon_{t,\text{skew}}  & \text{if } \alpha = \text{ask} \\
		\frac{1}{2} \epsilon_{t,\text{spread}} - \epsilon_{t,\text{skew}}  & \text{if } \alpha = \text{bid}
	\end{array}.
\right.
\end{align}

Note that with this convention, the lower $\epsilon_{t,\alpha}$, the more attractive the price from a LT standpoint. We further define the \textit{flow response curve} $\epsilon \to \mathcal{F}(\epsilon)$ the average flow obtained for a given price level $\epsilon$:
\begin{align}
\label{flowrprice2}
    \mathcal{F}(\epsilon):= \frac{\sum_{\alpha \in \{\text{bid},\text{ask}\}} \sum_{t=1}^T \mathcal{F}_t(\alpha) 1_{\{\epsilon_{t,\alpha} = \epsilon\}}}{\sum_{\alpha \in \{\text{bid},\text{ask}\}} \sum_{t=1}^T  1_{\{\epsilon_{t,\alpha} = \epsilon\}}}.
\end{align}

\textbf{Skewing intensity}. Recall that $2\epsilon_{t,skew} = \epsilon_{t,\text{ask}}-\epsilon_{t,\text{bid}}$ exactly captures the LP pricing asymmetry between bid and ask sides, and that \textit{skewing} refers to setting prices asymmetrically on both sides so as to reduce, or internalize, one's inventory. The more positive the inventory, the more we expect the LP to want to sell, i.e. the more negative $\epsilon_{t,skew}$. We define the \textit{skewing intensity} as the slope of the linear regression to the cloud of points $\epsilon_{t,\text{skew}} = f(q_t)$, where $q_t$ is the LP's net inventory at time $t$. In figure \ref{figg11} we look at this regression line for various connectivities to flow LTs. We see that the more connected, the more the LP learns to skew, materialized by the slope getting more negative, cf. figure \ref{figg1}. This is because flow LTs trade independently of PnL, hence provide to the LP a stream of liquidity available at all times: the LP learns that the more connected to these LTs, the more it can reduce its inventory by setting its prices asymmetrically via $\epsilon_{t,\text{skew}}$. Similarly, we see on figure \ref{figg1} that the more risk averse, the more intense the skewing. This is because the more risk averse, the more eager to reduce its inventory. We display in figure \ref{figira} the absolute inventory as a function of risk aversion, and indeed observe the decreasing pattern. The more the PnL weight, the less intense the skewing: this is because skewing intensely costs PnL. The more connected to PnL LTs, the more intense the skewing: this is because PnL LTs play a role similar to risk aversion, since from the point of view of the LP, both penalize mid-price variations via inventory PnL in section \ref{secrlagents}. The corresponding plots of  $\epsilon_{t,\text{skew}}$ vs. inventory are presented in figures \ref{figg12}, \ref{figg13}, \ref{figg14}, and the skew intensity distribution in figure \ref{figg9}.

\textbf{Hedge fraction}. We plot the LP hedge fraction $\epsilon_{\text{hedge}}$ in figure \ref{figg3}. The more risk averse, the more eager the LP is to liquidate its inventory, hence the more hedging. Similarly, hedging is an increasing function of connectivity to PnL LTs, since we discussed that they play a similar role as risk aversion. The more the PnL weight, the less the hedging: this is because hedging costs PnL. Finally, the more connected to flow LTs, the less the hedging, since the easier it is to reduce one's inventory, as discussed when analyzing skewing intensity.

\textbf{Pricing}. We plot in figure \ref{figg7rs} the average price $\epsilon_{t,\alpha}$ in (\ref{flowrprice}). Remember that the lower the latter, the more competitive the pricing. As expected the higher the PnL weight, the higher the impact of the pricing on the LP, hence the less competitive the pricing. The more the connectivity to PnL LTs, the more the LP's PnL suffers, hence the less competitive the pricing so as to compensate the related loss. The more connected to flow LTs, the more competitive the pricing: this is because the LP gets more revenue, so can afford to be more competitive on its prices.

\textbf{Inventory holding time}. We define the \textit{inventory holding time} $\tau$ of a given inventory $q_t$ the first time where $q_{t+\tau}$ has opposite sign than $q_t$, i.e. the first time where the inventory goes back to zero. We plot in figure \ref{figg4} the average inventory holding time, and in figure \ref{figg5} a more granular view of the holding time as a function of inventory. The higher the PnL weight, the higher the holding time, since reducing one's inventory is done via skewing or hedging, which both cost PnL. The more risk averse, the more the hedging and skewing, hence the lower the holding time. The more connected to flow LTs, the lower the holding time since the more intense the skewing. The more connected to PnL LTs, the less competitive the pricing, hence the higher the holding time.

\textbf{Flow}. We plot in figure \ref{figg7} the flow $\mathcal{F}(\epsilon)$ received by the LP from each LT agent class, where we normalize each curve by the number of agents of that class connected to the LP. This way, such curves represent the typical flow shape received by the LP from a representative flow or PnL LT. As the PnL weight increases, it is interesting to see that the flow from PnL LTs decreases more abruptly than that from flow LTs: this is because the price $\epsilon_{t,\alpha}$ gets significantly less competitive, to which the PnL LTs respond more intensely as their objective is PnL related. As flow LT connectivity increases, we saw in figure \ref{figg11} that the LP tailors its pricing to this class of LT as they are lucrative from a PnL point of view and help reduce inventory, hence it becomes gradually less focused on PnL LTs, and consequently gets less related flow. 

\textbf{Market share and PnL}. We present market share and PnL in figures \ref{figg6}, \ref{figg8}. As PnL weight increases, the LP's objective becomes more weighted towards market share and hence market share decreases and PnL increases. Connectivity to PnL LTs penalizes PnL as expected, and interestingly it also penalizes market share since we saw that the price $\epsilon_{t,\alpha}$ gets less competitive on average. PnL is also a a decreasing function of risk aversion.

\begin{figure}[ht]
    \centering
  \centerline{\includegraphics[scale=0.4]{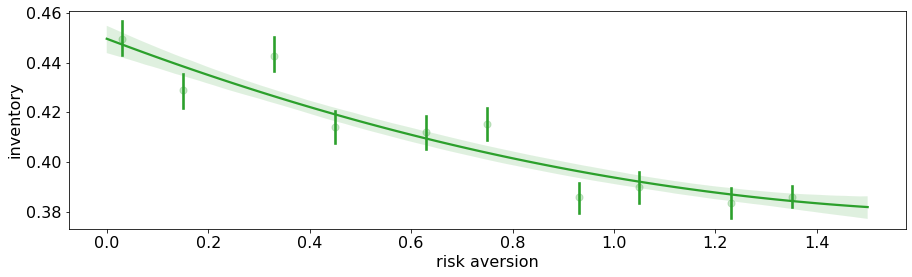}}
  \caption{Average absolute LP inventory $|q_t|$ as a function of risk aversion.}
    \label{figira}
\end{figure}

\begin{figure}[ht]
    \centering
    \begin{subfigure}[b]{\fsizeee \textwidth}
        \centering
        \includegraphics[width=\textwidth]{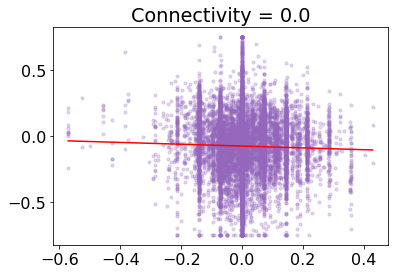}
   \end{subfigure}
    \hfill
    \begin{subfigure}[b]{\fsizeee \textwidth}
        \centering
        \includegraphics[width=\textwidth]{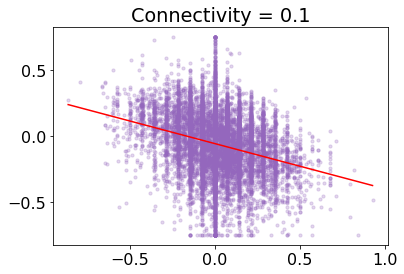}
    \end{subfigure}
    \begin{subfigure}[b]{\fsizeee \textwidth}
        \centering
        \includegraphics[width=\textwidth]{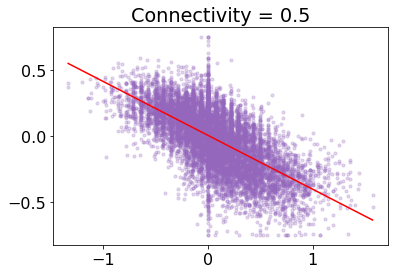}
    \end{subfigure}
    \begin{subfigure}[b]{\fsizeee \textwidth}
        \centering
        \includegraphics[width=\textwidth]{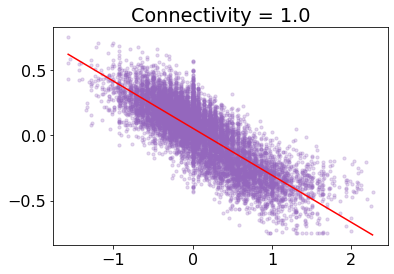}
    \end{subfigure}
    \caption{Skew $\epsilon_{t,\text{skew}}$ (y-axis) vs. inventory (x-axis) for various flow LT connectivity values. The more connected to flow LTs, the more intensely the LP skews, i.e. price asymmetrically as a function of its inventory. This is quantified by the slope of the regression line getting more negative (skewing intensity), cf. also figure \ref{figg1}.}
    \label{figg11}
\end{figure}

\clearpage

\begin{figure}[ht]
    \centering
    \begin{subfigure}[b]{\fsizee \textwidth}
        \centering
        \includegraphics[width=\textwidth]{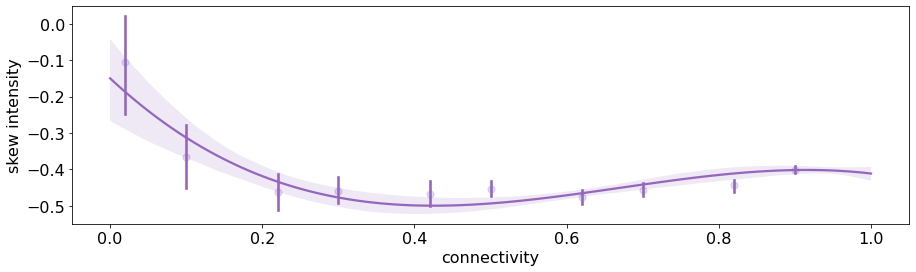}
   \end{subfigure}
    \hfill
    \begin{subfigure}[b]{\fsizee \textwidth}
        \centering
        \includegraphics[width=\textwidth]{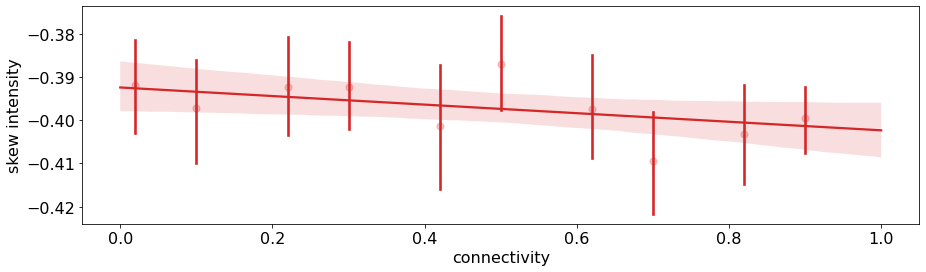}
    \end{subfigure}
    \begin{subfigure}[b]{\fsizee \textwidth}
        \centering
        \includegraphics[width=\textwidth]{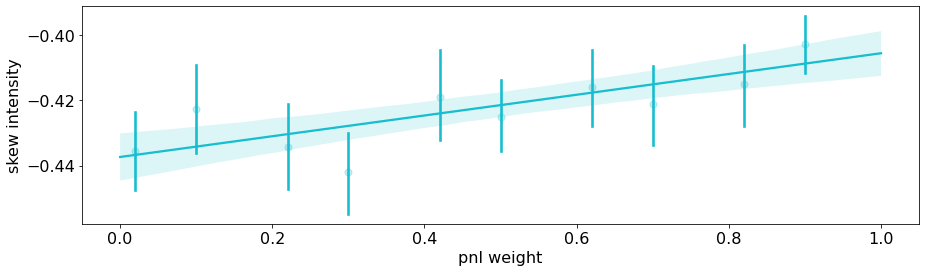}
    \end{subfigure}
    \begin{subfigure}[b]{\fsizee \textwidth}
        \centering
        \includegraphics[width=\textwidth]{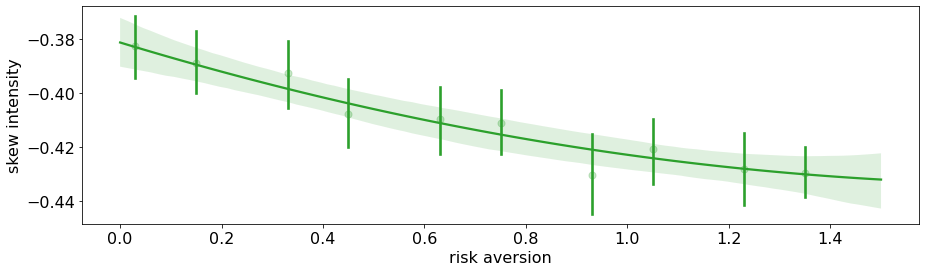}
    \end{subfigure}
    \caption{Skew intensity (y-axis) as a function of flow LT connectivity, PnL LT connectivity, PnL weight, risk aversion (x-axis, from top to bottom). The more intense the skewing, the more negative the skewing intensity (the lower).}
    \label{figg1}
\end{figure}

\begin{figure}[ht]
    \centering
    \begin{subfigure}[b]{\fsizee \textwidth}
        \centering
        \includegraphics[width=\textwidth]{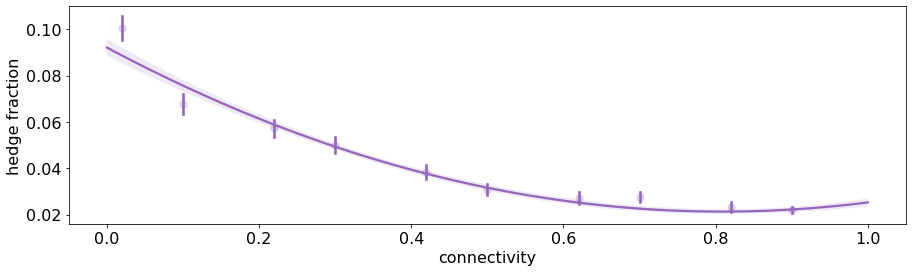}
    \end{subfigure}
    \hfill
    \begin{subfigure}[b]{\fsizee \textwidth}
        \centering
        \includegraphics[width=\textwidth]{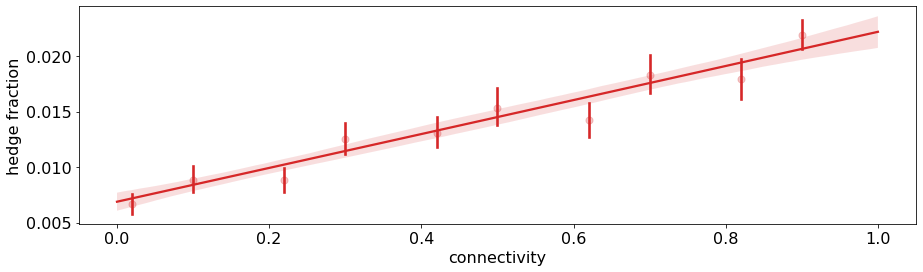}
    \end{subfigure}
    \begin{subfigure}[b]{\fsizee \textwidth}
        \centering
        \includegraphics[width=\textwidth]{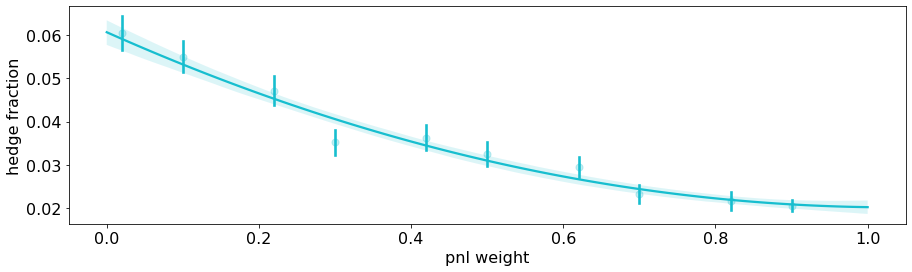}
    \end{subfigure}
    \begin{subfigure}[b]{\fsizee \textwidth}
        \centering
        \includegraphics[width=\textwidth]{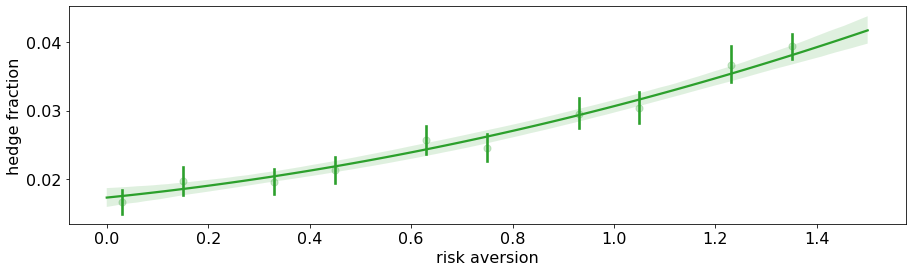}
    \end{subfigure}
    \caption{Hedge fraction (y-axis) as a function of flow LT connectivity, PnL LT connectivity, PnL weight, risk aversion (x-axis, from top to bottom).}
    \label{figg3}
\end{figure}

\begin{figure}[ht]
    \centering
    \begin{subfigure}[b]{\fsizee \textwidth}
        \centering
        \includegraphics[width=\textwidth]{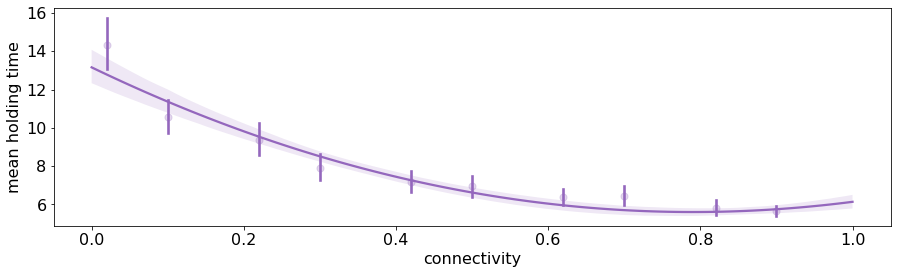}
    \end{subfigure}
    \hfill
    \begin{subfigure}[b]{\fsizee \textwidth}
        \centering
        \includegraphics[width=\textwidth]{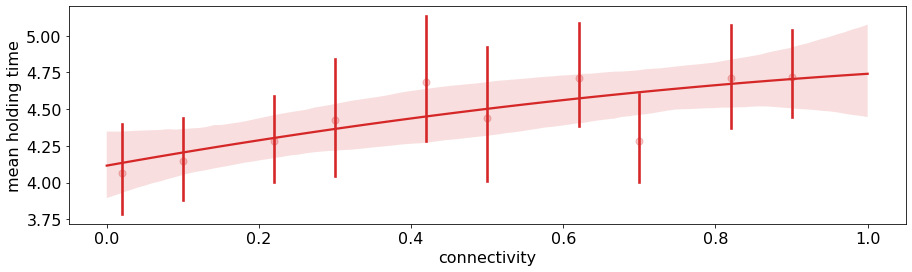}
    \end{subfigure}
    \begin{subfigure}[b]{\fsizee \textwidth}
        \centering
        \includegraphics[width=\textwidth]{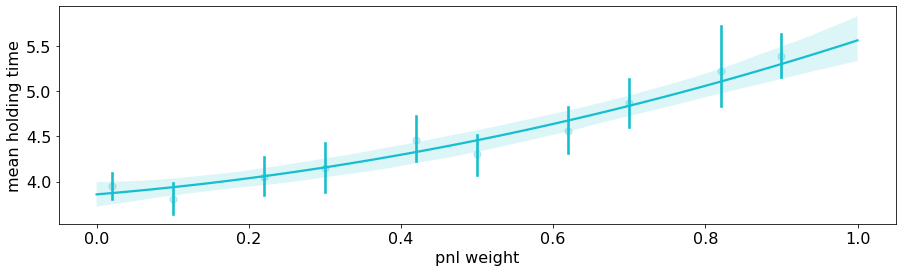}
    \end{subfigure}
    \begin{subfigure}[b]{\fsizee \textwidth}
        \centering
        \includegraphics[width=\textwidth]{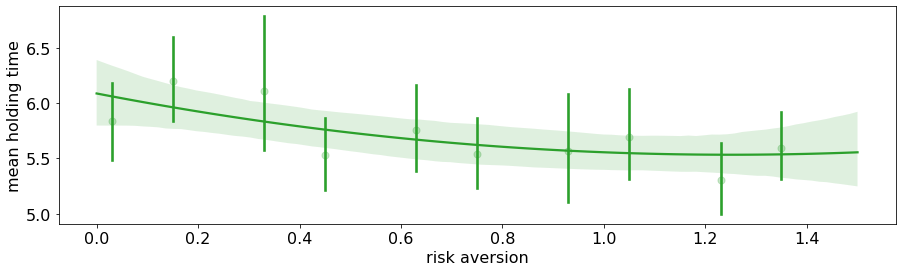}
    \end{subfigure}
    \caption{Mean inventory holding time (y-axis) as a function of flow LT connectivity, PnL LT connectivity, PnL weight, risk aversion (x-axis, from top to bottom).}
    \label{figg4}
\end{figure}

\clearpage

\begin{figure}[ht]
    \centering
    \begin{subfigure}[b]{\fsizee \textwidth}
        \centering
        \includegraphics[width=\textwidth]{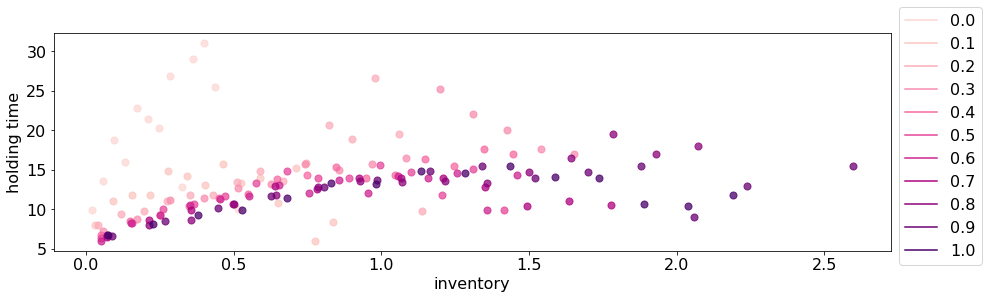}
    \end{subfigure}
    \hfill
    \begin{subfigure}[b]{\fsizee \textwidth}
        \centering
        \includegraphics[width=\textwidth]{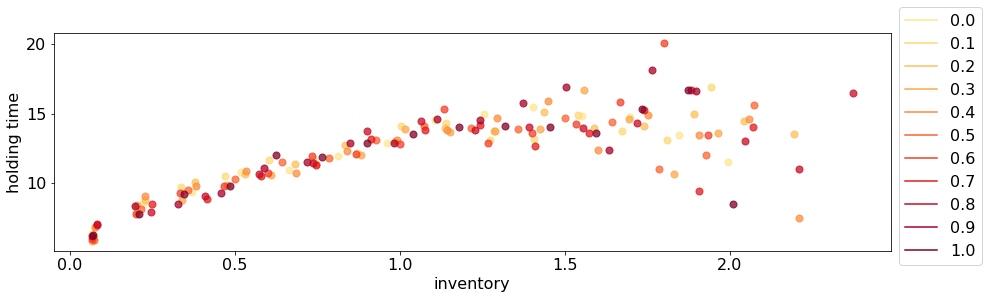}
    \end{subfigure}
    \begin{subfigure}[b]{\fsizee \textwidth}
        \centering
        \includegraphics[width=\textwidth]{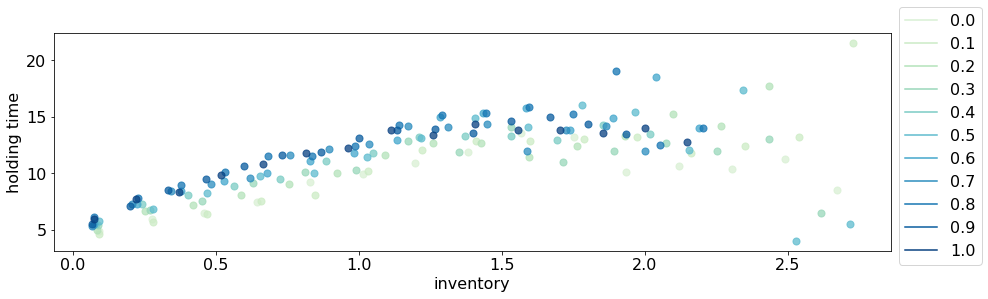}
    \end{subfigure}
    \begin{subfigure}[b]{\fsizee \textwidth}
        \centering
        \includegraphics[width=\textwidth]{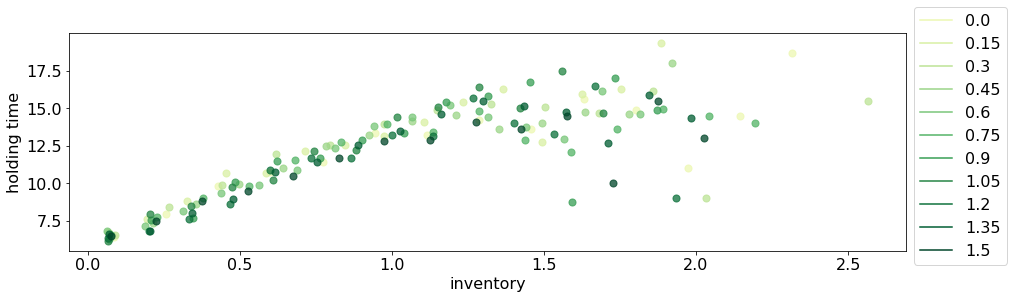}
    \end{subfigure}
    \caption{Inventory holding time (y-axis) as a function of inventory (x-axis), for various values of flow LT connectivity, PnL LT connectivity, PnL weight, risk aversion (from top to bottom).}
    \label{figg5}
\end{figure}

\begin{figure}[ht]
    \centering
    \begin{subfigure}[b]{\fsizee \textwidth}
        \centering
        \includegraphics[width=\textwidth]{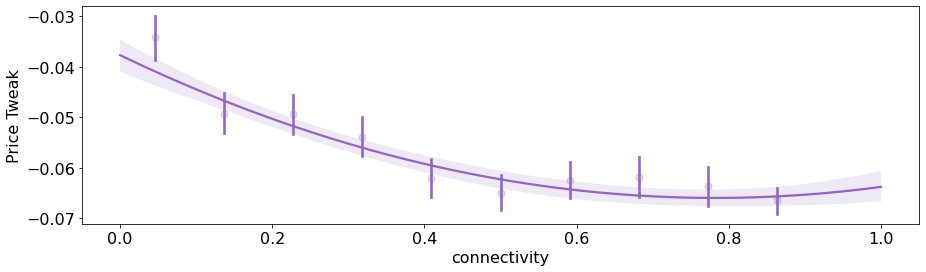}
    \end{subfigure}
    \hfill
    \begin{subfigure}[b]{\fsizee \textwidth}
        \centering
        \includegraphics[width=\textwidth]{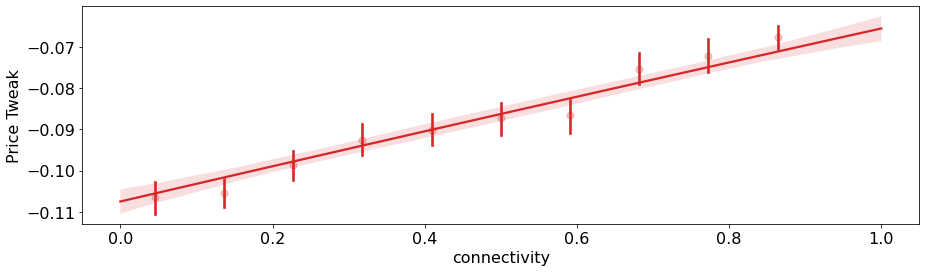}
    \end{subfigure}
    \begin{subfigure}[b]{\fsizee \textwidth}
        \centering
        \includegraphics[width=\textwidth]{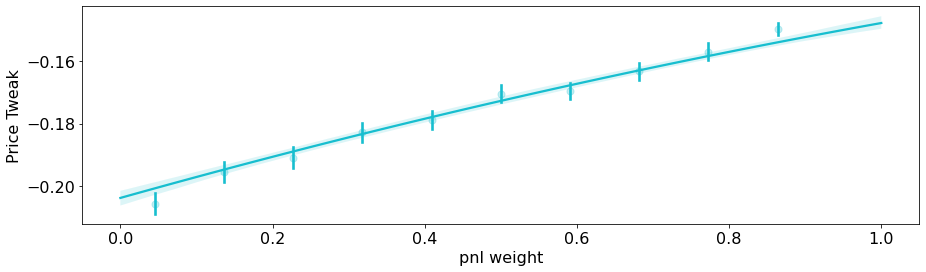}
    \end{subfigure}
    \begin{subfigure}[b]{\fsizee \textwidth}
        \centering
        \includegraphics[width=\textwidth]{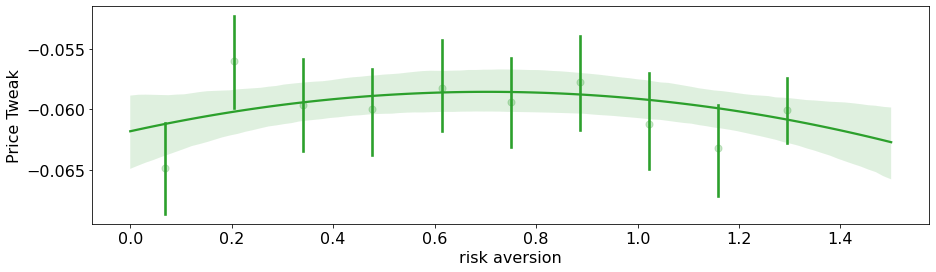}
    \end{subfigure}
    \caption{Average price $\epsilon_{t,\alpha}$ in (\ref{flowrprice}) (y-axis) as a function of flow LT connectivity, PnL LT connectivity, PnL weight, risk aversion (x-axis, from top to bottom).}
    \label{figg7rs}
\end{figure}

\begin{figure}[ht]
    \centering
    \begin{subfigure}[b]{\fsizee \textwidth}
        \centering
        \includegraphics[width=\textwidth]{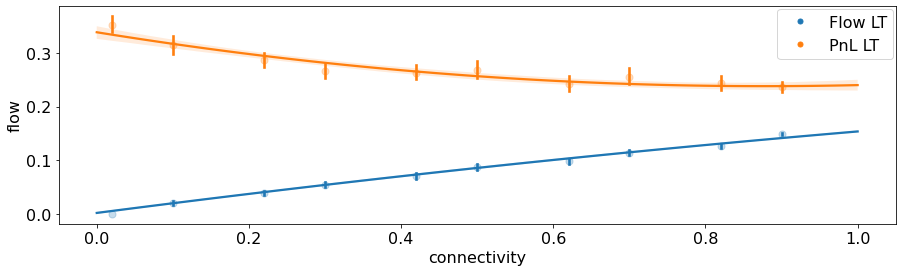}
    \end{subfigure}
    \hfill
    \begin{subfigure}[b]{\fsizee \textwidth}
        \centering
        \includegraphics[width=\textwidth]{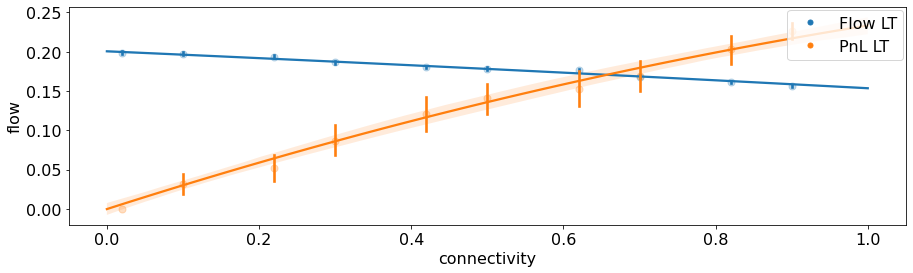}
    \end{subfigure}
    \begin{subfigure}[b]{\fsizee \textwidth}
        \centering
        \includegraphics[width=\textwidth]{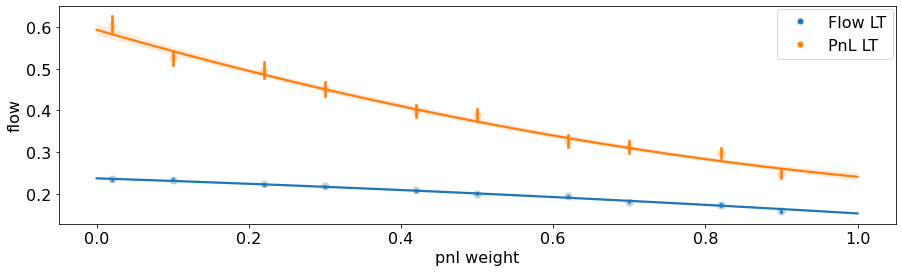}
    \end{subfigure}
    \begin{subfigure}[b]{\fsizee \textwidth}
        \centering
        \includegraphics[width=\textwidth]{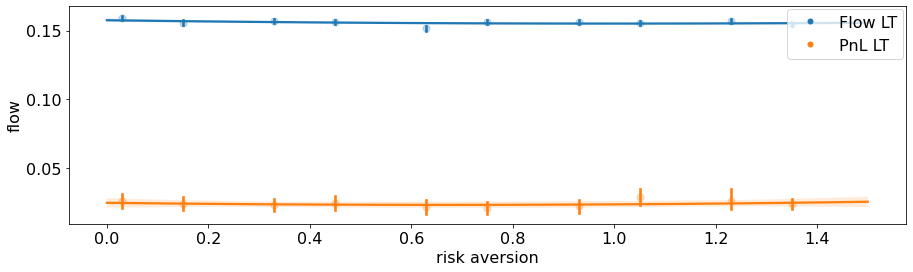}
    \end{subfigure}
    \caption{Flow received by the LP from PnL LTs (orange) and flow LTs (blue) (y-axis) as a function of flow LT connectivity, PnL LT connectivity, PnL weight, risk aversion (x-axis, from top to bottom). Flow curves are normalized by the number of LT agents in each agent class (flow or PnL) that the LP is connected to, so that all curves are comparable and can be interpreted as the average flow received from a typical agent of that class.}
    \label{figg7}
\end{figure}
\clearpage

\begin{figure}[ht]
    \centering
    \begin{subfigure}[b]{\fsizee \textwidth}
        \centering
        \includegraphics[width=\textwidth]{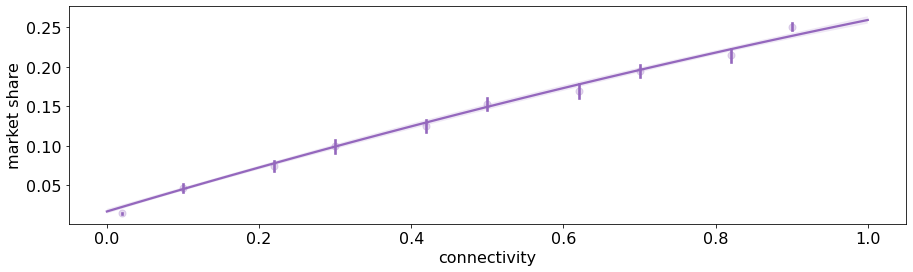}
    \end{subfigure}
    \hfill
    \begin{subfigure}[b]{\fsizee \textwidth}
        \centering
        \includegraphics[width=\textwidth]{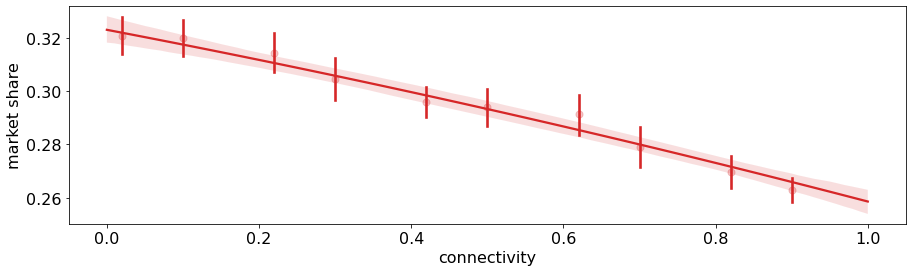}
    \end{subfigure}
    \begin{subfigure}[b]{\fsizee \textwidth}
        \centering
        \includegraphics[width=\textwidth]{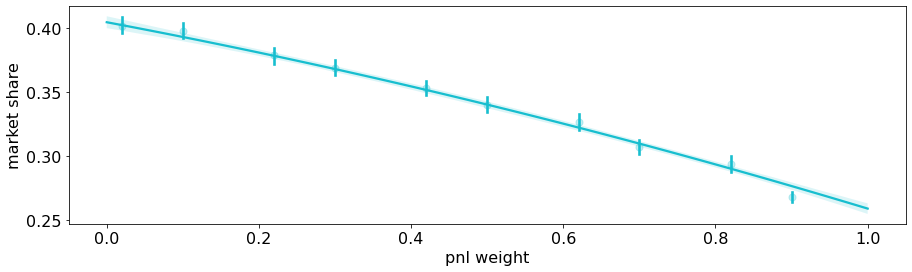}
    \end{subfigure}
    \begin{subfigure}[b]{\fsizee \textwidth}
        \centering
        \includegraphics[width=\textwidth]{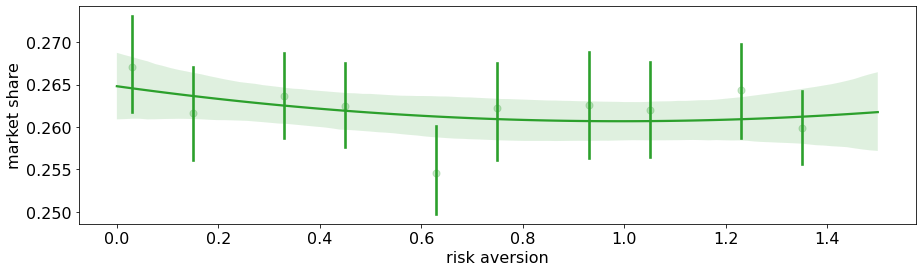}
    \end{subfigure}
    \caption{Market Share (y-axis) as a function of flow LT connectivity, PnL LT connectivity, PnL weight, risk aversion (x-axis, from top to bottom).}
    \label{figg6}
\end{figure}

\begin{figure}[ht]
    \centering
    \begin{subfigure}[b]{\fsizee \textwidth}
        \centering
        \includegraphics[width=\textwidth]{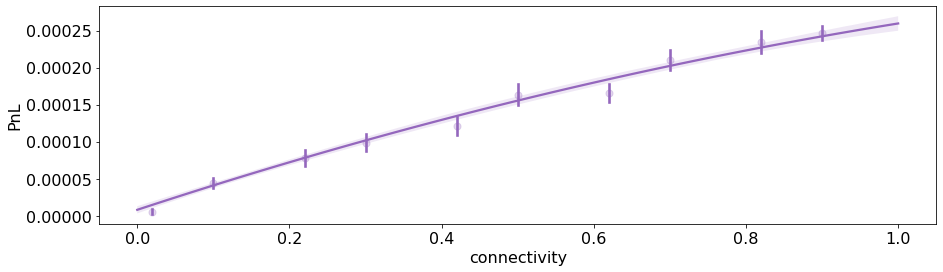}
    \end{subfigure}
    \hfill
    \begin{subfigure}[b]{\fsizee \textwidth}
        \centering
        \includegraphics[width=\textwidth]{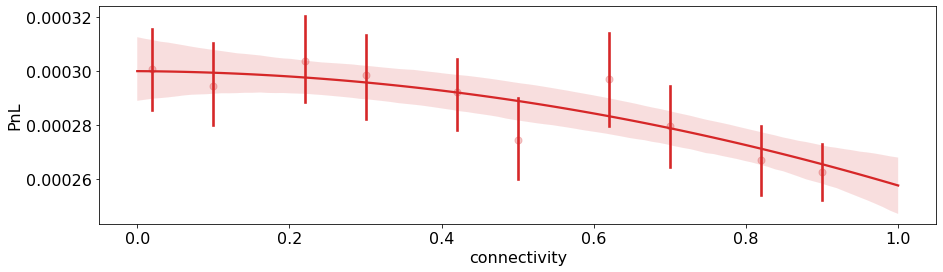}
    \end{subfigure}
    \begin{subfigure}[b]{\fsizee \textwidth}
        \centering
        \includegraphics[width=\textwidth]{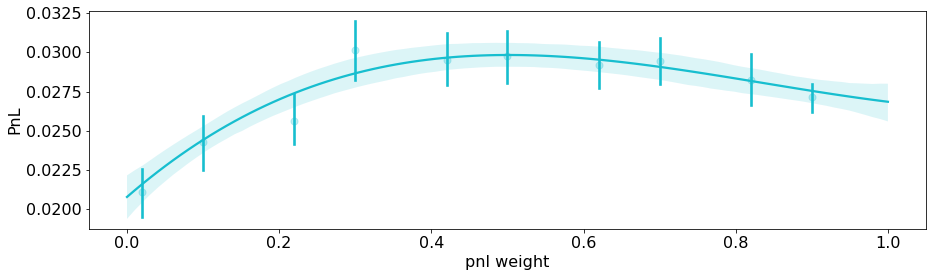}
    \end{subfigure}
    \begin{subfigure}[b]{\fsizee \textwidth}
        \centering
        \includegraphics[width=\textwidth]{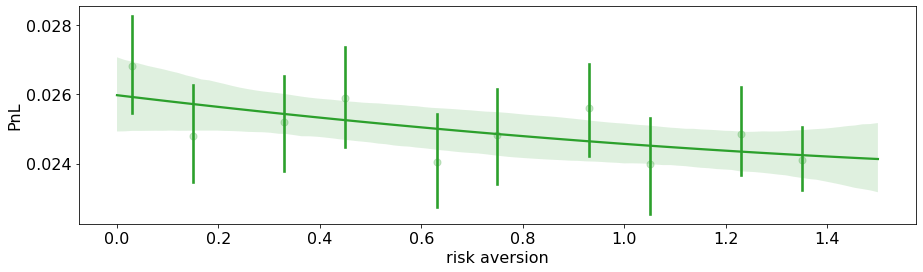}
    \end{subfigure}
    \caption{PnL (y-axis) as a function of flow LT connectivity, PnL LT connectivity, PnL weight, risk aversion (x-axis, from top to bottom).}
    \label{figg8}
\end{figure}

\begin{figure}[ht]
    \centering
    \begin{subfigure}[b]{0.7 \textwidth}
        \centering
        \includegraphics[width=\textwidth]{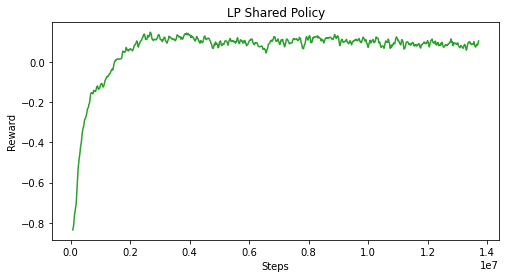}
    \end{subfigure}
    \hfill
    \begin{subfigure}[b]{0.7 \textwidth}
        \centering
        \includegraphics[width=\textwidth]{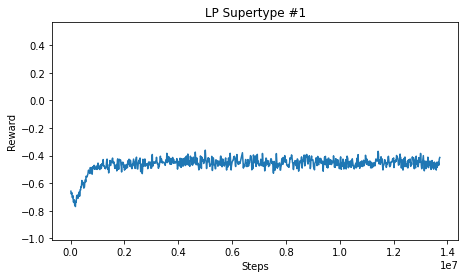}
    \end{subfigure}
    \begin{subfigure}[b]{0.7 \textwidth}
        \centering
        \includegraphics[width=\textwidth]{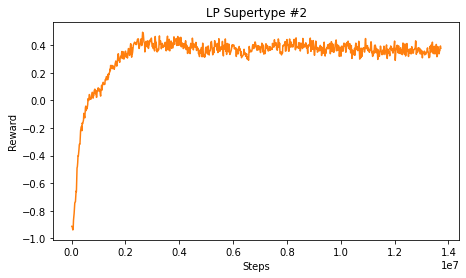}
    \end{subfigure}
    \caption{LP cumulative reward during training: shared policy, and supertypes 1 and 2.}
    \label{figgn1}
\end{figure}

\begin{figure}[ht]
    \centering
    \begin{subfigure}[b]{0.7 \textwidth}
        \centering
        \includegraphics[width=\textwidth]{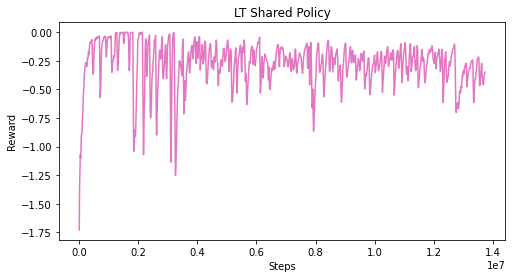}
    \end{subfigure}
    \hfill
    \begin{subfigure}[b]{0.7 \textwidth}
        \centering
        \includegraphics[width=\textwidth]{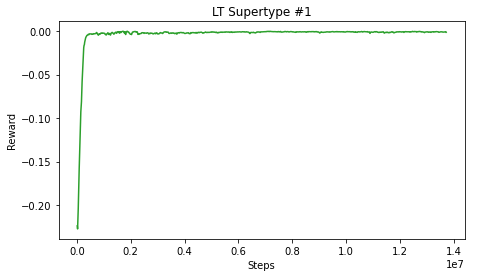}
    \end{subfigure}
    \begin{subfigure}[b]{0.7\textwidth}
        \centering
        \includegraphics[width=\textwidth]{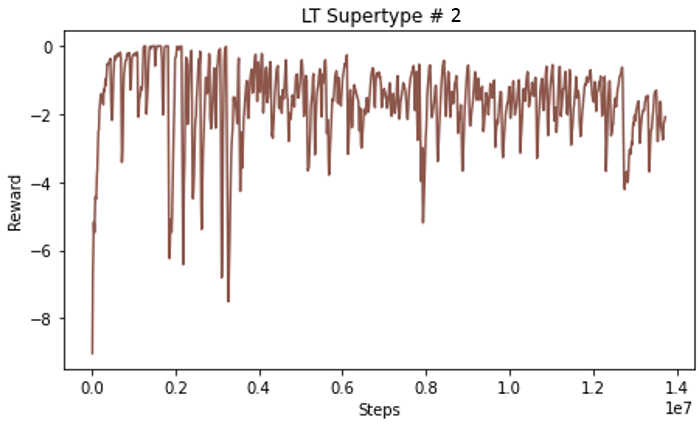}
    \end{subfigure}
    \caption{LT cumulative reward during training: shared policy, and supertypes 1 and 2.}
    \label{figgn2}
\end{figure}

\clearpage
\section{Conclusion}
In this work, we formalized the interactions between liquidity providers and liquidity takers in a dealer market as a multi-stage, multi-type stochastic game. We showed how a suitable design of parametrized families of reward functions coupled with shared policy learning makes our reinforcement-learning-driven agents learn emergent behaviors relative to a wide spectrum of objectives encompassing profit-and-loss, optimal execution and market share, by playing against each other. In particular, we find that liquidity providers naturally learn to balance hedging and skewing as a function of their objectives, where skewing refers to setting their pricing asymmetrically on the bid and ask sides as a function of their inventory. We introduced a novel RL-based calibration algorithm which we found performed well at imposing constraints on the game equilibrium. On the theoretical side, we showed convergence rates for our multi-agent reinforcement learning policy gradient algorithm under a transitivity assumption, closely related to generalized ordinal potential games.

\section*{Acknowledgements}
We would like to thank Arnaud Floesser, Cathy Lin, Chi Nzelu and their FX ATS (Automated Trading Strategies) team from J.P. Morgan Corporate and Investment Bank, for their support and inputs that provided the impetus and motivation for this work.


\section*{Disclaimer}
This paper was prepared for information purposes by the Artificial Intelligence Research group of JPMorgan Chase \& Co and its affiliates (“JP Morgan”), and is not a product of the Research Department of JP Morgan. JP Morgan makes no representation and warranty whatsoever and disclaims all liability, for the completeness, accuracy or reliability of the information contained herein. This document is not intended as investment research or investment advice, or a recommendation, offer or solicitation for the purchase or sale of any security, financial instrument, financial product or service, or to be used in any way for evaluating the merits of participating in any transaction, and shall not constitute a solicitation under any jurisdiction or to any person, if such solicitation under such jurisdiction or to such person would be unlawful. 

\bibliography{finalmafi}
\bibliographystyle{apalike}

\setcounter{section}{0}
\appendix
\renewcommand{\thesection}{\Alph{section}}

\clearpage

\section{Appendix}

\textbf{Proof of theorem \ref{dam}.} The strategy consists in applying convergence results in \cite{EK} (theorem 4.2, ch. 7). We need to show that for every $r>0$, $x>0$:
\begin{align}
\label{ek1}
\lim_{\epsilon \to 0} \sup_{||V|| \leq r} \epsilon^{-1} \PP[||V^\epsilon_{n+1}-V^\epsilon_n|| \geq x | V^\epsilon_n=V] = 0
\end{align}
\begin{align}
\label{ek2}
\lim_{\epsilon \to 0} \sup_{||V|| \leq r}  ||\epsilon^{-1}\EE[V^\epsilon_{n+1}-V^\epsilon_n| V^\epsilon_n=V] - b(V)||= 0
\end{align}
\begin{align}
\label{ek3}
\lim_{\epsilon \to 0} \sup_{||V|| \leq r}  ||\epsilon^{-1}\EE[(V^\epsilon_{n+1}-V^\epsilon_n)(V^\epsilon_{n+1}-V^\epsilon_n)^T| V^\epsilon_n=V] - Q(V)||= 0
\end{align}

where $V^\epsilon_n:=(V^\epsilon_{i,n})_{i \in [1,2m]}$, $\odot$ is the elementwise product, and $b(V):=-\mu^- \odot V + \mu^+$. (\ref{ek2}), (\ref{ek3}) both follow by direct computation. Indeed, we have $V^\epsilon_{n+1}-V^\epsilon_n = -\delta^-_{n,\epsilon} \odot V^\epsilon_n + \delta^+_{n,\epsilon}$, and therefore $\epsilon^{-1}\EE[V^\epsilon_{n+1}-V^\epsilon_n| V^\epsilon_n=V] = b(V)$. This proves (\ref{ek2}). For (\ref{ek3}), we proceed again by direct computation, denoting $\widetilde{\delta}^\pm_{i,n}:=\delta^\pm_{i,n}-\mu^\pm_{i}$:
\begin{align*}
    \begin{split}
    \epsilon^{-1}\EE[(V^\epsilon_{i,n+1}-V^\epsilon_{i,n})(V^\epsilon_{j,n+1}-&V^\epsilon_{j,n})| V^\epsilon_n=V] = \\
    & \EE[\widetilde{\delta}^+_{i,n} \widetilde{\delta}^+_{j,n}] - V_i  \EE[\widetilde{\delta}^-_{i,n} \widetilde{\delta}^+_{j,n}] - V_j  \EE[\widetilde{\delta}^-_{j,n} \widetilde{\delta}^+_{i,n}] + V_i V_j \EE[\widetilde{\delta}^-_{i,n} \widetilde{\delta}^-_{j,n}]
    \end{split}
\end{align*}
Expanding the latter yields $\epsilon^{-1}\EE[(V^\epsilon_{i,n+1}-V^\epsilon_{i,n})(V^\epsilon_{j,n+1}-V^\epsilon_{j,n})| V^\epsilon_n=V]=[Q(V)]_{ij}$, and hence (\ref{ek3}). We complete the proof by showing (\ref{ek1}). Conditionally on $V^\epsilon_n=V$, we have:
\begin{align*}
||V^\epsilon_{n+1}-V^\epsilon_n|| &= ||\epsilon(\mu^+ -\mu^- \odot V) + \sqrt{\epsilon}(\widetilde{\delta}^+_{n} -\widetilde{\delta}^-_{n} \odot V)||\\
& \leq \epsilon c_V + \sqrt{\epsilon}(||\widetilde{\delta}^+_{n}||+2||V||)
\end{align*}
where $c_V:=||\mu^+|| + ||\mu^-|| ||V||$ is a constant only dependent on $V$, and remembering that $\delta^-$ takes value in $[0,1]$ almost surely. We now apply Chebyshev's inequality, since by assumption $\delta^{+}_{i,n}$ are bounded in $L^{2+\eta}$:
\begin{align*}
    & \PP[||V^\epsilon_{n+1}-V^\epsilon_n|| \geq x | V^\epsilon_n=V] \\
    \leq & \PP[\epsilon c_V + \sqrt{\epsilon}(||\widetilde{\delta}^+_{n}||+2||V||) \geq x]\\
    \leq & \frac{\epsilon^{1+\frac{\eta}{2}}}{(x-\epsilon c_V )^{2+\eta}}\EE[(||\widetilde{\delta}^+_{n}||+2||V||)^{2+\eta}]\\
    \leq & \frac{\epsilon^{1+\frac{\eta}{2}}}{(x-\epsilon c_V )^{2+\eta}}(m_\eta+2^{1+\eta}(2||V||)^{2+\eta})
\end{align*}
where the last inequality comes from Holder's inequality and $m_\eta:=\EE[||\widetilde{\delta}^+_{n}||^{2+\eta}]$. We thus have for $||V|| \leq r$, $c_r:= ||\mu^+|| + ||\mu^-|| r$ and $\epsilon \in (0,\frac{x}{2 c_r})$:
$$
\epsilon^{-1} \PP[||V^\epsilon_{n+1}-V^\epsilon_n|| \geq x | V^\epsilon_n=V] \leq \epsilon^{\frac{\eta}{2}} \left(\frac{2}{x}\right)^{2+\eta} (m_\eta + 8^{2\eta} r^{2+\eta})
$$
which shows (\ref{ek1}). \qed

\vspace{3mm}

\textbf{Proof of proposition \ref{corvar}}. Denoting $u_n:=\EE[V_{i,n}]$, we have from the recursion equation $u_{n+1}=(1-\mu_i^-)u_{n}+\mu_i^+$, from which $u_{n}=\mu_i^\infty+(V_{i,0}-\mu_i^\infty)(1-\mu_i^-)^n$. The limit of $u_n$ follows. Denoting $u_t:=\EE[V^*_{i,t}]$, taking expectations we have the ODE $u_t'=\mu_i^-(\mu_i^\infty-u_t)$, from which the limit of $u_t$ follows, similar to the Ornstein-Uhlenbeck case. The covariance follows the same way. Indeed, let $z_n:=\EE[V_{i,n}V_{j,n}]-\mu_i^\infty\mu_j^\infty$, then we have, assuming without loss of generality that $V_{i,0}=\mu^\infty_{i}$:
$$
z_{n+1} = z_n \left( 1-\mu_i^- -\mu_j^- + \mu_i^-\mu_j^-+\rho_{ij}^-\sigma_i^-\sigma_j^-\right)+\sigma_{ij}^{\infty }
$$
which is of the form $z_{n+1} = (1-a) z_n + b$. The limit of $z_n$ follows directly as $\frac{b}{a}$. For the continuous-time limit $z_t:=\EE[V_{i,t}^*V_{j,t}^*]-\mu_i^\infty\mu_j^\infty$, simply observe that the rescaling by $\epsilon$ replaces $n$ by $\lfloor \epsilon^{-1}t \rfloor$, $\mu_i^\pm$ by $\epsilon \mu_i^\pm$ and $\sigma_i^\pm$ by $\sqrt{\epsilon} \sigma_i^\pm$. Hence the discounting term $\left( 1-\mu_i^- -\mu_j^- + \mu_i^-\mu_j^-+\rho_{ij}^-\sigma_i^-\sigma_j^-\right)^n$ becomes $\left( 1-\epsilon \gamma + \circ(\epsilon)\right)^{\lfloor \epsilon^{-1}t \rfloor}$, with $\gamma:=\mu_i^- +\mu_j^- -\rho_{ij}^-\sigma_i^-\sigma_j^-$. The Riemann sum $\sum_{k=0}^{\lfloor \epsilon^{-1}t \rfloor}(1-\gamma \epsilon+ \circ(\epsilon))^{k}$ then converges to $\int_0^t e^{-\gamma x}dx$, which goes to $\gamma^{-1}$ as $t \to \infty$ and yields the desired result.

\vspace{3mm}

\textbf{Proof of proposition \ref{damr}}. We have:
$$
\mathcal{V}(x) := Q(x)-Q(0) = \sigma^{-2} x^2
 +  2\sigma^- \left( \mu^\infty \sigma^- - \rho\sigma^+ \right) x
$$
It is clear that if $\sigma^-=0$, $\mathcal{V}(x)=0$ $\forall x$. The roots of $\mathcal{V}$ are 0 and $2 \left(\rho \frac{\sigma^+}{\sigma^-} - \mu^\infty\right)$. Since $x=V^*-\mu^\infty$, the roots are $V^*=\mu^\infty$ and  $V^*= 2 \rho \frac{\sigma^+}{\sigma^-} - \mu^\infty$. We denote $\gamma_* \leq \gamma^*$ these 2 roots. The leading order coefficient of $\mathcal{V}$ is positive, therefore $\mathcal{V}$ is negative in the region $V^* \in (\gamma_*,\gamma^*)$, and nonnegative otherwise.

$V^*$ is never self-inhibiting if and only if $\gamma_*=\gamma^*$, i.e. $\mu^\infty=2 \rho \frac{\sigma^+}{\sigma^-} - \mu^\infty$, i.e. $\sigma^+ \mu^-\rho = \sigma^-\mu^+$.

We note that $\gamma_*$ and $\gamma^*$ depend on the model parameters only through the ratios $\frac{\sigma^+}{\sigma^-} \rho$ and $\frac{\mu^+}{\mu^-}$. Finally, we have $\partial_\rho \mathcal{V}(x) = -2\sigma^-\sigma^+ x$, which is positive if and only if $x$ is negative.

\vspace{3mm}

\begin{lemma}
\label{f2sis}
Every extended transitive game with payoff $f$ has at least one $(f,\epsilon)$-self-play sequence for every $\epsilon>0$, and every such sequence is finite.
\end{lemma}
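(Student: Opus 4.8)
The plan is to treat the two claims separately: existence is essentially immediate, while finiteness is where extended transitivity does the real work. I would fix an arbitrary $\epsilon>0$ throughout.

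For existence, pick any starting strategy $z_0 \in S$ and observe that the degenerate sequence consisting of the single pair $(x_0,y_0)=(z_0,z_0)$ is, by Definition \ref{defsp}, a $(f,\epsilon)$-self-play sequence of size $2N=0$: there is no odd index $2n+1\leq 0$ at which to check the improvement inequality, so that condition is vacuously satisfied. Hence at least one such sequence always exists, with no assumption on the game needed. (If a nondegenerate sequence is wanted, one extends $z_0$ by any profitable deviation whenever one exists; the finiteness argument below shows this process must halt.)

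For finiteness, I would argue by contradiction: suppose $(x_n,y_n)$ is an infinite $(f,\epsilon)$-self-play sequence generated by $(z_n)_{n\geq 0}$. By definition, for every $n\geq 0$ we have $f(x_{2n+1},y_{2n+1})>f(x_{2n},y_{2n})+\epsilon$, which via the identities $x_{2n}=y_{2n}=z_n$ and $(x_{2n+1},y_{2n+1})=(z_{n+1},z_n)$ reads $f(z_{n+1},z_n)-f(z_n,z_n)>\epsilon$. Applying extended transitivity (Assumption \ref{et}) to the pair $x=z_n$, $y=z_{n+1}$ then yields a $\delta_\epsilon>0$, depending only on $\epsilon$ and not on $n$, with $\Phi(z_{n+1})-\Phi(z_n)>\delta_\epsilon$. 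Telescoping over $n=0,\dots,N-1$ gives $\Phi(z_N)-\Phi(z_0)>N\delta_\epsilon$ for every $N$, so $\Phi(z_N)\to+\infty$, contradicting the boundedness of $\Phi$ required in Assumption \ref{et}. Thus no infinite self-play sequence exists, and every $(f,\epsilon)$-self-play sequence terminates after finitely many steps.

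I expect the only substantive point — and the reason the lemma relies on the \emph{uniform} $\epsilon$-$\delta_\epsilon$ formulation of extended transitivity rather than the weaker pointwise $>0$ condition of generalized ordinal potential games — to be precisely the uniformity of $\delta_\epsilon$ in $n$. Without it, each improvement could contribute a vanishing amount to $\Phi$, the telescoped sum could remain bounded, and the boundedness of $\Phi$ would no longer force termination. Everything else is routine once this is in place.
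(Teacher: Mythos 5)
Your proposal is correct and follows essentially the same argument as the paper's proof: existence via the degenerate size-$0$ sequence $(z_0,z_0)$, and finiteness by applying extended transitivity to each improvement step to get a uniform increment $\delta_\epsilon$ in $\Phi$, then telescoping to contradict the boundedness of $\Phi$. Your closing remark about why the uniform $\epsilon$-$\delta_\epsilon$ formulation (rather than the pointwise condition of generalized ordinal potential games) is needed matches the paper's own discussion of this point.
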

\begin{proof}
First note that such a game has at least one $(f,\epsilon)$-self-play sequence for every $\epsilon>0$ since every $(x,x)$ is a $(f,\epsilon)$-self-play sequence of size 0 (cf. definition 2). Then, let $(x_n,y_n)$ be a $(f,\epsilon)$-self-play sequence. By definition of the self-play sequence we have $f(x_{2n+1},x_{2n})>f(x_{2n},x_{2n})+\epsilon$. By extended transitivity (cf. assumption 1) this implies $\Phi(x_{2n+1})>\Phi(x_{2n}) + \delta_\epsilon$. But $x_{2n+1}=x_{2n+2}$ by definition of the self-play sequence, hence $\Phi(x_{2n+2})>\Phi(x_{2n}) + \delta_\epsilon$. By induction $\Phi(x_{2n})>\Phi(x_{0})+n\delta_\epsilon$ for $n \geq 1$. If the sequence is not finite and since $\delta_\epsilon>0$, one can take the limit as $n \to \infty$ and get a contradiction, since $\Phi$ is bounded by extended transitivity assumption.
\end{proof}

\begin{theorem}
\label{n1}
An extended transitive game with payoff $f$ has a symmetric pure strategy $\epsilon-$Nash equilibrium for every $\epsilon>0$, which further can be reached within a finite number of steps following a $(f,\epsilon)$-self-play sequence.
\end{theorem}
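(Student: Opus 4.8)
The plan is to obtain the desired equilibrium as the terminal point of a greedily generated self-play sequence, using Lemma \ref{f2sis} as the engine. The guiding observation is that the halting condition of such a greedy process is \emph{exactly} the symmetric $\epsilon$-Nash condition, so no separate fixed-point argument is needed.

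First I would fix any starting strategy $z_0 \in S$ and build a sequence $(z_n)_{n \ge 0}$ recursively. Given a valid prefix ending at $z_n$, I check whether there is some $y \in S$ with $f(y,z_n) - f(z_n,z_n) > \epsilon$. If so, I set $z_{n+1} := y$; by Definition \ref{defsp} this legitimately extends the associated $(f,\epsilon)$-self-play sequence, since the odd-indexed pair $(z_{n+1},z_n)$ then satisfies $f(z_{n+1},z_n) > f(z_n,z_n)+\epsilon$. If no such $y$ exists, I halt.

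Second, I would show the recursion must halt. Were it to run forever, the sequence $(z_n)$ would generate an \emph{infinite} $(f,\epsilon)$-self-play sequence, contradicting Lemma \ref{f2sis}, which asserts that every such sequence is finite (its proof forces $\Phi(z_n) > \Phi(z_0) + n\delta_\epsilon$ from extended transitivity, impossible for a bounded $\Phi$). Thus the process stops at some finite index $N$; equivalently, one may simply take a self-play sequence of maximal length, which exists because the same inequality bounds all lengths uniformly by $\mathrm{diam}(\Phi)/\delta_\epsilon$.

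Finally, I would read off the equilibrium. Halting at $z_N$ means no $y \in S$ satisfies $f(y,z_N) > f(z_N,z_N)+\epsilon$, i.e.\ $f(y,z_N) \le f(z_N,z_N)+\epsilon$ for all $y$. By symmetry of the game and the payoff convention of Definition \ref{pay}, this says precisely that neither player can gain more than $\epsilon$ by deviating unilaterally from the symmetric profile $(z_N,z_N)$, so $(z_N,z_N)$ is a symmetric pure-strategy $\epsilon$-Nash equilibrium, reached in $N$ finite steps along the constructed self-play sequence. I do not expect a real obstacle: all the analytic work (termination via boundedness of $\Phi$) is already contained in Lemma \ref{f2sis}, and the only point I would spell out carefully is the logical equivalence between ``the greedy process cannot be extended at $z_N$'' and ``$(z_N,z_N)$ is an $\epsilon$-Nash equilibrium.''
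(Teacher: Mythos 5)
Your proposal is correct and follows essentially the same route as the paper: both arguments use Lemma \ref{f2sis} to bound the length of $(f,\epsilon)$-self-play sequences (the paper takes one of maximal size, you run the greedy extension until it halts, which is the same thing) and then observe that non-extendability at the endpoint is exactly the symmetric $\epsilon$-Nash condition. Your explicit remark that the uniform bound $\mathrm{diam}(\Phi)/\delta_\epsilon$ on sequence lengths is what justifies taking a maximal sequence is a nice touch, but it does not change the substance of the argument.
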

\begin{proof}
Let $\epsilon>0$. Take a $(f,\epsilon)$-self-play sequence. By lemma \ref{f2sis}, such a sequence exists and is finite, hence one may take a $(f,\epsilon)$-self-play sequence of maximal size, say $2N_\epsilon$. Assume that its end point $(x,x)$ is not an $\epsilon-$Nash equilibrium. Then $\exists y$: $f(y,x)>f(x,x)+\epsilon$, which means that one can extend the $(f,\epsilon)$-self-play sequence to size $2N_\epsilon+2$ with entries $(y,x)$ and $(y,y)$, which violates the fact that such a sequence was taken of maximal size.
\end{proof}

\begin{theorem}
\label{n2}
An extended transitive game with continuous payoff $f$ and compact strategy set has a symmetric pure strategy Nash equilibrium.
\end{theorem}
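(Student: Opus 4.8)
The plan is to realize an exact symmetric Nash equilibrium as a limit point of the approximate equilibria furnished by Theorem \ref{n1}, using compactness of the strategy set to extract a convergent subsequence and continuity of $f$ to pass the defining inequality to the limit.

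First I would invoke Theorem \ref{n1} with $\epsilon = \frac{1}{n}$ for each integer $n \geq 1$. Since the game is extended transitive, this produces, for each $n$, a symmetric pure strategy $\frac{1}{n}$-Nash equilibrium $(x_n,x_n)$, that is a point $x_n \in S$ with
$$
f(y,x_n) \leq f(x_n,x_n) + \frac{1}{n} \qquad \forall y \in S.
$$
This is the only step in which the game-theoretic content is used: extended transitivity (assumption \ref{et}) and the finiteness of $(f,\epsilon)$-self-play sequences (lemma \ref{f2sis}) guarantee that such an approximate equilibrium exists. Everything downstream is soft analysis.

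Next, since $S$ is compact, the sequence $(x_n)_{n\geq 1}$ admits a subsequence $(x_{n_k})_k$ converging to some $x^* \in S$. Fixing an arbitrary $y \in S$ and writing the defining inequality along this subsequence,
$$
f(y,x_{n_k}) \leq f(x_{n_k},x_{n_k}) + \frac{1}{n_k},
$$
I would let $k \to \infty$. By continuity of $f$ the left-hand side tends to $f(y,x^*)$ and the first term on the right to $f(x^*,x^*)$, while $\frac{1}{n_k} \to 0$; hence $f(y,x^*) \leq f(x^*,x^*)$. As $y$ was arbitrary, no unilateral deviation from $x^*$ is profitable for the first player, and by symmetry of the game the identical inequality rules out a profitable deviation for the second player, so $(x^*,x^*)$ is a symmetric pure strategy Nash equilibrium.

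Since this is essentially a compactness-plus-continuity passage to the limit, I do not expect a genuine obstacle; the only point requiring care is that the limit in the displayed inequality be taken jointly, because the diagonal term $f(x_{n_k},x_{n_k})$ has \emph{both} arguments varying. This is precisely why continuity of $f$ on the product $S \times S$, rather than mere separate continuity, is the natural hypothesis, and why compactness of $S$ (ensuring the existence of the limit point $x^*$) cannot be dispensed with.
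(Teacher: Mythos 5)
Your proposal is correct and follows essentially the same route as the paper's own proof: invoke Theorem \ref{n1} to obtain symmetric $\epsilon_n$-equilibria with $\epsilon_n \to 0$, extract a convergent subsequence by compactness, and pass the equilibrium inequality to the limit using (joint) continuity of $f$. Your version of fixing $y$ before taking the limit is just a slightly more explicit rendering of the paper's $\sup_y$ step, not a different argument.
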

\begin{proof}
By theorem \ref{n1}, take a sequence of $\epsilon_n$-Nash equilibria with $\epsilon_n \to 0$ and corresponding $(f,\epsilon_n)$-self-play sequence endpoints $(x_n,x_n)$. By compactness assumption, this sequence has a converging subsequence $(x_{m_n},x_{m_n})$, whose limit point $(x_*,x_*)$ belongs to the strategy set. We have by definition of $\epsilon_{m_n}$-Nash equilibrium that $f(x_{m_n},x_{m_n}) \geq \sup_y f(y,x_{m_n})-\epsilon_{m_n}$. Taking the limit as $n \to \infty$ and using continuity of $f$, we get $
f(x_{*},x_{*}) \geq \sup_y f(y,x_{*})$, which shows that $(x_*,x_*)$ is a symmetric pure strategy Nash equilibrium. 
\end{proof}

\textbf{Proof of theorem \ref{n11}.} The first part of the theorem follows from theorem \ref{n1}. Then, we have by assumption that $\Ss^{LP}$, $\As^{LP}$ and $\Ss^{\lambda^{LP}}$ are finite. Denote $m:=|\Ss^{LP}| \cdot |\As^{LP}| \cdot |\Ss^{\lambda^{LP}}|$. In that case we have:
$$
\mathcal{X}^{LP}=\{ (x^{s,\lambda}_{a} )\in [0,1]^{m}: \forall s \in [1,|\Ss^{LP}|], \lambda \in [1,|\Ss^{\lambda^{LP}}|], \hspace{2mm} \sum_{a=1}^{|\As^{LP}|} x^{s,\lambda}_a =1\}
$$
$\mathcal{X}^{LP}$ is a closed and bounded subset of $[0,1]^{m}$, hence by Heine–Borel theorem it is compact. Note that closedness comes from the fact that summation to 1 is preserved by passing to the limit. By assumption, the rewards are bounded, so by lemma 1, $V_{\Lambda_i}$ is continuous for all $i$, which yields continuity of $\widehat{V}$, hence we can apply theorem \ref{n2} to conclude.

\textbf{Proof of lemma \ref{lvf2}.}
Since by assumption $\Ss^{LP}$, $\As^{LP}$ and $\Ss^{\lambda^{LP}}$ are finite, we will use interchangeably sum and integral over these spaces. Let us denote the total variation metric for probability measures $\pi_1,\pi_2$ on $\mathcal{X}^{LP}$:
$$
\rho_{TV}(\pi_1,\pi_2):=\frac{1}{2} \max_{s,\lambda}\sum_{a \in \mathcal{A}^{LP}} |\pi_1(a|s,\lambda)-\pi_2(a|s,\lambda)|$$ 
and let us equip the product space $\mathcal{X}^{LP} \times \mathcal{X}^{LP}$ with the metric:
$$
\rho_{TV}((\pi_1,\pi_2),(\pi_3,\pi_4)):=\rho_{TV}(\pi_1,\pi_3)+\rho_{TV}(\pi_2,\pi_4).
$$
We work for simplicity with the infinite horizon case $T=+\infty$ and $\zeta<1$, but the finite horizon-case is dealt with the same way. In the calculations below, we omit the policy $\pi^{LT}$ of the LT agents since it is fixed and hence can be subsumed in the transition dynamics $\mathcal{T}$, similarly we omit the LP superscript, since it is clear we are dealing with agents of such class and there is no ambiguity. Remember that $z^{(i)}_t:=(s^{(i)}_t,a^{(i)}_t,\lambda_i)$. Let:
\begin{align*}
V_{\Lambda_i}(\pi_1,\pi_2,\bm{s},\bm{\lambda}):=\EE_{\substack{a^{(i)}_t \sim \pi_1(\cdot|\cdot,\lambda_i), \hspace{1mm} a^{(j)}_t \sim \pi_2(\cdot|\cdot,\lambda_j)}} \left[ \sum_{t=0}^\infty \zeta^t \RR(z^{(i)}_t,\bm{z^{(-i)}_t})|\bm{s_0}=\bm{s}\right], \hspace{2mm} j \neq i
\end{align*}
so that:
\begin{align*}
\widetilde{V}_{\Lambda_i}(\pi_1,\pi_2) = \int_{\bm{s}} \int_{\bm{\lambda}} V_{\Lambda_i}(\pi_1,\pi_2,\bm{s},\bm{\lambda}) \cdot \Pi_{j=1}^n [\mu^0_{\lambda_j}(d s_j) p_{\Lambda_j}(d \lambda_j)] 
\end{align*}

Then we have:
\begin{align*}
&V_{\Lambda_i}(\pi_1,\pi_2,\bm{s},\bm{\lambda})=  \int_{\bm{a}} 
\RR(z^{(i)},\bm{z^{(-i)}}) \pi_1(d a_i|s_i,\lambda_i) \Pi_{j\neq i} \pi_2(d a_j|s_j,\lambda_j)\\
&+ \zeta \int_{\bm{a}} \int_{\bm{s'}}\mathcal{T}(\bm{z},d\bm{s'}) V_{\Lambda_i}(\pi_1,\pi_2,\bm{s'},\bm{\lambda}) \pi_1(d a_i|s_i,\lambda_i) \Pi_{j\neq i} \pi_2(d a_j|s_j,\lambda_j)
\end{align*}

The goal is to compute $|V_{\Lambda_i}(\pi_1,\pi_2,\bm{s},\bm{\lambda})-V_{\Lambda_i}(\pi_3,\pi_4,\bm{s},\bm{\lambda})|$
and show that the latter is small provided that $\rho_{TV}((\pi_1,\pi_2),(\pi_3,\pi_4))$ is small. Let us use the notations:
\begin{align*}
c_1(\pi_1,\pi_2):=\int_{\bm{a}} \RR(z^{(i)},\bm{z^{(-i)}}) \pi_1(d a_i|s_i,\lambda_i) \Pi_{j\neq i} \pi_2(d a_j|s_j,\lambda_j)
\end{align*}
$$
c_2(\pi_1,\pi_2):=\int_{\bm{a}} \int_{\bm{s'}}\mathcal{T}(\bm{z},d\bm{s'}) V_{\Lambda_i}(\pi_1,\pi_2,\bm{s'},\bm{\lambda}) \pi_1(d a_i|s_i,\lambda_i) \Pi_{j\neq i} \pi_2(d a_j|s_j,\lambda_j)
$$
so that:
\begin{align*}
&V_{\Lambda_i}(\pi_1,\pi_2,\bm{s},\bm{\lambda})=  c_1(\pi_1,\pi_2) + \zeta c_2(\pi_1,\pi_2)
\end{align*}

By assumption $\RR(\cdot,\cdot)$ is bounded, say by $\RR_{max}$, hence we have:
\begin{align*}
& |c_1(\pi_1,\pi_2)-c_1(\pi_3,\pi_4)|\\
& \leq \RR_{max} \int_{\bm{a}} |\pi_1(d a_i|s_i,\lambda_i) \Pi_{j\neq i} \pi_2(d a_j|s_j,\lambda_j)-\pi_3(d a_i|s_i,\lambda_i) \Pi_{j\neq i} \pi_4(d a_j|s_j,\lambda_j)|\\
& \leq \RR_{max} \int_{a_i} |\pi_1(d a_i|s_i,\lambda_i) -\pi_3(d a_i|s_i,\lambda_i)|\\
& + \RR_{max} \int_{\bm{a_{-i}}} |\Pi_{j\neq i} \pi_2(d a_j|s_j,\lambda_j)-\Pi_{j\neq i} \pi_4(d a_j|s_j,\lambda_j)|\\
& \leq 2\RR_{max} \rho_{TV}(\pi_1,\pi_3) + 2\RR_{max}(n-1) \rho_{TV}(\pi_2,\pi_4) \leq 2n\RR_{max} \rho_{TV}((\pi_1,\pi_2),(\pi_3,\pi_4))
\end{align*}

For the term $|c_2(\pi_1,\pi_2)-c_2(\pi_3,\pi_4)|$, we can split:
\begin{align*}
V_{\Lambda_i}(\pi_1,\pi_2,\bm{s'},\bm{\lambda}) &\pi_1(d a_i|s_i,\lambda_i) \Pi_{j\neq i} \pi_2(d a_j|s_j,\lambda_j)\\
&-V_{\Lambda_i}(\pi_3,\pi_4,\bm{s'},\bm{\lambda}) \pi_3(d a_i|s_i,\lambda_i) \Pi_{j\neq i} \pi_4(d a_j|s_j,\lambda_j)\\
=V_{\Lambda_i}(\pi_1,\pi_2,\bm{s'},\bm{\lambda})&[ \pi_1(d a_i|s_i,\lambda_i) \Pi_{j\neq i} \pi_2(d a_j|s_j,\lambda_j)
-\pi_3(d a_i|s_i,\lambda_i) \Pi_{j\neq i} \pi_4(d a_j|s_j,\lambda_j)]\\
+\pi_3(d a_i|s_i,\lambda_i) \Pi_{j\neq i} &\pi_4(d a_j|s_j,\lambda_j) [V_{\Lambda_i}(\pi_1,\pi_2,\bm{s'},\bm{\lambda}) -V_{\Lambda_i}(\pi_3,\pi_4,\bm{s'},\bm{\lambda}) ]
\end{align*}

Since $V_{\Lambda_i}$ is bounded by $\RR_{max}(1-\zeta)^{-1}$, and noting that we have, as for $c_1$, that:
\begin{align*}
&\int_{\bm{a}} | \pi_1(d a_i|s_i,\lambda_i) \Pi_{j\neq i} \pi_2(d a_j|s_j,\lambda_j)
-\pi_3(d a_i|s_i,\lambda_i) \Pi_{j\neq i} \pi_4(d a_j|s_j,\lambda_j)| \\ &\leq  2n\rho_{TV}((\pi_1,\pi_2),(\pi_3,\pi_4))
\end{align*}
we then have:
\begin{align*}
|c_2(\pi_1,\pi_2)-c_2(\pi_3,\pi_4)| &\leq 2n\RR_{max}(1-\zeta)^{-1} \rho_{TV}((\pi_1,\pi_2),(\pi_3,\pi_4))\\
& + \max_{\bm{s},\bm{\lambda}} |V_{\Lambda_i}(\pi_1,\pi_2,\bm{s},\bm{\lambda})-V_{\Lambda_i}(\pi_3,\pi_4,\bm{s},\bm{\lambda})|
\end{align*}
We then have, collecting all terms together:
\begin{align*}
|V_{\Lambda_i}(\pi_1,\pi_2,\bm{s},\bm{\lambda})-V_{\Lambda_i}(\pi_3,\pi_4,\bm{s},\bm{\lambda})|&\leq 2n\RR_{max}(1+\zeta(1-\zeta)^{-1})\rho_{TV}((\pi_1,\pi_2),(\pi_3,\pi_4))\\
& + \zeta \max_{\bm{s},\bm{\lambda}} |V_{\Lambda_i}(\pi_1,\pi_2,\bm{s},\bm{\lambda})-V_{\Lambda_i}(\pi_3,\pi_4,\bm{s},\bm{\lambda})|
\end{align*}
Taking the maximum over $\bm{s},\bm{\lambda}$ on the left hand-side and rearranging terms finally yields:
\begin{align*}
&|V_{\Lambda_i}(\pi_1,\pi_2)-V_{\Lambda_i}(\pi_3,\pi_4)| \leq \max_{\bm{s},\bm{\lambda}} |V_{\Lambda_i}(\pi_1,\pi_2,\bm{s},\bm{\lambda})-V_{\Lambda_i}(\pi_3,\pi_4,\bm{s},\bm{\lambda})| \\ &\leq 2n(1-\zeta)^{-1} \RR_{max}(1+\zeta(1-\zeta)^{-1})\rho_{TV}((\pi_1,\pi_2),(\pi_3,\pi_4))\\
& = 2n(1-\zeta)^{-2} \RR_{max}\rho_{TV}((\pi_1,\pi_2),(\pi_3,\pi_4))
\end{align*}
which yields the desired continuity result. \qed

\textbf{Proof of proposition \ref{et2}.} Direction "$\Leftarrow$". Assume that there exists $\Phi$ and $\varphi$ as described, and assume $u(y,x)-u(x,x)>\epsilon$. Then $(x,y) \in S_0$ and therefore:
$$
\Phi(y) - \Phi(x) \geq \varphi(u(y,x)-u(x,x)) > \varphi(\epsilon) > 0,
$$
where we have used the fact that $\varphi$ is strictly increasing on $[0,+\infty)$ and positive on $(0,+\infty)$. Therefore, $u$ is extended transitive with $\delta_\epsilon:=\varphi(\epsilon)$.

Direction "$\Rightarrow$". Let: 
$$
S_\epsilon := \{(x,y) \in S: u(y,x)-u(x,x) > \epsilon\},\hspace{3mm} A_\epsilon:= \{\delta: \Phi(y)-\Phi(x) > \delta, \hspace{3mm} \forall (x,y) \in S_\epsilon\}.
$$

By extended transitivity assumption, $A_\epsilon$ contains at least an element $\delta_\epsilon>0$ for all $\epsilon>0$. Take $(\epsilon_n)_{n \in \mathbb{N}}$ to be the non-negative rational numbers, with $\epsilon_0:=0$ and $\delta_0:=0$. The sequence $(\delta_{\epsilon_n})_{n \geq 0}$ can be taken as non-decreasing, since for $n<m$ we have $S_{\epsilon_m} \subseteq S_{\epsilon_n}$ and therefore $A_{\epsilon_n} \subseteq A_{\epsilon_m}$. For example one can take $\delta_{\epsilon_n}= \sup A_{\epsilon_n}$, which is finite since $\Phi$ is bounded by assumption. We want to remove the duplicates to get a strictly increasing subsequence. For this, whenever $\delta_{\epsilon_{n_1}}=...=\delta_{\epsilon_{n_K}}$ for $n_1<...<n_K$, we retain only the last point $\delta_{\epsilon_{n_K}}$. This yields a strictly increasing subsequence that we denote by $(\delta_{\epsilon_{q_n}})_{n \geq 0}$. We use this approach so that for any duplicate $\delta_{\epsilon_{n_j}}$ that we remove, the point $(\epsilon_{n_j},\delta_{\epsilon_{n_j}})$ is above the line joining its two closest neighbors $(\epsilon_{q_{low}},\delta_{\epsilon_{q_{low}}})$, $(\epsilon_{q_{high}},\delta_{\epsilon_{q_{high}}})$ in the newly constructed subsequence. To conclude, we take $\widetilde{\varphi}$ to be the piecewise linear function joining the points $(\epsilon_{q_n}, \delta_{\epsilon_{q_n}})_{n \geq 0}$, and $\varphi(z):=\widetilde{\varphi}(\frac{1}{2}z)$. $\varphi$ is continuous and strictly increasing on $[0,+\infty)$, and $\varphi(0)=0$. Due to the approach we used to remove duplicates, we have $\delta_{\epsilon} \geq \widetilde{\varphi}(\epsilon)=\varphi(2\epsilon)$ for any positive rational number $\epsilon$, and actually for any positive real number $\epsilon$ using density of $\mathbb{Q}$ in $\mathbb{R}$. Therefore for any $(x,y) \in S_\epsilon$, $\Phi(y)-\Phi(x) \geq \delta_\epsilon \geq \varphi(2\epsilon)$. Finally, let $(x,y) \in S_0$, i.e. $u(y,x)-u(x,x)>0$. Let $\epsilon:=\frac{1}{2}(u(y,x)-u(x,x))>0$. Then $u(y,x)-u(x,x)> \epsilon$, and therefore:
$$
\Phi(y)-\Phi(x) \geq \varphi(2\epsilon) = \varphi(u(y,x)-u(x,x)). \hspace{5mm}\qed
$$

\textbf{Proof of theorem \ref{gradet}.} Remember that under the direct policy parametrization, each tuple $(s,a,\lambda)$ has its own parameter, $\pi_{\theta}(a|s,\lambda) := \theta_{s,\lambda, a}$, where $(\theta_{s,\lambda, a})_{a \in \As^{LP}}$ is in the probability simplex for every $s \in  \Ss^{LP}$, $\lambda \in \Ss^{\lambda^{LP}}$. For this reason, we will abuse notations and denote functions of $\pi_{\theta}$ in terms of $\theta$ only, for example $\widehat{V}(\theta,\theta)$ instead of $\widehat{V}(\pi_{\theta},\pi_{\theta})$. Throughout, $x$ and $y$ will denote real-valued vectors which value will change depending on the context. To make notations lighter, we will assume that by default, the initial state distribution is $\widetilde{\mu}^0$. When we will want to consider the distribution $\mu^0$, we will use explicit notations.

Recall that a function $f: \R^d \to \R$ is $\beta$-smooth if its gradient is Lipschitz, i.e. for all $x,y$:
$$
||\nabla f(x)-\nabla f(y)||_2 \leq \beta ||x-y||_2.
$$

In this case we have for all $x,y$ (\cite{convpg}, eq. (24)):
$$
|f(y)-f(x)-\nabla f(x) \cdot (y-x)| \leq \frac{\beta}{2} ||x-y||_2^2. 
$$

This yields in particular:
$$
f(y)-f(x) \geq - \frac{\beta}{2} ||x-y||_2^2 + \nabla f(x) \cdot (y-x).
$$

Consider in particular the case where $y=P_{\mathcal{X}^{LP}}[x + \alpha \nabla f(x)]$, that is $y$ is a PGA update to $x$. We want to show that:
\begin{align*}
    \begin{split}
        & \nabla f(x) \cdot (y-x) \geq \frac{1}{\alpha} (y-x) \cdot (y-x) \\
        \Leftrightarrow \hspace{4mm} & (y - x - \alpha \nabla f(x)) \cdot (y-x) \leq 0 \\
        \Leftrightarrow \hspace{4mm} & (P_{\mathcal{X}^{LP}}[z] - z) \cdot (P_{\mathcal{X}^{LP}}[z]-x) \leq 0, \hspace{4mm} z:=x + \alpha \nabla f(x).
    \end{split}
\end{align*}

But the latter is always true $\forall x, z$ by \cite{bubeck}, lemma 3.1. This yields the following inequality, always true when $f$ is $\beta$-smooth and $y=P_{\mathcal{X}^{LP}}[x + \alpha \nabla f(x)]$:
\begin{align}
\label{betass}
f(y)-f(x) \geq \left(\frac{1}{\alpha}-\frac{\beta}{2}\right) ||x-y||_2^2.
\end{align}

For fixed $x$, the function $y \to \widetilde{V}_{\Lambda^{LP}_i}(y,x)$ represents how the expected utility of an agent of supertype $i$ changes when all other agents remain fixed. This is equivalent to a single-agent problem, therefore we will be able to use the related results obtained in \cite{convpg}. We have by the latter (lemma D.3) that the single-agent value function $\widetilde{V}_{\Lambda^{LP}_i}$ is $\beta$-smooth for all agents $i$, where:
$$
\beta := \frac{2 \zeta |\As^{LP}| \RR_{max}}{(1-\zeta)^3}.
$$

But this implies that $\widehat{V}$ is also $\beta$-smooth, since:
\begin{equation*}
    \begin{split}
    ||\nabla_{1} \widehat{V}(y,x) - \nabla_{1} \widehat{V}(x,x)||_2 &= ||\frac{1}{n_{LP}} \sum_{i=1}^{n_{LP}} \nabla_{1} \widetilde{V}_{\Lambda^{LP}_i}(y,x) - \nabla_{1} \widetilde{V}_{\Lambda^{LP}_i}(x,x)||_2 \\
    & \leq \frac{1}{n_{LP}}  \sum_{i=1}^{n_{LP}} || \nabla_{1} \widetilde{V}_{\Lambda^{LP}_i}(y,x) - \nabla_{1} \widetilde{V}_{\Lambda^{LP}_i}(x,x)||_2\\
    & \leq \frac{1}{n_{LP}}  \sum_{i=1}^{n_{LP}} \beta ||x-y||_2 = \beta ||x-y||_2.
    \end{split}
\end{equation*}

Therefore, we have for one PGA iteration, using (\ref{betass}) with $\alpha<\frac{2}{\beta}$, so that $\frac{1}{\alpha}-\frac{\beta}{2} > 0$:
$$
 \widehat{V}(\theta_{n+1},\theta_n) - \widehat{V}(\theta_n,\theta_n) \geq \left(\frac{1}{\alpha}-\frac{\beta}{2}\right) ||\theta_{n+1}-\theta_n||_2^2.
$$

By extended transitivity assumption, let $\varphi$ the function in proposition \ref{et2}. Using the fact that $\varphi$ is strictly increasing on $[0,+\infty)$:
\begin{align}
    \label{p1}
    \Phi(\theta_{n+1})-\Phi(\theta_{n}) \geq \varphi( \widehat{V}(\theta_{n+1},\theta_n) - \widehat{V}(\theta_n,\theta_n)) \geq \varphi\left(\left(\frac{1}{\alpha}-\frac{\beta}{2}\right) ||\theta_{n+1}-\theta_n||_2^2\right).
\end{align}

Note that $diam(\Phi)<+\infty$ since $\Phi$ is bounded by extended transitivity assumption. Denote $n_* \in \argmin_{n \in [0,N-1]}||\theta_{n+1}-\theta_n||_2$. Taking the sum on both sides gives us:
$$
diam(\Phi) \geq \Phi(\theta_{N})-\Phi(\theta_{0}) \geq N \varphi\left(\left(\frac{1}{\alpha}-\frac{\beta}{2}\right) ||\theta_{n_*+1}-\theta_{n_*}||_2^2 \right).
$$

We get, using the fact that $\varphi$ is invertible and strictly increasing, that $||\theta_{n_*+1}-\theta_{n_*}||_2$ is small as $N \to \infty$:
$$
||\theta_{n_*+1}-\theta_{n_*}||_2 \leq \sqrt{\frac{ \alpha}{1-\frac{1}{2} \alpha \beta} \varphi^{-1}\left(\frac{diam(\Phi)}{N}\right)}.
$$

The latter tells us that for iteration $n_*$, the PGA update is small, therefore we can also conclude that the true gradient is small by (\cite{convpg}, Proposition B.1):
\begin{equation*}
\begin{split}
\max_{\substack{||\delta||_2 \leq 1\\\theta_{n^*} + \delta \in \mathcal{X}^{LP}}}\delta^T \nabla_{1} \widehat{V}(\theta_{n_*+1},\theta_{n_*}) 
\leq \frac{1+ \alpha \beta}{\alpha} ||\theta_{n_*+1}-\theta_{n_*}||_2 
\leq  \frac{1+ \alpha \beta}{\sqrt{\alpha \left(1-\frac{1}{2} \alpha \beta \right)}} \sqrt{ \varphi^{-1}\left(\frac{diam(\Phi)}{N}\right)}.
\end{split}
\end{equation*}

Since $\widehat{V}$ is $\beta$-smooth, this also implies:
\begin{equation*}
    \begin{split}
        \max_{\substack{||\delta||_2 \leq 1\\\theta_{n^*} + \delta \in \mathcal{X}^{LP}}}\delta^T \nabla_{1} \widehat{V}(\theta_{n_*},\theta_{n_*}) &\leq \max_{\substack{||\delta||_2 \leq 1\\\theta_{n^*} + \delta \in \mathcal{X}^{LP}}}\delta^T \nabla_{1} \widehat{V}(\theta_{n_*+1},\theta_{n_*}) + \beta ||\theta_{n_*+1}-\theta_{n_*}||_2 \\
        &\leq \frac{1+ 2\alpha \beta}{\sqrt{\alpha \left(1-\frac{1}{2} \alpha \beta \right)}} \sqrt{ \varphi^{-1}\left(\frac{diam(\Phi)}{N}\right)}.
    \end{split}
\end{equation*}

We conclude by a gradient domination argument. Remember that until now, we only considered initial state distribution $\widetilde{\mu}^0$. \cite{convpg}, Lemma 4.1, gives us the following result in the single agent case (see also \cite{markovpg}). It states that the best possible gain by a given agent using initial state distribution $\mu^0$, when others remain fixed, is bounded by its value function gradient under $\widetilde{\mu}^0$, namely for every $\theta$:
$$
\widetilde{V}_{\Lambda^{LP}_i,\mu^0}(\theta,\theta_{n_*}) - \widetilde{V}_{\Lambda^{LP}_i,\mu^0}(\theta_{n_*},\theta_{n_*}) \leq \mathcal{D}_{n_*,i}(\mu^0, \widetilde{\mu}^0) \max_{z \in \mathcal{X}^{LP}} (z-\theta_{n_*})^T \nabla_{1} \widetilde{V}_{\Lambda^{LP}_i, \widetilde{\mu}^0}(\theta_{n_*},\theta_{n_*}).
$$
where, using the notations in (\ref{dmis}):
\begin{align*}
\mathcal{D}_{n_*,i}(\mu^0, \widetilde{\mu}^0) := \sup_{\pi' \in BR_{\mu^0,i}(\pi_{\theta_{n_*}})} \max_{s \in \Ss^{LP}} \frac{d^{\pi'}_{\mu^0,i}(s)}{d^{\pi_{\theta_{n_*}}}_{\widetilde{\mu}^0,i}(s)}.
\end{align*}

However as we will see below, we can always choose $N$ large enough so that $\pi_{\theta_{n_*}}$ is an $\epsilon$-shared equilibrium, hence:
$$
\mathcal{D}_{n_*,i}(\mu^0, \widetilde{\mu}^0) \leq \mathcal{D}_\epsilon(\mu^0, \widetilde{\mu}^0).
$$

Taking the average over agents $i$, we get for every $\theta$:
\begin{align*}
\begin{split}
\widehat{V}_{\mu^0}(\theta,\theta_{n_*}) - \widehat{V}_{\mu^0}(\theta_{n_*},\theta_{n_*}) &\leq \mathcal{D}_\epsilon(\mu^0, \widetilde{\mu}^0) \max_{z \in \mathcal{X}^{LP}} (z-\theta_{n_*})^T \nabla_{1} \widehat{V}_{\widetilde{\mu}^0}(\theta_{n_*},\theta_{n_*})\\
& \leq 2 \mathcal{D}_\epsilon(\mu^0, \widetilde{\mu}^0) \sqrt{|\Ss^{LP}||\Ss^{\lambda^{LP}}|} \max_{\substack{||\delta||_2 \leq 1\\\theta_{n^*} + \delta \in \mathcal{X}^{LP}}} \delta^T \nabla_{1} \widehat{V}_{\widetilde{\mu}^0}(\theta_{n_*},\theta_{n_*})\\
& \leq 2 \mathcal{D}_\epsilon(\mu^0, \widetilde{\mu}^0) \sqrt{|\Ss^{LP}||\Ss^{\lambda^{LP}}|} \frac{1+ 2\alpha \beta}{\sqrt{\alpha \left(1-\frac{1}{2} \alpha \beta \right)}} \sqrt{ \varphi^{-1}\left(\frac{diam(\Phi)}{N}\right)}.
\end{split}
\end{align*}

We have used the following basic fact: for $x \in \mathcal{X}^{LP}$, the entries of $x$ are in the probability simplex for each $s, \lambda$, and therefore:
$$
||x||_2 = \sqrt{\sum_{a,s,\lambda} x_{a,s,\lambda}^2} \leq \sqrt{\sum_{a,s,\lambda} x_{a,s,\lambda}} = \sqrt{|\Ss^{LP}||\Ss^{\lambda^{LP}}|}.
$$
Hence for $x,y \in \mathcal{X}^{LP}$, $||x-y||_2 \leq 2 \sqrt{|\Ss^{LP}||\Ss^{\lambda^{LP}}|}$. 

Finally, we conclude by observing that to get an $\epsilon-$shared equilibrium, we need:
$$
 2 \mathcal{D}_\epsilon(\mu^0, \widetilde{\mu}^0) \sqrt{|\Ss^{LP}||\Ss^{\lambda^{LP}}|} \frac{1+ 2\alpha \beta}{\sqrt{\alpha \left(1-\frac{1}{2} \alpha \beta \right)}} \sqrt{ \varphi^{-1}\left(\frac{diam(\Phi)}{N}\right)} \leq \epsilon,
$$

that is:
$$
N \geq \frac{diam(\Phi)}{\varphi \left(\frac{\epsilon^2  }{4 \mathcal{D}_\epsilon(\mu^0, \widetilde{\mu}^0)^2|\Ss^{LP}||\Ss^{\lambda^{LP}}|} \cdot \frac{\alpha (2 -\alpha \beta)}{2(1+2\alpha \beta)^2} \right)}.
$$

This yields that for any $\theta$, we have:
$$
\widehat{V}_{\mu^0}(\theta,\theta_{n_*}) - \widehat{V}_{\mu^0}(\theta_{n_*},\theta_{n_*}) \leq \epsilon,
$$

which by definition means that $\pi_{\theta_{n_*}}$ is an $\epsilon-$shared equilibrium. 

Finally, denote the "projected gradient" $G_n:=\alpha^{-1}_n(\theta_{n+1} - \theta_n)$, which would be equal to the true gradient were there no projection, and where we have allowed the learning rate to be time-dependent (but still upper bounded by $\frac{2}{\beta}$). Equation (\ref{p1}) and finiteness of $diam(\Phi)$ give us that the series $\sum \varphi\left(\left(1-\frac{\beta \alpha_n}{2}\right) \alpha_n ||G_n||_2^2 \right)$ is (absolutely) convergent. This yields in particular that $\varphi\left(\left(1-\frac{\beta \alpha_n}{2}\right) \alpha_n ||G_n||_2^2 \right) \to 0$. i.e. $\left(1-\frac{\beta \alpha_n}{2}\right) \alpha_n ||G_n||_2^2 \to \varphi^{-1}(0)=0$ by continuity and invertibility of $\varphi$. If we assume that $\sum \alpha_n = +\infty$, we get that $\lim_{n \to +\infty} ||G_n||_2 = 0$. Using the "gradient domination" argument above, this means that for any $\epsilon$, there exists a $N_\epsilon$ such that for all $n \geq N_\epsilon$, $\theta_n$ is an $\epsilon$-shared equilibrium. In particular, every converging subsequence $(\theta_{m_n})$ converges to a shared equilibrium, using a compactness argument as in the proof of theorem \ref{n11}, as $\theta \in [0,1]^{m}$ and $\widehat{V}$ is continuous. If there exists a unique shared equilibrium, $\theta_n$ converges to that equilibrium, since all converging subsequences have the same limit point.
\qed

\textbf{Proof of lemma \ref{lemsto}}. Denote $\nabla := \nabla f(x)$, and the noisy gradient $\widetilde{\nabla} := \nabla + \nabla \odot \phi$, so that $y=x + \alpha \widetilde{\nabla}$. By $\beta$-smoothness, as in the proof of theorem \ref{gradet}, we have:
\begin{align}
\begin{split}
\label{eql1}
&f(y,x)-f(x,x) \geq - \frac{\beta}{2} ||y-x||_2^2 + \nabla \cdot (y-x)\\
\Leftrightarrow & \frac{1}{\alpha} \left(f(y,x)-f(x,x) \right) \geq \left(1-\frac{\alpha \beta}{2}\right) ||\widetilde{\nabla}||_2^2 - (\nabla \odot \phi) \cdot \widetilde{\nabla}.
\end{split}
\end{align}
 
Contrary to the non-stochastic case, the increment $f(y,x)-f(x,x)$ is not necessarily non-negative due to the stochastic term $\phi$. We would like somehow to get back to this case. We have:
\begin{equation}
\begin{split}
& \left(1-\frac{\alpha \beta}{2}\right) ||\widetilde{\nabla}||_2^2 - (\nabla \odot \phi) \cdot \widetilde{\nabla} \\
& = \left(1-\frac{\alpha \beta}{2}\right) (||\nabla||_2^2 + ||\nabla \odot \phi||_2^2 + 2 (\nabla \odot \phi) \cdot \nabla) - (\nabla \odot \phi) \cdot \widetilde{\nabla}  \\
& = \left(1-\frac{\alpha \beta}{2}\right) ||\nabla||_2^2 -\frac{\alpha \beta}{2} ||\nabla \odot \phi||_2^2  +   (1 - \alpha \beta) (\nabla \odot \phi) \cdot \nabla \\
& =  \sum_{i=1}^{d} \nabla^2_{i} \left[1-\frac{\alpha \beta}{2} -\frac{\alpha \beta}{2} \phi_{i}^{2}  +   (1 - \alpha \beta) \phi_{i} \right] =: \sum_{i=1}^{d} Y_i.
\end{split}
\end{equation}

If we assume that $-1< \phi_{min} \leq \phi_{i} \leq \phi_{max}$ a.s., then we can get the very simple result that $Y_i$ is non-negative provided:
$$
\alpha < \frac{2}{\beta} \frac{1+\phi_{i}}{1+2\phi_{i}+\phi_{i}^2} = \frac{2}{\beta(1+\phi_{i})},
$$
which is always true if $\alpha < \frac{2}{\beta(1+\phi_{max})}$. It is immediate to show that if $\alpha=\frac{2 (1-\eta)}{\beta(1+\phi_{max})}$ for $\eta \in [0,1]$, $1-\frac{\alpha \beta}{2} -\frac{\alpha \beta}{2} \phi_{i}^{2}  +   (1 - \alpha \beta) \phi_{i} \geq (1+\phi_{min}) \eta$. Now let us look at the unbounded case. Since $\phi$ is a vector of independent noise, we are computing the sum of independent random variables $Y_i$, so we can use the Berry-Esseen theorem. We have for $\alpha = \frac{2(1-\eta)}{\beta(1+\sigma_2)}$ and $\eta \in [0,1]$:
$$
\EE[Y_i] = \nabla^2_{i} \left[1-\frac{\alpha \beta}{2} -\frac{\alpha \beta}{2} \sigma_{2} \right]= \eta \nabla^2_{i},
$$
$$
Var[Y_i] = \nabla^4_{i} \left[ (1 - \alpha \beta)^2\sigma_{2} +\frac{\alpha^2 \beta^2}{4} (\sigma_{4}-\sigma^{2}_2) \right] =:  \nabla^4_{i} \kappa_{4}
$$
$$
\EE[|Y_i-\EE[Y_i]|^3] = \nabla^6_{i} \kappa_{6}.
$$

A tedious but straightforward calculation gives an upper bound for $\kappa_{6}$. If $\mathcal{N}$ is the standard normal c.d.f., the Berry-Esseen theorem states that \cite{tyurin}:
$$
\PP[\sum_{i=1}^{d} Y_i > x] \geq 1 - \mathcal{N}\left(\frac{x-\mu_Y}{\sigma_Y}\right) - \frac{0.6}{\sigma_Y} \max_{i\in[1,d]} \frac{\rho_i}{Var[Y_i]}
$$
$$
\mu_Y := \sum_{i=1}^d \EE[Y_i], \hspace{5mm} \sigma^2_Y:= \sum_{i=1}^d Var[Y_i], \hspace{5mm} \rho_i := \EE[|Y_i - \EE[Y_i]|^3].
$$

In our case, we have $\mu_Y = \eta ||\nabla||^2_2$, $\sigma^2_Y = \kappa_{4} \sum_{i=1}^{d} \nabla^4_{i}$ and $\frac{\rho_i}{Var[Y_i]} = \frac{\kappa_{6}}{\kappa_{4}} \nabla^2_{i}$. Therefore:
$$
\max_{i\in[1,d]} \frac{\rho_i}{\sigma_Y Var[Y_i]} = \frac{\kappa_{6}}{\kappa_{4}^{3/2}} \frac{\max_{i\in[1,d]}\nabla^2_{i}}{\sqrt{\sum_{i=1}^{d} \nabla^4_{i}}} = \frac{\kappa_{6}}{\kappa_{4}^{3/2}} \mathcal{E}_d.
$$

This yields, taking $x= \eta' ||\nabla||_2^2$ for $\eta' \in [0,1]$:
$$
\PP[\sum_{i=1}^{d} Y_i > \eta' ||\nabla||_2^2] \geq 1 - \mathcal{N}\left(\frac{\eta'-\eta}{\sqrt{\kappa_{4}}} \cdot \widehat{\mathcal{E}}_d \sqrt{d} \right) - \frac{0.6 \kappa_{6}}{\kappa_{4}^{3/2}} \mathcal{E}_d.
$$
\qed


\textbf{Proof of theorems \ref{gradet2}, \ref{gradet3}.} We use the same notational conventions as in the proof of theorem \ref{gradet}. Let $K:=|\Ss^{LP}||\Ss^{\lambda^{LP}}||\As^{LP}|$ the total number of coordinates. Remember that for a policy $\pi$, we define:
$$
\ln \pi := \frac{1}{K} \sum_{s,a,\lambda} \ln \pi(a|s,\lambda).
$$

Denote $L(\pi_{\theta},\pi_{\theta_n}):=\widehat{V}(\pi_{\theta},\pi_{\theta_n}) + \nu \ln \frac{\pi_{\theta}}{\pi_{\theta_n}}$ the log-barrier regularized objective. Consider the following update rule:
$$
\theta_{n+1} = \theta_n + \alpha \widetilde{\nabla}_n,
$$

where we denote the true gradient $\nabla_n := \nabla_{1} L(\pi_{\theta_n},\pi_{\theta_n})= \nabla_{1} \widehat{V}(\pi_{\theta_n},\pi_{\theta_n}) + \nu \nabla \ln \pi_{\theta_n}$, and the noisy gradient $\widetilde{\nabla}_n := \nabla_n + \nabla_n \odot \phi_n$. 

We showed in the proof of theorem \ref{gradet} that $\widehat{V}$ is $\beta$-smooth provided each one of the single-agent $\widetilde{V}_{\Lambda^{LP}_i}$ is. By \cite{convpg} lemma D.4, $\theta \to \widetilde{V}_{\Lambda^{LP}_i}(\pi_{\theta},\pi_{\theta_n}) + \nu \ln \pi_{\theta}$ is $\beta_\nu$-smooth with:
$$
\beta_\nu := \frac{8 \RR_{max}}{(1-\zeta)^3} + \frac{2 \nu \RR_{max}}{|\Ss^{LP}||\Ss^{\lambda^{LP}}|}.
$$

Therefore $L$ is $\beta_\nu$-smooth. By lemma \ref{lemsto}, we have for $\eta,\eta'\in [0,1]$ and $\alpha = \frac{2(1-\eta)}{\beta_\nu(1+\sigma_2)}$:
$$
\PP[L(\pi_{\theta_{n+1}},\pi_{\theta_n}) - L(\pi_{\theta_{n}},\pi_{\theta_n})\geq \eta'||\nabla_n||_2^2] \geq 1 - \mathcal{N}\left(\frac{\eta'-\eta}{\sqrt{\kappa_{4}}} \cdot \widehat{\mathcal{E}}_{n,K} \sqrt{K}\right) - \frac{0.6 \kappa_{6}}{\kappa_{4}^{3/2}}\mathcal{E}_{n,K},
$$
where $\mathcal{E}_{n,K}$, $\widehat{\mathcal{E}}_{n,K}$ are respectively the maximum and average dispersion associated to $\nabla_n$. Call $\mathcal{E}_{K} := \sup_{n \in [0,N_{tot}]} \mathcal{E}_{n,K}$, $\widehat{\mathcal{E}}_{K} := \inf_{n \in [0,N_{tot}]} \widehat{\mathcal{E}}_{n,K}$, so that the above probability is at least:
$$
p_{\eta,\eta',K} := 1 - \mathcal{N}\left(\frac{\eta'-\eta}{\sqrt{\kappa_{4}}} \cdot \widehat{\mathcal{E}}_{K} \sqrt{K}\right) - \frac{0.6 \kappa_{6}}{\kappa_{4}^{3/2}}\mathcal{E}_{K}.
$$

Let us work on the event $\Omega_n := \{L(\pi_{\theta_{n+1}},\pi_{\theta_n}) - L(\pi_{\theta_{n}},\pi_{\theta_n})\geq \eta'||\nabla_n||_2^2\}$. If the entropy of $\nabla_n$ is high, namely its coordinates are roughly of the same magnitude, $\widehat{\mathcal{E}}_{n,K} \approx 1$ and $\mathcal{E}_{n,K} \approx K^{-1/2}$, which means that the probability of the latter event is very close to one if $K$ is large and $\eta'<\eta$, which we are free to choose. The event $\Omega_n$ has probability at least $p_{\eta,\eta',K}$. Call $\widetilde{\Omega}_N$ the event where we achieve at least $N$ successful transitions among the total budget $N_{tot}$. $\widetilde{\Omega}_N$ has probability at least:
$$
\sum_{i=N}^{N_{tot}} \binom{N_{tot}}{i} p_{\eta,\eta',K}^{i} (1-p_{\eta,\eta',K})^{N_{tot}-i}.
$$
By assumption, the game $L$ is extended transitive. Call $\varphi_\nu$ the function in proposition \ref{et2}. We can proceed as in the proof of theorem \ref{gradet} and get on $\widetilde{\Omega}_N$, where $n_*$ is the integer in $[0,N_{tot}]$ that achieves the minimum positive update $||\theta_{n+1}-\theta_n||_2$:
$$
||\nabla_{n_*}||_2 \leq \sqrt{\frac{1}{\eta'}\varphi_\nu^{-1}\left(\frac{diam(\Phi_\nu)}{N}\right)}.
$$

By \cite{convpg} theorem 5.2:
\begin{align*}
&\mbox{if } \hspace{5mm} \sqrt{\frac{1}{\eta'}\varphi_\nu^{-1}\left(\frac{diam(\Phi_\nu)}{N}\right)} \leq \frac{\nu}{2K}\\
&\mbox{then } \hspace{5mm} \widetilde{V}_{\Lambda^{LP}_i,\mu^0}(\theta,\theta_{n_*}) - \widetilde{V}_{\Lambda^{LP}_i,\mu^0}(\theta_{n_*},\theta_{n_*}) \leq \frac{2\nu}{1-\zeta} \mathcal{D}_{n_*,i}(\mu^0, \widetilde{\mu}^0) \leq \frac{2\nu}{1-\zeta} \mathcal{D}_\epsilon(\mu^0, \widetilde{\mu}^0),
\end{align*}

where the last inequality uses the same reasoning as in the proof of theorem \ref{gradet} that we show below that we can always choose $N$ large enough so that $\pi_{\theta_{n_*}}$ is an $\epsilon$-shared equilibrium.

Taking the average over agents $i$, we get for every $\theta$:
\begin{align*}
\begin{split}
\widehat{V}_{\mu^0}(\theta,\theta_{n_*}) - \widehat{V}_{\mu^0}(\theta_{n_*},\theta_{n_*}) \leq \frac{2\nu}{1-\zeta} \mathcal{D}_\epsilon(\mu^0, \widetilde{\mu}^0).
\end{split}
\end{align*}

We need to take $\epsilon$ such that:
$$
\frac{2\nu}{1-\zeta} \mathcal{D}_\epsilon(\mu^0, \widetilde{\mu}^0) \leq \epsilon\\
\Leftrightarrow \nu \leq \frac{(1-\zeta) \epsilon}{2\mathcal{D}_\epsilon(\mu^0, \widetilde{\mu}^0)}.
$$
 Now pick $N$ such that:
$$
N \geq \frac{diam(\Phi)}{\varphi \left(\frac{\nu^2 \eta'}{4K^2} \right)} \geq \frac{diam(\Phi)}{\varphi \left(\frac{(1-\zeta)^2 \epsilon^2 \eta'}{16 K^2 \mathcal{D}_\epsilon(\mu^0, \widetilde{\mu}^0)^2} \right)}.
$$

In the extreme case where the coordinates of the gradient are the same, we have:
$$
p_{\eta,\eta',K} = 1 - \mathcal{N}\left(\frac{\eta'-\eta}{\sqrt{\kappa_{4}}} \sqrt{K}\right) - \frac{0.6 \kappa_{6}}{\sqrt{K}\kappa_{4}^{3/2}} \stackrel{K \to +\infty}{\to} 1.
$$

The case where $\phi$ is uniformly bounded is a direct consequence of lemma \ref{lemsto}: by the latter, we know $Y_i \geq (1+\phi_{min})\eta \nabla_{n,i}^2$ almost surely for $\alpha = \frac{2 (1-\eta)}{\beta_\nu(1+\phi_{max})}$. We can simply replace $\eta'$ by $(1+\phi_{min})\eta$ in the above expression for $N$.

\textbf{Proof of theorem \ref{alrt}.} The proof has been discussed in section \ref{seccalib}. The goal is to apply \cite{borkar}, theorem 1.1 (see also \cite{leslie}, theorem 5). The proof of theorem \ref{gradet2} gives us that $\nabla_1 \widehat{V}$, and hence $\widehat{x}$, is Lipschitz. Assumption \ref{alr5} gives that $\widehat{y}$ is Lipschitz. Theorem \ref{gradet2} together with assumptions \ref{alr3} and \ref{alr}, yield that the ODE $\frac{d\theta}{dt} = \widehat{x}(\theta, \theta^\Lambda)$ has a unique global asymptotically stable equilibrium $f_{eq}(\theta^\Lambda)$, which is further Lipschitz in $\theta^\Lambda$. By extended transitivity in assumption \ref{alr6}, together with assumption \ref{alr5} which ensures that the mapping $\theta^{\Lambda} \to V^{calib}(\pi_{\theta}, \pi^\Lambda_{\theta^{\Lambda}})$ is $\beta$-smooth, we can show as in the proof of theorem \ref{gradet2} that $\Phi_\nu$ is a Lyapunov function, since (up to noise terms that are dealt with as in the proof of theorem \ref{gradet2}):
$$
\Phi_\nu(\theta^\Lambda_{n+1})-\Phi_\nu(\theta^\Lambda_{n}) \geq \varphi_\nu \left(\left(\frac{1}{\alpha_n^{cal}} - \frac{\beta_\nu}{2}\right)||\theta^\Lambda_{n+1}-\theta^\Lambda_{n}||_2^2 \right).
$$
We conclude using \cite{leslie}, theorem 5 and proposition 4.

\clearpage

\begin{figure}[ht]
    \centering
    \begin{subfigure}[b]{\fsizeee \textwidth}
        \centering
        \includegraphics[width=\textwidth]{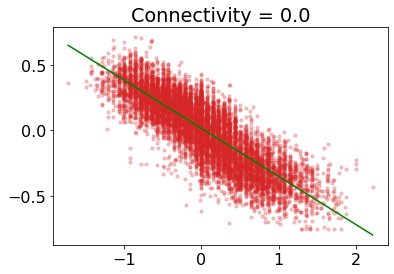}
   \end{subfigure}
    \hfill
    \begin{subfigure}[b]{\fsizeee \textwidth}
        \centering
        \includegraphics[width=\textwidth]{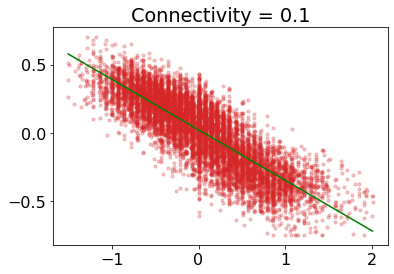}
    \end{subfigure}
    \begin{subfigure}[b]{\fsizeee \textwidth}
        \centering
        \includegraphics[width=\textwidth]{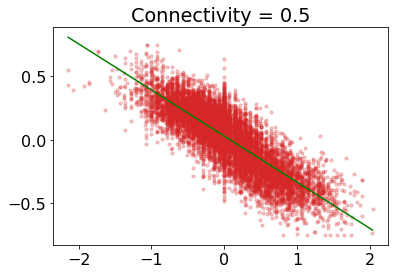}
    \end{subfigure}
    \begin{subfigure}[b]{\fsizeee \textwidth}
        \centering
        \includegraphics[width=\textwidth]{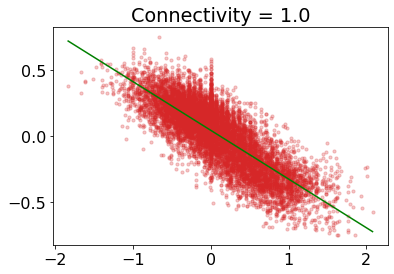}
    \end{subfigure}
    \caption{Skew $\epsilon_{t,\text{skew}}$ (y-axis) vs. inventory (x-axis) for various PnL LT connectivity values.}
    \label{figg12}
\end{figure}

\begin{figure}[ht]
    \centering
    \begin{subfigure}[b]{\fsizeee \textwidth}
        \centering
        \includegraphics[width=\textwidth]{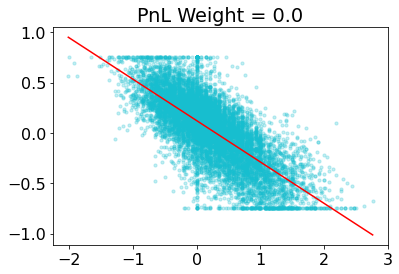}
   \end{subfigure}
    \hfill
    \begin{subfigure}[b]{\fsizeee \textwidth}
        \centering
        \includegraphics[width=\textwidth]{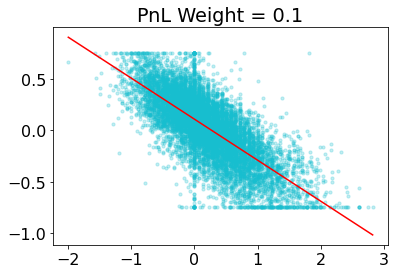}
    \end{subfigure}
    \begin{subfigure}[b]{\fsizeee \textwidth}
        \centering
        \includegraphics[width=\textwidth]{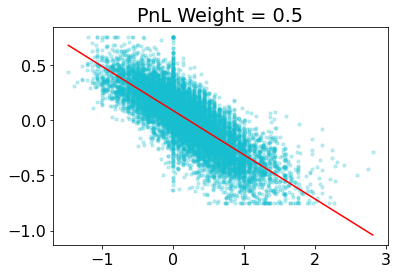}
    \end{subfigure}
    \begin{subfigure}[b]{\fsizeee \textwidth}
        \centering
        \includegraphics[width=\textwidth]{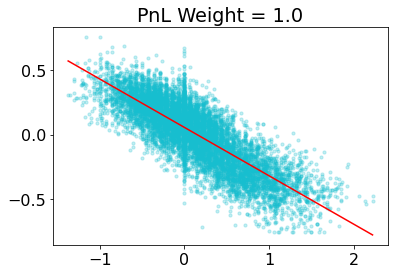}
    \end{subfigure}
    \caption{Skew $\epsilon_{t,\text{skew}}$ (y-axis) vs. inventory (x-axis) for various PnL weight values.}
    \label{figg13}
\end{figure}

\begin{figure}[ht]
    \centering
    \begin{subfigure}[b]{\fsizeee \textwidth}
        \centering
        \includegraphics[width=\textwidth]{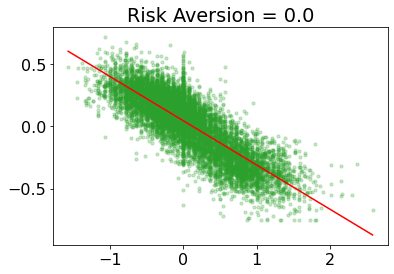}
   \end{subfigure}
    \hfill
    \begin{subfigure}[b]{\fsizeee \textwidth}
        \centering
        \includegraphics[width=\textwidth]{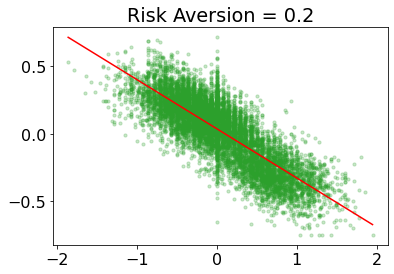}
    \end{subfigure}
    \begin{subfigure}[b]{\fsizeee \textwidth}
        \centering
        \includegraphics[width=\textwidth]{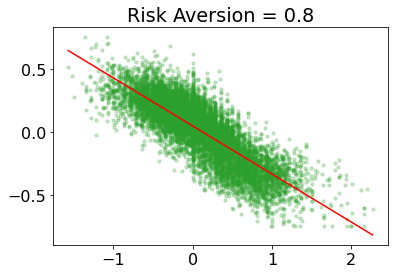}
    \end{subfigure}
    \begin{subfigure}[b]{\fsizeee \textwidth}
        \centering
        \includegraphics[width=\textwidth]{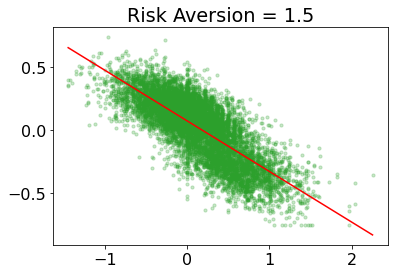}
    \end{subfigure}
    \caption{Skew $\epsilon_{t,\text{skew}}$ (y-axis) vs. inventory (x-axis) for various risk aversion values.}
    \label{figg14}
\end{figure}

\begin{figure}[ht]
    \centering
    \begin{subfigure}[b]{ \fsizee \textwidth}
        \centering
        \includegraphics[width=\textwidth]{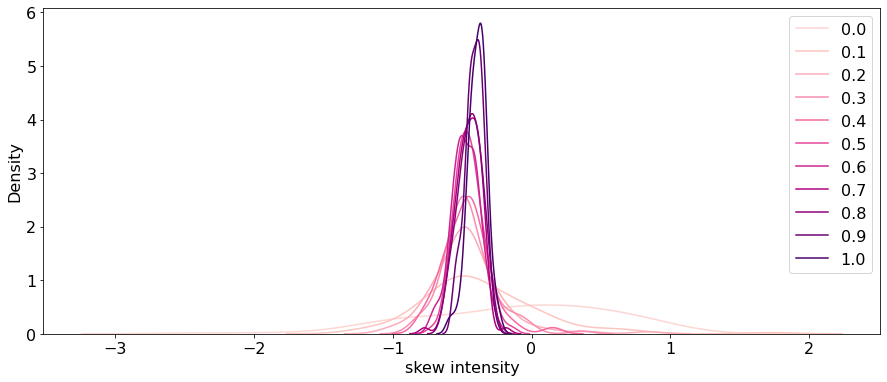}
    \end{subfigure}
    \begin{subfigure}[b]{\fsizee\textwidth}
        \centering
        \includegraphics[width=\textwidth]{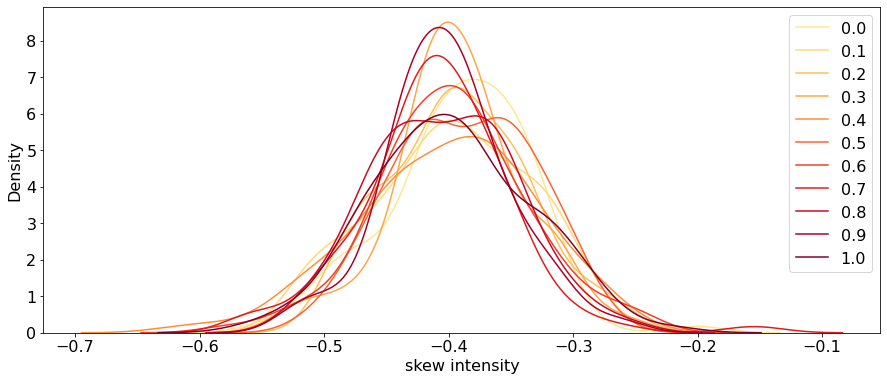}
    \end{subfigure}
    \begin{subfigure}[b]{\fsizee\textwidth}
        \centering
        \includegraphics[width=\textwidth]{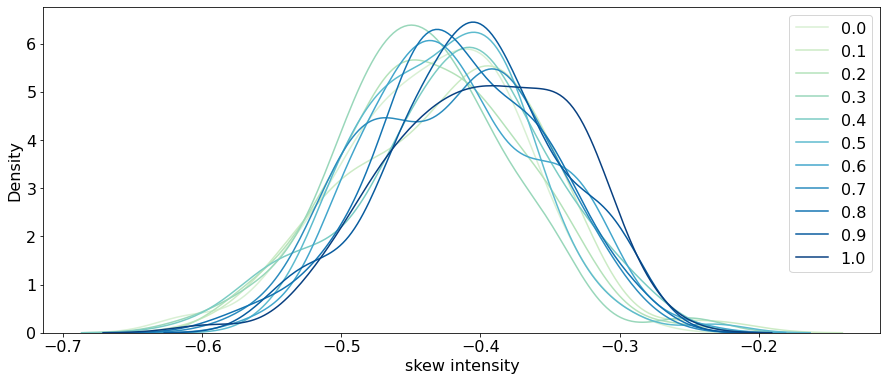}
    \end{subfigure}
    \begin{subfigure}[b]{\fsizee\textwidth}
        \centering
        \includegraphics[width=\textwidth]{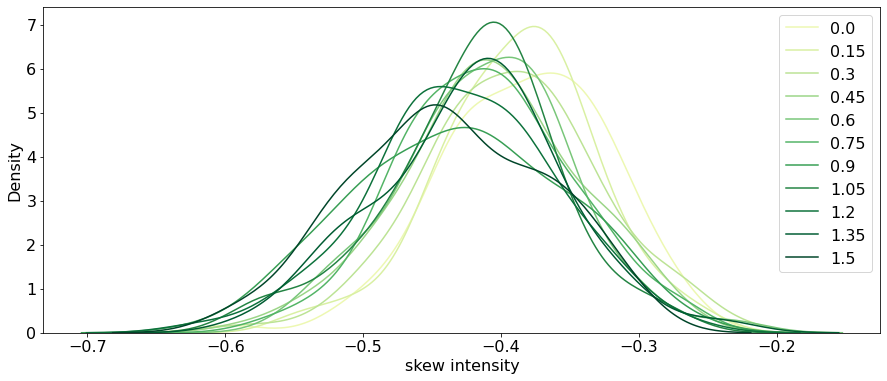}
    \end{subfigure}
    \caption{Skew intensity distribution for different values of flow LT connectivity, PnL LT connectivity, PnL weight, risk aversion (from top to bottom).}
    \label{figg9}
\end{figure}

\begin{figure}[ht]
  \centering
  \centerline{\includegraphics[width=\textwidth]{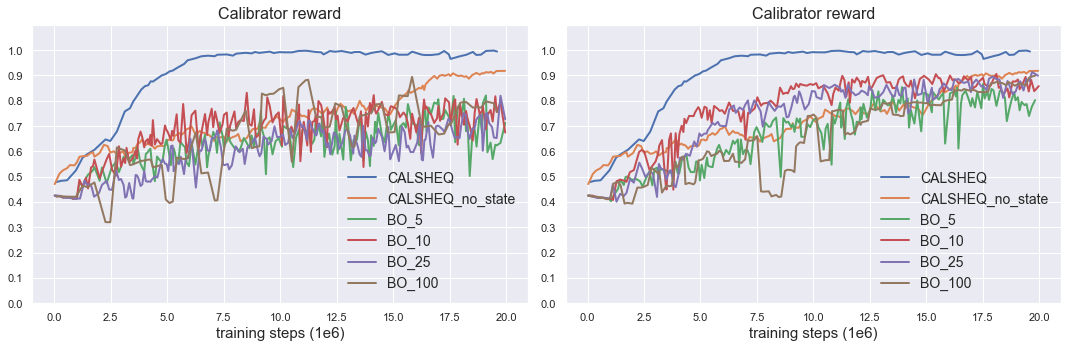}}
  \caption{Calibrator cumulative reward during training for various number of Bayesian Optimization (BO) frequencies $M$ - Experiment 1 - \textbf{(Left)} BO Expected Improvement (EI) - \textbf{(Right)} BO UCB with exploration parameter $\kappa=1.5$.}
  \label{f0}
\end{figure}

\begin{figure}[ht]
  \centering
  \centerline{\includegraphics[width=\columnwidth]{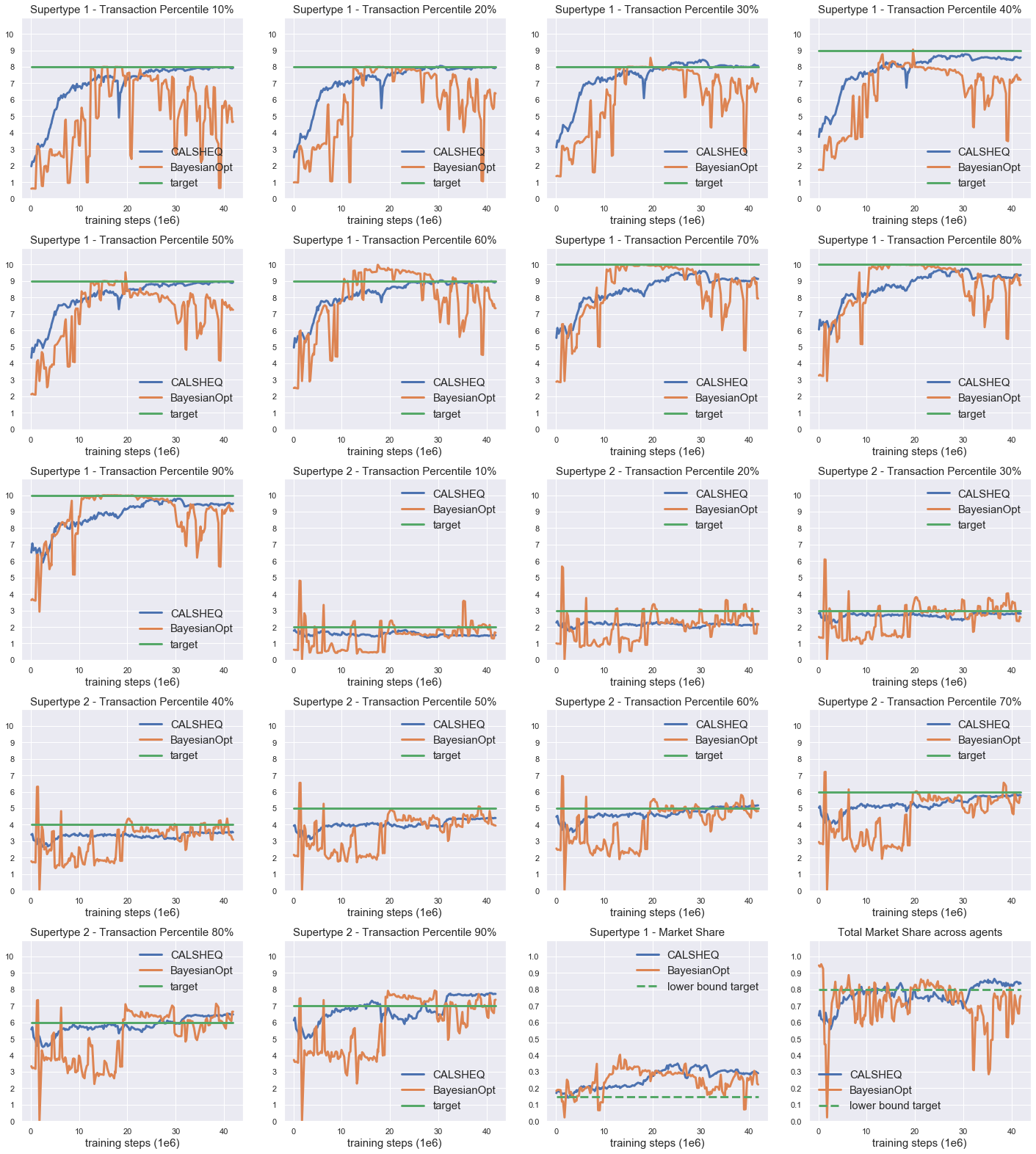}}
  \caption{Calibration experiment 2 - Calibration target fit for trade quantity distribution percentile and market share during training, averaged over episodes $B$. Dashed line target indicates that the constraint was set to be greater than target (not equal to it). \textit{CALSHEQ} (ours) and baseline (Bayesian optimization).}
  \label{f8}
\end{figure}

\begin{figure}[ht]
  \centering
  \centerline{\includegraphics[width=\columnwidth]{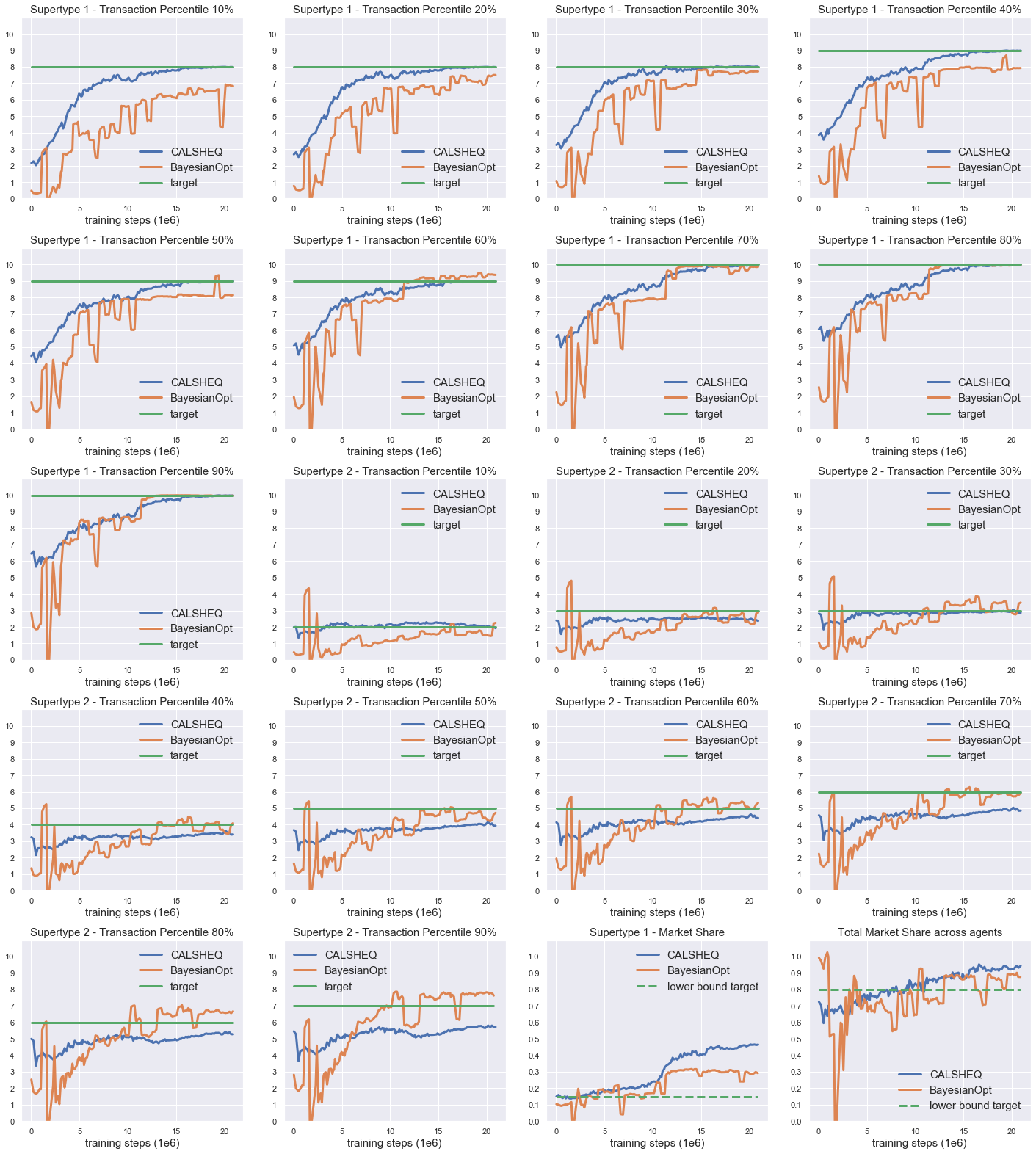}}
  \caption{Calibration experiment 3 - Calibration target fit for trade quantity distribution percentile and market share during training, averaged over episodes $B$. Dashed line target indicates that the constraint was set to be greater than target (not equal to it). \textit{CALSHEQ} (ours) and baseline (Bayesian optimization).}
  \label{f9}
\end{figure}

\begin{figure}[ht]
  \centering
  \centerline{\includegraphics[scale=0.35]{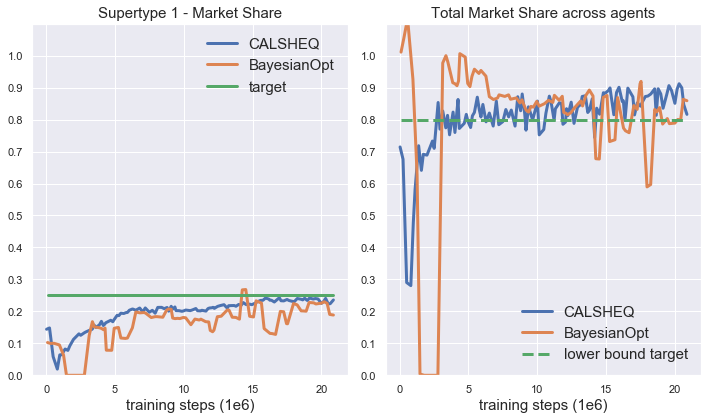}}
  \caption{Calibration experiment 4 - Calibration target fit for market share during training, averaged over episodes $B$. Dashed line target indicates that the constraint was set to be greater than target (not equal to it). \textit{CALSHEQ} (ours) and baseline (Bayesian optimization).}
  \label{f10}
\end{figure}

\begin{figure}[ht]
  \centering
  \centerline{\includegraphics[width=\columnwidth]{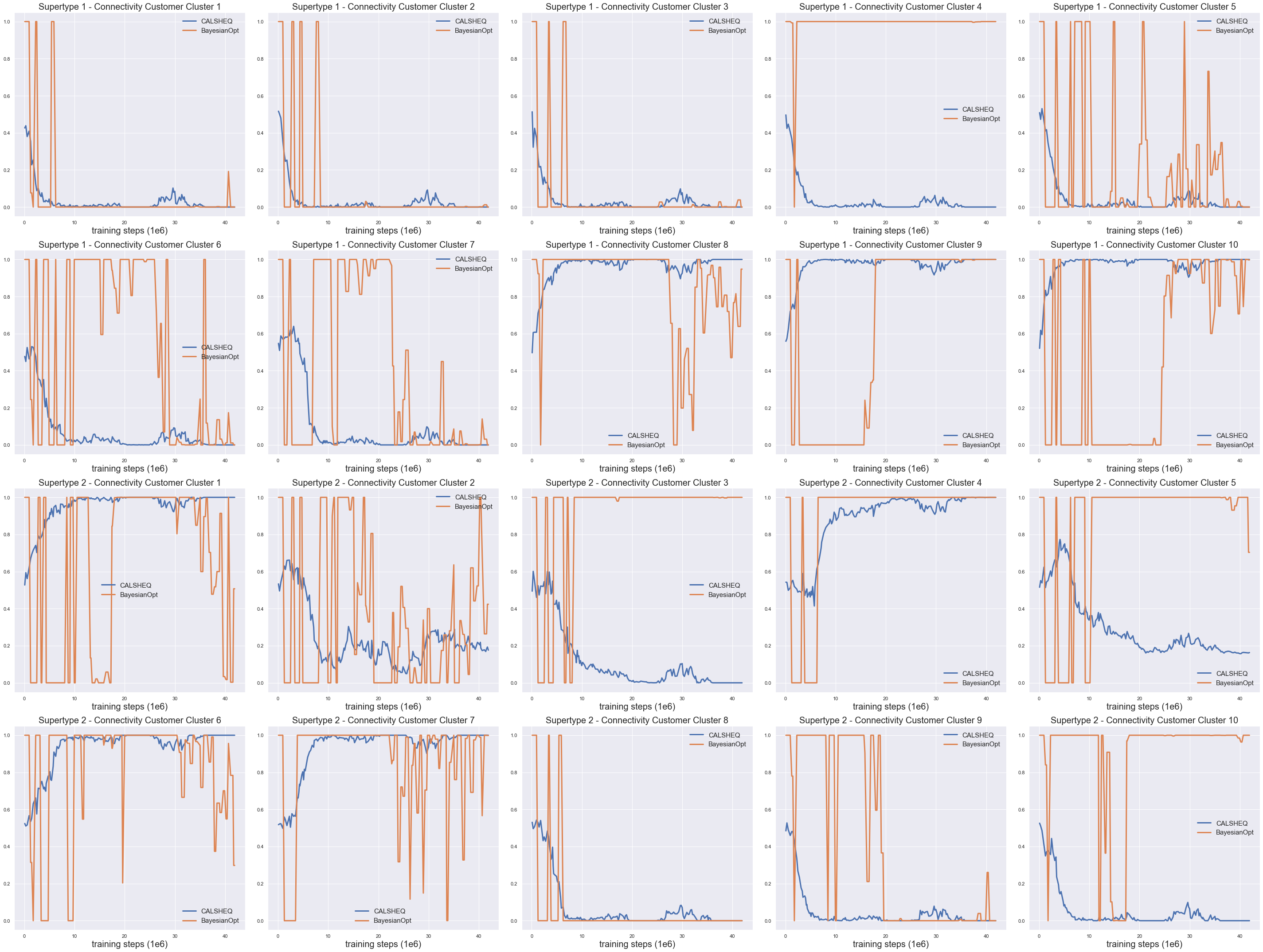}}
  \caption{Calibration experiment 1 - calibrated parameter values as a function of time, averaged over episodes $B$. \textit{CALSHEQ} (ours) and baseline (Bayesian optimization). CALSHEQ varies the parameters smoothly, contrary to Bayesian optimization.}
  \label{f2}
\end{figure}

\begin{figure}[ht]
  \centering
  \centerline{\includegraphics[width=\columnwidth]{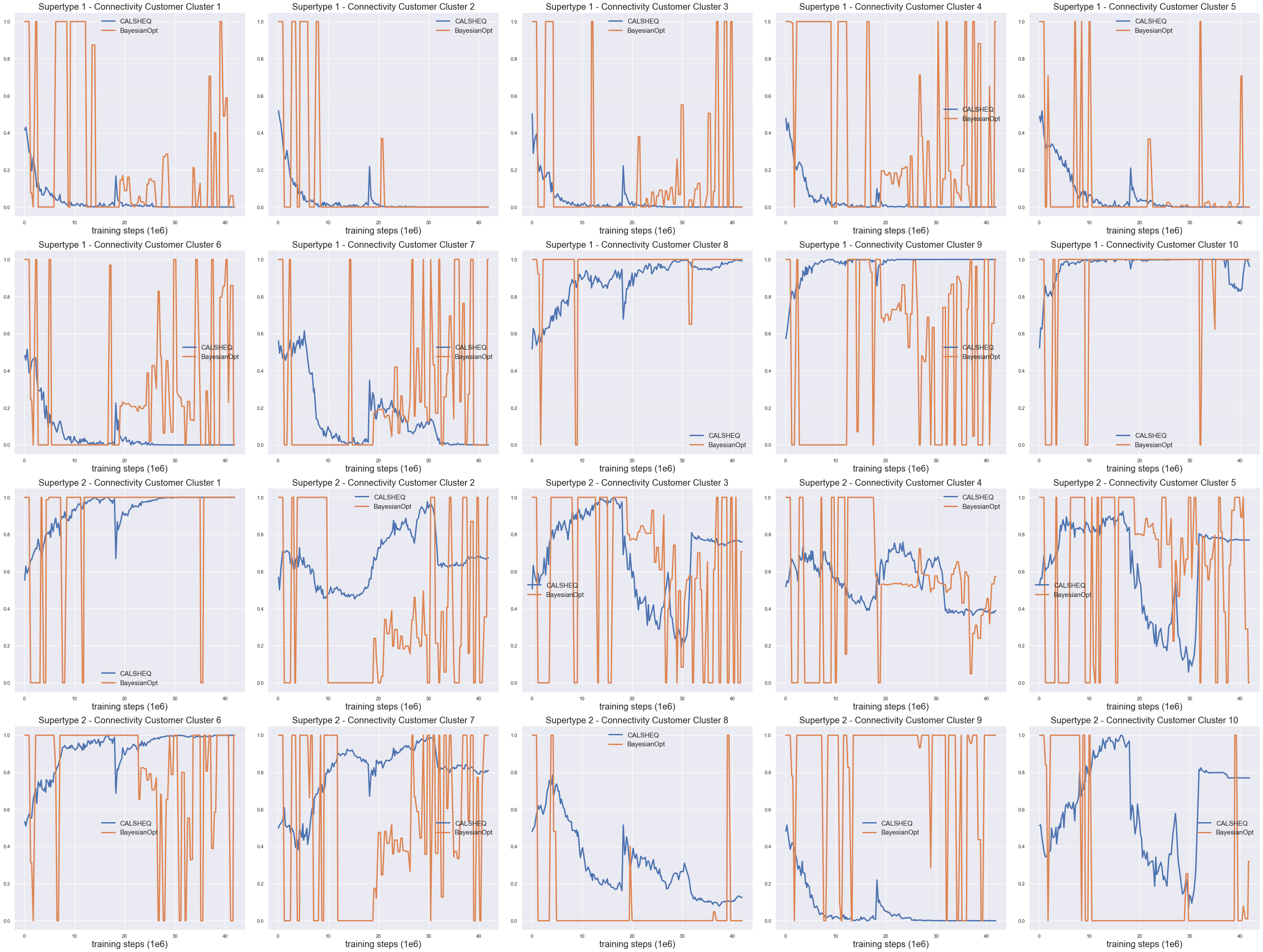}}
  \caption{Calibration experiment 2 - calibrated parameter values as a function of time, averaged over episodes $B$. \textit{CALSHEQ} (ours) and baseline (Bayesian optimization). CALSHEQ varies the parameters smoothly, contrary to Bayesian optimization.}
  \label{f3}
\end{figure}

\begin{figure}[ht]
  \centering
  \centerline{\includegraphics[width=\columnwidth]{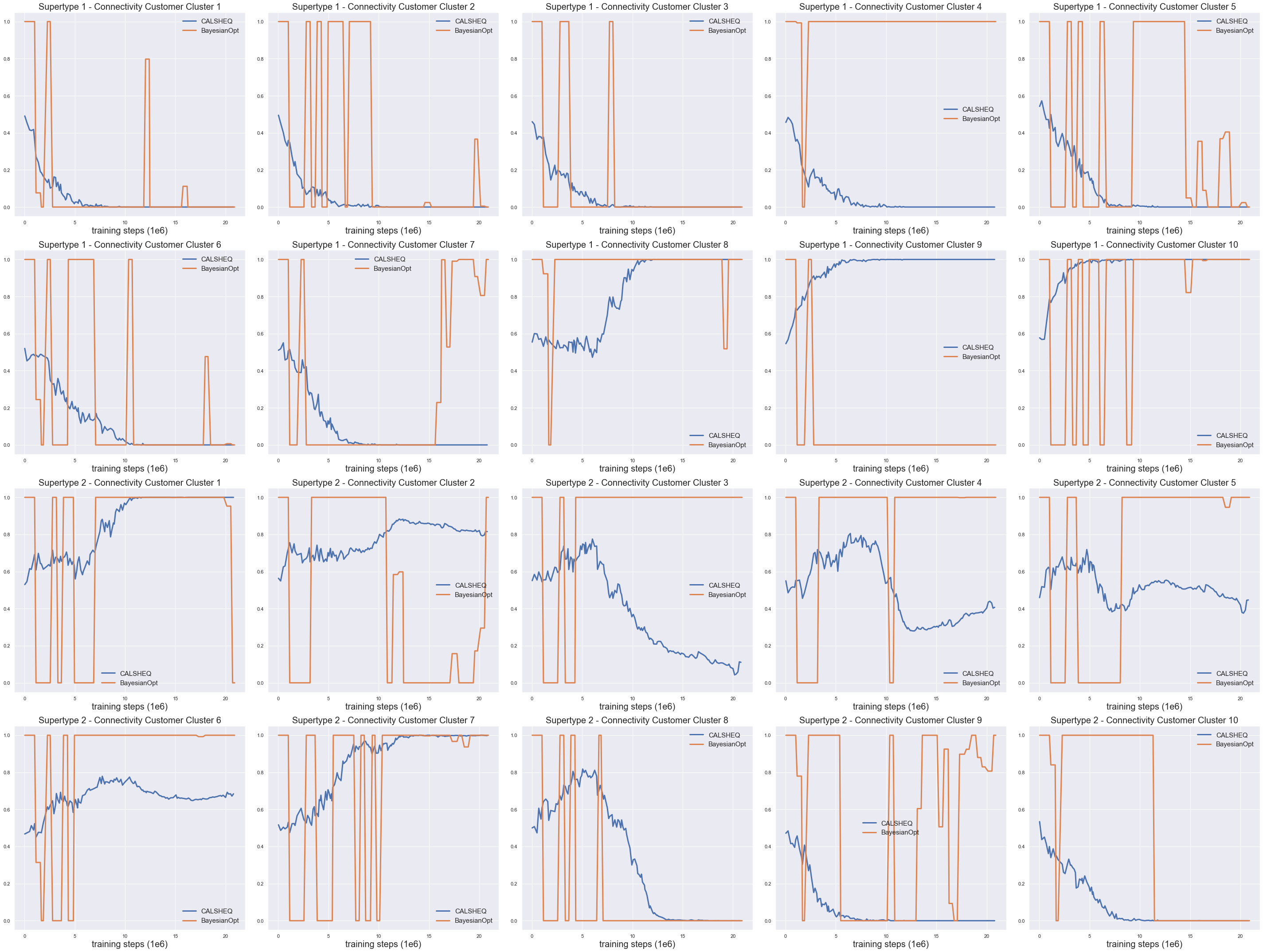}}
  \caption{Calibration experiment 3 - calibrated parameter values as a function of time, averaged over episodes $B$. \textit{CALSHEQ} (ours) and baseline (Bayesian optimization). CALSHEQ varies the parameters smoothly, contrary to Bayesian optimization.}
  \label{f4}
\end{figure}

\begin{figure}[ht]
  \centering
  \centerline{\includegraphics[width=\columnwidth]{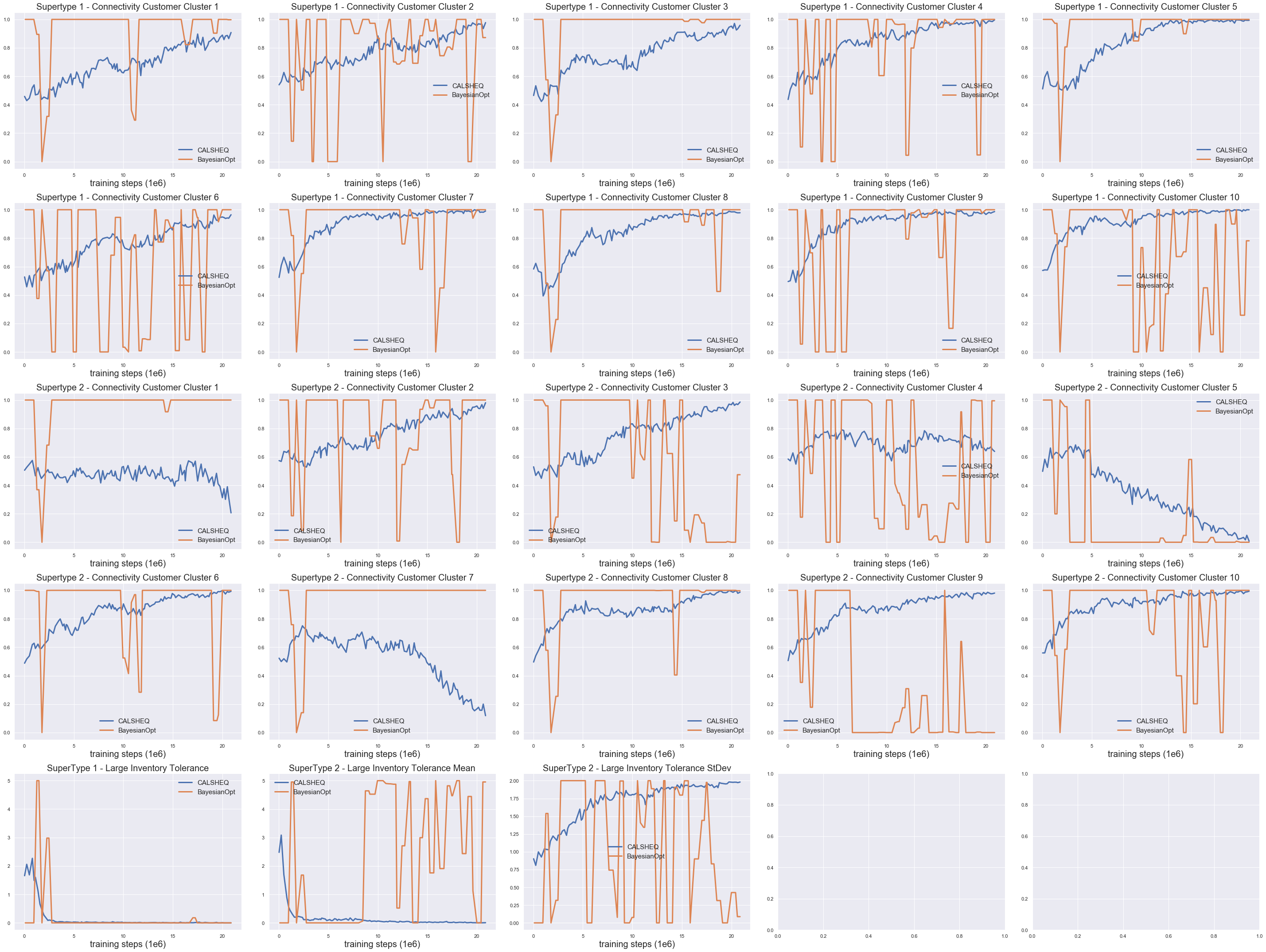}}
  \caption{Calibration experiment 4 - calibrated parameter values as a function of time, averaged over episodes $B$. \textit{CALSHEQ} (ours) and baseline (Bayesian optimization). CALSHEQ varies the parameters smoothly, contrary to Bayesian optimization.}
  \label{f5}
\end{figure}

\end{document}